%% file: main.tex
\documentclass{article}

\usepackage[margin=1in]{geometry}

\usepackage{amsmath,amssymb,amsfonts,amsthm}
\usepackage{mathtools}        
\usepackage{bbm}

\DeclareMathOperator{\diag}{diag}

\usepackage{dsfont}           

\usepackage{graphicx}
\usepackage{subcaption}
\usepackage{wrapfig}

\usepackage{booktabs}         
\usepackage{tabularx}         
\usepackage[table]{xcolor}    

\usepackage{algorithm}
\usepackage{algpseudocode}

\usepackage{enumitem}
\usepackage{nicefrac}
\usepackage{microtype}

\usepackage[numbers]{natbib}

\usepackage{hyperref}
\usepackage{cleveref}

\hypersetup{
  colorlinks=true,
  linkcolor=blue,
  citecolor=teal,
  urlcolor=cyan
}

\usepackage{color-edits}
\addauthor{ev}{red}
\addauthor{sh}{blue}

\usepackage{thmtools}
\usepackage{thm-restate}

\theoremstyle{plain}
\newtheorem{theorem}{Theorem}[section]

\newtheorem{lemma}[theorem]{Lemma}
\newtheorem{corollary}[theorem]{Corollary}

\theoremstyle{definition}
\newtheorem{definition}[theorem]{Definition}
\newtheorem{assumption}[theorem]{Assumption}
\newtheorem{example}[theorem]{Example}
\usepackage{comment}
\usepackage{listings}
\theoremstyle{remark}

\usepackage[textsize=tiny]{todonotes}

\usepackage{times}

\newcommand{\define}{:=}
\newcommand{\real}{\ensuremath{\mathbb{R}}}
\newcommand{\inputspace}{X}
\newcommand{\xdim}{p}

\newcommand{\Min}{\operatorname*{min}}

\newcommand{\argmin}{\operatorname*{\arg\min}}

\newcommand{\partialD}[2]{\frac{\partial}{\partial #2}#1}
\newcommand{\totalD}[2]{\frac{d}{d #2}#1}

\newcommand{\I}{\mathcal{I}}
\newcommand{\ei}{e^{(i)}}

\newcommand{\bigO}{\mathcal{O}}
\newcommand{\Approx}[1]{\hat{#1}}
\newcommand{\ghost}[1]{\tilde{#1}}

\newcommand{\EE}{\mathbb{E}}

\title{First-Order Bilevel Optimization for Non-truthful Auction Design}
\author{Shing-Hei Ho\thanks{\texttt{sho73@gatech.edu}. Georgia Institute of Technology.}
\and April Niu\thanks{\texttt{yniu74@gatech.edu}. Georgia Institute of Technology.}
\and Ellen Vitercik\thanks{\texttt{vitercik@stanford.edu}. Stanford University.}
\and Kai Wang\thanks{\texttt{kwang692@gatech.edu}. Georgia Institute of Technology.}
}
\begin{document}

\maketitle
\begin{abstract}
    Automated mechanism design has enabled the learning of complex auction mechanisms using neural networks. However, designing non-truthful auctions remains challenging because bidders' equilibrium strategies depend on the auction mechanism and thus evolve as the mechanism is optimized. 
    In this work, we formulate auction design as a Stackelberg game with multiple strategic followers and develop a fully first-order algorithm for revenue optimization in non-truthful auctions.
    We establish non-asymptotic guarantees for hypergradient approximation and show that, under standard smoothness and strong-monotonicity assumptions, our method reaches an $O(\epsilon)$-stationary point using $\tilde{\mathcal O}(\epsilon^{-2})$ first-order oracle calls. 
    Experiments on symmetric and asymmetric single-item auctions, as well as multi-item auctions, show that our method converges faster than second- and zeroth-order baselines and scales better as the number of bidders increases. Our method also discovers auctions that achieve revenue close to truthful baselines. These results provide a scalable framework for automated mechanism design beyond truthful auctions.
\end{abstract}

\input{intro}
\input{related_works}
\input{problem_statement}

\input{method}

\input{experiments}

\input{conclusion}

\bibliographystyle{iclr2027_conference}
\bibliography{references}
\input{appendix}

\end{document}

%% file: intro.tex
\section{Introduction}
Automated mechanism design (AMD) uses computation and data to optimize auction mechanisms when analytical solutions are unavailable. Early AMD approaches formulate mechanism design as large-scale mathematical programs (\cite{conitzer2002complexity, conitzer2004self}), while recent work uses neural networks to learn approximately revenue-optimal mechanisms from valuation samples (\cite{dutting2019optimal, feng2018deep, wang2024gemnet, pham2025advancing}). These methods primarily focus on truthful mechanisms, where incentive compatibility makes truthful reporting an equilibrium and therefore removes the need to compute bidders' strategic responses.  

Many widely deployed auctions, however, are non-truthful (\cite{bigler2019update, goke2021bidders, galgana2025learning}). Although the revelation principle often permits mechanism designers to restrict their attention to truthful mechanisms without loss, it does not apply in many settings, such as when bidders have type-dependent constraints on their feasible actions~\citep{berzack2026optimal}. Moreover, practical considerations may limit the mechanism to a prescribed family of non-truthful auctions, such as first-price auctions with reserves. Such auction formats are common in practice; for example, first-price auctions are widespread in online advertising due to their simplicity and transparency~\citep{Harada18:Ad}. However, optimizing non-truthful mechanisms is substantially harder: changing the mechanism changes bidders' equilibrium strategies, so the revenue must be optimized through the equilibrium of an induced multi-player game. 

We formulate non-truthful AMD as a multi-follower Stackelberg problem, where the auctioneer is the leader and bidders are strategic followers.  This formulation connects mechanism design to bilevel optimization. Existing approaches to non-truthful mechanism optimization either use zeroth-order optimization (\cite{feng2021reserve}) or assume access to a differentiable equilibrium oracle (\cite{thoma2026deep, wang2022coordinating}). Meanwhile, current fully first-order bilevel methods primarily consider a single-follower setting (\cite{kwon2023fully,kornowski2024first}). Unlike the single-follower bilevel problem, however, a multi-follower equilibrium couples the responses of all followers: changing one follower’s strategy affects the incentives of the others, and these cross-player effects are generally asymmetric, so standard first-order bilevel constructions do not extend directly. This leaves open how to develop a scalable fully first-order method for multiple strategic followers.  

\paragraph{Main contributions.} Our contributions are threefold. \textbf{First}, we formulate non-truthful auction design as a Stackelberg game with $N > 1$ followers and develop a fully first-order method for symmetric-Jacobian games, achieving $\epsilon$-stationarity with $\bigO(N\epsilon^{-2})$ gradient oracle complexity. We also show why this approach fails in asymmetric games: the true hypergradient requires Jacobian $J^{-\top}$ whereas our method yields $J^{-1}$. \textbf{Second}, we resolve this mismatch using a finite-difference potential function that induces an auxiliary game with a local Jacobian approximating $J^{-T}$. We prove local strong monotonicity and non-asymptotic hypergradient guarantees for the auxiliary game, provide a fully first-order algorithm that converges to $\epsilon$-stationarity using $\tilde{\bigO}(N^2\epsilon^{-2})$ oracle calls, and extend the analysis to linear inequality-constrained follower games. \textbf{Lastly}, we formulate a neural-network-parameterized auction design problem in which the auctioneer optimizes revenue while bidders respond strategically.
We develop a warm-start strategy for tracking differentiable branches of local Nash equilibria, allowing the first-order method to be applied without assuming a globally unique equilibrium. We evaluate the approach on symmetric and asymmetric single-item auctions and multi-item auctions, demonstrating its applicability to equilibrium-dependent mechanism design.

%% file: related_works.tex
\section{Related Works}

\paragraph{Neural Approaches to Learning Truthful Auctions.}
Classical approaches to revenue-maximizing mechanism design face two obstacles in practice. First, they typically assume the designer knows the distribution of buyers' values in advance~ \citep[e.g.,][]{myerson1981optimal}---information that is often unavailable. Second, a general closed-form characterization of revenue-optimal multi-item auctions is unknown~\citep[e.g.,][]{daskalakis2014complexity, rochet1998ironing}. Automated mechanism design (AMD) aims to circumvent these challenges by optimizing for revenue-maximizing mechanisms using sample valuation profiles and computational search~\citep{conitzer2002complexity, conitzer2004self}. The first neural approach to AMD is the work of~\citet{dutting2019optimal}, which represents allocation and payment rules using neural networks and penalizing bidders' gains from misreporting to approximately enforce dominant-strategy incentive compatibility. More recently, GemNet~\citep{wang2024gemnet} uses bidder-independent menus and a post-training transformation to guarantee incentive compatibility. In contrast, we optimize neural pay-your-bid mechanisms whose revenue depends on bidders’ strategic responses, requiring us to account for how equilibrium strategies change with the mechanism.

\paragraph{Non-truthful Auction Design:}
Optimizing non-truthful auctions requires accounting for how mechanism changes affect bidders' equilibrium strategies.
Several papers show that gradient dynamics empirically converge to Nash equilibria in auction games, but focus on computing bidders' strategies rather than optimizing the auction itself~\citep{bichler2021learning, bichler2023learning, kohring2023enabling, huang2025learning}.
More directly related, \citet{feng2021reserve} optimize first-price auction reserves via zeroth-order gradient descent. They compute gradient estimates based on bidders' responsiveness to experimental shocks in reserves, rather than differentiating through a computed equilibrium. Unlike reserve optimization, our framework learns the allocation rule itself, including for multi-item auctions. Meanwhile, \cite{thoma2026deep} formulate mechanism design as a bilevel optimization problem with (coarse) correlated equilibria as the lower-level solution. They assume access to a neural network that outputs a differentiable equilibrium for discrete-action games and its gradient with respect to mechanism parameters. In contrast, we develop novel fully first-order bilevel optimization algorithms with convergence guarantees for Stackelberg games with multiple followers, and apply them to auction design. We design auction mechanisms parameterized by neural networks and compute Nash equilibria through gradient dynamics. Scalability is a key limitation of \cite{thoma2026deep}; their models are only trained on two-player games with at most $16$ discrete actions for each player. In contrast, our action space is continuous.  



\paragraph{Bilevel Optimization:}
Bilevel optimization (BO) is an optimization problem constrained by the solution of another optimization problem. It is essentially a Stackelberg game with a single follower. A body of work studies fully first-order methods for BO. These works typically assume a strongly convex lower-level problem to ensure a unique lower-level solution and an invertible KKT matrix. For an unconstrained lower-level problem, \cite{kwon2023fully} derives an approximate hypergradient using a penalty function and achieves a complexity of $\mathcal{O}(\epsilon^{1.5})$ in deterministic BO. For linearly constrained lower-level problems, \cite{kornowski2024first,zhao2025fully} achieves a complexity of $\mathcal{O}(\epsilon^{-3} \delta^{-1})$ to converge to a $(\epsilon,\delta)$-Goldstein stationary point. 
To the best of our knowledge, existing first-order algorithm for solving Stackelberg games with multiple followers with implicit differentiation is only applicable when Jacobian is block-diagonal (\cite{niu2025finding}). 


%% file: problem_statement.tex
\section{Non-truthful Auction Design as a Stackelberg Game}\label{sec:setup}
Consider a sealed-bid auction of $N$ bidders and $J$ items. Bidder $i$ has a private valuation $t_i\in \mathcal{T}_i\subseteq \real^J_+$, drawn independently from a distribution $F_i$, and submits a bid $a_i\in \mathcal A_i\subseteq \real^J_+$. Let $t = (t_1,..., t_N)$ and $a = (a_1,..,a_N)$. An auction mechanism is parametrized by $x\in X\subseteq \real^p$ and consists of a randomized allocation rule $\rho_x$ and a payment rule $P_x$. 
Given the auction $x$, a bid profile $a$, and valuation $t_i$ for bidder $i$, bidder $i$'s expected utility can be computed by $u_i(x,a,t_i)$.

For a given mechanism $x$, each bidder $i$ chooses a parametrized bidding strategy $\beta_{y_i}: \mathcal{T}_i \rightarrow A_i$. This strategy maps the bidder's valuation $t_i$ to a bid in $A_i$, and $y_i$ is a parameter vector defining the strategy. 
Unless stated otherwise, $y_i \in Y_i(x)\define \real^{n_i}$. Let $y \define (y_1, \cdots, y_N)\in Y(x)\define \Pi_{i=1}^N Y_i(x) \subseteq \mathbb{R}^n$ and write $y_{\neg i} \define (y_j)_{j\neq i}$. The mechanism and bidding strategies induce an \emph{ex-ante expected cost} for each bidder. We work with costs rather than utilities so that the follower game can be expressed as a minimization problem.
\begin{definition}[Ex-ante expected cost]\label{def:ex_ante_cost} Bidder $i$'s ex-ante expected cost is
\begin{align*}
    g_i(x, y_i, y_{\neg i}) \define -\EE_{t_1,\cdots, t_N}[u_i(x,\{\beta_{y_j} (t_{j})\}_{j=1}^N, t_i)].
\end{align*}
Thus, minimizing $g_i$ is equivalent to maximizing bidder $i$'s ex-ante expected utility.
\end{definition}
For a fixed mechanism $x$, the bidders therefore play a normal-form game $\mathcal{G}(x) \define (x, [N], \{Y_i\}_{i\in[N]}, \{g_i\}_{i\in[N]})$. We assume bidders follow a Nash equilibrium with respect to the ex-ante expected cost functions. A Nash equilibrium is a strategy profile in which no player can reduce their expected cost by unilaterally changing their strategy parameter while holding everyone else's strategies fixed.
\begin{definition}[Nash equilibrium]\label{def:nash}
We say $y^* \in Y(x)$ is a Nash equilibrium of the game $\mathcal{G}(x)$ if and only if $g_i(x,y^*) \le g_i(x,y_i, y^*_{\neg i}) \, , \, \forall i \in [N] \, , \, \forall y_i \in Y_i(x)$. We denote the set of Nash equilibria as $\text{NE}(\mathcal{G}(x))$.
\end{definition}
\noindent
The auction design problem is to choose the expected payment rule $P_x$ and allocation rule $\rho_x$ that maximize some objective, assuming bidders play Nash equilibrium strategies. We focus on maximizing the auctioneer's expected revenue. Because our framework is written as a minimization problem, let $f(x,y)$ denote the negative expected revenue under mechanism $x$ and bidding-strategy profile $y$. We first consider the setting in which the induced lower-level game has a unique Nash equilibrium $y^*(x)$ and formulate the auction design problem as a Stackelberg Game.
\begin{align*}\label{eq:equal_p1}
    \min_{x\in \inputspace} \, F(x) \define f(x, y^*(x)) \quad\text{ s.t. } \, y^*(x) \in \text{NE}(\mathcal{G}(x)) \tag{P1}. 
\end{align*}
Section~\ref{sec:nn} extends this formulation to auction games with multiple equilibria.

To solve Problem~\ref{eq:equal_p1}, we run gradient descent on $F(x)$ to optimize the parameter $x$. 
Our goal is to find an $\epsilon$-stationary point, which is defined as any point $x \in \mathbb{R}^\xdim$ such that $\|\nabla F(x)\|\leq \epsilon$.


\noindent
{\textbf{Hypergradient and second-order information:}}
Due to the nested structure in Problem~\ref{eq:equal_p1}, computing the gradient of the function $F(x)$ requires differentiating through the Nash equilibrium $y^*(x)$. 
To show how to compute $\frac{d y^*}{d x}$, we define the gradient operator:
\begin{definition}[gradient operator]\label{def:grad_op}
$V(x,y) \define (\nabla_{y_1} g_1(x,y), \cdots, \nabla_{y_N} g_N(x,y)) \in \real^n$.
\end{definition}
We can write the optimality conditions of the Nash equilibrium $y^*$ using the gradient operator, and apply the Implicit Function Theorem to solve $\frac{d y^*}{d x}$: 
\begin{align}
    V(x,y^*) = 0 ~~\text{(optimality)}  \quad \Longrightarrow \quad \partialD{V(x,y^*)}{x} + \frac{\partial}{\partial y}V(x,y^*) \frac{d}{d x} y^*(x)  = 0 \label{eqn:implicit-function-theorem}
\end{align}

Solving Equation~\ref{eqn:implicit-function-theorem} to get $\frac{d y^*}{dx}$, we get the following exact hypergradient: 
\begin{align}
    \nabla F(x) 
    &= 
    \nabla_xf(x,y^*(x)) + (\totalD{y^*(x)}{x})^\top \nabla_yf(x,y^*(x)) \nonumber \\
    &=
    \nabla_xf(x,y^*(x)) - (\partialD{V(x,y^*)}{x})^\top
    J(x,y^*)^{-\top} 
    \nabla_yf(x,y^*), \label{eq:hypergrad}
\end{align}
where $J(x,y) \define \partialD{V(x,y)}{y}$ is the Jacobian of the gradient operator.

Equation~\ref{eq:hypergrad} gives us the exact hypergradient to solve Problem~\ref{eq:equal_p1}. However, the hypergradient involves the Jacobian $J(x,y^*)$ of the gradient operator, which is second-order and can be computationally intractable. The main goal of this work is to construct an approximate hypergradient estimator that bypasses second-order information and solves Problem~\ref{eq:equal_p1} efficiently.

\paragraph{Definitions and Assumptions:}
Below we summarize the definitions and assumptions needed for the later hypergradient and convergence analysis for solving Problem~\ref{eq:equal_p1}.
\begin{assumption}[Lower-level game]\label{assum:VI}We assume the lower-level game $\mathcal{G}$ is $\mu$-strongly monotone, $C$-smooth and $S$-Hessian smooth. We assume $V$ to be
\begin{enumerate}[leftmargin=*, label=\arabic*., align=left]
        \item $\mu$-strongly monotone: $\langle V(x,y^\prime) - V(x, y), y^\prime - y\rangle \geq \mu \|y^\prime - y\|^2, 
        \, \forall y^\prime, y \in \mathbb{R}^n$,
        \item Jointly $C$-Lipschitz: $\| V(x^\prime,y^\prime) - V(x, y)\| \leq C \|(x^\prime, y^\prime) - (x, y)\|, 
        \, \forall (x^\prime, y^\prime), (x,y) \in \inputspace \times \mathbb{R}^n $,
        \item Jointly $S$-Hessian Lipschitz: $\| \partialD{V(x^\prime,y^\prime)}{(x,y)} - \partialD{V(x, y)}{(x,y)}\| \leq S \|(x^\prime, y^\prime) - (x, y)\|, 
        \, \forall (x^\prime, y^\prime), (x,y) \in \inputspace \times \mathbb{R}^n$.
    \end{enumerate}
\end{assumption}
The strong monotonicity assumption guarantees a unique lower-level Nash equilibrium ~\citep{facchinei2003finite}.


\begin{assumption}[Upper-level objective]\label{assum:optim}We assume $f$ to be
\begin{enumerate}[leftmargin=*, label=\arabic*., align=left]
        \item Jointly $L_f$-Lipschitz
        $\| f(x^\prime,y^\prime) - f(x, y)\| \leq L_f \|(x^\prime, y^\prime) - (x, y)\|, 
        \, \forall (x^\prime, y^\prime), (x,y) \in \inputspace \times \mathbb{R}^n$, 
        \item $C_f$-gradient Lipschitz  $\| \nabla_{(x,y)}f(x^\prime,y^\prime) - \nabla_{(x,y)}f(x, y)\| \leq C_f \|(x^\prime, y^\prime) - (x, y)\|, 
        \, \forall (x^\prime, y^\prime), (x,y) \in \inputspace \times \mathbb{R}^n$, 
        \item $S_f$-Hessian Lipschitz $\| \partialD{\nabla_{(x,y)}f(x^\prime,y^\prime)}{(x,y)} - \partialD{\nabla_{(x,y)}f(x, y)}{(x,y)}\| \leq S_f \|(x^\prime, y^\prime) - (x, y)\|, 
        \, \forall (x^\prime, y^\prime), (x,y) \in \inputspace \times \mathbb{R}^n$.
    \end{enumerate}
\end{assumption}


%% file: method.tex
\section{Stackelberg Games with Symmetric Jacobian}\label{sec:sym}
\noindent
We first consider Stackelberg games where the Jacobian $J = \partialD{V(x,y^*)}{y}$ is symmetric, i.e., $J = J^\top$. This condition holds in the special case where followers do not interact through their decision variables, or when the pairwise strategic interactions between every follower $i$ and $j$ are reciprocal.
This section proposes a first-order method for solving Stackelberg games with a symmetric Jacobian.


Consider problem \ref{eq:equal_p1}. Define a perturbed game $\mathcal{G}_\delta \define (x, [N], \{Y_i\}_{i\in[N]}, \{g_i + \delta f\}_{i\in [N]})$, where the followers' objectives are perturbed by the leader's objective $\delta f$ with a small number $\delta > 0$. Let $y^*_\delta$ be the new Nash equilibrium for the perturbed game $\mathcal{G}_\delta$. 
We construct the following finite difference:
\begin{align}\label{eq:finite_diff}
    v_\delta(x,y^*_{\delta}) = \frac{1}{\delta} \sum_{i=1}^N \nabla_x g_i(x,y^*_{\delta,i}, y^*_{\neg i}) - \nabla_x g_i(x,y^*_{i}, y^*_{\neg i})
\end{align}
We first quantify the error of the finite-difference estimator in the idealized setting where the perturbed equilibrium $y_\delta^*$ is computed exactly.
\begin{restatable}{theorem}{gradApprox}\label{theorem:grad_approx}
    If the Jacobian $J = \partialD{V(x,y^*)}{y}$ is symmetric, then:
    \begin{align}
        \|v_\delta(x,y^*_\delta) + (\partialD{V(x, y^*)}{x})^\top J(x,y^*)^{-\top} \nabla_y f(x,y^*)\| \leq \bigO(N\delta).
    \end{align}
\end{restatable}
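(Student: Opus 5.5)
The plan is a first-order perturbation analysis of the perturbed equilibrium $y^*_\delta$, followed by a Taylor expansion of the finite difference \eqref{eq:finite_diff} in each player's own block.

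\textbf{Step 1: well-posedness and distance to $y^*$.} The perturbed game $\mathcal{G}_\delta$ has gradient operator $V_\delta(x,y)=V(x,y)+\delta\nabla_y f(x,y)$. By Assumption~\ref{assum:VI}(1) and Assumption~\ref{assum:optim}(2), $V_\delta$ is $(\mu-\delta C_f)$-strongly monotone. So for $\delta\le \mu/(2C_f)$ the perturbed game has a unique equilibrium $y^*_\delta$ with $V_\delta(x,y^*_\delta)=0$. Pairing $V(x,y^*_\delta)-V(x,y^*)=-\delta\nabla_y f(x,y^*_\delta)$ with $\Delta:=y^*_\delta-y^*$ and using strong monotonicity gives
\[
\|\Delta\|\le \delta L_f/\mu .
\]

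\textbf{Step 2: first-order characterization of $\Delta$.} Taylor-expand $V(x,\cdot)$ around $y^*$ using the $S$-Hessian-Lipschitz assumption:
\[
V(x,y^*_\delta)-V(x,y^*)=J\Delta+r_1,\qquad \|r_1\|\le \tfrac{S}{2}\|\Delta\|^2 .
\]
Also $\nabla_y f(x,y^*_\delta)=\nabla_y f(x,y^*)+r_2$ with $\|r_2\|\le C_f\|\Delta\|$. Together these give
\[
\Delta=-\delta J^{-1}\nabla_y f(x,y^*)+r_3,\qquad \|r_3\|\le \tfrac{1}{\mu}\bigl(\tfrac{S}{2}\|\Delta\|^2+\delta C_f\|\Delta\|\bigr)=\bigO(\delta^2),
\]
since $\|J^{-1}\|\le 1/\mu$ by strong monotonicity.

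\textbf{Step 3: expanding the finite difference.} For each $i$, only the $i$-th block moves, from $y^*_i$ to $y^*_{\delta,i}$, with $y^*_{\neg i}$ fixed. Hence
\[
\nabla_x g_i(x,y^*_{\delta,i},y^*_{\neg i})-\nabla_x g_i(x,y^*)=\nabla^2_{x y_i} g_i(x,y^*)\,\Delta_i+q_i,\qquad \|q_i\|=\bigO(\|\Delta_i\|^2).
\]
By equality of mixed partials, $\nabla^2_{xy_i}g_i=(\partial_x\nabla_{y_i}g_i)^\top$, which is the transpose of the $i$-th block row of $\partialD{V}{x}$. Summing over $i$ therefore yields
\[
\delta\, v_\delta=(\partialD{V(x,y^*)}{x})^\top\Delta+\textstyle\sum_i q_i .
\]
Substituting Step 2 and using $J^{-1}=J^{-\top}$ (symmetry) gives
\[
v_\delta=-(\partialD{V}{x})^\top J^{-\top}\nabla_y f+\tfrac1\delta\bigl[(\partialD{V}{x})^\top r_3+\textstyle\sum_i q_i\bigr].
\]
Using $\|\partialD{V}{x}\|\le C$, the first bracketed term is $C\cdot\bigO(\delta)$. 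For the second, $\sum_i\|q_i\|/\delta$ is at most $N$ times a per-block $\bigO(\delta)$ bound, which is the source of the $\bigO(N\delta)$ in the statement.

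\textbf{Main obstacle.} The delicate step is the remainder $q_i$. It requires Lipschitz continuity of $\nabla^2_{xy_i}g_i$, that is, second-order smoothness of $\nabla_x g_i$. This is not literally the same object as $\partialD{V}{(x,y)}$ in Assumption~\ref{assum:VI}(3), although it consists of the same mixed partial blocks. I would use the $S$-Hessian Lipschitz assumption on those mixed blocks (interpreting it as third-derivative control of each $g_i$ in $(x,y_i)$) to get $\|q_i\|\le \tfrac{S}{2}\|\Delta_i\|^2\le \tfrac S2\|\Delta\|^2$. This gives $\sum_i\|q_i\|/\delta\le N S L_f^2\delta/(2\mu^2)$. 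The cruder bound $\|\Delta_i\|\le\|\Delta\|$ is exactly what produces the factor $N$. All constants then combine into $\bigO(N\delta)$ with dependence on $C,S,C_f,L_f,\mu$.
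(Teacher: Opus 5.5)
Your argument is correct, but it is organized differently from the paper's.

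The paper integrates along the whole perturbed-equilibrium path $t\mapsto y_t^*$, $t\in[0,\delta]$. It writes each summand of $\delta v_\delta$ as $\int_0^\delta \partialD{\nabla_x g_i(x,y^*_{t,i},y^*_{\neg i})}{y_i}\,\tfrac{d}{dt}y^*_{t,i}\,dt$ and compares the integrand with its value at $t=0$. This requires $t\mapsto y_t^*$ to be differentiable, with $\|\tfrac{d}{dt}y_t^*\|\le L_{y,\delta}$ and a $C_{y,\delta}$-Lipschitz derivative. Those constants are imported from the implicit-function-theorem Lemma~\ref{lemma:lipschitz_y_delta}, and they give an error of $\tfrac12(SL_{y,\delta}^2+CC_{y,\delta})\delta$ per player.

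You use only the two points $y^*$ and $y^*_\delta$. Strong monotonicity bounds $\|\Delta\|$, a second-order expansion of $V$ identifies $\Delta=-\delta J^{-1}\nabla_y f+\bigO(\delta^2)$, and a block-wise expansion of $\nabla_x g_i$ closes the argument. The paper itself uses this technique in the asymmetric case (Lemma~\ref{lemma:asym_grad_1} and the proof of Theorem~\ref{theorem:asym_grad_approx}). Your route is more elementary, needs no smoothness of the perturbed solution map, and gives explicit constants.

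Two remarks.

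\textbf{The main obstacle you flag is not one.} $\partial_x\nabla_{y_i}g_i$ is exactly the $i$-th block row of $\partialD{V}{x}$, which is a sub-block of $\partialD{V}{(x,y)}$. So Assumption~\ref{assum:VI}(3) gives its $S$-Lipschitzness directly. The paper's $SL_{y,\delta}^2t$ term relies on the same fact.

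\textbf{The factor $N$ is an artifact of the bound $\|\Delta_i\|\le\|\Delta\|$.} Since $\sum_i\|\Delta_i\|^2=\|\Delta\|^2$, you get $\sum_i\|q_i\|\le\tfrac S2\|\Delta\|^2$. Your route therefore yields an $N$-free $\bigO(\delta)$ bound, which is stronger than the statement.
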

\begin{proof}[Proof sketch]
The perturbed equilibrium $y_\delta^*$ satisfies $V(x,y_\delta^*)+\delta\nabla_y f(x,y_\delta^*)=0$. Differentiating with respect to $\delta$ at 0 gives 
$\lim_{\delta\to 0}
\frac{y_\delta^*-y^*}{\delta} = \totalD{y^*_\delta}{\delta}|_{\delta=0} = - J(x,y^*)^{-1}\nabla_yf(x,y^*(x)).$
By the fundamental Theorem of calculus we have $v_\delta(x,y_\delta^*) = \left(\frac{\partial V(x,y^*)}{\partial x}\right)^\top\frac{y_\delta^*-y^*}{\delta}+\bigO(N\delta)$ where the remainder term comes from summing up the error $\|y_\delta^*-y^*\|=\bigO(\delta)$ for each player.
Thus, 
\begin{align}
    &v_\delta(x,y_\delta^*) \approx  -(\partialD{V(x,y^*)}{x})^\top
    J(x,y^*)^{-1} 
    \nabla_yf(x,y^*) + \bigO(N\delta) \\& = -(\partialD{V(x, y^*)}{x})^\top J(x,y^*)^{-\top} \nabla_y f(x,y^*)+\bigO(N\delta) \label{eq:approx_grad}
\end{align}
The last equality in Equation~\eqref{eq:approx_grad} is due to the symmetry of $J$. 
\end{proof}
\vspace{-10pt}

This approximation guarantee suggests that we can run inexact gradient descent to obtain a fully first-order approximation of the hypergradient with a non-asymptotic rate. This intuition is formalized in Theorem~\ref{theorem:sym_equality_convergence}. The proof of Theorem~\ref{theorem:grad_approx} highlights that the naive finite difference construction fails when $J(x,y^*)^{\top} \neq J(x,y^*)$.

\section{Stackelberg Games with Asymmetric Jacobian}\label{sec:asym}
As alluded to above, the current perturbed game cannot be used to solve the Stackelberg game without a symmetric Jacobian. We propose a novel approach to resolve the Jacobian mismatch in this section. 
\subsection{Finite difference approximation for unconstrained Stackelberg game} \label{sec:asym_unconstr}
Consider differentiating $\langle V(x,y), y- y^* \rangle$ with respect to $y$. The product rule gives $\nabla_y \langle V(x,y), y- y^* \rangle - V(x,y) = J (x,y)^\top (y - y^*)$.
This is exactly the transpose operator missing in (\ref{eq:approx_grad}). However, differentiating the scalar $\langle V(x,y), y- y^* \rangle$ through $y$ requires second-order derivatives of the follower objectives. Thus, we construct a finite-difference \textit{potential} whose gradient approximates $\nabla_y \langle V(x,y), y- y^* \rangle$.
\begin{definition}[Potential function]\label{def:potential}
Given a small number $\epsilon > 0$, define \newline \centerline{$\phi_\epsilon(x,y) 
\define\frac{1}{\epsilon} \sum_{j=1}^N [g_j(x, y_j + \epsilon (y_j - y^*_j), y_{\neg j}) - g_j(x,y)].$}
\end{definition}

By definition of directional derivative, each finite-difference term in $\phi_\epsilon(x,y)$ converges to $\langle \nabla_{y_i} g_i(x,y), y_i- y_i^*\rangle$. So $\phi_\epsilon(x,y)$ approaches $\langle V(x,y),y-y^*\rangle$ and $\nabla_y \phi_\epsilon(x,y)$ approaches to $\nabla_y \langle V(x,y),y-y^*\rangle$ as $\epsilon\to0$. Therefore, $\nabla_y\phi_\epsilon(x,y)-V(x,y)$ approximates $J(x,y)^\top(y-y^*)$.
Consider an unconstrained perturbed game $\bar{\mathcal{G}}_{\delta} \define (x, [N], \{Y_i\}_{i\in[N]},  \{\bar{g}_i\}_{i\in[N]})$ with cost function
$\bar{g}_i(x,y) = \phi_\epsilon(x,y) - g_i(x,y) + \delta f(x,y)$.
Denote the gradient operator for the perturbed game $\bar{\mathcal{G}}_\delta$ as $\bar{V}(x,y) = (\nabla_{y_1} \bar{g}_1(x,y) , \cdots , \nabla_{y_N} \bar{g}_N(x,y)) \in \mathbb{R}^n$. Then we have $\bar V = \nabla_y \phi_\epsilon - V + \delta \nabla_y f$ where the term $ \nabla_y \phi_\epsilon - V$ is exactly what we need to recover the Jacobian. 
Now consider the perturbed equilibrium condition \[0 = \bar V(x, \bar y^*_\delta)\approx J(x,y)^\top(\bar y^*_\delta-y^*) + \delta \nabla_y f(x,y^*)\implies \frac{\bar y^*_\delta-y^*}{\delta} \approx -J(x,y)^{-\top} \nabla_y f(x,y^*)\] Thus, the perturbed equilibrium $\bar{y}^*_\delta$ satisfies $\bar{y}^*_0 \approx y^*$ and thus $\totalD{\bar{y}^*_\delta}{\delta}|_{\delta=0} \approx - J(x,y^*)^{-\top} \nabla_y f(x,y^*)$, as required by the Stackelberg hypergradient. As a consequence, we use the finite difference $v_\delta(x,\bar{y}^*_\delta)$ induced by the perturbed equilibrium to recover the correct Jacobian:

Even though the construction resolves the Jacobian mismatch, it does not yield a fully first-order algorithm immediately. This is because the auxiliary game operator $\bar V$ does not necessarily inherit strong monotonicity from $V$. Thus the next technical challenge is to prove that the auxiliary game is strongly monotone locally around the neighborhood of the true equilibrium $y^*$ (Lemma \ref{lemma:asym_local_strong_monotone}), and that the perturbed equilibrium $\bar{y}^*_\delta$ stays inside this neighborhood (Lemma \ref{lemma:asym_y_delta}).
The following theorem puts the ideas together and shows that $v_\delta(x,\bar{y}^*_\delta)$ is a fully first-order approximation with a non-asymptotic rate.

\begin{theorem}[Non-Asymptotic hypergradient approximation]\label{theorem:asym_grad_approx}
For $0 < \delta < \frac{\mu}{2 C_f}$ and $0 < R < \frac{\mu - \delta C_f}{S(N+1)(2+\sqrt{N})}$ and $0 < \epsilon < 1$, $\bar{V}$ is locally $\bar{\mu}$-strongly monotone and $\bar{C}$-smooth on $\bar{B}_R(y^*) = \{y \in \mathbb{R}^n : \|y-y^*\| \leq R\}$ for some $\bar{\mu}, \, \bar{C}>0$. Furthermore, if $\delta < \min(\frac{\mu}{2 C_f}, \frac{\bar{\mu} R}{2 L_f})$, then we have a fully first-order hypergradient approximation with a non-asymptotic rate:
\newline\centerline{$\|v_\delta(x,\bar{y}^*_\delta) - \partialD{V(x,y^*)}{x}^\top {J(x,y^*)}^{-\top} (-\nabla_y f(x,y^*))\|
    \leq 
    \mathcal{O}(\frac{N \delta}{\bar{\mu}^2} + \epsilon).$}
\end{theorem}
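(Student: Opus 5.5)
We prove the theorem in three stages: local strong monotonicity of $\bar V$, localization of the perturbed equilibrium $\bar y^*_\delta$, and a linearization argument for the hypergradient error.

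\textbf{Stage 1: structure of $\bar V$ and local strong monotonicity.} Write each summand of $\phi_\epsilon$ with the fundamental theorem of calculus along the segment $s\mapsto y_j+s\epsilon(y_j-y_j^*)$, $s\in[0,1]$. This expresses $\nabla_y\phi_\epsilon(x,y)$ as an average of first derivatives of the $g_j$ evaluated at the shifted points. Differentiating once more in $y$, we get
\[
\partial_y\bar V(x,y)=\partial_y\nabla_y\phi_\epsilon(x,y)-J(x,y)+\delta\nabla^2_{yy}f(x,y).
\]
The first term equals $J(x,y)^\top+J(x,y)$ plus second-derivative terms contracted against $(y-y^*)$. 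By $S$-Hessian Lipschitzness, these extra terms are bounded by roughly $S(N+1)(2+\sqrt N)\|y-y^*\|$; the factors $N+1$ and $\sqrt N$ come from summing the $N$ per-player contributions and from the block structure of the shifts. Hence on $\bar B_R(y^*)$ we have
\[
\partial_y\bar V = J^\top + \delta\nabla^2_{yy}f + E,\qquad \|E\|\le S(N+1)(2+\sqrt N)R.
\]
The symmetric part of $J^\top$ is at least $\mu I$, and $\|\delta\nabla_{yy}^2 f\|\le\delta C_f$. This yields local strong monotonicity with
\[
\bar\mu=\mu-\delta C_f-S(N+1)(2+\sqrt N)R>0,
\]
which is exactly why $R$ is restricted as in the statement. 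A bound $\bar C$ follows similarly from the $C$, $S$, $C_f$ bounds and $\epsilon<1$. This is the content of Lemma~\ref{lemma:asym_local_strong_monotone}.

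\textbf{Stage 2: existence and localization of $\bar y^*_\delta$.} At $y^*$ the $\phi_\epsilon$ contributions vanish up to $V(x,y^*)=0$, since the shift $y_j-y_j^*$ is zero there; hence $\bar V(x,y^*)=\delta\nabla_y f(x,y^*)$, with norm at most $\delta L_f$. Strong monotonicity of $\bar V$ on the ball then gives, for any zero $\bar y$ in the ball, $\bar\mu\|\bar y-y^*\|\le\delta L_f$. To guarantee that a zero exists inside $\bar B_R(y^*)$, we use a standard variational-inequality argument on the compact convex ball. For instance, on the sphere $\|y-y^*\|=R$,
\[
\langle\bar V(y),y-y^*\rangle\ge\bar\mu R^2-\delta L_f R>0
\]
whenever $\delta<\bar\mu R/(2L_f)$, so a zero exists in the interior and satisfies
\[
\|\bar y^*_\delta-y^*\|\le \delta L_f/\bar\mu\le R/2.
\]
This is Lemma~\ref{lemma:asym_y_delta}.

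\textbf{Stage 3: hypergradient error.} Set $d=\bar y^*_\delta-y^*$. Taylor-expand $0=\bar V(x,\bar y^*_\delta)$ around $y^*$:
\[
0=\delta\nabla_yf(x,y^*)+(J^\top+\delta\nabla^2_{yy}f)d+r,
\]
where $\|r\|\lesssim S N\|d\|^2+\delta C_f\|d\|$ plus an $O(\epsilon)$ discrepancy from the finite-difference approximation of $\langle V,y-y^*\rangle$. Solving and using $\|J^{-\top}\|\le1/\mu$ together with $\|d\|=O(\delta/\bar\mu)$ gives
\[
\Big\|\tfrac{d}{\delta}+J^{-\top}\nabla_yf\Big\|\le O\big(N\delta/\bar\mu^2+\epsilon\big).
\]
Finally, as in Theorem~\ref{theorem:grad_approx}, we write $v_\delta(x,\bar y^*_\delta)=(\partial_xV)^\top d/\delta+O(N\|d\|^2/\delta)$ via the fundamental theorem of calculus player by player. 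Then the Lipschitz bound $\|\partial_xV\|\le C$ transfers the error to the claimed $O(N\delta/\bar\mu^2+\epsilon)$.

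The main obstacle is Stage 1: computing $\partial_y\nabla_y\phi_\epsilon$ carefully, since the shift $y_j+\epsilon(y_j-y^*_j)$ depends on $y$ and on $y^*$ in a coupled way. The goal is to obtain the exact constant $S(N+1)(2+\sqrt N)$ while showing the $\epsilon$-dependence is harmless for $\epsilon<1$. My approach is to bound the deviation block-wise using Hessian-Lipschitzness on each player's shifted point and then aggregate with Cauchy--Schwarz, which produces the $\sqrt N$ factor.
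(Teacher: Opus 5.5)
Your three stages mirror the paper's structure: local monotonicity, localization of $\bar y^*_\delta$, then linearization plus a Taylor expansion of $v_\delta$. Your Stage 2 boundary argument is a valid and slightly sharper substitute for the paper's contraction-mapping argument: you show $\langle \bar V(y),y-y^*\rangle\ge \bar\mu R^2-\delta L_f R>0$ on the sphere, then invoke Hartman--Stampacchia/Brouwer to get $\|\bar y^*_\delta-y^*\|\le \delta L_f/\bar\mu$.

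\textbf{The gap is your linearization of $\bar V$.} Write $\tilde y^{(i)}=(y_i+\epsilon(y_i-y_i^*),y_{\neg i})$. The $i$-th block of $\bar V$ contains $\nabla_{y_i}g_i(x,\tilde y^{(i)})-\nabla_{y_i}g_i(x,y)$. Its $y$-derivative contributes $\epsilon J_{ii}(x,\tilde y^{(i)})$ on the diagonal. Hence $\partial_y\bar V(x,y^*)=J(x,y^*)^\top+\epsilon D(x,y^*)+\delta\nabla^2_{yy}f(x,y^*)$ with $D=\mathrm{diag}(J_{11},\dots,J_{NN})$, not $J^\top+\delta\nabla^2_{yy}f$. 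This extra term is not contracted against $y-y^*$ and has norm up to $\epsilon C$, so your bound $\|E\|\le S(N+1)(2+\sqrt N)R$ is false as stated.

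In Stage 1 this is harmless once you notice it. Each $J_{ii}$ is a diagonal block of a $\mu$-strongly monotone Jacobian, so $D\succeq\mu I$, and the term only adds $\epsilon\mu$ to the modulus. That is the $+\epsilon\mu$ in the paper's $\bar\mu$. You do need to separate it out before bounding the remaining $S$-terms.

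In Stage 3 it matters. You place an ``$O(\epsilon)$ discrepancy'' in $r$. Taken literally, dividing by $\delta$ gives an $O(\epsilon/\delta)$ error, and the claimed $O(N\delta/\bar\mu^2+\epsilon)$ does not follow. Your derivation is also internally inconsistent: if your Stage 1 Jacobian were right, the expansion would produce no $\epsilon$ term at all, so the discrepancy has no source in your own argument.

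The missing point is that the $\epsilon$-effect equals $\epsilon D(x,y^*)d$ plus $O(\epsilon S\|d\|^2)$. It vanishes at $d=0$ and is $O(\epsilon\|d\|)=O(\epsilon\delta/\bar\mu)$. This linearity in $d$ is exactly why the bias survives normalization by $\delta$ as only $O(\epsilon)$. You can repair it in either of two ways.
\begin{itemize}
\item Keep $\epsilon D d$ explicitly in $r$. With $\|J^{-\top}\|\le 1/\mu$ this gives an $O(\epsilon/(\mu\bar\mu))$ term.
\item Do as the paper does: solve against $J^\top+\epsilon D$, whose symmetric part is $\succeq(1+\epsilon)\mu I$. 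This gives $\|d/\delta-(J^\top+\epsilon D)^{-1}(-\nabla_y f)\|=O(\sqrt N\delta/\bar\mu^2)$. Then the resolvent identity gives $\|(J^\top+\epsilon D)^{-1}-J^{-\top}\|\le \epsilon C/\mu^2$, so you pay $O(\epsilon)$ independent of $\bar\mu$.
\end{itemize}
Your final Taylor step $v_\delta=(\partial_xV)^\top d/\delta+O(\|d\|^2/\delta)$ is fine as written.
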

\begin{proof}[Proof Sketch]
Lemma \ref{lemma:asym_local_strong_monotone} shows that the auxiliary operator $\bar{V}$ is locally $\bar{\mu}$-strongly monotone and $\bar{C}$-smooth on $\bar{B}_R(y^*)$. Under the stated bound on $\delta$, Lemma \ref{lemma:asym_y_delta} shows that the map $T(y) = y-\eta \bar{V}(x,y)$ is a contraction on $\bar{B}_R(y^*)$. Hence the auxiliary game admits a unique Nash equilibrium $\bar{y}^*_\delta$ in this neighborhood with $\|\bar{y}^*_\delta - y^*\| \leq \frac{2 L_f\delta}{\bar{\mu}}$. Now we relate the displacement of this auxiliary equilibrium to the exact hypergradient. 
Let $v^*_\epsilon \define (J(x,y^*)^{\top} + \epsilon D(x,y^*))^{-1} (-\nabla_y f(x,y^*))$ and $v^* \define J(x,y^*)^{-\top} (-\nabla_y f(x,y^*))$. Lemma \ref{lemma:asym_grad_1} shows that the normalized equilibrium displacement approximates $v^*_\epsilon$: $\|\frac{\bar{y}^*_\delta - y^*}{\delta} - v^*_\epsilon\| \leq\mathcal{O}( \frac{\sqrt{N}\delta}{\bar{\mu}^2})$. Lemma \ref{lemma:asym_grad_2} controls the bias introduced by the $\epsilon D$ regularization: $ \|v^*_\epsilon - v^*\| \leq \mathcal{O}(\epsilon)$. It remains to relate the equilibrium displacement to the finite-difference expression. By Taylor expanding $v_\delta(x,\bar{y}^*_\delta)$ around $y^*$, we have $v_\delta(x,\bar{y}^*_\delta) 
=
\partialD{V(x,y^*)}{x}^\top \frac{\bar{y}^*_{\delta} - y^*}{\delta} + r_\delta$  where $\|r_\delta\|\leq \mathcal{O}(\frac{N\delta}{\bar{\mu}^2})$. The sum of these three terms gives an upper bound on $ \|v_\delta(x,\bar{y}^*_\delta) - \partialD{V(x,y^*)}{x}^\top v^*\|$ which yields the result. 
\end{proof}

Algorithm~\ref{alg:ffo_algo} summarizes our fully first-order method for both symmetric and asymmetric Stackelberg games. At each outer iteration, we approximately solve the original follower game and the perturbed game, construct the finite-difference hypergradient estimator, and update the leader variable by inexact gradient descent. We now analyze the algorithm's convergence.

\begin{theorem}\label{theorem:asym_convergence}
Consider Problem \ref{eq:equal_p1}. Under Assumptions \ref{assum:VI} and \ref{assum:optim}, Algorithm \ref{alg:ffo_algo} outputs an $\mathcal{O}(\epsilon)$-stationary point of the hyperobjective $F$ with $\tilde{\bigO}(N^2\epsilon^{-2})$ oracle calls to $f$ and $g_i$. 
\end{theorem}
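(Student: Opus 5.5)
The plan is to treat Algorithm~\ref{alg:ffo_algo} as inexact gradient descent on $F$ and to combine three ingredients: smoothness of $F$, a bias bound on the estimator, and the cost of the inner solvers.

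\textbf{Step 1 (smoothness of $F$).} Using Assumptions~\ref{assum:VI} and \ref{assum:optim} together with the implicit-function expression \eqref{eq:hypergrad}, I will show that $y^*(x)$ is $C/\mu$-Lipschitz and that $J^{-\top}$ is bounded by $1/\mu$ and Lipschitz in $x$, via $\|A^{-1}-B^{-1}\|\le\|A^{-1}\|\|A-B\|\|B^{-1}\|$ and the Hessian-Lipschitz constants $S,S_f$. Together these give that $\nabla F$ is $L_F$-Lipschitz, with $L_F$ polynomial in $C,S,C_f,S_f,L_f,1/\mu$.

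\textbf{Step 2 (estimator bias).} At iterate $x_k$, the algorithm returns approximations $\hat y_k\approx y^*(x_k)$ and $\hat{\bar y}_k\approx \bar y^*_\delta(x_k)$. It then forms $\hat g_k=\nabla_x f(x_k,\hat y_k)+v_\delta(x_k,\hat{\bar y}_k)$, where $v_\delta$ is evaluated at the approximate points. I split $\|\hat g_k-\nabla F(x_k)\|$ into three pieces.
\begin{itemize}
\item[(a)] The idealized error, which is $\bigO(N\delta/\bar\mu^2+\epsilon)$ by Theorem~\ref{theorem:asym_grad_approx}.
\item[(b)] The error from using $\hat y_k$ in place of $y^*$. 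This is at most $(L_f+NC/\delta)\|\hat y_k-y^*\|$, because $v_\delta$ carries a $1/\delta$ factor and each of the $N$ terms is $C$-Lipschitz.
\item[(c)] The analogous error from $\hat{\bar y}_k$, bounded by $(NC/\delta)\|\hat{\bar y}_k-\bar y^*_\delta\|$.
\end{itemize}
Choosing $\delta=\Theta(\epsilon/N)$, so that the bias in (a) is $\bigO(\epsilon)$, forces inner accuracies of order $\epsilon\delta/N=\Theta(\epsilon^2/N^2)$.

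\textbf{Step 3 (inner solvers).} The original game is $\mu$-strongly monotone and $C$-smooth, so projected (here unconstrained) gradient play converges linearly. Reaching accuracy $\epsilon^2/N^2$ therefore takes $\bigO(\frac{C^2}{\mu^2}\log(N/\epsilon))$ steps, each using $N$ first-order calls to the $g_i$. For the auxiliary game, Lemma~\ref{lemma:asym_local_strong_monotone} and Lemma~\ref{lemma:asym_y_delta} show that $T(y)=y-\eta\bar V(x,y)$ is a contraction on $\bar B_R(y^*)$. Initializing at $\hat y_k$, which lies in the ball once $\|\hat y_k-y^*\|\le R/2$, keeps the iterates in the ball and gives linear convergence with rate depending on $\bar\mu,\bar C$. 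Evaluating $\bar V$ requires $\nabla_y\phi_\epsilon$, which costs $\bigO(N)$ gradient calls, so this stage also needs only logarithmically many rounds.

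\textbf{Step 4 (outer loop).} The standard inexact descent lemma with step size $1/L_F$ gives
\[
F(x_{k+1})\le F(x_k)-\tfrac{1}{2L_F}\|\nabla F(x_k)\|^2+\tfrac{1}{2L_F}\|\hat g_k-\nabla F(x_k)\|^2 .
\]
Summing over $K$ iterations yields $\min_k\|\nabla F(x_k)\|^2\le 2L_F(F(x_0)-\inf F)/K+\bigO(\epsilon^2)$, so $K=\bigO(\epsilon^{-2})$ outer iterations suffice for an $\bigO(\epsilon)$-stationary point. Each outer iteration costs $\tilde\bigO(N^2)$ oracle calls: $\bigO(N)$ gradients per inner step for $\phi_\epsilon$, and another factor of $N$ from the per-player evaluations of the $g_j$ inside the potential and from $v_\delta$. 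Multiplying gives the claimed $\tilde\bigO(N^2\epsilon^{-2})$ total.

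\textbf{Main obstacle.} The hardest part is Step 3 for the auxiliary game. $\bar V$ depends on the unknown $y^*$ through $\phi_\epsilon$, so the algorithm actually uses $\hat y_k$ in the potential. I must show that this perturbed operator is still locally strongly monotone, and that its equilibrium moves by only $\bigO(\|\hat y_k-y^*\|/\bar\mu)$. I would handle this by treating the substitution as an additional Lipschitz perturbation of $\bar V$, with constant $\bigO(NC)$, absorbed into the margin between $R$ and $\frac{\mu-\delta C_f}{S(N+1)(2+\sqrt N)}$ in Lemma~\ref{lemma:asym_local_strong_monotone}.
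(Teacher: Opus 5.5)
Your proposal is correct and follows essentially the same route as the paper's proof: smoothness of $F$, a bias split into the idealized error of Theorem~\ref{theorem:asym_grad_approx} plus lower-level inexactness with $\delta=\Theta(\epsilon/N)$ and inner accuracy of order $\delta^2$, warm-started linear convergence of the auxiliary game on $\bar B_R(y^*)$, $\bigO(N^2)$ oracle calls per inner step, and $\bigO(\epsilon^{-2})$ outer steps from the inexact descent lemma. The one divergence is the $\hat y$-based potential: the paper treats it as an inexact oracle for $\bar V$ with uniform error $\bigO(CN\|\hat y-y^*\|)$ (Lemma~\ref{lemma:inexact_potential} with Lemma~\ref{lemma:proj_GD_VI_convergence}), so it needs no new local strong-monotonicity argument, and in your version the equilibrium shift should likewise be $\bigO(CN\|\hat y-y^*\|/\bar\mu)$ rather than $\bigO(\|\hat y-y^*\|/\bar\mu)$, which costs only a logarithmic factor in the inner loop.
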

\begin{proof}
By the unconstrained version of Theorem~\ref{theorem:smooth_F}, the hyperobjective $F$ is smooth.
Choose $\delta \leq \frac{\epsilon}{2N}$ and use the potential function
$\phi_{\epsilon/2}$. By Lemma~\ref{lemma:inexact_lower_level}, if the approximate lower-level solutions
$\hat y,\hat y_\delta$ satisfy
$\|y^*-\hat y\|\leq \delta^2$ and
$\|\bar y_\delta^*-\hat y_\delta\|\leq \delta^2$, then the inexact
hypergradient
${\hat\nabla F}(x)=\nabla_x f(x,\hat y)+\hat v_\delta(x,\hat y_\delta)$
satisfies
$\|{\hat\nabla F}(x)-\nabla F(x)\|
\leq \bigO(N\delta+\epsilon/2)
\leq \bigO(\epsilon)$.
Therefore, by Lemma~\ref{lemma:inexact_GD}, inexact gradient descent requires
$\bigO(\epsilon^{-2})$ hypergradient evaluations to obtain an
$\bigO(\epsilon)$-stationary point of $F$. It remains to bound the cost of each hypergradient evaluation.
Since the potential function is constructed using the approximate equilibrium
$\hat y$, its gradient operator is also inexact, with an error of $\bigO(\epsilon)$ (Lemma~\ref{lemma:inexact_potential}). Moreover, the auxiliary game is only locally strongly
monotone and smooth on $\bar B_R(y^*)$. We therefore choose
$\delta \leq
\min\{\epsilon/(2N),\,\bar\mu R/(4L_f),\,\sqrt{R/2}\}$,
which ensures
$\|\bar y_\delta^*-\hat y\|
\leq
\|\bar y_\delta^*-y^*\|+\|y^*-\hat y\|
\leq R$.
Hence, gradient descent initialized at $\hat y$ remains in the local
neighborhood and converges linearly to the auxiliary equilibrium. For each call to ${\hat\nabla F}(x)$,
$\bigO(\log(1/\delta^2))$ lower-level iterations are sufficient to obtain
the required $\delta^2$-accurate solutions. Each evaluation of the auxiliary
game operator requires gradients of all $N$ surrogate costs, and evaluating
each surrogate cost requires $\bigO(N)$ calls to the original follower
objectives $g_i$. Thus, each lower-level iteration uses $\bigO(N^2)$
first-order oracle calls. Combining this with $\bigO(\epsilon^{-2})$ outer
iterations and $\bigO(\log(1/\delta^2))$ lower-level iterations gives $\bigO\!\left(
N^2\epsilon^{-2}\log(1/\delta^2)
\right)
=
\widetilde{\bigO}(N^2\epsilon^{-2}).$
\end{proof}
Note that, compared to the symmetric case, there is an extra factor of $N$ due to the sum over all $N$ players' objectives in the potential function.

\subsection{Constrained Stackelberg game}
We now extend the framework to follower games with inequality constraints. For each player $i$, let $Y_i= \{y_i\in \mathbb R^{n_i}: e^{(i)}(x,y_i): = A^{(i)}x + B^{(i)}y_i + b^{(i)}\le 0\}$. Linear inequality constraints are more challenging because the hyperobjective may be piecewise smooth. However, only the active inequalities affect the local equilibrium sensitivity, so the problem can be reduced locally to a linear equality-constrained Stackelberg game. Appendix~\ref{sec:ineq} formalizes this reduction. Thus, it suffices to consider the case where $Y_i$ is equality-constrained. 

\begin{definition}[Functional constraints]\label{def:constraints}
Each $\ei: \inputspace \times Y_i \rightarrow \real^{m_i}$ is a function denoting equality constraints for each player. Each $e^{(i)}$ is private to player $i$. Denote $m \define \sum_{i=1}^N m_i$. Denote $\lambda \define (\lambda_1, \cdots, \lambda_N) \in \real^m$ as the Lagrange multipliers for constraints of the players. Denote $\lambda^* \define (\lambda^*_1, \cdots, \lambda^*_N) \in \mathbb{R}^m$ as the Lagrange multipliers satisfying the KKT condition of $\Min_{y_i \in Y_i(x)} g_i(x,y_i,y^*_{\neg i}), \forall i \in [N]$.
\end{definition}

In this setting, we use the same potential $\phi_\epsilon$ as Definition~\ref{def:potential} and define a constrained auxiliary game $\bar{\mathcal{G}}_{\delta}$ by $\bar{g}_i(x,y) = \phi_\epsilon(x,y) - g_i(x,y) - (\lambda^*)^\top By + \delta f(x,y)$. It follows that $\bar{\mathcal{G}}_\delta$ has the same local strong monotonicity modulus $\bar{\mu}$ and local smoothness modulus $\bar{C}$ on the ball $\bar{B}_R(y^*)$ as the unconstrained case. 
Let $\bar{y}^*_\delta$ be the Nash equilibrium of $\bar{\mathcal{G}}_\delta$ and $\bar{\lambda}^*_\delta$ be the corresponding Lagrange multiplier. We modify the finite difference to address the constraints as:
$v_\delta(x,\bar{y}^*_\delta, \bar{\lambda}^*_\delta) 
    = 
    \frac{1}{\delta}\sum_{i=1}^N [\nabla_x g_i(x,\bar{y}^*_{\delta,i},y^*_{\neg i}) - \nabla_xg_i(x,y^*) 
    + \partialD{e^{(i)}(x,y^*_i)}{x}^\top (\bar{\lambda}^*_\delta - \lambda^*)]$

The constrained analysis follows the same argument as in the unconstrained case after replacing the game Jacobian by the corresponding KKT system. The transpose-aware construction in Section \ref{sec:asym_unconstr} therefore yields the same hypergradient approximation error order. The resulting auxiliary game remains locally strongly monotone, so the lower-level equilibrium can be computed to the required accuracy by first-order methods, and the approximate hypergradient can be used in inexact gradient descent. The full approximation and complexity guarantees are deferred to Appendices~\ref{sec:asym_constr} and~\ref{sec:ineq}.

\begin{algorithm}
    \caption{First-Order Algorithm for Solving Stackelberg Game with Multiple Followers}
    \label{alg:ffo_algo}
    \textbf{notation:}
    Let $S_{\mathcal{M}}(x, y): \inputspace \times \mathbb{R}^n \rightarrow Y(x)$ be a solver that returns a Nash equilibrium of an arbitrary game $\mathcal{M}(x)$ given initial iterate $y$. Define $\mathcal{Q}_\delta \define \mathcal{G}_\delta$ if $\partialD{V(x,y^*(x))}{y}$ is symmetric, otherwise define $\mathcal{Q}_\delta \define \bar{\mathcal{G}}_\delta$. 
    \\
    \textbf{initialize:} 
    Initial upper-level variable $x_0 \in \inputspace$, initial lower-level variable $y_0 \in \mathbb{R}^n$, finite difference parameter $0<\delta<1$, potential function parameter $0<\epsilon<1$, learning rate $\eta > 0$.
    \\
    \begin{algorithmic}[1]
        \For{$t = 0,1,2,\dots,T-1$}
            \State
            Compute Nash equilibrium $\Approx{y} = S_{\mathcal{G}}(x_t, y_t)$ (and the dual solution $\Approx{\lambda}$ if constrained) 
            \State
            Compute perturbed Nash equilibrium $\Approx{y}_\delta = S_{\mathcal{Q}_\delta}(x_t, \Approx{y})$ (and the dual solution $\Approx{\lambda}_\delta$ if constrained). 
            \State
            Compute approximate hypergradient $\Approx{\nabla}F(x_t) = \nabla_x f(x_t, \Approx{y}) + \Approx{v}_{\delta}(x, \Approx{y}_\delta)$ (use $\Approx{v}_{\delta}(x, \Approx{y}_\delta, \Approx{\lambda}_\delta)$ if constrained)
            \State
            $y_{t+1} = \Approx{y}$
            \State
            $x_{t+1} = x_t - \eta \Approx{\nabla}F(x_t)$  
        \EndFor \\
    return $x_t$, $y_t$
    \end{algorithmic}
\end{algorithm}
\vspace{-10pt}
\section{Learning Non-truthful Auctions} \label{sec:nn}
In this section, we instantiate the general first-order Stackelberg framework for learning non-truthful auction mechanisms, where the mechanism is parameterized by neural networks, and bidders respond strategically. Figure \ref{fig:framework} and Algorithm \ref{alg:ffo_algo} illustrate our framework.

\begin{figure}[h!]
    \centering
    \includegraphics[width=0.9\linewidth]{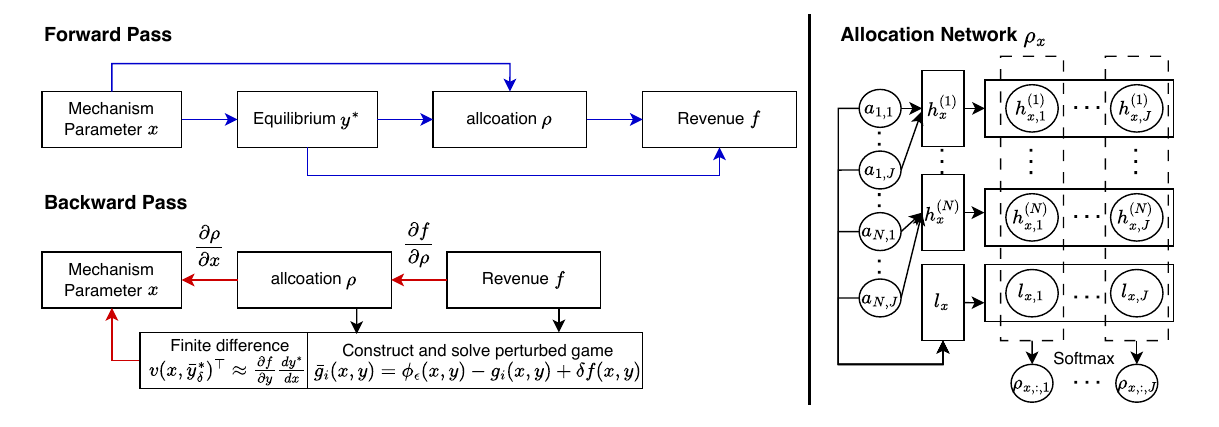}
    \vspace{-10pt}
    \caption{In forward pass, we compute the Nash equilibrium $y^*$ of the original game and compute the revenue $f(x,y^*)$. $a_{i,j}$ is the bidder $i$'s bid on item $j$. In the backward pass, we solve the perturbed game defined by $\bar{g}_i(x,y) = \phi_\epsilon(x,y)-g_i(x,y) + \delta f(x,y)$ and compute $v_\delta(x,\bar{y}^*_\delta) \approx \totalD{y^*(x)}{x}^\top \nabla_y f(x,y^*)$.}
    \vspace{-10pt}
    \label{fig:framework}
\end{figure}

\subsection{Parameterization of Auction Mechanism}
An $N$-bidder, $J$-item mechanism is given by an expected payment rule $P_x:\prod\mathcal{A}_i \rightarrow \real^N$ and a randomized allocation function $\rho_x: \prod\mathcal{A}_i \rightarrow \prod_{j=1}^J \Delta^{N+1}$. Without loss of generality, we assume $0 \leq t_i \leq \omega_i, \forall i \in [N]$, for some $\omega_i \in \mathbb{R}^J_{+}$. Let $\rho_{x,i,j}$ denote the probability that player $i$ receives item $j$. Define the additive expected utility function as $u_i(x,y,t_i) = \sum_{j=1}^J t_{i,j} \rho_{x,i,j}(\{\beta_{y_k}(t_k)\}_{k\in[N]}) - P_{x,i}(\{\beta_{y_k}(t_k)\}_{k\in[N]})$, with the bidding strategies $\{\beta_{y_k}\}_{k=1}^N$ being neural networks. 
We model the mechanism with temperature $\tau>0$, neural networks $l_x: \prod \mathcal{A}_i \rightarrow \mathbb{R}^{J}$, and $h^{(i)}_x: \mathcal{A}_i \rightarrow \mathbb{R}^{J}$. 
\begin{equation*}
       \rho_{x,i, j}(a) = \frac{\exp(h^{(i)}_{x,j}(a_i)/\tau)}{\sum_{k=1}^N \exp(h^{(k)}_{x,j}(a_k)/\tau) + \exp(l_{x,j}(a)/\tau)},\quad P_{x,i}(a) = \sum_{j=1}^J \rho_{x,i,j}(a) a_{i,j}
\end{equation*}
Similar to a first-price auction, bidders pay their own bids if they receive an item. The function $l_x$ controls the probability that no items are allocated.

\subsection{Tracking a Local Equilibrium Branch}
Algorithm \ref{alg:ffo_algo} is used to find optimal non-truthful auctions. However, Nash equilibria may not be unique or even exist in auction games \citep{bichler2023convergence}, in which case Problem \ref{eq:equal_p1} is ill-defined. 
When there exists multiple Nash equilibria, we select a differentiable branch of Nash equilibria $z^*: \inputspace \rightarrow Y(x)$ such that $z^*(x) \in \text{NE}(\mathcal{G}(x)), \forall x \in \inputspace$, and formulate the following Stackelberg game: $\Min_{x\in \inputspace} \, F(x) \define f(x, z^*(x)) \define -\mathbb{E}_{t}\left[\sum_{i=1}^N P_{x,i}(\{\beta_{z^*_j(x)}(t_j)\}_{j=1}^N)\right]$.
When Nash equilibrium does not exist, we may replace $\text{NE}(\mathcal{G})$ with $\text{LNE}(\mathcal{G})$. 
The problem is how we can select a differentiable branch of local Nash equilibria. We may assume local strong monotonicity around every local Nash equilibrium $y^*(x)$. 
Local Nash equilibria vary smoothly if the initial iterate is accurate and the mechanism parameters vary slightly (the hyperobjective's learning rate $\eta$ is small). To ensure the initial game can be solved accurately, we pretrain the mechanism neural network to the allocation rule of a simultaneous first-price auction. We also initialize the lower-level solution as a parameter $y$ that induces a truthful bidding strategy ($\beta_{y_i}(t_i) = t_i \, , \forall t_i \in \mathcal{T}_i$). The following theorem formalizes the warm-start strategy for auction design (see the proof in Appendix \ref{appx:exp}). 

\begin{restatable}{theorem}{warmStart}\label{thm: warm_start}
Suppose $\mathcal{G}(x)$ is locally strongly monotone on a ball $\bar{B}_{r}(y^*(x))$ with radius $r>0$ for all $y^*(x) \in \text{LNE}(\mathcal{G}(x))$ and $x \in \inputspace$. Given initial iterate $y$, let $S(x,y, \Approx{\delta}): \inputspace \times \mathbb{R}^n \times \mathbb{R}_{+} \rightarrow Y(x)$ be a solver can always return $\Approx{y}$ such that $\|\Approx{y} - y^*(x)\| \leq \Approx{\delta}$ whenever $\|y - y^*(x)\| \leq r$ and $y^*(x) \in \text{LNE}(\mathcal{G}(x))$.

In Algorithm \ref{alg:ffo_algo}, suppose $\|y_0 - y^*(x_0)\| \leq \frac{r}{2}$ for some $y^*(x_0) \in \text{LNE}(\mathcal{G}(x_0))$ and the learning rate $\eta$ is sufficiently small, then $y_k = z^*(x_k), \, \forall k \in [T]$ for some continuously differentiable function $z^*: X \rightarrow Y(x)$ such that $z^*(x) \in \text{LNE}(\mathcal{G}(x))$.
\end{restatable}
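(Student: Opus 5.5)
The plan is to build the branch $z^*$ with the implicit function theorem and then use induction over the outer iterations to show that the warm-started solver stays on that branch.

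\textbf{Step 1: a local $C^1$ branch through each equilibrium.} Fix $x$ and a local Nash equilibrium $y^*(x)\in\text{LNE}(\mathcal{G}(x))$. We have $V(x,y^*(x))=0$. By local strong monotonicity on $\bar B_r(y^*(x))$, the symmetric part of $J(x,y^*(x))$ satisfies $\tfrac12(J+J^\top)\succeq\mu I$, so $J$ is invertible.

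Assumption~\ref{assum:VI} makes $V$ continuously differentiable. The implicit function theorem then gives a neighborhood $U_x$ of $x$ and a $C^1$ map $z$ on $U_x$ with $z(x)=y^*(x)$, $V(x',z(x'))=0$, and $\frac{dz}{dx'}=-J^{-1}\partialD{V}{x}$. Using the bounds $\|J^{-1}\|\le 1/\mu$ and $\|\partialD{V}{x}\|\le C$, the map $z$ is $(C/\mu)$-Lipschitz.

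We still need $z(x')\in\text{LNE}(\mathcal{G}(x'))$. This holds because $z(x')$ stays inside the ball on which the game is locally strongly monotone, and a stationary point there is a local Nash equilibrium. In that ball the equilibrium is the unique zero of $V(x',\cdot)$.

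\textbf{Step 2: the solver stays on the branch (induction).} The claim at step $k$ is that $y_k=z^*(x_k)$ up to the solver accuracy $\Approx{\delta}$, with $\|y_k-z^*(x_k)\|\le \Approx{\delta}\le r/2$. The base case comes from the hypothesis $\|y_0-y^*(x_0)\|\le r/2$.

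For the inductive step, suppose the claim holds at step $k$. Then
\[
\|x_{k+1}-x_k\|=\eta\|\Approx{\nabla}F(x_k)\|\le \eta G,
\]
where $G$ bounds the approximate hypergradient (it is finite by Lipschitzness of $f$ and the bound from Theorem~\ref{theorem:asym_grad_approx}). Hence
\[
\|y_{k}-z^*(x_{k+1})\|\le \|y_k-z^*(x_k)\|+\tfrac{C}{\mu}\eta G\le \tfrac r2+\tfrac{C}{\mu}\eta G\le r
\]
whenever $\eta\le \frac{r\mu}{2CG}$. Also, $x_{k+1}$ lies in the IFT neighborhood of $x_k$ provided $\eta$ is small.

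The solver hypothesis applied with initial iterate $y_k$ then returns $\Approx{y}$ with $\|\Approx{y}-z^*(x_{k+1})\|\le\Approx{\delta}$. By local uniqueness, the equilibrium targeted is the branch point $z^*(x_{k+1})$ and not some other local Nash equilibrium. Therefore $y_{k+1}=\Approx{y}$ tracks $z^*(x_{k+1})$, provided $\Approx{\delta}\le r/2$. In the exact-solver idealization ($\Approx{\delta}\to0$), this gives the equality stated in the theorem.

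\textbf{Step 3: gluing the local branches.} Define $z^*$ along the iterates by patching the IFT maps. On overlapping neighborhoods the local branches agree, by uniqueness of the zero of $V$ in the strong-monotonicity ball. So $z^*$ is a well-defined $C^1$ function on the union of neighborhoods covering $\{x_k\}$, and it extends to a $C^1$ selection of local Nash equilibria.

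\textbf{Main obstacle.} The hard part is making the neighborhood sizes uniform. The IFT neighborhood radius and the Lipschitz constant of $z$ must not shrink along the iterates, otherwise no single fixed $\eta$ works. I would get uniform radii from the uniform constants $\mu$, $C$, $S$: the Jacobian $J$ stays invertible within distance $\mu/(2S)$ of the equilibrium, which yields a uniform radius of order $\min(r,\mu/S)\cdot\mu/C$ in $x$. I would take $\eta$ below the corresponding threshold.
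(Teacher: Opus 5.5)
Your proposal is correct and follows essentially the same route as the paper's proof: an implicit-function-theorem branch through the current equilibrium, a step-size condition guaranteeing $\|z^*(x_k)-z^*(x_{k+1})\|\le r/2$, the triangle inequality to place the warm start within $r$ of $z^*(x_{k+1})$, the solver guarantee, and induction. Your uniform Lipschitz constant $C/\mu$ and uniform IFT radius replace the paper's iterate-dependent $L_k$ and $N_{x_k}$, and they tighten the argument, because they make ``$\eta$ sufficiently small'' a condition that does not depend on the iterates that $\eta$ itself generates.
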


%% file: experiments.tex
\section{Experiments}

\paragraph{Experimental settings.} We evaluate our method in three two-bidder auction settings. The first is a symmetric single-item auction, with bidders' values drawn independently and identically (iid) from $\text{Uniform}[0, 1]$. In this setting, a second-price auction with a reserve of $0.5$ is revenue-optimal. A first-price auction with the same reserve achieves the same expected revenue at its symmetric Bayes-Nash equilibrium~\citep{myerson1981optimal, krishna2009auction}. The second setting is an asymmetric single-item auction, with values drawn independently from $\text{Uniform}[0, \frac{4}{3}]$ (strong bidder) and $\text{Uniform}[0, \frac{4}{5}]$ (weak bidder). Under the optimal auction~\citep{myerson1981optimal}, the item is allocated to the bidder with the highest non-negative virtual value, where the strong and weak bidders' virtual values are $2t_1 - \frac{4}{3}$ and $2t_2 - \frac{4}{5}$, with corresponding reserve values of $0.667$ and $0.4$, respectively. The third setting has two items and additive bidders, with per-item valuations sampled iid from $\text{Uniform}[0,1]$. We use GemNet~\citep{wang2024gemnet} as a learned truthful benchmark; no closed-form revenue-optimal mechanism is known. 

At the start of each training run, we sample $10,000$ random valuation profiles and reuse this fixed sample to estimate expected revenue throughout training. We replace softmax with max when reporting the expected revenue.

\begin{figure}[h!]
    \centering
    \vspace{-10pt}
    \includegraphics[width=\linewidth]{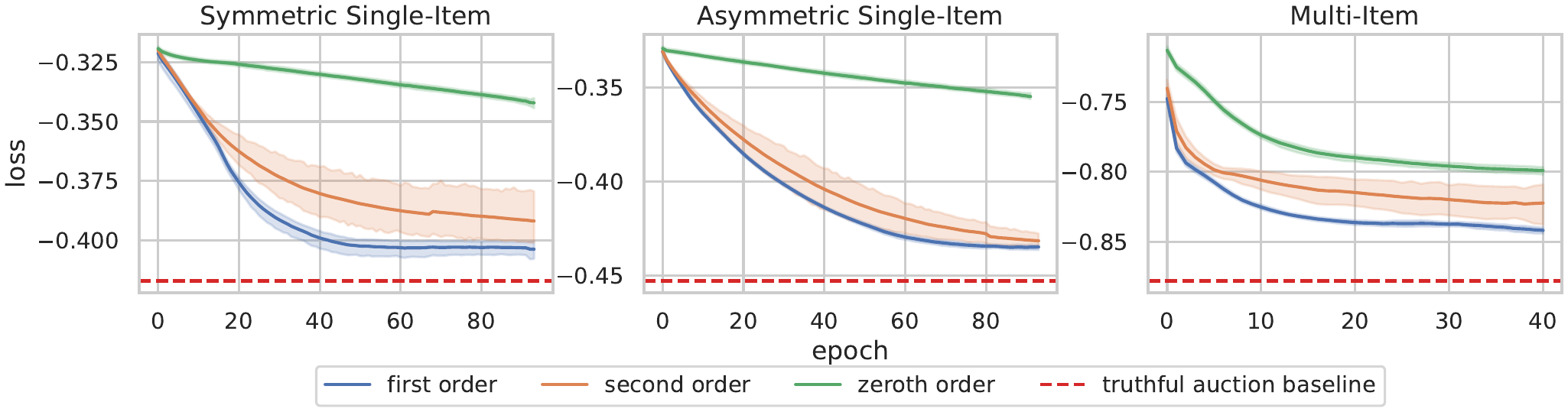}
    \caption{Convergence of negative expected revenue. Symmetric single-item auction (left), asymmetric single-item auction (middle), multi-item auction (right). The truthful multi-item auction baseline is GemNet.}
    \label{fig:convergence}
    \vspace{-10pt}
\end{figure}


\begin{minipage}[c]{0.55\textwidth}
    \paragraph{Optimization performance.} Figure \ref{fig:convergence} compares our first-order method with a second-order method (exact hypergradient) and a zeroth-order method~\citep{kornowski2024algorithm}. Across all three settings, our method converges faster to a higher revenue. Moreover, the second-order method may be more sensitive to bias and noise based on its high variance.
    Figure \ref{fig:time} evaluates computational scaling by increasing the number of bidders in the symmetric single-item setting. Our first-order method has a significantly lower backward-pass runtime compared to the second-order implementation. This is consistent with the growing computational cost of the linear system required by the second-order method. 
\end{minipage}\hfill
\begin{minipage}[c]{0.4\textwidth}
    \centering
    \includegraphics[width=\linewidth]{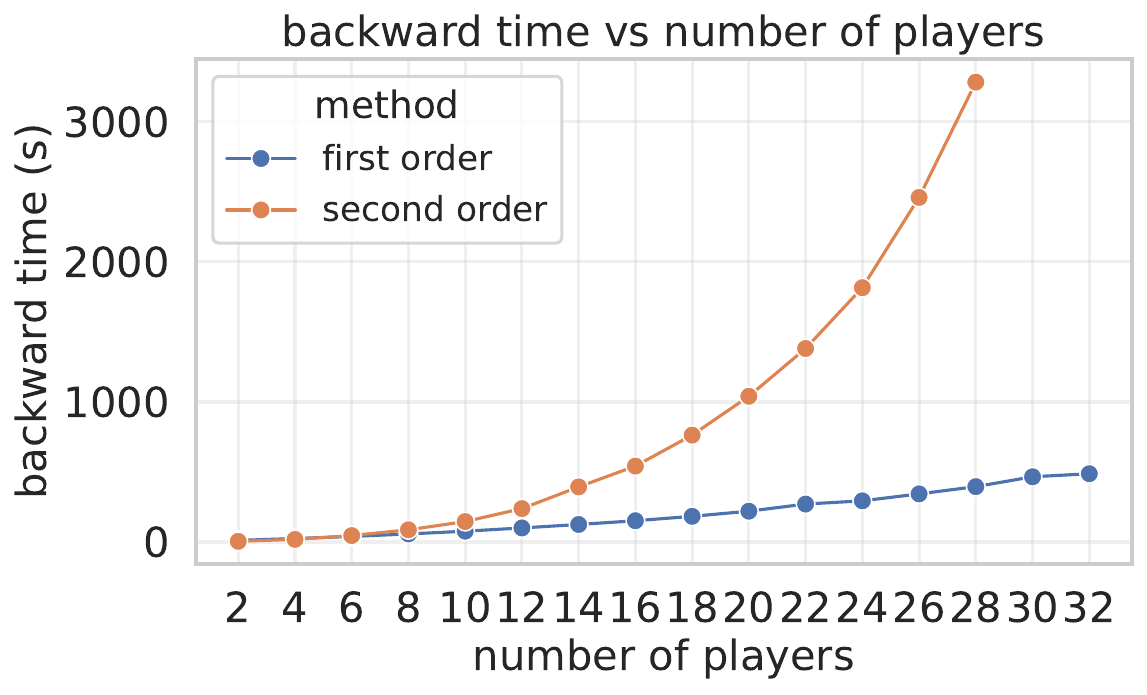}
    \captionof{figure}{Backward time vs number of players in symmetric single-item auction}
    \label{fig:time}
\end{minipage}
\vspace{-10pt}


\paragraph{Comparison with truthful benchmarks.} Our goal in these experiments is to assess whether optimizing pay-your-bid mechanisms can achieve revenues close to truthful benchmarks while retaining a simple, transparent payment rule----not to establish a revenue advantage of non-truthfulness itself. Indeed, Table \ref{tab:revenue} demonstrates that the revenues of our non-truthful auction approximate those of the optimal auctions in single-item settings and GemNet in the multi-item setting. Appendix~\ref{appx:exp} and Figure \ref{fig:sym_vis} include allocation visualizations that provide qualitative comparisons with these benchmarks. 


\begin{table}[h!]
    \vspace{-8pt}
    \centering
    \caption{Expected revenue comparison}
    \label{tab:revenue}
    \begin{tabular}{lccc}
        \hline
        Methods & Symmetric 2-bidder 1-item & Asymmetric 2-bidder 1-item & 2-bidder 2-item \\
        \hline
        Truthful Baselines & 0.417 (optimal) & 0.453 (optimal) & 0.878 (GemNet)\\
        Ours & 0.404 & 0.435 & 0.842 \\
        \hline
    \end{tabular}
    \vspace{-5pt}
\end{table}

\begin{figure}[h!]
    \centering
    \vspace{-10pt}
    \includegraphics[width=\linewidth]{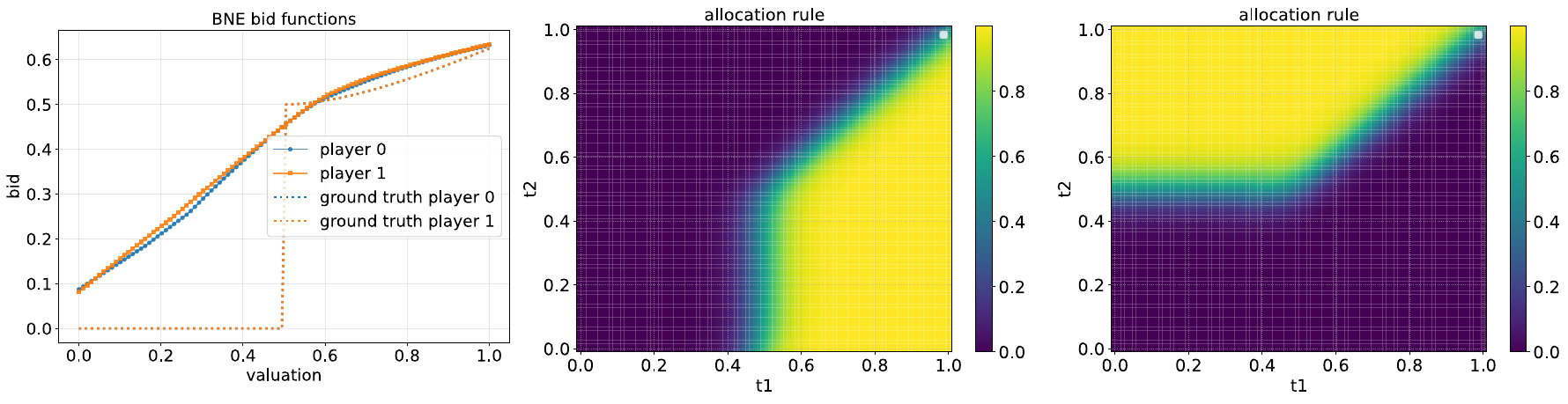}
    \caption{Equilibrium bidding strategy (left) and allocation rule (middle, right) for symmetric single-item auction. For allocation probability, x-axis and y-axis are the bid of bidder 1 and bidder 2 respectively. As shown in the leftmost plot, the learned auction is non-truthful, yet its revenue matches second-price auction with optimal reserve price of $0.5$.}
    \label{fig:sym_vis}
    \vspace{-10pt}
\end{figure}

%% file: conclusion.tex
\section{Conclusion}
We formulate non-truthful auction design as a Stackelberg game and develop a novel fully first-order algorithm to solve it. Our key contribution is establishing non-asymptotic convergence guarantees even when the Jacobian is asymmetric. Our algorithm discovers non-truthful auction with revenue close to optimal truthful auctions even when the mechanism and bidders are neural networks.

\section{Statement of LLM Use}
In this work, we used generative AI tools for assistance with mathematical derivation and method implementation. We have not used generative AI tools for other tasks with required disclosure. Additionally, we used generative AI tools for assistance with debugging and copy-editing to improve readability.
We have reviewed all AI-assisted work. We take responsibility for the final content of this work, including text, claims or artifacts produced with the aid of generative AI.

\section{Ethics Statement}
Our first order algorithm solves Stackelberg games, which is a widely used model in Economics. We recommend domain-specific validation when our algorithm is deployed in the real world.

\section{Reproducibility Statement}
For reproducibility, all assumptions and mathematical proofs are provided in the main text and Appendix with sufficient explanation. Experimental details are listed in Appendix \ref{appx:exp}.

%% file: appendix.tex
\appendix
\section*{Appendix}
\section{Auxiliary Lemmas}
\begin{lemma}[Variational inequality characterization of Nash Equilibrium]\label{lemma:local_nash_VI}
Suppose feasible set $Y_i(x)$ is convex, for all $i\in [N]$. If $(y^*_1, \cdots, y^*_N) \in \text{LNE}(\mathcal{G}(x))$, then
\begin{align*}
    \forall y_i \in U_i: \, \langle \nabla_{y_i} g_i(x, y^*_i y^*_{\neg i}) , y_i - y^*_i\rangle \geq 0
\end{align*}
for some neighborhood $U_i \subseteq Y_i$ of $y^*_i$, for all $i \in [N]$. Under Assumption \ref{assum:VI}, each $g_i$ is convex in $y_i$, the reverse direction also holds with $(y^*_1, \cdots, y^*_N) \in \text{NE}(\mathcal{G}(x))$. Thus, local Nash equilibrium is a necessary and sufficient condition for Nash equilibrium in a strongly monotone and smooth game.
\end{lemma}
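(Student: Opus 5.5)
We prove the two directions separately, one player $i$ at a time. For each $i$ we fix the opponents at $y^*_{\neg i}$ and study the single-player function $\psi_i(y_i) \define g_i(x, y_i, y^*_{\neg i})$ on the convex set $Y_i(x)$. Under Assumption~\ref{assum:VI} the operator $V$ is Lipschitz, so $\psi_i$ is differentiable with $\nabla \psi_i(y_i) = \nabla_{y_i} g_i(x, y_i, y^*_{\neg i})$. Where the definition of a local Nash equilibrium supplies an arbitrary neighborhood, we shrink it to $U_i \define Y_i(x) \cap B_\rho(y^*_i)$ for some $\rho>0$. This set is convex, and that convexity is what makes the segment arguments below work.

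\emph{Necessity.} Since $y^* \in \text{LNE}(\mathcal{G}(x))$, each $y^*_i$ minimizes $\psi_i$ over $U_i$. Take any $y_i \in U_i$. By convexity of $U_i$, the point $y^*_i + t(y_i - y^*_i)$ lies in $U_i$ for all $t\in[0,1]$. Hence the scalar function $\varphi(t) \define \psi_i(y^*_i + t(y_i-y^*_i))$ satisfies $\varphi(t) \ge \varphi(0)$ on $[0,1]$, which forces the one-sided derivative $\varphi'(0^+)$ to be nonnegative. By the chain rule, $\varphi'(0^+) = \langle \nabla_{y_i} g_i(x,y^*_i,y^*_{\neg i}), y_i - y^*_i\rangle$, which gives the variational inequality on $U_i$.

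\emph{Sufficiency.} We first extract convexity of $\psi_i$ from strong monotonicity of $V$. Apply item~1 of Assumption~\ref{assum:VI} to two profiles $y' = (y_i', y^*_{\neg i})$ and $y = (y_i, y^*_{\neg i})$ that differ only in block $i$. Every other block of $y'-y$ vanishes, so the inner product reduces to
\[
\langle \nabla\psi_i(y_i') - \nabla\psi_i(y_i), y_i'-y_i\rangle \ge \mu\|y_i'-y_i\|^2 .
\]
Thus $\nabla\psi_i$ is strongly monotone, and $\psi_i$ is $\mu$-strongly convex. The first-order characterization of convexity then gives, for every $y_i\in U_i$,
\[
\psi_i(y_i) \ge \psi_i(y^*_i) + \langle \nabla\psi_i(y^*_i), y_i-y^*_i\rangle \ge \psi_i(y^*_i),
\]
so $y^*_i$ is a local minimizer of $\psi_i$ on $Y_i(x)$.

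The step needing the most care is extending this local minimality to all of $Y_i(x)$, since the variational inequality is only assumed on $U_i$. We handle it with a segment argument. Take an arbitrary $y_i\in Y_i(x)$ and choose $t\in(0,1]$ small enough that $z \define y^*_i + t(y_i-y^*_i)$ lies in $U_i$. Convexity of $\psi_i$ gives $\psi_i(z) \le (1-t)\psi_i(y^*_i) + t\psi_i(y_i)$, and local minimality gives $\psi_i(y^*_i) \le \psi_i(z)$. Combining the two yields $\psi_i(y^*_i) \le \psi_i(y_i)$.

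Doing this for every $i$ gives exactly Definition~\ref{def:nash}, so $y^*\in\text{NE}(\mathcal{G}(x))$. Since every Nash equilibrium is trivially a local Nash equilibrium, the two notions coincide in strongly monotone, smooth games.
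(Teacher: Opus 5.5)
Your proof is correct and follows the paper's overall strategy: first-order optimality along feasible segments for necessity, and convexity of $g_i$ in $y_i$ for sufficiency. Where it differs, it is more careful than the paper, in three places.

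\textbf{Local-to-global step.} The paper's sufficiency argument starts from the variational inequality on \emph{all} of $Y(x)$ and then invokes convexity. The lemma's ``reverse direction'' only supplies the inequality on the neighborhoods $U_i$. Your segment argument (local minimality on $U_i$ plus convexity of $\psi_i$ along $[y^*_i, y_i]$) is what actually closes this gap. An equivalent fix is to extend the inequality itself: for $z = y^*_i + t(y_i - y^*_i) \in U_i$ with $t>0$, you have $\langle \nabla\psi_i(y^*_i), z - y^*_i\rangle = t\,\langle \nabla\psi_i(y^*_i), y_i - y^*_i\rangle$.

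\textbf{Convexity of $U_i$.} The paper implicitly treats $U_i$ as convex. Your choice $U_i = Y_i(x)\cap B_\rho(y^*_i)$ makes this legitimate, and it is allowed because the statement only asks for \emph{some} neighborhood. The same shrinking also works in the sufficiency direction, where the given neighborhood comes from the hypothesis.

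\textbf{Convexity of $g_i$ in $y_i$.} The paper's parenthetical ``$V$ monotone implies $g_i$ convex in $y_i$'' is justified explicitly by your restriction of strong monotonicity to profiles that differ only in block $i$.

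One minor remark: the necessity direction does not need Assumption~\ref{assum:VI}. It only uses differentiability of $g_i$ in $y_i$, which the statement already presupposes by writing $\nabla_{y_i} g_i$.
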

\begin{proof}
Let $U = \Pi_i U_i$ and $Y(x)=\Pi_i Y_i(x)$.
Suppose $y^* = (y^*_1, \cdots, y^*_N) \in \text{LNE}(\mathcal{G}(x))$, then for all $i\in[N]$, $y^*_i \in \argmin_{y_i \in U_i} g_i(x, y_i, y^*_{\neg i})$ is a local minimum. Then, 
$\langle \nabla_{y_i} g_i(x,y^*_i, y^*_{\neg i}), d\rangle \geq 0$ for all feasible directions $d \in \mathbb{R}^{n_i}$ such that $y^*_i + td \in U_i$ for some $t>0$. Since $U_i \subseteq Y_i(x)$ is convex, $\langle \nabla_{y_i} g_i(x,y^*_i, y^*_{\neg i}), y_i - y^*_i\rangle \geq 0$, for all $y_i \in U_i$. Then $\langle V(x,y^*), y - y^*\rangle \geq 0$, for all $y \in U$.

\medskip
\noindent
Suppose $\langle V(x,y^*), y - y^*\rangle \geq 0$, for all $y \in Y(x)$. Then, for all $i \in [N]$, $\langle \nabla_{y_i} g_i(x,y^*_i, y^*_{\neg i}), y_i - y^*_i\rangle \geq 0$. Hence, $y^*_i$ is a stationary point. If $V$ is monotone (which implies each $g_i$ is convex in $y_i$), then $y^*_i \in \argmin_{y_i \in Y_i(x)} g_i(x, y_i, y^*_{\neg i})$ is global minimum, then $y^* \in \text{NE}(\mathcal{G}(x)) \subseteq \text{LNE}(\mathcal{G}(x))$. 
\end{proof}

\noindent
Since feasible set for each player is represented by linear constraints, we provide an alternative characterization of Nash equilibrium using KKT condition,
\begin{corollary}[KKT characterization of Nash Equilibrium]\label{corollary:local_nash_KKT}
Under Assumption \ref{assum:LICQ}, if $(y^*_1, \cdots, y^*_N) \in \text{LNE}(\mathcal{G}(x))$, then there exists $\lambda^* \define (\lambda^*_1, \cdots, \lambda_N) \in \mathbb{R}^m$ such that  
\begin{align*}
    (y^*_i, \lambda^*_i) \text{ satisfies the KKT condition of } \Min_{y_i \in Y_i(x)} g_i(x,y_i,y^*_{\neg i}), \forall i \in [N]
\end{align*}
Under Assumption \ref{assum:VI}, each cost function $g_i$ is convex in $y_i$. There exists optimal dual solution $\lambda^* \define (\lambda^*_1, \cdots, \lambda_N) \in \mathbb{R}^m$ such that
$(y^*_1, \cdots, y^*_N) \in \text{NE}(\mathcal{G})$ if and only if 
\begin{align*}
    (y^*_i, \lambda^*_i) \text{ satisfies the KKT condition of } \Min_{y_i \in Y_i(x)} g_i(x,y_i,y^*_{\neg i}), \forall i \in [N]
\end{align*}
\end{corollary}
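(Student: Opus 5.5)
We need to prove Corollary \ref{corollary:local_nash_KKT}. The plan is to reduce everything to single-player optimization problems via Lemma \ref{lemma:local_nash_VI} and then invoke standard KKT theory player by player.

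\textbf{Necessity.} Fix $x$ and a local Nash equilibrium $y^*$. By definition, for each $i\in[N]$ the strategy $y^*_i$ is a local minimizer of the single-player problem $\Min_{y_i\in Y_i(x)} g_i(x,y_i,y^*_{\neg i})$. Here $y^*_{\neg i}$ is held fixed, so this is an ordinary smooth program.

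Its feasible set is described by the constraints $e^{(i)}(x,y_i)=A^{(i)}x+B^{(i)}y_i+b^{(i)}\le 0$, which are affine in $y_i$. Under Assumption \ref{assum:LICQ}, the active constraint gradients (the rows of $B^{(i)}$ indexed by active constraints) are linearly independent at $y^*_i$. Affine constraints alone already give the Abadie constraint qualification, so either route suffices. Hence the first-order KKT theorem applies and gives a multiplier $\lambda^*_i\ge 0$ satisfying
\[
\nabla_{y_i}g_i(x,y^*)+B^{(i)\top}\lambda^*_i=0,\quad e^{(i)}(x,y^*_i)\le 0,\quad \lambda^*_i\odot e^{(i)}(x,y^*_i)=0 .
\]
Stacking these multipliers gives $\lambda^*=(\lambda^*_1,\dots,\lambda^*_N)\in\mathbb R^m$. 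This proves the first claim.

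\textbf{The equivalence under Assumption \ref{assum:VI}.} Monotonicity of $V$ restricted to the block $y_i$ (with the other blocks fixed) shows that $\nabla_{y_i}g_i$ is monotone in $y_i$. Therefore $g_i(x,\cdot,y^*_{\neg i})$ is convex.

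For the forward direction, a Nash equilibrium is in particular a local Nash equilibrium, so the necessity part already supplies $\lambda^*$.

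For the converse, suppose each pair $(y^*_i,\lambda^*_i)$ satisfies KKT. The problem for player $i$ is convex, with a convex objective and affine constraints. For any feasible $y_i$, we have
\[
g_i(x,y_i,y^*_{\neg i})\ge g_i(x,y^*)+\langle\nabla_{y_i}g_i(x,y^*),y_i-y^*_i\rangle = g_i(x,y^*)-\langle\lambda^*_i,B^{(i)}(y_i-y^*_i)\rangle .
\]
By complementary slackness and feasibility, $\langle\lambda^*_i,B^{(i)}(y_i-y^*_i)\rangle=\langle\lambda^*_i,e^{(i)}(x,y_i)\rangle-\langle\lambda^*_i,e^{(i)}(x,y^*_i)\rangle=\langle\lambda^*_i,e^{(i)}(x,y_i)\rangle\le 0$. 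Hence $g_i(x,y_i,y^*_{\neg i})\ge g_i(x,y^*)$, so $y^*_i$ is a global best response. Since this holds for every $i$, $y^*\in\text{NE}(\mathcal G(x))$.

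An alternative route for this direction: the KKT conditions give the variational inequality $\langle V(x,y^*),y-y^*\rangle\ge 0$ on $Y(x)$, and the reverse direction of Lemma \ref{lemma:local_nash_VI} then yields the same conclusion.

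\textbf{Main obstacle.} The only delicate point is making sure a constraint qualification holds, so that multipliers exist in the necessity direction. I would handle it by citing Assumption \ref{assum:LICQ}, or the fact that linear constraints need no further qualification. A second point to state carefully is that strong monotonicity of the joint operator implies convexity of each $g_i$ in its own block.
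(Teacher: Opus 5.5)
Your proof is correct and follows essentially the argument the paper has in mind. The paper gives no separate proof; it presents the statement as an immediate consequence of Lemma~\ref{lemma:local_nash_VI} and the linearity of the constraints, and your per-player reasoning fills in exactly that step, including the equality-constrained version the paper also invokes (e.g.\ in Lemma~\ref{lemma:KKT_diff}), which follows verbatim with $\lambda^*_i$ unrestricted in sign and $\langle \lambda^*_i, B^{(i)}(y_i-y^*_i)\rangle = 0$ for feasible $y_i$.
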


\begin{lemma}[Fixed point characterization of variational inequality]\label{lemma:fixed_point}
Let $Y \subseteq \mathbb{R}^n$ be a closed, convex set. Denote the Euclidean projection operator as $\Pi_Y(z) = \argmin_{y \in Y} \frac{1}{2}(y-z)^\top (y-z)$. Let $V: \mathbb{R}^n \rightarrow \mathbb{R}^n$. Let $\eta>0$ be an arbitrary number.
$y^* \in Y$ satisfies $\langle V(y^*), y - y^*\rangle \geq 0, \, \forall y \in Y$ if and only if $y^* = \Pi_Y(y^* - \eta V(y^*))$
\end{lemma}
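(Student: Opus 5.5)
The argument reduces both directions to the variational characterization of the Euclidean projection onto a closed convex set. Since $Y$ is closed and convex and the objective $\frac{1}{2}\|y-z\|^2$ is strongly convex, $\Pi_Y(z)$ exists and is unique. By the first-order optimality condition for minimizing a convex differentiable function over a convex set, $w=\Pi_Y(z)$ if and only if $w\in Y$ and
\begin{align*}
\langle w - z,\, y - w\rangle \geq 0 \quad \forall y \in Y .
\end{align*}
I would state this as the key auxiliary fact. It is justified by the same reasoning as in Lemma~\ref{lemma:local_nash_VI}: feasible directions $y-w$ have nonnegative directional derivative at a minimizer, and convexity gives the converse.

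Next I apply this fact with $z = y^* - \eta V(y^*)$ and candidate $w = y^*$. The condition $y^* = \Pi_Y(y^*-\eta V(y^*))$ then holds if and only if $y^*\in Y$ and
\begin{align*}
\langle y^* - (y^* - \eta V(y^*)),\, y - y^*\rangle = \eta\,\langle V(y^*),\, y - y^*\rangle \geq 0 \quad \forall y\in Y .
\end{align*}
Since $\eta>0$, dividing by $\eta$ shows this is equivalent to the variational inequality $\langle V(y^*), y-y^*\rangle \geq 0$ for all $y\in Y$.

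Both implications follow at once. If $y^*\in Y$ solves the VI, the projection characterization certifies that $y^*$ is the (unique) projection of $y^* - \eta V(y^*)$. Conversely, if $y^*$ is a fixed point, it lies in $Y$ as a projection, and the characterization yields the VI.

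The only step requiring care is the projection characterization itself, specifically its sufficiency direction. To prove it, I would expand
\begin{align*}
\|y-z\|^2 = \|y-w\|^2 + 2\langle y-w,\, w-z\rangle + \|w-z\|^2 \geq \|w-z\|^2 ,
\end{align*}
which shows directly that $w$ minimizes $\|y-z\|$ over $Y$. For necessity, I would differentiate $t\mapsto \frac{1}{2}\|w+t(y-w)-z\|^2$ at $t=0^+$, using convexity of $Y$ to ensure $w+t(y-w)\in Y$ for $t\in[0,1]$. No assumptions on $V$ beyond being well defined at $y^*$ are needed.
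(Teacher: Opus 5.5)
Your proposal is correct and follows the same route as the paper: both apply the first-order optimality characterization of the Euclidean projection with $z = y^* - \eta V(y^*)$ and $w = y^*$, and then cancel $\eta > 0$. The only difference is that you also prove the projection characterization itself, which the paper simply invokes.
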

\begin{proof}
By optimality condition of projection, $y^* = \Pi_Y(y^* - \eta V(y^*))$ if and only if $\forall y \in Y$, 
\begin{align*}
    \langle y^* - (y^* - \eta V(y^*)),  y - y^*\rangle & \geq 0 \\
    \langle V(y^*),  y - y^*\rangle & \geq 0
\end{align*}
\end{proof}

\begin{lemma}\label{lemma:matrix_norm}
Let $A \in \mathbb{R}^{m \times n}$. Let $\sigma_{\max}(A) = \sigma_1(A) \geq \sigma_2(A) \geq \cdots \geq \sigma_R(A) = \sigma_{\min}(A) > 0$ be non-zero singular values of $A$. Then,
\begin{align*}
    \|A\| = \max_{x\neq 0} \frac{\|Ax\|}{\|x\|} = \sigma_{\max}(A)
\end{align*}
If $A$ is an invertible square matrix or $A$ has full column rank, then 
\begin{align*}
    \min_{x \neq 0} \frac{\|Ax\|}{\|x\|} = \sigma_{\min}(A)
\end{align*}
\end{lemma}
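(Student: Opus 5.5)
This is the standard singular-value characterization of operator norms, so I would prove it from the singular value decomposition (SVD). Write $A = U\Sigma W^\top$ with $U\in\mathbb{R}^{m\times m}$ and $W\in\mathbb{R}^{n\times n}$ orthogonal. Let $\Sigma\in\mathbb{R}^{m\times n}$ carry $\sigma_1\ge\cdots\ge\sigma_R>0$ on its diagonal and zeros elsewhere, where $R=\mathrm{rank}(A)$.

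\textbf{Step 1: reduce to the diagonal case.} Orthogonal matrices preserve Euclidean norms. Substituting $z=W^\top x$, which is a bijection of $\mathbb{R}^n\setminus\{0\}$ with $\|z\|=\|x\|$, gives $\|Ax\|/\|x\| = \|\Sigma z\|/\|z\|$.

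\textbf{Step 2: bound the ratio.} We have $\|\Sigma z\|^2=\sum_{k=1}^R \sigma_k^2 z_k^2$. This is at most $\sigma_1^2\sum_k z_k^2 \le \sigma_1^2\|z\|^2$, with equality at $z=e_1$. Hence $\|A\|=\sigma_{\max}(A)$.

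\textbf{Step 3: the minimum.} The only point needing care is identifying when the smallest ratio equals $\sigma_{\min}$ rather than $0$. If $A$ has full column rank, then $R=n$, so every coordinate of $z$ is weighted. Then $\|\Sigma z\|^2=\sum_{k=1}^n\sigma_k^2 z_k^2\ge \sigma_n^2\|z\|^2$, with equality at $z=e_n$. So the minimum is attained and equals $\sigma_R=\sigma_{\min}(A)$.

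An invertible square matrix has $m=n=R$, so it is a special case of full column rank. Without full column rank we would have $R<n$, and then $z=e_n$ would give ratio $0$; this is why the hypothesis is needed.
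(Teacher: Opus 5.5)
Your proof is correct and follows essentially the same SVD argument as the paper. In both, you rotate by the right singular vectors to reduce to the diagonal $\Sigma$, bound $\|\Sigma z\|^2$ by $\sigma_{\max}^2\|z\|^2$ (or from below by $\sigma_{\min}^2\|z\|^2$ under full column rank), and attain each bound at the corresponding standard basis vector, i.e., the corresponding right singular vector of $A$. Your closing remark that the ratio can reach $0$ when $R<n$ is a small clarification the paper omits.
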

\begin{proof}
Let $A = U \Sigma V^\top$ be the singular value decomposition of $A$ where $U \in \mathbb{R}^{m \times m}$, $V \in \mathbb{R}^{n \times n}$ and $\Sigma \in \mathbb{R}^{m \times n}$. $U$ and $V$ are orthogonal matrix with orthonormal columns. $\Sigma = \text{diag}(\sigma_1, \cdots, \sigma_R, 0, \cdots, 0)$. Let $x$ be a unit vector, 
\begin{align*}
    \|Ax\|^2
    &=
    x^\top A^\top A x \\
    &= 
    x^\top V \Sigma^\top U^\top U \Sigma V^\top x \\
    &=
    x^\top V \Sigma^\top \Sigma V^\top x
\end{align*}
Let $y=V^\top x$, $\|y\|=\|x\|=1$, 
\begin{align*}
    x^\top A^\top A x
    &= 
    y^\top \Sigma^\top \Sigma y \\
    &\leq
    \sigma_{\max}^2(A)
\end{align*}
which is an achievable upper bound if we choose $x$ to be the corresponding right singular vector. If $A$ is an invertible square matrix, then the diagonal $\Sigma$ does not contain any zeros. If $A$ has full column rank, then 
$\Sigma = \begin{bmatrix} \text{diag}(\sigma_1, \cdots, \sigma_R) \\ 0 \end{bmatrix}$, and the diagonal of $\Sigma^\top \Sigma$ does not contain any zeros. We can minimize over $x$ to get the minimum singular value.  
\end{proof} 

\begin{lemma}\label{lemma:singular_value_1}
If the function $V(y)$ is $\mu$-strongly monotone with respect to $y \in \mathbb{R}^n$, then $\langle d, \totalD{V(y)}{y}d\rangle \geq \mu \|d\|^2$, $\forall y \in Y$, $\forall d \in \mathbb{R}^n$. Furthermore, $\|\totalD{V(y)}{y}\| \geq \sigma_{\min}(\totalD{V(y)}{y}) \geq \mu$ and $\|\totalD{V(y)}{y}^{-1}\| = \sigma_{\max}(\totalD{V(y)}{y}^{-1}) \leq \frac{1}{\mu}$.
\end{lemma}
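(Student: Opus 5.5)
The statement to prove is Lemma~\ref{lemma:singular_value_1}. I would first pass from strong monotonicity to a statement about the Jacobian, then read off the singular-value bounds from Lemma~\ref{lemma:matrix_norm}.

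\textbf{Step 1: the Jacobian is uniformly positive definite along every direction.} Fix $y$ and a direction $d$, and apply the strong monotonicity inequality with $y' = y + t d$ for $t>0$:
\[
\langle V(y+td)-V(y),\, td\rangle \ge \mu t^2\|d\|^2 .
\]
Dividing by $t^2$ gives $\langle \frac{V(y+td)-V(y)}{t}, d\rangle \ge \mu\|d\|^2$. Letting $t\to 0^+$, differentiability of $V$ (guaranteed by the Hessian-Lipschitz part of Assumption~\ref{assum:VI}) turns the difference quotient into the directional derivative $\frac{dV(y)}{dy}d$. Hence
\[
\langle d, \tfrac{dV(y)}{dy} d\rangle \ge \mu\|d\|^2 .
\]
This step is a limit argument and contains the only real content. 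The point to handle with care is that the Jacobian need not be symmetric. The quadratic form only sees its symmetric part, so I will never use eigenvalues and will work with the quadratic form throughout.

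\textbf{Step 2: lower bound on $\sigma_{\min}$.} Write $A = \frac{dV(y)}{dy}$. For any $d\neq 0$, Cauchy--Schwarz gives
\[
\mu\|d\|^2 \le \langle d, Ad\rangle \le \|d\|\,\|Ad\|,
\]
so $\|Ad\|\ge \mu\|d\|$. Consequences:
\begin{itemize}
\item $A$ has trivial kernel, so the square matrix $A$ is invertible.
\item By Lemma~\ref{lemma:matrix_norm}, $\sigma_{\min}(A)=\min_{d\ne 0}\|Ad\|/\|d\|\ge \mu$.
\item Also $\|A\|=\sigma_{\max}(A)\ge\sigma_{\min}(A)$.
\end{itemize}

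\textbf{Step 3: bound on the inverse.} Substitute $d=A^{-1}z$ into $\|Ad\|\ge\mu\|d\|$ to get $\|A^{-1}z\|\le \|z\|/\mu$. Therefore $\|A^{-1}\|\le 1/\mu$, and by Lemma~\ref{lemma:matrix_norm} this norm equals $\sigma_{\max}(A^{-1})$. Equivalently, the singular values of $A^{-1}$ are the reciprocals of those of $A$.
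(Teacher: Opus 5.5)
Your proposal is correct and follows essentially the same route as the paper. Both take the $t \to 0^+$ limit of the strong-monotonicity inequality to get $\langle d, Ad\rangle \ge \mu\|d\|^2$, then apply Cauchy--Schwarz to obtain $\|Ad\| \ge \mu\|d\|$ and hence $\sigma_{\min}(A) \ge \mu$ via Lemma~\ref{lemma:matrix_norm}. The only differences are cosmetic: you bound $\|A^{-1}\| \le 1/\mu$ by substituting $d = A^{-1}z$ rather than inverting the singular value decomposition, and your trivial-kernel argument spells out the invertibility that the paper simply asserts.
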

\begin{proof}
Let $d \in \mathbb{R}^n$ be an arbitrary vector, 
\begin{align*}
    \langle V(y+td)-V(y), td\rangle 
    &\geq 
    \mu \|td\|^2 \\
    \frac{\langle V(y+td)-V(y), d\rangle}{t}
    & \geq
    \mu \|d\|^2 \\
    \langle \totalD{V(y)}{y}d, d\rangle &\geq \mu \|d\|^2 \quad \text{(take $t \rightarrow 0$)}
\end{align*}
Thus, $\totalD{V(y)}{y}$ is invertible and has full rank. The singular value decomposition of $\totalD{V(y)}{y}$ is given by $U \Sigma V^\top$ where $U \in \mathbb{R}^{n \times n}$, $V \in \mathbb{R}^{n \times n}$ and $\Sigma \in \mathbb{R}^{n \times n}$. $U$ and $V$ are orthogonal matrix. $\Sigma = \text{diag}(\sigma_1, \cdots, \sigma_n)$. Following the proof in lemma \ref{lemma:matrix_norm}, we have
\begin{align*}
    \min_{\|v\|=1} \|\totalD{V(y)}{y} v\| = \sigma_{\min}(\totalD{V(y)}{y})
\end{align*}

Let $v \in \mathbb{R}^n$ be a unit vector with $\|v\|=1$,
\begin{align*}
    \|\totalD{V(y)}{y} v\|
    & = 
    \|\totalD{V(y)}{y} v\| \|v\| \\
    &\geq
    \langle \totalD{V(y)}{y} v, v\rangle \\
    &\geq
    \mu \|v\|^2 \\
    &=
    \mu
\end{align*}
Taking minimum over $v$ yields $\sigma_{\min}(\totalD{V(y)}{y}) \geq \mu$. Inverting $\totalD{V(y)}{y}$ using its singular value decomposition yields $\|\totalD{V(y)}{y}^{-1}\| = \sigma_{\max}(\totalD{V(y)}{y}^{-1}) \leq \frac{1}{\mu}$.
\end{proof}

\begin{lemma}\label{lemma:singular_value_2}
If the function $V(y)$ is $C$-Lipschitz with respect to $y \in  \mathbb{R}^n$, then 
$\|\totalD{V(y)}{y}\| = \sigma_{\max}(\totalD{V(y)}{y}) \leq C$. If $\totalD{V(y)}{y}^{-1}$ exists, then $\|\totalD{V(y)}{y}^{-1}\| \geq \sigma_{\min}(\totalD{V(y)}{y}^{-1}) \geq \frac{1}{C}$.
\end{lemma}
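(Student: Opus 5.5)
The statement to prove is Lemma~\ref{lemma:singular_value_2}. It has two parts: a bound $\|\totalD{V(y)}{y}\|\le C$ on the Jacobian norm, and the companion lower bound $\sigma_{\min}\big(\totalD{V(y)}{y}^{-1}\big)\ge 1/C$ on its inverse. The plan mirrors the proof of Lemma~\ref{lemma:singular_value_1}, with the strong-monotonicity inequality replaced by the Lipschitz inequality.

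\textbf{Step 1: the Jacobian norm.} Fix $y$ and an arbitrary unit vector $d\in\mathbb{R}^n$. For $t>0$, the Lipschitz property gives
\[
\Big\|\tfrac{V(y+td)-V(y)}{t}\Big\| \le \tfrac{C\|td\|}{t} = C .
\]
Differentiability of $V$ makes the left side converge to $\big\|\totalD{V(y)}{y}\,d\big\|$ as $t\to 0$. Continuity of the norm lets the inequality pass to the limit, so $\big\|\totalD{V(y)}{y}\,d\big\|\le C$ for every unit $d$. Taking the supremum over $d$ and applying Lemma~\ref{lemma:matrix_norm} yields $\sigma_{\max}\big(\totalD{V(y)}{y}\big)=\|\totalD{V(y)}{y}\|\le C$.

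\textbf{Step 2: the inverse.} Assume the inverse exists, and write $\totalD{V(y)}{y}=U\Sigma W^\top$ for its singular value decomposition, where all diagonal entries of $\Sigma$ are positive. Then
\[
\totalD{V(y)}{y}^{-1}=W\Sigma^{-1}U^\top .
\]
Its singular values are therefore the reciprocals $1/\sigma_i$. Consequently $\sigma_{\min}\big(\totalD{V(y)}{y}^{-1}\big)=1/\sigma_{\max}\big(\totalD{V(y)}{y}\big)\ge 1/C$ by Step~1. The remaining inequality $\|\totalD{V(y)}{y}^{-1}\|=\sigma_{\max}(\cdot)\ge\sigma_{\min}(\cdot)$ holds trivially.

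As a sanity check that avoids the SVD, set $w=\totalD{V(y)}{y}^{-1}u$ for any $u$. Step~1 gives $\|u\|\le C\|w\|$, so $\big\|\totalD{V(y)}{y}^{-1}u\big\|\ge\|u\|/C$. Minimizing over unit $u$ and using Lemma~\ref{lemma:matrix_norm} gives the same bound.

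\textbf{Where care is needed.} There is no real obstacle here. The only step that needs attention is passing the Lipschitz bound to the limit defining the directional derivative. Since $V$ is assumed differentiable (Assumption~\ref{assum:VI}), this is immediate.
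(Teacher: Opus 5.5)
Your proposal is correct and follows the same route as the paper: you bound the Jacobian norm with the Lipschitz property and Lemma~\ref{lemma:matrix_norm}, then use the SVD of the invertible Jacobian to identify $\sigma_{\min}$ of the inverse as the reciprocal of $\sigma_{\max}$. The only difference is that you spell out the directional-derivative limit, which the paper compresses into ``by definition of $C$-Lipschitzness.''
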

\begin{proof}
By definition of $C$-Lipschitzness and using Lemma \ref{lemma:matrix_norm}, $\|\totalD{V(y)}{y}\| = \sigma_{\max}(\totalD{V(y)}{y}) \leq C$. If $\totalD{V(y)}{y}^{-1}$ exists, using singular value decomposition of a full rank matrix, we get $\sigma_{\min}(\totalD{V(y)}{y}^{-1}) \geq \frac{1}{C}$.
\end{proof}

\begin{lemma}\label{lemma:singular_value_3}
Let $G$ be a square matrix such that $\langle d, Gd\rangle \geq \mu \|d\|^2$, $\forall d \in \mathbb{R}^n$ for some $\mu>0$. Let $B$ be a matrix with full row rank, then $B G^{-1}B^\top$ is invertible. If $\sigma_{\max}(G) \leq C$, then $\sigma_{\min}(B G^{-1}B^\top) \geq \frac{\mu}{C^2}\sigma_{\min}(B)^2$ and $\sigma_{\max}(B G^{-1}B^\top) \leq \frac{\sigma_{max}(B)^2}{\mu}$.
\end{lemma}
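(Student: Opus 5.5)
The plan is to prove the three claims in order: invertibility of $BG^{-1}B^\top$, then the lower bound on its smallest singular value, then the upper bound on its largest. Everything rests on a single observation about $G^{-1}$. Write $H \define G^{-1}$; it exists because $\langle d, Gd\rangle \geq \mu\|d\|^2 > 0$ for $d\neq 0$, so $G$ has trivial kernel. Setting $d = Hw$, the hypothesis becomes
\[
\langle w, Hw\rangle = \langle G d, d\rangle \geq \mu \|d\|^2 = \mu\|Hw\|^2 .
\]
By Lemma~\ref{lemma:matrix_norm} applied to $G$, $\|w\| = \|Gd\| \leq \sigma_{\max}(G)\|d\| \leq C\|d\|$, so $\|d\|\geq \|w\|/C$. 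Combining the two,
\[
\langle w, Hw\rangle \geq \frac{\mu}{C^2}\|w\|^2 .
\]
So $H$ is positive definite in the quadratic-form sense, with modulus $\mu/C^2$, though it need not be symmetric.

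\textbf{Invertibility and the lower bound.} For any $z$, put $w = B^\top z$. Then
\[
\langle z, BHB^\top z\rangle = \langle B^\top z, H B^\top z\rangle \geq \frac{\mu}{C^2}\|B^\top z\|^2 .
\]
Since $B$ has full row rank, $B^\top$ has full column rank, so Lemma~\ref{lemma:matrix_norm} gives $\|B^\top z\| \geq \sigma_{\min}(B^\top)\|z\| = \sigma_{\min}(B)\|z\|$. Hence
\[
\langle z, BHB^\top z\rangle \geq \frac{\mu}{C^2}\sigma_{\min}(B)^2\|z\|^2 .
\]
This gives invertibility immediately. For the singular value bound, Cauchy--Schwarz yields $\|BHB^\top z\|\,\|z\| \geq \langle z, BHB^\top z\rangle$. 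Dividing by $\|z\|^2$ and minimizing over unit $z$, exactly as in the proof of Lemma~\ref{lemma:singular_value_1}, gives $\sigma_{\min}(BG^{-1}B^\top)\geq \frac{\mu}{C^2}\sigma_{\min}(B)^2$. The bare invertibility claim does not actually need $\sigma_{\max}(G)\le C$: without it the same argument gives $\langle w,Hw\rangle\ge \mu\|Hw\|^2>0$ for $w\neq 0$, and $B^\top z\neq 0$ for $z\neq 0$.

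\textbf{Upper bound.} Use submultiplicativity:
\[
\|BG^{-1}B^\top\| \leq \|B\|\,\|G^{-1}\|\,\|B^\top\| = \sigma_{\max}(B)^2\,\|G^{-1}\| .
\]
It remains to show $\|G^{-1}\|\le 1/\mu$. Following Lemma~\ref{lemma:singular_value_1}, for unit $v$ we have $\|Gv\|\geq\langle v,Gv\rangle\geq\mu$, so $\sigma_{\min}(G)\ge\mu$ and therefore $\|G^{-1}\|\le 1/\mu$.

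The only delicate step is the first one: transferring the coercivity of $G$ to $G^{-1}$ without assuming symmetry. It relies on the substitution $d=Hw$ together with the bound $\|Gd\|\le C\|d\|$, and this is where the factor $C^2$ enters. Everything after that is routine.
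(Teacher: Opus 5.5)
Your proof is correct and takes essentially the same route as the paper. You use the substitution $d=G^{-1}w$ together with $\|w\|\le C\|d\|$ to get $\langle w,G^{-1}w\rangle\ge \frac{\mu}{C^2}\|w\|^2$, apply it at $w=B^\top z$ for the lower bound, and use submultiplicativity with $\|G^{-1}\|\le 1/\mu$ for the upper bound. The differences are cosmetic: you isolate the coercivity of $G^{-1}$ as a standalone step and read invertibility off the quantitative bound, whereas the paper proves invertibility first (without $C$) and runs the lower-bound chain inline.
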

\begin{proof}
$G$ is invertible. Let $0 \neq v \in \mathbb{R}^n$, $d = G^{-1}v$
\begin{align*}
    v^\top G^{-1} v = v^\top d = d^\top G^\top d = d^\top G d > 0
\end{align*}
Let $0 \neq x \in \mathbb{R}^n$, suppose 
\begin{align*}
    B G^{-1}B^\top x &= 0 \\
    x^\top B G^{-1}B^\top x &= 0
\end{align*}
Since $B$ has full row rank and $v^\top G^{-1} v > 0$ for any $v\neq0$, this results in a contradiction. Hence, $B G^{-1}B^\top$ is invertible. 

\noindent
\begin{align*}
    \sigma_{\max}(B G^{-1}B^\top)
    &\leq
    \|B\|^2 \|G^{-1}\| \\
    & \leq 
    \frac{\|B\|^2}{\mu} \\
    & =
    \frac{\sigma_{\max}(B)^2}{\mu}
\end{align*}

\noindent
Let $x$ be a unit vector. Let $v=B^\top x$ and $d=G^{-1}v$,
\begin{align*}
    \|B G^{-1}B^\top x\|
    &=
    \|B G^{-1}B^\top x\|\|x\|\\
    & \geq 
    x^\top B G^{-1}B^\top x \\
    & =
    d^\top G d \\
    & \geq
    \mu \|d\|^2 \\
    & =
    \mu \|G^{-1}B^\top x\|^2 \\
    & \geq 
    \frac{\mu}{C^2}\|B^\top x\|^2 \\
    & \geq
    \frac{\mu}{C^2}\sigma_{\min}(B)^2
\end{align*}
Minimizing over $x$, we get $\sigma_{\min}(B G^{-1}B^\top) \geq \frac{\mu}{C^2}\sigma_{\min}(B)^2$
\end{proof}

\begin{lemma}\label{lemma:invH}
Given a block matrix $H = \begin{bmatrix} G & B^\top \\ B & 0\end{bmatrix}$. If $B$ has full row rank and $(\forall d\neq 0 \text{\ s.t.\ Bd=0} )\ (d^\top G d > 0)$, then $H$ is invertible.
\end{lemma}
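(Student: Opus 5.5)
To show $H$ is invertible, I will show that its kernel is trivial: take $(d,w)$ with $H\begin{bmatrix} d\\ w\end{bmatrix}=0$ and prove that $d=0$ and $w=0$. Since $H$ is square, this is enough.

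\textbf{Block equations.} Writing out the two block rows gives
\begin{align*}
Gd + B^\top w = 0, \qquad Bd = 0.
\end{align*}

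\textbf{Showing $d=0$.} I left-multiply the first equation by $d^\top$ to get
\begin{align*}
d^\top G d + (Bd)^\top w = 0.
\end{align*}
The second equation gives $Bd=0$, so this reduces to $d^\top G d = 0$. Now suppose $d\neq 0$. Then $d$ is a nonzero vector with $Bd=0$, and the hypothesis says $d^\top G d>0$. That contradicts $d^\top G d = 0$, so $d=0$. No symmetry of $G$ is needed here, because only the quadratic form $d^\top G d$ appears.

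\textbf{Showing $w=0$.} With $d=0$, the first equation becomes $B^\top w=0$. Since $B$ has full row rank, $B^\top$ has full column rank and hence trivial kernel. Therefore $w=0$.

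This proves the kernel of $H$ is $\{0\}$, so $H$ is invertible. There is no substantive obstacle. The one step that needs care is using $Bd=0$ to eliminate the cross term $(Bd)^\top w$ before invoking the positivity hypothesis.
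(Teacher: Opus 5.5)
Your proof is correct and follows essentially the same route as the paper: show $\ker H=\{0\}$, use $Bd=0$ to eliminate the cross term so the positivity hypothesis forces $d=0$, then conclude $w=0$ because $B^\top$ has trivial kernel. The only difference is presentation. The paper evaluates the full quadratic form of $H$, giving $a^\top G a + 2b^\top B a$, and splits into cases on whether $Ba=0$. You read $Bd=0$ directly off the second block row, which removes the case split and is slightly cleaner.
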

\begin{proof}
We want to show that $H\begin{bmatrix}a \\ b\end{bmatrix} = 0 \implies \begin{bmatrix}a \\b\end{bmatrix} = 0$. Suppose $H\begin{bmatrix}a \\ b\end{bmatrix} = 0$.
Suppose $Ba \neq 0$, then $H\begin{bmatrix}a \\ b\end{bmatrix} = 0$ is impossible by construction. Suppose $Ba=0$. If $a\neq 0$, $a^\top G a > 0$, then $\begin{bmatrix}a & b\end{bmatrix} H \begin{bmatrix}a \\ b\end{bmatrix} = a^\top G a + 2 b^\top B a > 0$, so $ H \begin{bmatrix}a \\ b\end{bmatrix} \neq 0$ which violates the hypothesis. If $a=0$, then $H\begin{bmatrix}a \\ b\end{bmatrix} = \begin{bmatrix} B^\top b \\ 0\end{bmatrix} = \begin{bmatrix}0 \\ 0\end{bmatrix}$. Since $B^\top$ has full column rank, $b=0$. 
\end{proof}

\section{Lipschitzness and Smoothness of Stackelberg Game with Linear Equality Constraints}

\begin{lemma}[Implicit differentiation of KKT system]\label{lemma:KKT_diff}
\begin{align}\label{eq:orig_KKT}
    \begin{bmatrix} \totalD{y^*(x)}{x} \\ \totalD{\lambda^*(x)}{x} \end{bmatrix}
    = 
    \begin{bmatrix}\partialD{V(x,y^*)}{y} & B^\top \\ B & 0\end{bmatrix}^{-1} 
    \begin{bmatrix}-\partialD{V(x,y^*)}{x} \\ -A\end{bmatrix}
\end{align}
\end{lemma}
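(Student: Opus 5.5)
The plan is to write the equilibrium conditions of all players as one smooth system of equations in $(y,\lambda)$ parametrized by $x$, and then apply the Implicit Function Theorem. With linear equality constraints $e^{(i)}(x,y_i)=A^{(i)}x+B^{(i)}y_i+b^{(i)}$, stack the data as $B\define\diag(B^{(1)},\dots,B^{(N)})\in\real^{m\times n}$, $A\define(A^{(1)};\dots;A^{(N)})\in\real^{m\times p}$ and $b\define(b^{(1)};\dots;b^{(N)})$.

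\textbf{Step 1: stack the KKT conditions.} By Corollary~\ref{corollary:local_nash_KKT}, under Assumption~\ref{assum:LICQ} the equilibrium $y^*$ admits multipliers $\lambda^*$ such that, for each $i$,
\begin{align*}
\nabla_{y_i} g_i(x,y^*)+B^{(i)\top}\lambda^*_i=0, \qquad A^{(i)}x+B^{(i)}y^*_i+b^{(i)}=0 .
\end{align*}
The constraint of player $i$ involves only $y_i$, so the gradient of $\lambda_i^\top e^{(i)}$ with respect to $y_i$ is exactly $B^{(i)\top}\lambda_i$. Stacking over $i$ therefore gives
\begin{align*}
\Phi(x,y,\lambda)\define\begin{bmatrix} V(x,y)+B^\top\lambda \\ Ax+By+b\end{bmatrix}=0 .
\end{align*}

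\textbf{Step 2: compute the Jacobians of $\Phi$.} Because $V$ is $C^1$ (indeed $S$-Hessian Lipschitz by Assumption~\ref{assum:VI}), $\Phi$ is $C^1$. Since $B$ and $A$ do not depend on $x$ or $y$,
\begin{align*}
\partialD{\Phi}{(y,\lambda)}=\begin{bmatrix} \partialD{V(x,y^*)}{y} & B^\top \\ B & 0\end{bmatrix}=:H, \qquad \partialD{\Phi}{x}=\begin{bmatrix}\partialD{V(x,y^*)}{x}\\ A\end{bmatrix}.
\end{align*}
The term $B^\top\lambda$ contributes no $x$-derivative.

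\textbf{Step 3 (main obstacle): show $H$ is invertible.} This step is where the hypotheses enter. By Lemma~\ref{lemma:singular_value_1}, $\mu$-strong monotonicity gives $d^\top \partialD{V(x,y^*)}{y} d\ge\mu\|d\|^2>0$ for every $d\neq0$, and in particular for every $d\ne 0$ with $Bd=0$. The matrix $B$ has full row rank: LICQ gives each $B^{(i)}$ full row rank, and a block-diagonal matrix whose blocks have full row rank has full row rank. Lemma~\ref{lemma:invH} then yields invertibility of $H$. One subtlety needs checking: LICQ must be exactly what guarantees the multiplier $\lambda^*$ is unique, so that the map $x\mapsto(y^*(x),\lambda^*(x))$ is well defined. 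Full row rank of $B$ provides this.

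\textbf{Step 4: apply the Implicit Function Theorem.} The IFT now guarantees that $(y^*(x),\lambda^*(x))$ is locally a $C^1$ function of $x$. Uniqueness of $y^*$ under strong monotonicity identifies this local solution with the equilibrium branch. Differentiating $\Phi(x,y^*(x),\lambda^*(x))=0$ gives
\begin{align*}
H\begin{bmatrix}\totalD{y^*(x)}{x}\\ \totalD{\lambda^*(x)}{x}\end{bmatrix}+\begin{bmatrix}\partialD{V(x,y^*)}{x}\\ A\end{bmatrix}=0 .
\end{align*}
Multiplying by $H^{-1}$ gives \eqref{eq:orig_KKT}.
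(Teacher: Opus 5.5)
Your proposal is correct and follows essentially the same route as the paper: you stack the players' KKT conditions into one system, invoke Lemma~\ref{lemma:invH} to show the KKT matrix is invertible, and apply the Implicit Function Theorem. Your version is somewhat more careful than the paper's, which leaves several points implicit that you check explicitly: the hypotheses of Lemma~\ref{lemma:invH} (full row rank of the block-diagonal $B$ from LICQ, and positivity of $d^\top \partialD{V(x,y^*)}{y} d$ on $\ker B$ from strong monotonicity), and the identification of the IFT branch with the unique equilibrium and multiplier.
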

\begin{proof}
Consider problem \ref{eq:equal_p1}. By strong monotonicity assumption \ref{assum:VI}, each cost function $g_i$ is convex in $y_i$. Hence, KKT condition is the necessary and sufficient condition for each $y^*_i$ to be the Nash equilibrium. Denote $B \define \diag((\partialD{\ei(x,y^*_i)}{y})_{i\in[N]}) \in \real^{m \times n}$ as a block diagonal matrix, and $A \define \begin{bmatrix} \partialD{e^{(1)}(x,y^*_1)}{x} \\ \vdots \\ \partialD{e^{(N)}(x,y^*_N)}{x}\end{bmatrix} \in \real^{m \times \xdim}$. The KKT condition is
\begin{align*}
    \Gamma(x, y^*, \lambda^*) = \begin{bmatrix} V(x,y^*) + B^\top \lambda^* \\ \begin{bmatrix} e^{(1)}(x,y^*_1) \\ \vdots \\ e^{(N)}(x,y^*_N) \end{bmatrix} \end{bmatrix} = 0
\end{align*}
By Lemma \ref{lemma:invH}, the KKT matrix $\partialD{\Gamma(x, y^*, \lambda^*)}{(y^*, \lambda^*)} = \begin{bmatrix}\partialD{V(x,y^*)}{y} & B^\top \\ B & 0\end{bmatrix}$ is invertible. Finally, by implicit function theorem, $y^*(x)$, $\lambda^*(x)$ are continuously differentiable functions of x and their derivatives match the statement of this Lemma.
\end{proof}

\begin{corollary}[Hypergradient]\label{corollary:hypergrad}
\begin{align}
    \nabla F(x)
    &=
    \nabla_x f(x,y^*) + \begin{bmatrix}\partialD{V(x,y^*)}{x}^\top & A^\top\end{bmatrix}
    \begin{bmatrix}\partialD{V(x,y^*)}{y} & B^\top \\ B & 0\end{bmatrix}^{-\top} 
    \begin{bmatrix} -\nabla_yf(x,y^*) \\ 0\end{bmatrix}
\end{align}
\end{corollary}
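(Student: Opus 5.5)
The statement to prove is Corollary~\ref{corollary:hypergrad}. I will derive it from the chain rule together with Lemma~\ref{lemma:KKT_diff}, in the same way Equation~\eqref{eq:hypergrad} was obtained in the unconstrained case.

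\textbf{Step 1: chain rule.} Since $F(x)=f(x,y^*(x))$ and $y^*$ is continuously differentiable by Lemma~\ref{lemma:KKT_diff},
\[
\nabla F(x)=\nabla_x f(x,y^*)+\Big(\totalD{y^*(x)}{x}\Big)^\top\nabla_y f(x,y^*).
\]
The multiplier path does not enter $f$. So I pad the stacked derivative with a zero block:
\[
\Big(\totalD{y^*}{x}\Big)^\top \nabla_y f=\begin{bmatrix}\totalD{y^*}{x}\\ \totalD{\lambda^*}{x}\end{bmatrix}^\top\begin{bmatrix}\nabla_y f\\ 0\end{bmatrix}.
\]

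\textbf{Step 2: substitute the KKT sensitivity and transpose.} I plug in Equation~\eqref{eq:orig_KKT}. Write $K$ for the KKT matrix $\begin{bmatrix}\partialD{V}{y} & B^\top\\ B&0\end{bmatrix}$, which is invertible by Lemma~\ref{lemma:invH}. The transpose of $K^{-1}\begin{bmatrix}-\partialD{V}{x}\\ -A\end{bmatrix}$ equals $-\begin{bmatrix}\partialD{V}{x}^\top & A^\top\end{bmatrix}K^{-\top}$, using $(K^{-1})^\top=K^{-\top}$. Multiplying by $\begin{bmatrix}\nabla_y f\\0\end{bmatrix}$ and moving the minus sign into the vector gives
\[
\begin{bmatrix}\partialD{V}{x}^\top & A^\top\end{bmatrix}K^{-\top}\begin{bmatrix}-\nabla_y f\\ 0\end{bmatrix},
\]
which is exactly the claimed expression.

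\textbf{Main obstacle.} The only real check is that invertibility in Lemma~\ref{lemma:KKT_diff} holds under the standing assumptions. Lemma~\ref{lemma:invH} needs two conditions:
\begin{itemize}
\item $B$ has full row rank. This follows from the LICQ-type assumption on the block-diagonal constraint Jacobian.
\item $d^\top\partialD{V}{y}d>0$ for every nonzero $d$ with $Bd=0$. This follows from strong monotonicity via Lemma~\ref{lemma:singular_value_1}, which gives $d^\top J d\ge\mu\|d\|^2$.
\end{itemize}
With invertibility in place, the implicit function theorem makes $y^*$ differentiable, and the rest is the bookkeeping above. I would also remark that when $B$ and $A$ are absent the formula reduces to Equation~\eqref{eq:hypergrad}, as a consistency check.
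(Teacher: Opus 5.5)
Your proposal is correct and matches the paper's proof: write $\bigl(\frac{d}{dx}y^*(x)\bigr)^\top\nabla_y f$ as the stacked $(y^*,\lambda^*)$-sensitivity times $\begin{bmatrix}\nabla_y f\\ 0\end{bmatrix}$, substitute Equation~\eqref{eq:orig_KKT} from Lemma~\ref{lemma:KKT_diff}, and transpose. Your explicit check of the hypotheses of Lemma~\ref{lemma:invH} is a useful addition. These are full row rank of the block-diagonal $B$ from the per-player LICQ assumption, and $d^\top J d \ge \mu\|d\|^2$ from strong monotonicity. The paper's Lemma~\ref{lemma:KKT_diff} invokes Lemma~\ref{lemma:invH} without naming the rank condition.
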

\begin{proof}
\begin{align*}
    & (\totalD{y^*(x)}{x})^\top \nabla_yf(x,y^*(x)) \\
    & = 
    \begin{bmatrix} \totalD{y^*(x)}{x}^\top &  \totalD{\lambda^*(x)}{x}^\top \end{bmatrix}\begin{bmatrix} \nabla_yf(x,y^*) \\ 0\end{bmatrix} \\
    & = 
    \begin{bmatrix}\partialD{V(x,y^*)}{x}^\top & A^\top\end{bmatrix}
    \begin{bmatrix}\partialD{V(x,y^*)}{y} & B^\top \\ B & 0\end{bmatrix}^{-\top} 
    \begin{bmatrix} -\nabla_yf(x,y^*) \\ 0\end{bmatrix}
\end{align*}
\end{proof}

\begin{lemma}\label{lemma:dy_char}
Suppose
\begin{align}\label{eq:linear_sys_dy}
    \begin{bmatrix}G & B^\top \\ B & 0\end{bmatrix}\begin{bmatrix} d_y \\ d_\lambda \end{bmatrix}
    = 
    \begin{bmatrix}-G_{xy} \\ -A\end{bmatrix}
\end{align}
where 
$G$ is a square matrix satisfying $\langle d, Gd\rangle \geq \mu \|d\|^2$, $\forall d \in \mathbb{R}^n$ for some $\mu>0$
, and $B$ has full row rank, then 
\begin{align*}
    d_y = (I - P^G_B)B^{\dagger}(-A) + P^G_B (-G^{-1} G_{xy})
\end{align*}
where $P^G_B \define I - G^{-1}B^\top(BG^{-1}B^\top)^{-1}B$
\end{lemma}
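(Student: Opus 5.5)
The plan is to solve the block system \eqref{eq:linear_sys_dy} directly by block elimination (a Schur complement argument), and then rewrite the result in terms of the oblique projector $P^G_B$.

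First, the needed inverses exist. The hypothesis $\langle d, Gd\rangle \geq \mu\|d\|^2$ makes $G$ invertible, since $Gd=0$ forces $d=0$. Lemma~\ref{lemma:singular_value_3} then gives that the Schur complement $S \define BG^{-1}B^\top$ is invertible, because $B$ has full row rank. Full row rank also gives $BB^\top$ invertible, so $B^\dagger = B^\top(BB^\top)^{-1}$ and $BB^\dagger = I$.

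Next comes the elimination itself.
\begin{enumerate}[leftmargin=*, label=\arabic*., align=left]
\item From the first block row, $G d_y + B^\top d_\lambda = -G_{xy}$, so
\[
d_y = -G^{-1}G_{xy} - G^{-1}B^\top d_\lambda .
\]
\item Substituting into the second block row $Bd_y = -A$ gives $-BG^{-1}G_{xy} - S d_\lambda = -A$. Hence
\[
d_\lambda = S^{-1}\bigl(A - BG^{-1}G_{xy}\bigr).
\]
\item Plugging $d_\lambda$ back into the expression for $d_y$ yields
\[
d_y = \bigl(I - G^{-1}B^\top S^{-1}B\bigr)(-G^{-1}G_{xy}) - G^{-1}B^\top S^{-1}A .
\]
The first term is exactly $P^G_B(-G^{-1}G_{xy})$.
\end{enumerate}

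It remains to identify the second term with $(I-P^G_B)B^\dagger(-A)$. By definition, $I - P^G_B = G^{-1}B^\top S^{-1}B$. Using $BB^\dagger = I$,
\[
(I-P^G_B)B^\dagger(-A) = G^{-1}B^\top S^{-1}(BB^\dagger)(-A) = -G^{-1}B^\top S^{-1}A,
\]
which matches. This completes the identity.

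The only mildly delicate point is this last step: recognizing that the choice of right inverse $B^\dagger$ is irrelevant, since only $BB^\dagger = I$ is used. Invertibility of $S$ comes for free from Lemma~\ref{lemma:singular_value_3}. All manipulations are valid with $A$ and $G_{xy}$ taken as matrices (applied column by column), so no further case analysis is needed.
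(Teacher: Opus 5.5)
Your proposal is correct and follows the same route as the paper: you solve the first block row for $d_y$, eliminate it to get $d_\lambda = (BG^{-1}B^\top)^{-1}(A - BG^{-1}G_{xy})$ via the Schur complement, and substitute back. You also make explicit a step the paper leaves implicit, namely that $(I-P^G_B)B^\dagger(-A) = -G^{-1}B^\top(BG^{-1}B^\top)^{-1}A$ relies only on $BB^\dagger = I$.
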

\begin{proof}

Expanding the equations,
\begin{align}
    d_y &= -G^{-1}B^\top d_\lambda - G^{-1}G_{xy}, \\
    Bd_y &= -A.
\end{align}
Solving for $d_\lambda$, we have 
\begin{align*}
    d_\lambda 
    &= (BG^{-1}B^\top)^{-1} [A-BG^{-1}G_{xy}] \\
    &=(BB^\top)^{-1}B G (I-P^{G}_B) (B^{\dagger}A - G^{-1}G_{xy}) \\
    & = (B^\dagger)^\top G (I-P^{G}_B) (B^{\dagger}A - G^{-1}G_{xy})
\end{align*}
Solving for $d_y$, we have 
\begin{align*}
    d_y 
    & = -G^{-1}B^\top (BG^{-1}B^\top)^{-1} [A-BG^{-1}G_{xy}] - G^{-1}G_{xy} \\
    & = (I - P^G_B)B^{\dagger}(-A) + P^G_B (-G^{-1} G_{xy})
\end{align*}
\end{proof}

\begin{lemma}\label{lemma:proj_norm_1}
Let $P^G_B(x) = I - G^{-1}B^\top(BG^{-1}B^\top)^{-1}B$, where square matrix G satisfies $\langle d, Gd\rangle \geq \mu_g \|d\|^2$ for some $\mu_g>0$ and matrix $B$ has full row rank. Given $\sigma_{\max}(G) \leq C_g$ for some $C_g$, let $\kappa_g = \frac{C_g}{\mu_g}$, then $\|P^G_B\| \leq 1+\frac{\sigma_{\max}(B)^2}{\sigma_{\min}(B)^2}\kappa_g^2 \define M_P$.
\end{lemma}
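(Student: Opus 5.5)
We need to bound $\|P^G_B\|$ where $P^G_B = I - G^{-1}B^\top(BG^{-1}B^\top)^{-1}B$. The plan is a direct triangle-inequality argument. Write $P^G_B = I - Q$ with $Q \define G^{-1}B^\top(BG^{-1}B^\top)^{-1}B$. Then $\|P^G_B\| \le \|I\| + \|Q\| = 1 + \|Q\|$, so it suffices to show
\[
\|Q\| \le \frac{\sigma_{\max}(B)^2}{\sigma_{\min}(B)^2}\kappa_g^2 .
\]
We bound $Q$ factor by factor using submultiplicativity:
\[
\|Q\| \le \|G^{-1}\|\,\|B^\top\|\,\|(BG^{-1}B^\top)^{-1}\|\,\|B\|.
\]

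The individual factors are handled by earlier results. Lemma~\ref{lemma:singular_value_1} applies to the matrix $G$, since only the quadratic-form lower bound is used in its proof after taking the derivative. It gives $\|G^{-1}\| \le 1/\mu_g$. Next, $\|B\| = \|B^\top\| = \sigma_{\max}(B)$ by Lemma~\ref{lemma:matrix_norm}. Lemma~\ref{lemma:singular_value_3}, with $\mu=\mu_g$ and $C=C_g$, shows that $BG^{-1}B^\top$ is invertible and that $\sigma_{\min}(BG^{-1}B^\top) \ge \frac{\mu_g}{C_g^2}\sigma_{\min}(B)^2$. By Lemma~\ref{lemma:matrix_norm} applied to the inverse, this yields $\|(BG^{-1}B^\top)^{-1}\| = 1/\sigma_{\min}(BG^{-1}B^\top) \le \frac{C_g^2}{\mu_g\,\sigma_{\min}(B)^2}$.

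Multiplying the four bounds gives
\[
\|Q\| \le \frac{1}{\mu_g}\cdot \sigma_{\max}(B)^2 \cdot \frac{C_g^2}{\mu_g\sigma_{\min}(B)^2} = \frac{\sigma_{\max}(B)^2}{\sigma_{\min}(B)^2}\kappa_g^2,
\]
which is exactly the claimed bound $M_P$.

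The only step needing care is the invertibility and the lower singular-value bound for the Schur-complement-type matrix $BG^{-1}B^\top$ when $G$ is not symmetric. Lemma~\ref{lemma:singular_value_3} already handles this via the quadratic-form argument, so no obstacle is expected beyond checking its hypotheses: $B$ has full row rank and $\sigma_{\max}(G)\le C_g$, both of which are assumed here.
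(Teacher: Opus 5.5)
Your proof is correct and follows the paper's argument essentially line for line. The paper also bounds $\|P^G_B\|\le 1+\|G^{-1}\|\,\|B\|^2\,\|(BG^{-1}B^\top)^{-1}\|$ and then invokes Lemma~\ref{lemma:singular_value_1} for $\|G^{-1}\|\le 1/\mu_g$ and Lemma~\ref{lemma:singular_value_3} for $\sigma_{\min}(BG^{-1}B^\top)\ge \mu_g\sigma_{\min}(B)^2/C_g^2$. Your note that Lemma~\ref{lemma:singular_value_1} needs only the quadratic-form lower bound on $G$, with no symmetry and no underlying operator, is correct and slightly more careful than the paper's citation.
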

\begin{proof}
\begin{align*}
   \|P^G_B\| 
   &\leq
   1 + \frac{\|B\|^2}{\mu_g}\frac{1}{\sigma_{\min}(BG^{-1}B^\top)} \\
   & \leq
   1 + \frac{\|B\|^2}{\mu_g}\frac{C_g^2}{\mu_g \sigma_{\min}(B)^2} \\
   & = 
   1+\frac{\sigma_{\max}(B)^2}{\sigma_{\min}(B)^2}\kappa_g^2
\end{align*}
The first inequality uses Lemma \ref{lemma:singular_value_1} and Lemma \ref{lemma:matrix_norm}. The second inequality uses Lemma \ref{lemma:singular_value_3}. The last equality uses Lemma \ref{lemma:matrix_norm}.
\end{proof}

\begin{lemma}\label{lemma:proj_norm_2}
Let $P^G_B \define I - G^{-1}B^\top(BG^{-1}B^\top)^{-1}B$, where square matrix G satisfies $\langle d, Gd\rangle \geq \mu_g \|d\|^2$ for some $\mu_g>0$ and matrix $B$ has full row rank. Given $\sigma_{\max}(G) \leq C_g$ for some $C_g$. Suppose another square matrix $\Approx{G}$ also satisfies the above properties with the same constants $\mu_g$ and $C_g$. Let $\kappa_g = \frac{C_g}{\mu_g}$, then $\|P^G_B - P^{\Approx{G}}_B\| \leq L_P\|G-\Approx{G}\|$ where $L_P \define \frac{C_g^2 \sigma_{\max}(B)^2}{\mu_g^3 \sigma_{\min}(B)^2} + \frac{\sigma_{\max}(B)^2}{\mu_g}\frac{C_g^4 \|B\|^2}{\mu_g^4 \sigma_{\min}(B)^4}$.
\end{lemma}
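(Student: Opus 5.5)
We need to bound $\|P^G_B - P^{\Approx{G}}_B\|$. Write $M \define BG^{-1}B^\top$ and $\Approx{M} \define B\Approx{G}^{-1}B^\top$. The identity terms cancel, so
\[
P^{\Approx{G}}_B - P^G_B = G^{-1}B^\top M^{-1}B - \Approx{G}^{-1}B^\top \Approx{M}^{-1}B .
\]
We telescope by adding and subtracting $\Approx{G}^{-1}B^\top M^{-1}B$:
\[
P^{\Approx{G}}_B - P^G_B = (G^{-1}-\Approx{G}^{-1})B^\top M^{-1}B + \Approx{G}^{-1}B^\top (M^{-1}-\Approx{M}^{-1})B .
\]
Each of the two summands is then bounded separately, and the two bounds are matched to the two terms in $L_P$.

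\textbf{Basic inverse-difference bounds.} We use the resolvent identity $G^{-1}-\Approx{G}^{-1} = G^{-1}(\Approx{G}-G)\Approx{G}^{-1}$. Lemma~\ref{lemma:singular_value_1}, applied to the linear maps $d\mapsto Gd$ and $d\mapsto \Approx{G}d$, gives $\|G^{-1}\|,\|\Approx{G}^{-1}\|\le 1/\mu_g$. Hence
\[
\|G^{-1}-\Approx{G}^{-1}\|\le \frac{\|G-\Approx{G}\|}{\mu_g^2}.
\]
Lemma~\ref{lemma:singular_value_3} applies to both $G$ and $\Approx{G}$ with the same constants, so $\sigma_{\min}(M),\sigma_{\min}(\Approx{M}) \ge \mu_g\sigma_{\min}(B)^2/C_g^2$. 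By Lemma~\ref{lemma:matrix_norm}, this gives $\|M^{-1}\|,\|\Approx{M}^{-1}\| \le C_g^2/(\mu_g\sigma_{\min}(B)^2)$.

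\textbf{First term.} Using the bounds above,
\[
\|(G^{-1}-\Approx{G}^{-1})B^\top M^{-1}B\| \le \frac{\|G-\Approx{G}\|}{\mu_g^2}\cdot \sigma_{\max}(B)^2 \cdot \frac{C_g^2}{\mu_g\sigma_{\min}(B)^2}.
\]
This is exactly the first term $\frac{C_g^2\sigma_{\max}(B)^2}{\mu_g^3\sigma_{\min}(B)^2}\|G-\Approx{G}\|$ of $L_P$.

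\textbf{Second term.} Apply the resolvent identity again:
\[
M^{-1}-\Approx{M}^{-1} = M^{-1}(\Approx{M}-M)\Approx{M}^{-1}, \qquad \Approx{M}-M = B(\Approx{G}^{-1}-G^{-1})B^\top .
\]
Therefore
\[
\|M^{-1}-\Approx{M}^{-1}\| \le \Bigl(\frac{C_g^2}{\mu_g\sigma_{\min}(B)^2}\Bigr)^2 \|B\|^2 \frac{\|G-\Approx{G}\|}{\mu_g^2}.
\]
Multiplying by $\|\Approx{G}^{-1}\|\,\|B^\top\|\,\|B\| \le \sigma_{\max}(B)^2/\mu_g$ yields
\[
\frac{\sigma_{\max}(B)^2}{\mu_g}\cdot\frac{C_g^4\|B\|^2}{\mu_g^4\sigma_{\min}(B)^4}\|G-\Approx{G}\|,
\]
which is precisely the second term of $L_P$.

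Summing the two contributions gives the claimed bound. The only delicate point is the lower bound on $\sigma_{\min}(\Approx{M})$. It needs Lemma~\ref{lemma:singular_value_3} to hold for $\Approx{G}$ with the same $\mu_g$ and $C_g$, which is exactly the hypothesis of the statement. The second term is also where the constant bookkeeping is most involved, so care is needed to get exactly the stated exponents ($C_g^4$, $\mu_g^{-4}$ inside, and $\sigma_{\min}(B)^{-4}$).
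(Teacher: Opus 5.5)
Your proof is correct and follows the paper's argument essentially line for line. You use the same add-and-subtract of $\Approx{G}^{-1}B^\top M^{-1}B$, the resolvent identity for both $G^{-1}-\Approx{G}^{-1}$ and $M^{-1}-\Approx{M}^{-1}$, and Lemma~\ref{lemma:singular_value_3} applied to both $G$ and $\Approx{G}$, which yields exactly the two terms of $L_P$; you are also a bit more explicit than the paper in noting that a lower bound on $\sigma_{\min}(\Approx{M})$ is needed alongside the one on $\sigma_{\min}(M)$, since the paper writes the factor as $1/\sigma_{\min}(BG^{-1}B^\top)^2$.
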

\begin{proof}
Denote $M = BG^{-1}B^\top$ and $\Approx{M}=B\Approx{G}^{-1}B^\top$,
\begin{align*}
    & P^{\Approx{G}}_B - P^G_B \\
    & = 
    G^{-1}B^\top M^{-1}B 
    - \Approx{G}^{-1}B^\top\Approx{M}^{-1}B \\
    & = 
    (G^{-1}B^\top M^{-1}B 
    - \Approx{G}^{-1}B^\top M^{-1}B)
    + (\Approx{G}^{-1}B^\top M^{-1}B 
    - \Approx{G}^{-1}B^\top\Approx{M}^{-1}B) \\
    & = 
    (G^{-1} - \Approx{G}^{-1})B^\top{M}^{-1}B + \Approx{G}^{-1}B^\top(M^{-1} - \Approx{M}^{-1})B
\end{align*}
Now we proceed to bound $\|G^{-1}-\Approx{G}^{-1}\|$ and $\|M^{-1}-\Approx{M}^{-1}\|$ using the identity $C^{-1}-D^{-1}=C^{-1}(D-C)D^{-1}$ for any invertible matrices $C$ and $D$,
\begin{align*}
    & \|G^{-1}-\Approx{G}^{-1}\| \\
    & \leq
    \frac{1}{\mu_g^2}\|G - \Approx{G}\|
\end{align*}
Using Lemma \ref{lemma:singular_value_3}, we have
$\sigma_{\min}(B G^{-1}B^\top) \geq \frac{\mu_g}{C_g^2}\sigma_{\min}(B)^2$ and $\sigma_{\max}(B G^{-1}B^\top) \leq \frac{\sigma_{max}(B)^2}{\mu_g}$.
\begin{align*}
    & \|M^{-1}-\Approx{M}^{-1}\| \\
    & \leq
    \frac{1}{\sigma_{\min}(B G^{-1}B^\top)^2}\|M - \Approx{M}\| \\
    & \leq
    \frac{C_g^4 \|B\|^2}{\mu_g^2 \sigma_{\min}(B)^4}\|G^{-1} - \Approx{G}^{-1}\| \\
    & \leq 
    \frac{C_g^4 \|B\|^2}{\mu_g^4 \sigma_{\min}(B)^4}\|G - \Approx{G}\|\\
\end{align*}
Now we combine the results together,
\begin{align*}
    & \|P^{\Approx{G}}_B - P^G_B\| \\
    & \leq  
    \|G^{-1} - \Approx{G}^{-1}\|\|B\|^2\|{M}^{-1}\| + \|\Approx{G}^{-1}\|\|B\|^2\|M^{-1} - \Approx{M}^{-1}\| \\
    & \leq
    (\frac{C_g^2 \sigma_{\max}(B)^2}{\mu_g^3 \sigma_{\min}(B)^2} + \frac{\sigma_{\max}(B)^2}{\mu_g}\frac{C_g^4 \|B\|^2}{\mu_g^4 \sigma_{\min}(B)^4})\|G - \Approx{G}\| \\
\end{align*}
\end{proof}

\begin{lemma}[Lipshitzness and smoothness of lower-level solution mappings]\label{lemma:lipschitz_y}
Denote $\kappa = \frac{C}{\mu}$, $y^*(x)$ is $L_y$-Lipschitz and $C_y$-smooth in $x$ for some constants $L_y$ and $C_y$. Moreover, $\lambda^*(x)$ is $L_\lambda$-lipschitz and $C_\lambda$-smooth in $x$ for some constants $L_\lambda$ and $C_\lambda$.
\end{lemma}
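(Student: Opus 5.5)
I would derive all four claims from the closed-form derivative in Lemma~\ref{lemma:KKT_diff} together with the decomposition of Lemma~\ref{lemma:dy_char}. Apply Lemma~\ref{lemma:dy_char} with $G = J(x,y^*(x)) = \partialD{V(x,y^*)}{y}$, $G_{xy} = \partialD{V(x,y^*)}{x}$, and the constraint data $A,B$. This gives
\[
\totalD{y^*(x)}{x} = (I-P^G_B)B^\dagger(-A) + P^G_B\bigl(-G^{-1}G_{xy}\bigr).
\]

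\textbf{Lipschitzness of $y^*$.} By Lemma~\ref{lemma:singular_value_1}, Assumption~\ref{assum:VI} gives $\langle d, Gd\rangle \ge \mu\|d\|^2$, so $\|G^{-1}\|\le 1/\mu$. Lemma~\ref{lemma:singular_value_2} gives $\|G\|,\|G_{xy}\|\le C$. Lemma~\ref{lemma:proj_norm_1} gives $\|P^G_B\|\le M_P$ with $\kappa_g=\kappa$. Hence
\[
\Bigl\|\totalD{y^*}{x}\Bigr\| \le (1+M_P)\frac{\|A\|}{\sigma_{\min}(B)} + M_P\kappa =: L_y ,
\]
and the mean value theorem turns this uniform derivative bound into $L_y$-Lipschitzness.

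\textbf{Lipschitzness of $\lambda^*$.} I would use the formula for $d_\lambda$ derived in the proof of Lemma~\ref{lemma:dy_char}, namely $d_\lambda=(BG^{-1}B^\top)^{-1}[A-BG^{-1}G_{xy}]$. Lemma~\ref{lemma:singular_value_3} bounds $\|(BG^{-1}B^\top)^{-1}\|\le C^2/(\mu\sigma_{\min}(B)^2)$, which yields a constant $L_\lambda$ depending on $\kappa,\|A\|,\|B\|,\sigma_{\min}(B)$.

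\textbf{Smoothness of $y^*$.} I would show that $x\mapsto \totalD{y^*(x)}{x}$ is Lipschitz. For linear constraints $A,B,B^\dagger$ are constant, so only $G(x)$ and $G_{xy}(x)$ vary. Jointly $S$-Hessian Lipschitz plus the chain rule give
\[
\|G(x)-G(x')\|\le S\,\|(x,y^*(x))-(x',y^*(x'))\|\le S(1+L_y)\|x-x'\|,
\]
and the same bound holds for $G_{xy}$. Then:
\begin{itemize}
\item the difference of the first term is controlled by $\|P^{G}_B-P^{G'}_B\|\le L_P\|G-G'\|$ (Lemma~\ref{lemma:proj_norm_2}) times $\|B^\dagger A\|$;
\item the second term is a product $P\cdot G^{-1}\cdot G_{xy}$, which I would handle by the telescoping bound $\|XYZ-X'Y'Z'\|\le \|X-X'\|\|Y\|\|Z\|+\dots$, using $\|G^{-1}-G'^{-1}\|\le \mu^{-2}\|G-G'\|$.
\end{itemize}
Collecting terms gives $C_y = S(1+L_y)\bigl[L_P\|B^\dagger A\| + L_P\kappa + M_P/\mu + M_P C/\mu^2\bigr]$. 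The argument for $C_\lambda$ is identical, additionally using $\|M^{-1}-M'^{-1}\|$ from the proof of Lemma~\ref{lemma:proj_norm_2}.

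\textbf{Main obstacle.} The delicate step is justifying that $A$ and $B$ do not depend on $x$, and checking that Lemma~\ref{lemma:proj_norm_2} applies uniformly. If the constraints are only known to be linear in $(x,y)$, then $B=\diag(B^{(i)})$ and $A$ are constant, so this holds. Full row rank of $B$ (Assumption~\ref{assum:LICQ}) is needed for $B^\dagger$, $M_P$ and $L_P$ to be finite. The lemma also needs $G(x)$ to satisfy the monotonicity and norm bounds with the same constants $\mu,C$ for every $x$, and Assumption~\ref{assum:VI} provides exactly this.
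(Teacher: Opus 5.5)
Your proposal is correct and essentially reproduces the paper's proof. The paper uses the same Lemma~\ref{lemma:dy_char} decomposition, the same $M_P$/$L_P$ projection bounds, and the same $S(1+L_y)$ perturbation bound on $G$ and $G_{xy}$, and arrives at exactly your $L_y$ and $C_y$. The only difference is cosmetic: for $\lambda^*$ you bound the Schur-complement form $(BG^{-1}B^\top)^{-1}[A-BG^{-1}G_{xy}]$ via Lemma~\ref{lemma:singular_value_3}, whereas the paper bounds the equivalent expression $(B^\dagger)^\top G(I-P^G_B)(B^\dagger A-G^{-1}G_{xy})$. Both routes give valid, slightly different constants.
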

\begin{proof}
Denote $G=\partialD{V(x,y^*(x))}{y}$, $G_{xy}=\partialD{V(x,y^*(x))}{x}$. Directly applying Lemma \ref{lemma:dy_char} and Lemma \ref{lemma:proj_norm_1}, we can prove Lipschitzness of $y^*(x)$,
\begin{align*}
    \|\totalD{y^*(x)}{x}\| 
    & = 
    \|(I - P^G_B)B^{\dagger}(-A) + P^G_B (-G^{-1} G_{xy})\| \\
    & \leq \|I-P^G_B\|\|B^\dagger A\| + \|P^G_B\|\|G^{-1}\|\|G_{xy}\| \\
    & \leq (1+M_P)\|B^\dagger A\| + M_P \frac{C}{\mu}\\
    & \define L_y
\end{align*}
Denote $\Approx{y}=y^*(\Approx{x})$, $\Approx{G}=\partialD{V(\Approx{x},\Approx{y})}{y}$ and $\Approx{G}_{xy}=\partialD{V(\Approx{x},\Approx{y})}{x}$. We can also prove smoothness of $y^*(x)$,
\begin{align*}
    & \|\totalD{y^*(\Approx{x})}{x} - \totalD{y^*(x)}{x}\| \\
    & = 
    \|
    (I - P^{\Approx{G}}_B)B^{\dagger}(-A) + P^{\Approx{G}}_B (-\Approx{G}^{-1} \Approx{G}_{xy}) - (I - P^G_B)B^{\dagger}(-A) - P^G_B (-G^{-1} G_{xy})
    \| \\
    & \leq
    \|(P^G_B - P^{\Approx{G}}_B)B^{\dagger}(-A)\|
    + \|(P^{\Approx{G}}_B - P^G_B)(-\Approx{G}^{-1} \Approx{G}_{xy})\|
    + \|P^G_B(G^{-1} G_{xy} - \Approx{G}^{-1} \Approx{G}_{xy})\| \\
    & \leq 
    L_P \|G-\Approx{G}\| \|B^\dagger A\|
    + L_P \|G-\Approx{G}\|\frac{C}{\mu}
    + M_P\|G^{-1} G_{xy} - \Approx{G}^{-1} \Approx{G}_{xy}\|
\end{align*}
where Lemma \ref{lemma:proj_norm_2} and \ref{lemma:proj_norm_1} are applied in the last inequality. Now we can bound $\|G - \Approx{G}\|$. Denote $L_G \define S(1+L_y)$, we have
\begin{align*}
    \|G - \Approx{G}\|
    & \leq 
    \|\partialD{V({x},{y})}{y} - \partialD{V(\Approx{x},{y})}{y}\|
    + \|\partialD{V(\Approx{x},{y})}{y} - \partialD{V(\Approx{x},\Approx{y})}{y}\| \\
    & \leq S(1+L_y) \|x-\Approx{x}\| \\
    & = L_G \|x-\Approx{x}\|
\end{align*}
and we can bound the difference between unconstrained implict gradients $\|G^{-1} G_{xy} - \Approx{G}^{-1} \Approx{G}_{xy}\|$, denote $L_u \define \frac{S(1+L_y)}{\mu} + \frac{CS(1+L_y)}{\mu^2}$, we have
\begin{align*}
    \|G^{-1} G_{xy} - \Approx{G}^{-1} \Approx{G}_{xy}\|
    & \leq \|G^{-1} G_{xy} - {G}^{-1} \Approx{G}_{xy}\| + \|{G}^{-1} \Approx{G}_{xy} - \Approx{G}^{-1} \Approx{G}_{xy}\| \\
    & \leq \frac{1}{\mu}\|G_{xy} - \Approx{G}_{xy}\| + \|\Approx{G}_{xy}\|\|G^{-1}(\Approx{G} - G)\Approx{G}^{-1}\| \\
    & \leq (\frac{S(1+L_y)}{\mu} + \frac{CS(1+L_y)}{\mu^2})\|x-\Approx{x}\| \\
    & =
    L_{u} \|x-\Approx{x}\|
\end{align*}
We have obtained the smoothness bound for $y^*(x)$,
\begin{align*}
    & \|\totalD{y^*(\Approx{x})}{x} - \totalD{y^*(x)}{x}\| \\
    & \leq 
    (L_PL_G\|B^\dagger A\| + \frac{C L_P L_G}{\mu} + M_P(L_u)) \|x-\Approx{x}\| \\
    & \define C_y \|x-\Approx{x}\|
\end{align*}

\noindent
Directly applying Lemma \ref{lemma:dy_char} and Lemma \ref{lemma:proj_norm_1}, we can prove Lipschitzness of $\lambda^*(x)$,
\begin{align*}
    \|\totalD{\lambda^*(x)}{x}\|
    & = \|(B^\dagger)^\top G (I-P^{G}_B) (B^{\dagger}A - G^{-1}G_{xy})\| \\
    & \leq \|B^\dagger\| \|G\| (1+M_P)(\|B^\dagger A\|+\frac{C}{\mu}) \\
    & \define L_\lambda
\end{align*}
The smoothness of $\lambda^*(x)$ is given by
\begin{align*}
    & \|\totalD{\lambda^*(x)}{x} - \totalD{\lambda^*(\Approx{x})}{x}\| \\
    & = 
    \|
    B^\dagger G (I-P^G_B)(B^\dagger A - G^{-1}G_{xy})
    - B^\dagger \Approx{G} (I-P^G_B)(B^\dagger A - G^{-1}G_{xy}) 
    \| \\
    & + 
    \|
    B^\dagger \Approx{G} (I-P^G_B)(B^\dagger A - G^{-1}G_{xy})
    - B^\dagger \Approx{G} (I-P^{\Approx{G}}_B)(B^\dagger A - G^{-1}G_{xy})
    \| \\
    & + 
    \|
    B^\dagger \Approx{G} (I-P^{\Approx{G}}_B)(B^\dagger A - G^{-1}G_{xy})
    - B^\dagger \Approx{G} (I-P^{\Approx{G}}_B)(B^\dagger A - \Approx{G}^{-1}\Approx{G}_{xy})
    \| \\
    & \leq 
    \|B^\dagger\|(1+M_P)(\|B^\dagger A\| + \frac{C}{\mu})\|G-\Approx{G}\| 
    + C \|B^\dagger\| (\|B^\dagger A\| + \frac{C}{\mu})\|P^{\Approx{G}}_B - P^G_B\|
    + C \|B^\dagger\| (1+M_P) \|G^{-1} G_{xy} - \Approx{G}^{-1} \Approx{G}_{xy}\| \\
    & \leq
    (L_G\|B^\dagger\|(1+M_P)(\|B^\dagger A\| + \frac{C}{\mu}) 
    + L_P C \|B^\dagger\| (\|B^\dagger A\| + \frac{C}{\mu}) 
    + C L_u \|B^\dagger\|(1+M_P))
    \|x-\Approx{x}\| \\
    & \define C_\lambda  \|x-\Approx{x}\|
\end{align*}
\end{proof}

\begin{theorem}[Lipschitzness and smoothness of constrained hyper-objective $F$]\label{theorem:smooth_F}
The hyper-objective $F$ is $L_F$-lipschitz and $C_F$-smooth in $x$,
Where $L_F \define L_f (1 + L_y)$ and $C_F \define (C_f(1+L_y) + L_f C_y + L_y C_f + L_y^2C_f)$.
\end{theorem}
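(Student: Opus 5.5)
The plan is to work directly from the chain rule $\nabla F(x) = \nabla_x f(x,y^*(x)) + (\totalD{y^*(x)}{x})^\top \nabla_y f(x,y^*(x))$. We then combine the Lipschitz and smoothness constants of $f$ (Assumption~\ref{assum:optim}) with those of $y^*(\cdot)$ from Lemma~\ref{lemma:lipschitz_y}. That lemma already gives $\|\totalD{y^*(x)}{x}\|\le L_y$ and $\|\totalD{y^*(\hat x)}{x}-\totalD{y^*(x)}{x}\|\le C_y\|x-\hat x\|$. No further analysis of the KKT system is needed: the constrained structure enters only through $L_y$ and $C_y$.

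\textbf{Lipschitzness.} For $x,\hat x\in\inputspace$, we use joint $L_f$-Lipschitzness of $f$:
\[
|F(x)-F(\hat x)|\le L_f\|(x,y^*(x))-(\hat x,y^*(\hat x))\|\le L_f\bigl(\|x-\hat x\|+\|y^*(x)-y^*(\hat x)\|\bigr)\le L_f(1+L_y)\|x-\hat x\|,
\]
which gives $L_F$. As a byproduct, Lipschitzness of $f$ gives $\|\nabla_y f\|\le L_f$ at every point. This bound is used in the next step.

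\textbf{Smoothness.} Write $z(x)=(x,y^*(x))$ and note $\|z(x)-z(\hat x)\|\le (1+L_y)\|x-\hat x\|$. We split
\[
\nabla F(x)-\nabla F(\hat x)=\bigl[\nabla_x f(z(x))-\nabla_x f(z(\hat x))\bigr]+\bigl(\totalD{y^*(x)}{x}-\totalD{y^*(\hat x)}{x}\bigr)^\top\nabla_y f(z(x))+\totalD{y^*(\hat x)}{x}^\top\bigl[\nabla_y f(z(x))-\nabla_y f(z(\hat x))\bigr].
\]
We bound each term separately:
\begin{itemize}
\item The first term is at most $C_f(1+L_y)\|x-\hat x\|$ by $C_f$-gradient Lipschitzness.
\item The second term is at most $C_y L_f\|x-\hat x\|$, using the bound $\|\nabla_y f\|\le L_f$.
\item The third term is at most $L_y C_f(1+L_y)\|x-\hat x\|$.
\end{itemize}
Summing gives $C_f(1+L_y)+L_fC_y+L_yC_f+L_y^2C_f$, exactly the stated $C_F$.

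\textbf{Main obstacle.} The one nontrivial ingredient is ensuring that Lemma~\ref{lemma:lipschitz_y} applies uniformly in $x$, i.e., that $M_P$, $L_P$ and $\|B^\dagger A\|$ are bounded independently of $x$. For linear constraints $B$ and $A$ are constant matrices, $B$ has full row rank by assumption, and $\mu$ and $C$ are global constants, so these quantities are uniform. For the unconstrained case used in Theorem~\ref{theorem:asym_convergence}, we take $B,A$ empty, so $P^G_B=I$. Then $L_y=C/\mu$ and $C_y=L_u$, and the same computation goes through.
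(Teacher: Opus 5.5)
Your proposal is correct and follows essentially the same route as the paper. You differentiate $F$ by the chain rule, combine the Lipschitz and gradient-Lipschitz constants of $f$ with $L_y$ and $C_y$ from Lemma~\ref{lemma:lipschitz_y}, and arrive at exactly the stated $L_F$ and $C_F$. The differences are cosmetic: you bound $|F(x)-F(\hat x)|$ directly where the paper bounds $\|\nabla F(x)\|$, and you write out the three-term add-and-subtract split that the paper's proof leaves implicit.
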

\begin{proof}
Let $x, \Approx{x} \in \inputspace$. Using Lemma \ref{lemma:lipschitz_y}, Lipschitzness of $F$ is given by
\begin{align*}
    \|\nabla F(x)\|
    & \leq \|\nabla_xf(x,y^*(x))\| + \|(\totalD{y^*(x)}{x})\|\| \nabla_yf(x,y^*(x))\| \\
    & = L_f (1 + L_y) \\
    & \define L_F
\end{align*}
Smoothness of $F$ is given by
\begin{align*}
    & \|\nabla F(x) - \nabla F(\Approx{x})\| \\
    & \leq 
    \|
    \nabla_xf(x,y^*(x)) - \nabla_xf(\Approx{x},y^*(\Approx{x}))
    \|
    + 
    \|
    \totalD{y^*(x)}{x}^\top \nabla_y f(x,y^*(x)) - \totalD{y^*(\Approx{x})}{x}^\top \nabla_y f(\Approx{x}, y^*(\Approx{x}))
    \| \\
    & \leq
    (C_f(1+L_y) + L_f C_y + L_y C_f + L_y^2C_f)\|x - \Approx{x}\| \\
    & \define
    C_F\|x - \Approx{x}\|
\end{align*}
\end{proof}

\begin{corollary}[Smoothness of the unconstrained hyper-objective]
\label{cor:unconstrained_F_smooth}
Consider Problem~\ref{eq:equal_p1} with unconstrained follower strategy sets
$Y_i=\mathbb{R}^{n_i}$ for all $i\in[N]$. Suppose the equilibrium mapping $y^*(x)$ is
$L_y$-Lipschitz and its Jacobian $\frac{d}{dx}y^*(x)$ is
$C_y$-Lipschitz. Then the hyper-objective
\[
F(x)=f(x,y^*(x))
\]
is $L_F$-Lipschitz and $C_F$-smooth, where
\[
L_F:=L_f(1+L_y)
\]
and
\[
C_F
:=
C_f(1+L_y)
+
L_f C_y
+
L_y C_f
+
L_y^2 C_f.
\]
\end{corollary}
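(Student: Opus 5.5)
We differentiate $F(x)=f(x,y^*(x))$ by the chain rule. This gives
\[
\nabla F(x)=\nabla_x f(x,y^*(x))+\Big(\totalD{y^*(x)}{x}\Big)^\top \nabla_y f(x,y^*(x)),
\]
which is Equation~\eqref{eq:hypergrad} before the implicit-function substitution. All bounds below rest on the hypotheses that $y^*$ is $L_y$-Lipschitz (so $\|\totalD{y^*(x)}{x}\|\le L_y$) and has $C_y$-Lipschitz Jacobian. We also use Assumption~\ref{assum:optim}, which gives $\|\nabla_{(x,y)} f\|\le L_f$ and $C_f$-Lipschitz gradients.

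\textbf{Lipschitz constant.} Apply the triangle inequality to the formula above:
\[
\|\nabla F(x)\|\le \|\nabla_x f\|+\Big\|\totalD{y^*(x)}{x}\Big\|\,\|\nabla_y f\|\le L_f+L_yL_f=L_F.
\]
This uniform gradient bound gives $L_F$-Lipschitzness of $F$ by the mean value inequality.

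\textbf{Smoothness constant.} Fix $x,\Approx{x}$ and write $z=(x,y^*(x))$, $\Approx{z}=(\Approx{x},y^*(\Approx{x}))$. Since $y^*$ is $L_y$-Lipschitz, $\|z-\Approx{z}\|\le(1+L_y)\|x-\Approx{x}\|$. We split $\nabla F(x)-\nabla F(\Approx{x})$ into two pieces.
\begin{enumerate}[label=(\roman*)]
\item The first piece is $\nabla_x f(z)-\nabla_x f(\Approx{z})$. Its norm is at most $C_f(1+L_y)\|x-\Approx{x}\|$.
\item The second piece is the difference of the products $\totalD{y^*(x)}{x}^\top\nabla_y f(z)$ and $\totalD{y^*(\Approx{x})}{x}^\top\nabla_y f(\Approx{z})$. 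We add and subtract $\totalD{y^*(\Approx{x})}{x}^\top \nabla_y f(z)$. The Jacobian-difference term is bounded by $C_y\|x-\Approx{x}\|\cdot L_f$. The gradient-difference term is bounded by $L_y\cdot C_f(1+L_y)\|x-\Approx{x}\|$.
\end{enumerate}
Summing the two pieces gives
\[
\big(C_f(1+L_y)+L_fC_y+L_yC_f+L_y^2C_f\big)\|x-\Approx{x}\|=C_F\|x-\Approx{x}\|.
\]

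The argument needs no real new idea. The only care needed is in piece (ii): the add-and-subtract must be done so that each resulting term has exactly one varying factor, which is what produces the stated constants.
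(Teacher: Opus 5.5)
Your proposal is correct and follows the paper's argument: the chain-rule expression for $\nabla F$, the triangle inequality for the bound $L_F = L_f(1+L_y)$, and an add-and-subtract split of the product term. That split reproduces $C_F = C_f(1+L_y) + L_f C_y + L_y C_f + L_y^2 C_f$ exactly, as in the proof of Theorem~\ref{theorem:smooth_F}, which the paper's corollary reuses after dropping the constraint terms.
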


\begin{proof}
In the unconstrained setting, the hypergradient is
\[
\nabla F(x)
=
\nabla_x f(x,y^*(x))
+
\left(\frac{d}{dx}y^*(x)\right)^\top
\nabla_y f(x,y^*(x)).
\]
The proof of Theorem~\ref{theorem:smooth_F} only uses the
$L_y$-Lipschitzness of $y^*(x)$, the $C_y$-Lipschitzness of
$\frac{d}{dx}y^*(x)$, and the Lipschitzness and smoothness of $f$.
Hence, after removing all constraint and dual-variable terms, the same
argument gives
\[
\|\nabla F(x)\|\leq L_f(1+L_y)=L_F,
\]
and, for any $x,\hat x\in X$,
\[
\|\nabla F(x)-\nabla F(\hat x)\|
\leq
\left(
C_f(1+L_y)
+
L_f C_y
+
L_y C_f
+
L_y^2 C_f
\right)
\|x-\hat x\|.
\]
Therefore, $F$ is $L_F$-Lipschitz and $C_F$-smooth.
\end{proof}

\section{Symmetric Jacobian Stackelberg Games: Unconstrained and Equality-Constrained Cases}\label{appx:sym}
We first provide auxiliary lemmas for section~\ref{sec:sym} with linear equality constraints. We then provide complete proof for theorem~\ref{theorem:grad_approx} and~\ref{theorem:sym_equality_convergence} specialized to the unconstrained case. 

\subsection{Auxiliary Lemmas in constrained setting}
\begin{lemma}\label{lemma:pert_game}
The perturbed game $\mathcal{G}_\delta \define (x, [N], \{Y_i\}_{i\in[N]}, \{g_i + \delta f\}_{i\in [N]})$. $\mathcal{G}_\delta$ is a $C_\delta \define (C + \delta C_f)$-smooth , $\mu_\delta \define (\mu - \delta C_f)$-strongly monotone and $S_\delta \define (S+\delta S_f)$-Hessian smooth. If we assume $\delta \leq \frac{\mu}{2C_f}$, then we set $\mu_\delta = \frac{\mu}{2}$, $C_\delta = C+\frac{\mu}{2}$ and $S_\delta = S + \frac{S_f\mu}{2C_f}$, so that $\mu_\delta$, $C_\delta$ and $S_\delta$ are independent of $\delta$.
\end{lemma}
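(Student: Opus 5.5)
The plan is to work directly with the gradient operator of the perturbed game. Since each player's cost is $g_i+\delta f$, the partial gradient of player $i$ is $\nabla_{y_i} g_i + \delta \nabla_{y_i} f$. Stacking these over players gives
\[
V_\delta(x,y) = V(x,y) + \delta \nabla_y f(x,y).
\]
I will verify the three claimed properties one at a time, each by the triangle inequality applied to this decomposition. Throughout, I use Assumption~\ref{assum:VI} for $V$ and Assumption~\ref{assum:optim} for $f$.

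\emph{Strong monotonicity.} For any $y,y'$ I expand
\[
\langle V_\delta(x,y')-V_\delta(x,y),y'-y\rangle = \langle V(x,y')-V(x,y),y'-y\rangle + \delta\langle \nabla_y f(x,y')-\nabla_y f(x,y),y'-y\rangle .
\]
The first term is at least $\mu\|y'-y\|^2$ by Assumption~\ref{assum:VI}. For the second term, Cauchy--Schwarz together with the $C_f$-gradient Lipschitzness of $f$ gives a lower bound of $-\delta C_f\|y'-y\|^2$. Adding these yields strong monotonicity with modulus $\mu_\delta=\mu-\delta C_f$.

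\emph{Smoothness.} I bound $\|V_\delta(x',y')-V_\delta(x,y)\|$ by $\|V(x',y')-V(x,y)\| + \delta\|\nabla_y f(x',y')-\nabla_y f(x,y)\|$. The first piece is at most $C\|(x',y')-(x,y)\|$ by Assumption~\ref{assum:VI}. The second is at most $\delta C_f\|(x',y')-(x,y)\|$, because $\nabla_y f$ is a sub-block of $\nabla_{(x,y)}f$ and so inherits its Lipschitz constant. This gives $C_\delta = C+\delta C_f$.

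\emph{Hessian smoothness.} The Jacobian $\partial V_\delta/\partial(x,y)$ equals $\partial V/\partial(x,y)$ plus $\delta$ times the rows of $\partial \nabla_{(x,y)} f/\partial(x,y)$ that correspond to $y$. A submatrix has operator norm no larger than the full matrix, so the same triangle-inequality argument gives the constant $S+\delta S_f$.

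\emph{Constants independent of $\delta$.} Finally I impose $\delta\le \mu/(2C_f)$ and substitute. This gives $\mu-\delta C_f\ge \mu/2$, $C+\delta C_f\le C+\mu/2$, and $S+\delta S_f\le S+S_f\mu/(2C_f)$. Since a strong-monotonicity modulus may always be decreased, and Lipschitz constants may always be increased, we can set $\mu_\delta=\mu/2$, $C_\delta = C+\mu/2$, and $S_\delta = S+S_f\mu/(2C_f)$, none of which depends on $\delta$.

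No step is really an obstacle. The only point that needs care is checking that the partial gradient $\nabla_y f$ and its Jacobian inherit the joint Lipschitz constants from $\nabla_{(x,y)} f$. This follows because restricting to a sub-block of rows cannot increase the operator norm.
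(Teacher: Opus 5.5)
Your proposal is correct and follows the same route as the paper. Both write the perturbed operator as $V+\delta\nabla_y f$, bound the cross term via Cauchy--Schwarz and the $C_f$-Lipschitzness of $\nabla_y f$ to get modulus $\mu-\delta C_f$, use the triangle inequality for the Lipschitz and Hessian-Lipschitz constants, and then substitute $\delta\le \mu/(2C_f)$. Your smoothness step is, if anything, slightly more careful than the paper's: the paper bounds the joint difference by $C\|y'-y\|$ instead of $C\|(x',y')-(x,y)\|$, and it does not note that $\nabla_y f$ and its Jacobian inherit $C_f$ and $S_f$ as sub-blocks.
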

\begin{proof}
Let $x, x^\prime \in \inputspace$. $y, y^\prime \in Y$, To prove strong monotonicity,
\begin{align*}
    &\langle V(x,y^\prime) + \delta \nabla_y f(x,y^\prime) - (V(x,y) + \delta \nabla_y f(x,y)), y^\prime - y \rangle \\
    & \geq \mu \|y^\prime - y\|^2 + \langle \delta \nabla_y f(x,y^\prime) - \delta \nabla_y f(x,y), y^\prime - y \rangle \\
    & \geq \mu \|y^\prime - y\|^2  -\delta \|\nabla_y f(x,y^\prime) - \nabla_y f(x,y)\|\| y^\prime - y \|  \\
    & \geq (\mu - \delta C_f)\| y^\prime - y \|^2
\end{align*}
To prove smoothness,
\begin{align*}
    &  \|V(x^\prime,y^\prime) + \delta \nabla_y f(x^\prime,y^\prime) - (V(x,y) + \delta \nabla_y f(x,y))\| \\
    & \leq C\|y^\prime - y\| + \delta \|\nabla_y f(x,y^\prime) - \nabla_y f(x,y)\| \\
    & \leq (C + \delta C_f)\| y^\prime - y \|
\end{align*}
To prove Hessian-smoothness,
\begin{align*}
    &  \partialD{V(x^\prime,y^\prime)}{(x,y)} + \delta \partialD{\nabla_y f(x^\prime, y^\prime)}{(x,y)}- \partialD{V(x, y)}{(x,y)} - \delta \partialD{\nabla_y f(x, y)}{(x,y)}\|\\
    & \leq S \|(x^\prime, y^\prime) - (x, y)\| + \delta \|\partialD{\nabla_y f(x^\prime, y^\prime)}{(x,y)} - \partialD{\nabla_y f(x, y)}{(x,y)}\| \\
    & \leq (S + \delta S_f)\|(x^\prime, y^\prime) - (x, y)\|
\end{align*}
If we assume $\delta \leq \frac{\mu}{2C_f}$, then $\mu_\delta \geq \frac{\mu}{2}$, $C_\delta \leq C+\frac{\mu}{2}$ and $S_\delta \leq S + \frac{S_f\mu}{2C_f}$. 
\end{proof}

\begin{lemma}\label{lemma:y_delta}
Let $y^*_\delta$ be the Nash equilibrium of the perturbed game $\mathcal{G}_\delta$. If $\delta \in [0, \Delta]$ where $\Delta \leq \frac{\mu}{2C_f}$, then $$\|y^*_\delta - y^*\| \leq \frac{2L_f}{\mu}\delta = \bigO(\delta)$$ 
\end{lemma}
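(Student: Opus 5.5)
The plan is to compare the two equilibrium conditions directly and let strong monotonicity of the perturbed operator convert a residual into a distance. In the unconstrained setting, $y^*$ satisfies $V(x,y^*)=0$ and $y^*_\delta$ satisfies $V(x,y^*_\delta)+\delta\nabla_y f(x,y^*_\delta)=0$. By Lemma~\ref{lemma:pert_game}, for $\delta\le\Delta\le\frac{\mu}{2C_f}$ the perturbed operator $V_\delta \define V+\delta\nabla_y f$ is $\mu_\delta\ge\frac{\mu}{2}$-strongly monotone. I will therefore evaluate $V_\delta$ at the two points and invoke strong monotonicity:
\begin{align*}
\tfrac{\mu}{2}\|y^*_\delta-y^*\|^2 \le \langle V_\delta(x,y^*_\delta)-V_\delta(x,y^*),\,y^*_\delta-y^*\rangle = -\delta\langle \nabla_y f(x,y^*),\,y^*_\delta-y^*\rangle .
\end{align*}
The equality uses $V_\delta(x,y^*_\delta)=0$ and $V_\delta(x,y^*)=\delta\nabla_y f(x,y^*)$.

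Next I bound the right-hand side by Cauchy--Schwarz. Assumption~\ref{assum:optim}(1) gives $\|\nabla_y f\|\le L_f$, so the right-hand side is at most $\delta L_f\|y^*_\delta-y^*\|$. Dividing by $\|y^*_\delta-y^*\|$ (the case where this is zero is trivial) yields $\|y^*_\delta-y^*\|\le \frac{2L_f}{\mu}\delta$.

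For the equality-constrained version, I would work with the variational-inequality form from Lemma~\ref{lemma:local_nash_VI} over the common feasible set $Y(x)$; this set is the same for both games because the constraints do not involve $f$. The VIs give $\langle V(x,y^*),y^*_\delta-y^*\rangle\ge0$ and $\langle V_\delta(x,y^*_\delta),y^*-y^*_\delta\rangle\ge0$. Adding these two inequalities yields the same estimate, $\tfrac{\mu}{2}\|y^*_\delta-y^*\|^2\le \delta\langle\nabla_y f(x,y^*),y^*-y^*_\delta\rangle$, and the rest of the argument is unchanged.

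The step needing the most care is existence and uniqueness of $y^*_\delta$, which the statement presupposes. Strong monotonicity and Lipschitzness of $V_\delta$ give this, either by the cited VI theory \citep{facchinei2003finite} or via a projected-gradient contraction together with Lemma~\ref{lemma:fixed_point}. Otherwise the argument is routine.
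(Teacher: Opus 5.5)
Your proposal is correct and follows the paper's argument: the paper likewise adds the two variational-inequality optimality conditions, applies $(\mu-\delta C_f)$-strong monotonicity of $V+\delta\nabla_y f$ at $y^*_\delta$ and $y^*$, and bounds the remaining term $\delta\langle\nabla_y f(x,y^*),y^*-y^*_\delta\rangle$ by $\delta L_f\|y^*_\delta-y^*\|$. The differences are only presentational. The paper works in VI form from the start, so one argument covers both the unconstrained and constrained cases. It also keeps the intermediate bound $\frac{L_f\delta}{\mu-\delta C_f}$ before relaxing it to $\frac{2L_f}{\mu}\delta$.
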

\begin{proof}
By optimality of $y^*_\delta$ and $y^*$ in their respective games, 
\begin{align*}
   & \langle V(x,y^*_\delta) + \delta \nabla_yf(x,y^*_\delta), y^* -y^*_\delta\rangle \geq 0 \\
   & \langle V(x,y^*), y^*_\delta -y^* \rangle \geq 0 \\
\end{align*}
By summing up the inequalities,
\begin{align*}
   \langle V(x,y^*) - V(x,y^*_\delta)- \delta \nabla_yf(x,y^*_\delta), y^*_\delta -y^* \rangle \geq 0 \\
\end{align*}
By srtong monotonicity of $V + \delta \nabla_yf$, 
\begin{align*}
    \langle V(x,y^*_\delta)+ \delta \nabla_yf(x,y^*_\delta) - V(x,y^*) - \delta \nabla_yf(x,y^*), \ y^*_\delta -y^* \rangle \geq (\mu - \delta C_f)\|y^*_\delta - y^*\|^2 \\
\end{align*}
We can bound the left-hand side by
\begin{align*}
    & \langle V(x,y^*_\delta)+ \delta \nabla_yf(x,y^*_\delta) - V(x,y^*) - \delta \nabla_yf(x,y^*), \ y^*_\delta -y^* \rangle  \\
    & =  \langle V(x,y^*_\delta)+ \delta \nabla_yf(x,y^*_\delta) - V(x,y^*) + V(x, y^*) - (V(x,y^*) + \delta \nabla_yf(x,y^*)), \ y^*_\delta -y^* \rangle \\
    & \leq \langle V(x, y^*)- (V(x,y^*) + \delta \nabla_yf(x,y^*)), \ y^*_\delta -y^* \rangle \\
    & \leq \delta \|\nabla_yf(x,y^*))\| \|y^*_\delta -y^* \| \\
    & \leq \delta L_f \|y^*_\delta -y^* \|
\end{align*}
Hence, we get the desired inequality,
\begin{align*}
    \|y^*_\delta - y^*\|
    & \leq \frac{\delta}{\mu - \delta C_f} L_f \\
    & \leq \frac{2L_f}{\mu}\delta
\end{align*}
\end{proof}

\begin{lemma}[Implicit differentiation of perturbed game]\label{lemma:KKT_diff_perturb}
\begin{align*}
    \begin{bmatrix} \totalD{y^*_\delta}{\delta}|_{\delta=0} \\ \totalD{\lambda^*_\delta}{\delta}|_{\delta=0} \end{bmatrix}
    =  \begin{bmatrix}\partialD{V(x,y^*)}{y} & B^\top \\ B & 0\end{bmatrix}^{-1} \begin{bmatrix}- \nabla_{y}f(x,y^*) \\ 0\end{bmatrix}
\end{align*}
\end{lemma}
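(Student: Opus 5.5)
We obtain the formula by implicitly differentiating the KKT system of the perturbed game $\mathcal{G}_\delta$ with respect to $\delta$ at $\delta=0$. This mirrors the proof of Lemma~\ref{lemma:KKT_diff}, with $\delta$ now playing the role of $x$.

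\textbf{KKT system.} By Lemma~\ref{lemma:pert_game}, for $\delta\in[0,\frac{\mu}{2C_f}]$ the perturbed game is $\frac{\mu}{2}$-strongly monotone, so each $g_i+\delta f$ is convex in $y_i$. By Corollary~\ref{corollary:local_nash_KKT}, $y^*_\delta$ is then characterized, together with a multiplier $\lambda^*_\delta$, by the KKT system
\begin{align*}
\Gamma_\delta(y,\lambda)\define\begin{bmatrix} V(x,y)+\delta\nabla_y f(x,y)+B^\top\lambda \\ (e^{(i)}(x,y_i))_{i\in[N]}\end{bmatrix}=0 .
\end{align*}
Here $B$ is the block-diagonal constraint Jacobian from Lemma~\ref{lemma:KKT_diff}, which is constant in $y$ because the constraints are linear. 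Viewing $\Gamma$ as a function of $(\delta,y,\lambda)$, it is $C^1$ by the Hessian-Lipschitz parts of Assumptions~\ref{assum:VI} and~\ref{assum:optim}. At $\delta=0$ the system reduces to the original KKT system, whose solution is $(y^*,\lambda^*)$.

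\textbf{Invertibility.} The Jacobian of $\Gamma$ in $(y,\lambda)$ at $(0,y^*,\lambda^*)$ is
\[
\begin{bmatrix}\partialD{V(x,y^*)}{y} & B^\top\\ B&0\end{bmatrix}.
\]
By Lemma~\ref{lemma:singular_value_1}, $\partialD{V(x,y^*)}{y}$ is positive definite with constant $\mu$, and $B$ has full row rank by LICQ. Lemma~\ref{lemma:invH} therefore shows this matrix is invertible.

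\textbf{Implicit function theorem.} The theorem gives a $C^1$ branch $\delta\mapsto(y_\delta,\lambda_\delta)$ near $0$ with
\[
\begin{bmatrix}\partialD{V(x,y^*)}{y} & B^\top\\ B&0\end{bmatrix}\begin{bmatrix}\totalD{y_\delta}{\delta}\\ \totalD{\lambda_\delta}{\delta}\end{bmatrix}_{\delta=0} = -\partialD{\Gamma}{\delta}\Big|_{\delta=0} = \begin{bmatrix}-\nabla_y f(x,y^*)\\0\end{bmatrix}.
\]
The zero block appears because the constraints do not depend on $\delta$. Inverting the KKT matrix yields the stated formula.

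\textbf{Main obstacle.} The one delicate step is identifying the implicit-function branch with the actual equilibrium $y^*_\delta$ and its multiplier. We handle it with uniqueness. Strong monotonicity makes the perturbed equilibrium unique, and Lemma~\ref{lemma:y_delta} gives $y^*_\delta\to y^*$. Since $B$ has full row rank, the stationarity equation determines the multiplier uniquely as $\lambda^*_\delta=-(BB^\top)^{-1}B\,(V+\delta\nabla_yf)(x,y^*_\delta)$, which is continuous in $\delta$. Hence $(y^*_\delta,\lambda^*_\delta)$ lies in the implicit-function-theorem neighborhood for small $\delta$ and coincides with the branch there. One should also note that the derivative at $\delta=0$ is one-sided if $\delta$ is restricted to $\delta\ge0$. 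This causes no problem, since $\Gamma$ is defined for negative $\delta$ as well.
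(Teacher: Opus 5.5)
Your proof is correct, and it reaches the formula by a somewhat different route from the paper. The paper applies the implicit function theorem at every $\delta\in(0,\frac{\mu}{2C_f}]$. It uses the $\mu_\delta$-strong monotonicity from Lemma~\ref{lemma:pert_game} to invert the $\delta$-dependent KKT matrix, whose top-left block is $\partialD{V(x,y^*_\delta)}{y}+\delta\nabla^2_{yy}f(x,y^*_\delta)$. This yields Equation~\eqref{eq:pert_KKT} for each such $\delta$, and the paper then lets $\delta\to0$ using Lemma~\ref{lemma:y_delta}. You instead apply the theorem once, at the base point $(0,y^*,\lambda^*)$, with invertibility coming from the original game's strong monotonicity together with Lemma~\ref{lemma:invH}. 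You then use Lemma~\ref{lemma:y_delta} for a different purpose: to show that the true pair $(y^*_\delta,\lambda^*_\delta)$ enters the implicit-function neighborhood, so it coincides with the local branch. Your version proves exactly the stated identity without any limit interchange. The paper's ``take $\delta\to0$'' step tacitly needs two things: that $\totalD{y^*_\delta}{\delta}$ converges as $\delta\to0^+$, and that this limit equals the one-sided derivative at $0$ (a mean-value argument). It also leaves unstated the branch identification that you make explicit. What the paper's route buys is the derivative formula at every admissible $\delta$. Lemma~\ref{lemma:lipschitz_y_delta} reuses exactly this to obtain $\delta$-uniform Lipschitz and smoothness constants for $y^*_\delta$ and $\lambda^*_\delta$. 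Your argument would give the same result by rerunning it at each base point $\delta_0$, with invertibility supplied by Lemma~\ref{lemma:pert_game}.
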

\begin{proof}
By Lemma \ref{lemma:pert_game}, $\mathcal{G}_\delta$ is a $(\mu - \delta C_f)$-strongly monotone, $(C + \delta C_f)$-smooth and $(S+\delta S_f$-Hessian smooth. Let $y^*_\delta$, $\lambda^*_\delta$ be the Nash equilibrium and lagrange multiplier for $\mathcal{G}_\delta$. We can again apply implicit function theorem and show that $y^*_\delta$ and $\lambda^*_\delta$ are continuously differentiable functions of $\delta$ and the derivatives of them are given by 

\begin{align}\label{eq:pert_KKT}
    \begin{bmatrix}\partialD{V(x,y^*_\delta)}{y}+\delta \nabla^2_{yy}f(x,y^*_\delta) & B^\top \\ B & 0\end{bmatrix}\begin{bmatrix} \totalD{y^*_\delta}{\delta} \\ \totalD{\lambda^*_\delta}{\delta} \end{bmatrix}
    =  \begin{bmatrix}- \nabla_{y}f(x,y^*_\delta) \\ 0\end{bmatrix}
\end{align}
we can prove that $y^* = \lim_{\delta \rightarrow 0} y^*_\delta$ (Lemma \ref{lemma:y_delta}). Take $\delta \rightarrow 0$, we have the desired result.
\end{proof}

\begin{lemma}\label{lemma:limit_v}
Assume that $\partialD{V(x,y^*)}{y}$ is symmetric. Denote $y^*$, $\lambda^*$ as constants with respect to $x$, then the hypergradient $\nabla F(x)$ can be rewritten without second-order derivatives:
\begin{align*}
    & v_\delta(x,y^*_\delta,\lambda^*_\delta) = \frac{1}{\delta} \sum_{i=1}^N \nabla_x[g_i(x,y^*_{\delta,i}, y^*_{\neg i}) + (\lambda^*_{\delta,i})^\top \ei(x,y^*_i)] - \nabla_x[g_i(x,y^*_{i}, y^*_{\neg i}) + (\lambda^*_{i})^\top \ei(x,y^*_i)] \\ 
    & (\totalD{y^*(x)}{x})^\top \nabla_yf(x,y^*(x))
    = \lim_{\delta \rightarrow0} v_\delta(x,y^*_\delta,\lambda^*_\delta)
\end{align*}
\end{lemma}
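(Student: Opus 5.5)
We must show two things: the limit of $v_\delta$ as $\delta\to 0$ equals $(\totalD{y^*(x)}{x})^\top\nabla_y f(x,y^*(x))$, and this limit involves no second-order derivatives. We work with the KKT representation throughout.

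\textbf{Step 1: rewrite $v_\delta$ as a directional derivative.} Since $y^*_{\neg i}$ and $\lambda^*$ are frozen as constants in $x$, each summand is a difference of $x$-gradients along the path $\delta\mapsto(y^*_{\delta,i},\lambda^*_{\delta,i})$. By Lemma~\ref{lemma:KKT_diff_perturb}, $y^*_\delta$ and $\lambda^*_\delta$ are continuously differentiable in $\delta$. Apply the chain rule at $\delta=0$, i.e.\ the fundamental theorem of calculus followed by dominated convergence, which is justified by continuity of the mixed second derivatives under Assumption~\ref{assum:VI}. This gives
\[
\lim_{\delta\to0} v_\delta = \sum_{i=1}^N \Big(\partialD{\nabla_{y_i} g_i(x,y^*)}{x}\Big)^\top \totalD{y^*_{\delta,i}}{\delta}\Big|_{0} + \Big(\partialD{\ei(x,y^*_i)}{x}\Big)^\top \totalD{\lambda^*_{\delta,i}}{\delta}\Big|_{0}.
\]
Here we use $\nabla_x \partial_{y_i} g_i=(\partial_x\nabla_{y_i}g_i)^\top$ by symmetry of mixed partials. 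The first sum stacks into $\partialD{V(x,y^*)}{x}^\top d_y$ and the second into $A^\top d_\lambda$, so
\[
\lim_{\delta\to0} v_\delta=\begin{bmatrix}\partialD{V}{x}^\top & A^\top\end{bmatrix}\begin{bmatrix}d_y\\ d_\lambda\end{bmatrix}.
\]

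\textbf{Step 2: identify $(d_y,d_\lambda)$.} By Lemma~\ref{lemma:KKT_diff_perturb},
\[
\begin{bmatrix}d_y\\ d_\lambda\end{bmatrix}=H^{-1}\begin{bmatrix}-\nabla_y f\\0\end{bmatrix},\qquad H=\begin{bmatrix}J & B^\top\\ B & 0\end{bmatrix},
\]
and $H$ is invertible by Lemma~\ref{lemma:invH}.

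\textbf{Step 3: use symmetry.} Corollary~\ref{corollary:hypergrad} expresses $(\totalD{y^*}{x})^\top\nabla_y f$ with $H^{-\top}$ in place of $H^{-1}$. When $J=J^\top$, the matrix $H$ is symmetric, so $H^{-\top}=H^{-1}$ and the two expressions coincide. The resulting limit uses only first-order quantities, namely $\nabla_x g_i$, $\nabla_x \ei$, and the equilibria, which proves the claim.

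\textbf{Main obstacle.} The delicate point is exchanging the limit with the frozen-variable bookkeeping in Step~1. Only the $i$-th block $y^*_{\delta,i}$ varies in the $i$-th term, so the sum must reproduce exactly the block rows of $\partialD{V}{x}^\top d_y$, not a full Jacobian of $g_i$ in $y$. Verifying this block-diagonal matching, together with the $\lambda$-term matching $A^\top d_\lambda$, is the step to check carefully.
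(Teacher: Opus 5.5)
Your proposal is correct and follows essentially the same argument as the paper: combine Corollary~\ref{corollary:hypergrad} with Lemma~\ref{lemma:KKT_diff_perturb}, use $J=J^\top$ to get $H^{-\top}=H^{-1}$, expand blockwise via equality of mixed partials, and recognize each block as a difference-quotient limit. The paper runs the same chain of equalities in the reverse direction, from the hypergradient to the limit, and applies the symmetry step silently, whereas you state it explicitly.
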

\begin{proof}
\begin{align*}
    & (\totalD{y^*(x)}{x})^\top \nabla_yf(x,y^*(x)) \\
    & = 
    \begin{bmatrix} \totalD{y^*(x)}{x}^\top &  \totalD{\lambda^*(x)}{x}^\top \end{bmatrix}\begin{bmatrix} \nabla_yf(x,y^*) \\ 0\end{bmatrix} \\
    & = 
    \begin{bmatrix}\partialD{V(x,y^*)}{x}^\top & A^\top\end{bmatrix}
    \begin{bmatrix}\partialD{V(x,y^*)}{y} & B^\top \\ B & 0\end{bmatrix}^{-1} 
    \begin{bmatrix} -\nabla_yf(x,y^*) \\ 0\end{bmatrix} \\
    & =  
    \begin{bmatrix}\partialD{V(x,y^*)}{x}^\top & A^\top\end{bmatrix}
    \begin{bmatrix} \totalD{y^*_\delta}{\delta} \\ \totalD{\lambda^*_\delta}{\delta} \end{bmatrix}_{\delta=0} \\
    & = 
    \begin{bmatrix} \partialD{\nabla_{y_1}g_1(x,y^*)}{x_1} & \cdots & \partialD{\nabla_{y_1}g_1(x,y^*)}{x_\xdim} \\ \vdots & \cdots & \vdots \\ 
    \partialD{\nabla_{y_N}g_N(x,y^*)}{x_1} & \cdots & \partialD{\nabla_{y_N}g_N(x,y^*)}{x_\xdim}\end{bmatrix}^\top
    \totalD{y^*_\delta}{\delta}|_{\delta=0} 
    + \sum_{i=1}^N (\partialD{\ei(x,y^*_i)}{x})^\top \totalD{\lambda^*_{\delta, i}}{\delta}|_{\delta=0} \\
    & = 
    \sum_{i=1}^N  \partialD{\nabla_xg_i(x,y^*)}{y_i} \totalD{y^*_{\delta,i}}{\delta}|_{\delta=0} + 
    \partialD{\ei(x,y^*_i)}{x})^\top \totalD{\lambda^*_{\delta, i}}{\delta}|_{\delta=0}\\
    & = 
    \sum_{i=1}^N  \lim_{\delta\rightarrow 0}\frac{1}{\delta}[\nabla_xg_i(x,y^*_{\delta, i}, y^*_{\neg i}) - \nabla_xg_i(x,y^*_{i}, y^*_{\neg i})]  + 
    \partialD{\ei(x,y^*_i)}{x})^\top \lim_{\delta\rightarrow 0} \frac{1}{\delta}[\lambda^*_{\delta, i} - \lambda^*_i] \\
    & = 
    \lim_{\delta\rightarrow 0} \frac{1}{\delta} \sum_{i=1}^N \nabla_x[g_i(x,y^*_{\delta,i}, y^*_{\neg i}) + (\lambda^*_{\delta,i})^\top \ei(x,y^*_i)] - \nabla_x[g_i(x,y^*_{i}, y^*_{\neg i}) + (\lambda^*_{i})^\top \ei(x,y^*_i)]
\end{align*}
\end{proof}

\begin{lemma}[Lipshitzness and smoothness of perturbed lower-level solution mappings]\label{lemma:lipschitz_y_delta}
Denote $\kappa_\delta = \frac{C_\delta}{\mu_\delta}$ and suppose $\delta \leq \Delta \define \frac{\mu}{2C_f}$, $y^*_\delta$ is $L_{y,\delta}$-Lipschitz and $C_{y,\delta}$-smooth in $\delta \in \mathbb{R}_{>0}$. $\lambda^*_\delta$ is $L_{\lambda, \delta}$-lipschitz and $C_{\lambda, \delta}$-smooth in $\delta$. Moreover, $L_{y,\delta}$, $C_{y,\delta}$, $L_{\lambda, \delta}$, $C_{\lambda, \delta}$ are independent of the value of $\delta$, and can be treated as constants.
\end{lemma}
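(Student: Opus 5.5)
Mirroring Lemma~\ref{lemma:lipschitz_y}, with the leader variable $x$ replaced by the scalar $\delta$. By Lemma~\ref{lemma:pert_game}, for $\delta\le\Delta=\frac{\mu}{2C_f}$ the perturbed game $\mathcal{G}_\delta$ has constants $\mu_\delta=\mu/2$, $C_\delta=C+\mu/2$ and $S_\delta=S+\frac{S_f\mu}{2C_f}$, none of which depend on $\delta$. The proof of Lemma~\ref{lemma:KKT_diff_perturb} gives, via the implicit function theorem, that $(y^*_\delta,\lambda^*_\delta)$ is continuously differentiable in $\delta$ and satisfies the KKT linear system~\eqref{eq:pert_KKT}. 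This is exactly the system of Lemma~\ref{lemma:dy_char} with
\[
G_\delta=\partialD{V(x,y^*_\delta)}{y}+\delta\nabla^2_{yy}f(x,y^*_\delta),\qquad G_{xy}\leftarrow \nabla_y f(x,y^*_\delta),\qquad A\leftarrow 0 .
\]
Since $B$ does not depend on $\delta$ (the constraints are linear), Lemma~\ref{lemma:dy_char} gives
\[
\totalD{y^*_\delta}{\delta}=-P^{G_\delta}_B G_\delta^{-1}\nabla_y f(x,y^*_\delta),\qquad
\totalD{\lambda^*_\delta}{\delta}=-(B^\dagger)^\top G_\delta(I-P^{G_\delta}_B)G_\delta^{-1}\nabla_y f(x,y^*_\delta).
\]

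\textbf{Lipschitz bounds.} $G_\delta$ satisfies $\langle d,G_\delta d\rangle\ge\mu_\delta\|d\|^2$ (Lemma~\ref{lemma:singular_value_1} applied to the strongly monotone operator $V+\delta\nabla_y f$) and $\sigma_{\max}(G_\delta)\le C_\delta$ (Lemma~\ref{lemma:singular_value_2}). Lemma~\ref{lemma:proj_norm_1} then gives $\|P^{G_\delta}_B\|\le M_{P,\delta}$ with $\kappa_\delta$ in place of $\kappa_g$. Together with $\|\nabla_y f\|\le L_f$ this yields
\[
L_{y,\delta}\define \frac{M_{P,\delta}L_f}{\mu_\delta},\qquad
L_{\lambda,\delta}\define \|B^\dagger\|\,C_\delta(1+M_{P,\delta})\frac{L_f}{\mu_\delta}.
\]
Both are free of $\delta$.

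\textbf{Smoothness bounds.} For $\delta,\hat\delta\in[0,\Delta]$, I first bound the change in $G_\delta$. Using $S$-Hessian Lipschitzness of $V$ and of $f$ (Assumption~\ref{assum:optim}, with constant $S_f$), and the fact that $\|\nabla^2_{yy}f\|\le C_f$,
\[
\|G_\delta-G_{\hat\delta}\|\le \Big(S L_{y,\delta}+\Delta S_f L_{y,\delta}+C_f\Big)|\delta-\hat\delta| \define L_{G,\delta}|\delta-\hat\delta|.
\]
The extra $C_f$ term comes from the explicit $\delta$ multiplying the Hessian. Next, $\|\nabla_y f(x,y^*_\delta)-\nabla_y f(x,y^*_{\hat\delta})\|\le C_fL_{y,\delta}|\delta-\hat\delta|$. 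I then telescope the product $P^{G}_B\,G^{-1}\,\nabla_y f$ into three differences. These are controlled, respectively, by Lemma~\ref{lemma:proj_norm_2} (giving $L_{P,\delta}L_{G,\delta}$), by the identity $G^{-1}-\hat G^{-1}=G^{-1}(\hat G-G)\hat G^{-1}$ (giving $L_{G,\delta}/\mu_\delta^2$), and by the gradient difference just bounded. This yields $C_{y,\delta}$. The same three-term telescoping applied to the four-factor product $(B^\dagger)^\top G(I-P)G^{-1}\nabla_y f$ yields $C_{\lambda,\delta}$.

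The only point needing care, and the main obstacle, is showing that every constant is uniform in $\delta$. This is resolved by using the $\delta$-independent surrogates $\mu/2$, $C+\mu/2$, $S+S_f\mu/(2C_f)$ from Lemma~\ref{lemma:pert_game} everywhere. In addition, each explicit factor of $\delta$ must be replaced by its upper bound $\Delta$, and the $C_f$ contribution from $\partial_\delta(\delta\nabla^2_{yy}f)$ must be kept. Differentiability at the endpoint $\delta=0$ follows from the limit argument already given in Lemma~\ref{lemma:KKT_diff_perturb}.
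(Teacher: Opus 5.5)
Your proposal is correct and follows the paper's proof essentially step for step. Like the paper, you substitute $G=\partial_y V(x,y^*_\delta)+\delta\nabla^2_{yy}f(x,y^*_\delta)$, $G_{xy}=\nabla_y f(x,y^*_\delta)$ and $A=0$ into the machinery of Lemmas~\ref{lemma:dy_char} and~\ref{lemma:lipschitz_y}. You then bound $\|G_\delta-G_{\hat\delta}\|$ by $L_{G,\delta}=SL_{y,\delta}+C_f+\Delta S_fL_{y,\delta}$ and telescope, with uniformity in $\delta$ coming from the $\delta$-free constants of Lemma~\ref{lemma:pert_game}. The only difference is cosmetic: you drop the vanishing $A$-term and obtain $L_{y,\delta}=M_{P,\delta}L_f/\mu_\delta$, where the paper keeps the slightly looser $(1+M_{P,\delta})L_f/\mu_\delta$.
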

\begin{proof}
$(\totalD{y^*_\delta}{\delta}, \totalD{\lambda^*_\delta}{\delta})$ are the solutions to equation \ref{eq:pert_KKT}. Let $0 < \delta, \Approx{\delta} \leq \Delta$.
Using Lemma \ref{lemma:lipschitz_y}, substitute 
$G=\partialD{V(x,y^*_\delta)}{y}+\delta \nabla^2_{yy}f(x,y^*_\delta)$, 
$G_{xy}=\nabla_{y}f(x,y^*_\delta)$,
$A = 0$,
$\Approx{y}=y^*_{\Approx{\delta}}$, $\Approx{G}=\partialD{V(x,y^*_{\Approx{\delta}})}{y}+{\Approx{\delta}} \nabla^2_{yy}f(x,y^*_{\Approx{\delta}})$,
$\Approx{G}_{xy} = \nabla_{y}f(x,y^*_{\Approx{\delta}})$
$\kappa_\delta \define \frac{C_\delta}{\mu_\delta}$. Denote
$\|P^G_B - P^{\Approx{G}}_B\| \leq L_{P,\delta}\|G-\Approx{G}\|$ and $\|P^G_B\| \leq M_{P,\delta}$ according to Lemmas \ref{lemma:proj_norm_1} and \ref{lemma:proj_norm_2}.

\noindent
The Lipschitzness of $y^*_\delta$ is given by
\begin{align*}
    \|\totalD{y^*_\delta}{\delta}\| 
    & = 
    \|(I - P^G_B)B^{\dagger}(-A) + P^G_B (-G^{-1} G_{xy})\| \\
    & \leq \|I-P^G_B\|\|B^\dagger A\| + \|P^G_B\|\|G^{-1}\|\|G_{xy}\| \\
    & \leq (1+M_{P,\delta})(0 + \frac{L_f}{\mu_\delta}) \\
    & \define L_{y,\delta}
\end{align*}

\noindent
Denote $L_{G,\delta} \define (S L_{y,\delta} + C_f + \Delta S_f L_{y,\delta})$, we have
\begin{align*}
    \|G - \Approx{G}\|
    & = 
    \|
    \partialD{V(x,y^*_\delta)}{y}+\delta \nabla^2_{yy}f(x,y^*_\delta)
    - (\partialD{V(x,y^*_{\Approx{\delta}})}{y}+{\Approx{\delta}} \nabla^2_{yy}f(x,y^*_{\Approx{\delta}}))
    \| \\
    & \leq
    S L_{y,\delta} \|\delta - \Approx{\delta}\| 
    + \|\delta \nabla^2_{yy}f(x,y^*_\delta) - \Approx{\delta} \nabla^2_{yy}f(x,y^*_\delta)\|
    + \| \Approx{\delta} \nabla^2_{yy}f(x,y^*_\delta) - {\Approx{\delta}} \nabla^2_{yy}f(x,y^*_{\Approx{\delta}})\|
    \\
    & \leq  
    (S L_{y,\delta} + C_f + \Delta S_f L_{y,\delta}) \|\delta - \Approx{\delta}\| \\
    & \define L_{G,\delta} \|\delta-\Approx{\delta}\|
\end{align*}

\noindent
Denote $L_{u,\delta} = C_fL_{y,\delta} + \frac{L_f}{\mu_\delta^2}L_{G,\delta}$, we have
\begin{align*}
    \|G^{-1} G_{xy} - \Approx{G}^{-1} \Approx{G}_{xy}\|
    & \leq 
    \|G^{-1} G_{xy} - {G}^{-1} \Approx{G}_{xy}\| 
    + \|{G}^{-1} \Approx{G}_{xy} - \Approx{G}^{-1} \Approx{G}_{xy}\| \\
    & \leq 
    \frac{1}{\mu_\delta}\|G_{xy} - \Approx{G}_{xy}\| 
    + \|\Approx{G}_{xy}\|\|G^{-1}(\Approx{G} - G)\Approx{G}^{-1}\| \\
    & \leq
    (C_fL_{y,\delta} + \frac{L_f}{\mu_\delta^2}L_{G,\delta})\|\delta-\Approx{\delta}\| \\
    & \define
    L_{u,\delta} \|\delta-\Approx{\delta}\|
\end{align*}

\noindent
The smoothness of $y^*_\delta$ is given by
\begin{align*}
    & \|\totalD{y^*_{\Approx{\delta}}}{x} - \totalD{y^*_{\delta}}{x}\| \\
    & \leq 
    L_{P,\delta} \|G-\Approx{G}\| 0
    + L_{P,\delta} \|G-\Approx{G}\|\frac{L_f}{\mu_\delta}
    + M_{P, \delta}\|G^{-1} G_{xy} - \Approx{G}^{-1} \Approx{G}_{xy}\| \\
    & \leq (L_{P,\delta} + L_{P,\delta}L_{G,\delta}\frac{L_f}{\mu_\delta} + M_{P, \delta}L_{u,\delta}) \|\delta - \Approx{\delta}\| \\
    & \define C_{y,\delta} \|\delta - \Approx{\delta}\|
\end{align*}

\noindent
The Lipschitzness of $\lambda^*_\delta$ is given by
\begin{align*}
    \|\totalD{\lambda^*_\delta}{\delta}\|
    & = \|(B^\dagger)^\top G (I-P^{G}_B) (B^{\dagger}A - G^{-1}G_{xy})\| \\
    & \leq \|B^\dagger\| \|C_\delta\| (1+M_{P, \delta})(0+\frac{L_f}{\mu_\delta}) \\
    & \define L_{\lambda,\delta}
\end{align*}
The smoothness of $\lambda^*_\delta$ is given by
\begin{align*}
    & \|\totalD{\lambda^*_\delta}{\delta} - \totalD{\lambda^*_{\Approx{\delta}}}{\delta}\| \\
    & = 
    \|
    B^\dagger G (I-P^G_B)(B^\dagger A - G^{-1}G_{xy})
    - B^\dagger \Approx{G} (I-P^G_B)(B^\dagger A - G^{-1}G_{xy}) 
    \| \\
    & + 
    \|
    B^\dagger \Approx{G} (I-P^G_B)(B^\dagger A - G^{-1}G_{xy})
    - B^\dagger \Approx{G} (I-P^{\Approx{G}}_B)(B^\dagger A - G^{-1}G_{xy})
    \| \\
    & + 
    \|
    B^\dagger \Approx{G} (I-P^{\Approx{G}}_B)(B^\dagger A - G^{-1}G_{xy})
    - B^\dagger \Approx{G} (I-P^{\Approx{G}}_B)(B^\dagger A - \Approx{G}^{-1}\Approx{G}_{xy})
    \| \\
    & \leq 
    \|B^\dagger\|(1+M_{P, \delta})(0 + \frac{L_f}{\mu_\delta})\|G-\Approx{G}\| 
    + C_\delta \|B^\dagger\| (0 + \frac{L_f}{\mu_\delta})\|P^{\Approx{G}}_B - P^G_B\|
    + C_\delta \|B^\dagger\| (1+M_{P, \delta}) \|G^{-1} G_{xy} - \Approx{G}^{-1} \Approx{G}_{xy}\| \\
    & \leq
    (\|B^\dagger\|(1+M_{P, \delta})\frac{L_f}{\mu_\delta}L_{G,\delta}
    + C_\delta \|B^\dagger\| \frac{L_f}{\mu_\delta} L_{P,\delta}
    + C_\delta \|B^\dagger\| (1+M_{P, \delta}) L_{u,\delta}
    )
    \|\delta-\Approx{\delta}\| \\
    & \define C_{\lambda,\delta}  \|\delta-\Approx{\delta}\|
\end{align*}
\end{proof}

\subsection{Omitted proof in Section \ref{sec:sym} in unconstrained games}
\gradApprox*
\begin{proof}
Let $y_t^*$ denote the Nash equilibrium of the perturbed game
\[
G_t := (x,[N],\{\mathbb{R}^{n_i}\}_{i\in[N]},\{g_i+t f\}_{i\in[N]}),
\qquad t\in[0,\delta].
\]
The equilibrium condition is
\[
V(x,y_t^*)+t\nabla_y f(x,y_t^*)=0.
\]
Differentiating with respect to $t$ gives
\[
\left(
J(x,y_t^*)+t\nabla_{yy}^2 f(x,y_t^*)
\right)
\frac{d}{dt}y_t^*
=
-\nabla_y f(x,y_t^*).
\]
Taking $t=0$,
\[
\left.\frac{d}{dt}y_t^*\right|_{t=0}
=
J(x,y^*)^{-1}\left(-\nabla_y f(x,y^*)\right).
\]

By implicit differentiation of the original equilibrium condition
$V(x,y^*)=0$,
\[
\left(\frac{d}{dx}y^*(x)\right)^\top
\nabla_y f(x,y^*)
=
\frac{\partial}{\partial x}V(x,y^*)^\top
J(x,y^*)^{-\top}
\left(-\nabla_y f(x,y^*)\right).
\]
Since $J(x,y^*)$ is symmetric,
\[
J(x,y^*)^{-\top}=J(x,y^*)^{-1}.
\]
Hence,
\[
\left(\frac{d}{dx}y^*(x)\right)^\top
\nabla_y f(x,y^*)
=
\frac{\partial}{\partial x}V(x,y^*)^\top
\left.\frac{d}{dt}y_t^*\right|_{t=0}.
\]
By equality of mixed partial derivatives,
\[
\frac{\partial}{\partial x}V(x,y^*)^\top
\left.\frac{d}{dt}y_t^*\right|_{t=0}
=
\sum_{i=1}^N
\frac{\partial}{\partial y_i}\nabla_x g_i(x,y^*)
\left.\frac{d}{dt}y_{t,i}^*\right|_{t=0}.
\]

For $0\leq t\leq\delta\leq \mu/(2C_f)$, the perturbed operator
$V+t\nabla_y f$ is at least $\mu/2$-strongly monotone. Under
Assumptions~3.4 and~3.5, there exist constants $L_{y,\delta}$ and
$C_{y,\delta}$, independent of $\delta$, such that
\[
\|y_t^*-y^*\|\leq L_{y,\delta}t,
\qquad
\left\|\frac{d}{dt}y_t^*\right\|\leq L_{y,\delta},
\]
and
\[
\left\|
\frac{d}{dt}y_t^*
-
\left.\frac{d}{dt}y_t^*\right|_{t=0}
\right\|
\leq C_{y,\delta}t.
\]

By definition of finite difference in~\eqref{eq:finite_diff},
\[
v_\delta(x,y_\delta^*)
=
\frac{1}{\delta}
\sum_{i=1}^N
\left[
\nabla_x g_i(x,y_{\delta,i}^*,y_{-i}^*)
-
\nabla_x g_i(x,y_i^*,y_{-i}^*)
\right].
\]
For all $i\in[N]$, by the fundamental theorem of calculus,
\[
\nabla_x g_i(x,y_{\delta,i}^*,y_{-i}^*)
-
\nabla_x g_i(x,y_i^*,y_{-i}^*)
=
\int_0^\delta
\frac{\partial}{\partial y_i}
\nabla_x g_i(x,y_{t,i}^*,y_{-i}^*)
\frac{d}{dt}y_{t,i}^*\,dt.
\]

We bound each term by
\[
\begin{aligned}
&
\left\|
\frac{
\nabla_x g_i(x,y_{\delta,i}^*,y_{-i}^*)
-
\nabla_x g_i(x,y_i^*,y_{-i}^*)
}{\delta}
-
\frac{\partial}{\partial y_i}
\nabla_x g_i(x,y^*)
\left.\frac{d}{dt}y_{t,i}^*\right|_{t=0}
\right\|
\\
&\leq
\frac{1}{\delta}
\int_0^\delta
\left\|
\frac{\partial}{\partial y_i}
\nabla_x g_i(x,y_{t,i}^*,y_{-i}^*)
\frac{d}{dt}y_{t,i}^*
-
\frac{\partial}{\partial y_i}
\nabla_x g_i(x,y^*)
\left.\frac{d}{dt}y_{t,i}^*\right|_{t=0}
\right\|dt
\\
&\leq
\frac{1}{\delta}
\int_0^\delta
\left(
S L_{y,\delta}^2 t
+
C C_{y,\delta}t
\right)dt
\\
&=
\frac{1}{2}
\left(
S L_{y,\delta}^2
+
C C_{y,\delta}
\right)\delta.
\end{aligned}
\]

Summing over $i=1,\ldots,N$, we obtain
\[
\begin{aligned}
&
\left\|
v_\delta(x,y_\delta^*)
-
\left(\frac{d}{dx}y^*(x)\right)^\top
\nabla_y f(x,y^*(x))
\right\|
\\
&\leq
\frac{N}{2}
\left(
S L_{y,\delta}^2
+
C C_{y,\delta}
\right)\delta
=
\bigO(N\delta).
\end{aligned}
\]
\end{proof}

\begin{restatable}{theorem}{symEqualityConvergence}\label{theorem:sym_equality_convergence}
    Consider problem \ref{eq:equal_p1} and assume that the Jacobian $J = \partialD{V(x,y^*)}{y}$ is symmetric. Under assumptions \ref{assum:VI} and  \ref{assum:optim}, Algorithm \ref{alg:ffo_algo} specialized to unconstrained case outputs an $\mathcal{O}(\epsilon)$-stationary point of the hyperobjective $F$ with $\tilde{\bigO}(N\epsilon^{-2})$ oracle calls to $f$ and $g_i$. 
\end{restatable}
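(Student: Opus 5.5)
The plan mirrors the proof of Theorem~\ref{theorem:asym_convergence}, with the potential function removed; this is where the factor of $N$ is saved. We proceed in three steps: establish smoothness of $F$, control the error of the inexact hypergradient, and count oracle calls.

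\textbf{Smoothness.} By Corollary~\ref{cor:unconstrained_F_smooth}, together with Lemma~\ref{lemma:lipschitz_y} specialized to $B=0$, $A=0$, the hyperobjective $F$ is $C_F$-smooth. Hence, by Lemma~\ref{lemma:inexact_GD}, inexact gradient descent with step size $\eta \le 1/C_F$ and a hypergradient oracle whose error is at most $\bigO(\epsilon)$ reaches an $\bigO(\epsilon)$-stationary point within $T=\bigO(\epsilon^{-2})$ outer iterations.

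\textbf{Exact-equilibrium error.} Choose $\delta = \Theta(\epsilon/N)$ with $\delta \le \mu/(2C_f)$. Theorem~\ref{theorem:grad_approx} then gives
\[
\|\nabla_x f(x,y^*)+v_\delta(x,y^*_\delta)-\nabla F(x)\|\le \bigO(N\delta)=\bigO(\epsilon).
\]

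\textbf{Inexactness of the lower-level solves.} Suppose $\hat y$ and $\hat y_\delta$ are computed with $\|\hat y-y^*\|\le\delta^2$ and $\|\hat y_\delta-y^*_\delta\|\le\delta^2$. Each summand of $v_\delta$ is a difference of $\nabla_x g_i$ values, and $\nabla_x g_i$ is $C$-Lipschitz in $y$. Replacing the exact equilibria by the approximate ones therefore changes $\hat v_\delta$ by at most $\frac{1}{\delta}\cdot N\cdot 2C\delta^2=\bigO(N\delta)$. Likewise, $\nabla_x f(x,\hat y)$ differs from $\nabla_x f(x,y^*)$ by at most $C_f\delta^2$. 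This is precisely the unconstrained symmetric instance of Lemma~\ref{lemma:inexact_lower_level}. Since the symmetric estimator uses no potential, no $\epsilon/2$ potential bias appears. The total error is $\bigO(N\delta)=\bigO(\epsilon)$.

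\textbf{Oracle cost per hypergradient.} By Assumption~\ref{assum:VI} the original game is globally $\mu$-strongly monotone and $C$-smooth. By Lemma~\ref{lemma:pert_game}, $\mathcal{G}_\delta$ is globally $\mu/2$-strongly monotone and $(C+\mu/2)$-smooth. Consequently, projected gradient play $y\leftarrow y-\eta' V$ contracts linearly with a rate depending only on $\kappa$. Unlike the asymmetric case, no locality argument is needed here. Warm-starting from the previous iterate, and using that $\|y^*_\delta-y^*\|\le 2L_f\delta/\mu$ by Lemma~\ref{lemma:y_delta}, each solve reaches $\delta^2$ accuracy in $\bigO(\log(1/\delta))$ iterations. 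The initial distance to the equilibrium is bounded via the Lipschitzness $L_y$ of $y^*(x)$ and the small step size. Each evaluation of $V$ or $V+\delta\nabla_y f$ costs $\bigO(N)$ first-order oracle calls, one gradient per player, and forming $\hat v_\delta$ costs a further $\bigO(N)$. Multiplying $T=\bigO(\epsilon^{-2})$ outer iterations by $\bigO(N\log(N/\epsilon))$ oracle calls per iteration gives $\tilde{\bigO}(N\epsilon^{-2})$.

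\textbf{Main obstacle.} The delicate point is the inexactness bound. Dividing by $\delta$ amplifies the lower-level error, which is why the solve accuracy must be $\delta^2$ rather than $\delta$. We also need the warm-start distance to stay bounded uniformly across outer iterations, so that the $\log$ factor does not grow with $t$. We would handle this by noting that $\|y^*(x_{t+1})-y^*(x_t)\|\le L_y\eta\|\hat\nabla F(x_t)\|\le L_y\eta(L_F+\bigO(\epsilon))$, which is a constant.
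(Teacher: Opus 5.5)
Your proposal is correct and follows essentially the same route as the paper's proof: $C_F$-smoothness of $F$ via Corollary~\ref{cor:unconstrained_F_smooth}, the $\bigO(N\delta)$ bias from Theorem~\ref{theorem:grad_approx}, $\delta^2$-accurate lower-level solves so that the $1/\delta$ amplification still leaves only $\bigO(N\delta)$ error, the choice $\delta=\Theta(\epsilon/N)$ with $\delta\le\mu/(2C_f)$, and $\bigO(N\log(1/\delta))$ oracle calls per outer step of inexact gradient descent. The differences are minor. The paper's per-summand inexactness bound is $3C\delta^2$ rather than your $2C\delta^2$, because the first $\nabla_x g_i$ evaluation perturbs both $y_{\delta,i}$ and $y_{\neg i}$; this does not change the rate. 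Your explicit bound on the warm-start distance is a small improvement in rigor, since the paper leaves that step implicit.
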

\begin{proof}
By Assumptions~3.4 and~3.5 and the implicit function theorem, the
lower-level solution mapping $y^*(x)$ is Lipschitz and smooth. By corollary~\ref{cor:unconstrained_F_smooth}, the hyperobjective $F$ is $C_F$-smooth for some constant
$C_F>0$.

Suppose we have access to approximate lower-level solutions
$\hat y,\hat y_\delta$ such that
\[
\|y^*-\hat y\|\leq \delta^2,
\qquad
\|y_\delta^*-\hat y_\delta\|\leq \delta^2.
\]
We construct the inexact hypergradient
\[
\nabla \hat F(x)
=
\nabla_x f(x,\hat y)
+
\hat v_\delta(x,\hat y_\delta),
\]
where
\[
\hat v_\delta(x,\hat y_\delta)
=
\frac{1}{\delta}
\sum_{i=1}^N
\left[
\nabla_x g_i(x,\hat y_{\delta,i},\hat y_{-i})
-
\nabla_x g_i(x,\hat y_i,\hat y_{-i})
\right].
\]

We first bound the error due to the approximate lower-level solutions:
\[
\begin{aligned}
&
\|\hat v_\delta(x,\hat y_\delta)
-
v_\delta(x,y_\delta^*)\|
\\
&\leq
\frac{1}{\delta}
\sum_{i=1}^N
\|
\nabla_x g_i(x,\hat y_{\delta,i},\hat y_{-i})
-
\nabla_x g_i(x,y_{\delta,i}^*,y_{-i}^*)
\|
\\
&\quad+
\frac{1}{\delta}
\sum_{i=1}^N
\|
\nabla_x g_i(x,y_i^*,y_{-i}^*)
-
\nabla_x g_i(x,\hat y_i,\hat y_{-i})
\|
\\
&\leq
\frac{1}{\delta}
\sum_{i=1}^N
\left(
2C\delta^2+C\delta^2
\right)
\\
&=
3CN\delta
=
\bigO(N\delta).
\end{aligned}
\]
Also,
\[
\|\nabla_x f(x,\hat y)-\nabla_x f(x,y^*)\|
\leq C_f\|\hat y-y^*\|
\leq C_f\delta^2.
\]
Combining the above inequalities with Theorem~4.2,
\[
\begin{aligned}
\|\nabla \hat F(x)-\nabla F(x)\|
&\leq
\|\nabla_x f(x,\hat y)-\nabla_x f(x,y^*)\|
\\
&\quad+
\|\hat v_\delta(x,\hat y_\delta)
-
v_\delta(x,y_\delta^*)\|
\\
&\quad+
\left\|
v_\delta(x,y_\delta^*)
-
\left(\frac{d}{dx}y^*(x)\right)^\top
\nabla_y f(x,y^*)
\right\|
\\
&\leq
C_f\delta^2+\bigO(N\delta)
\\
&=
\bigO(N\delta).
\end{aligned}
\]

Set
\[
\delta=\frac{\epsilon}{N}.
\]
Then
\[
\|\nabla \hat F(x)-\nabla F(x)\|
\leq \bigO(\epsilon).
\]
Using $\nabla \hat F(x)$ as the inexact gradient in
Lemma~D.2, an $\bigO(\epsilon)$-stationary point of $F$ can be obtained
with
\[
\bigO(\epsilon^{-2})
\]
calls to $\nabla \hat F$.

It remains to bound the oracle complexity for computing each
$\nabla \hat F$. Since the lower-level game is unconstrained, PGD
reduces to gradient descent,
\[
y^{k+1}=y^k-\eta V(x,y^k).
\]
By Assumption~3.4 and Lemma~E.1, gradient descent converges linearly to
$y^*$. Moreover, for $\delta\leq\mu/(2C_f)$, the perturbed operator
\[
V_\delta(x,y)
=
V(x,y)+\delta\nabla_y f(x,y)
\]
is at least $\mu/2$-strongly monotone and $(C+\delta C_f)$-smooth, so
gradient descent also converges linearly to $y_\delta^*$.

For each call to $\nabla \hat F(x)$,
\[
\bigO\left(\log\frac{1}{\delta^2}\right)
=
\bigO\left(\log(N^2\epsilon^{-2})\right)
\]
lower-level iterations are sufficient to obtain $\delta^2$-accurate
solutions $\hat y$ and $\hat y_\delta$. Each lower-level iteration
requires one first-order oracle call to each of the $N$ follower cost
functions. Therefore, the total oracle complexity is
\[
\begin{aligned}
&
\bigO\left(
N\epsilon^{-2}
\log(N^2\epsilon^{-2})
\right)
\\
&=
\widetilde{\bigO}(N\epsilon^{-2}).
\end{aligned}
\]
\end{proof}

\section{Inexactness in Lower-Level solutions and Hypergradient}
\begin{lemma}[Inexactness in lower-level solutions]\label{lemma:inexact_lower_level}
Assume we have access to approximate lower-level solutions $\Approx{y}$, $\Approx{\lambda}$, $\Approx{y}_\delta$, $\Approx{\lambda}_\delta$ such that 
for some $\Approx{\delta} > 0$,
$\|y^* - \Approx{y}\| \leq \Approx{\delta}$, 
$\|\lambda^* - \Approx{\lambda}\| \leq \Approx{\delta}$,
$\|{y}^*_\delta - \Approx{y}_\delta\| \leq \Approx{\delta}$,
$\|{\lambda_\delta}^* - \Approx{\lambda}_\delta\| \leq \Approx{\delta}$. 
If $\Approx{\delta} = \delta^2$, then the approximate hypergradient $\Approx{\nabla} F(x) \define \nabla_x f(x,y^*(x)) + \Approx{v}_\delta(x,\Approx{y}_\delta, \Approx{\lambda}_\delta),$ has an error of 
\begin{align*}
    \|\Approx{\nabla} F(x) - \nabla F(x)\| \leq \bigO(N\delta)
\end{align*}
\end{lemma}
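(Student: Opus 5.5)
We will use the triangle inequality and split the error into an inexactness part and an exact finite-difference part:
\begin{align*}
\|\Approx{\nabla} F(x) - \nabla F(x)\|
\leq \|\Approx{v}_\delta(x,\Approx{y}_\delta,\Approx{\lambda}_\delta) - v_\delta(x,y^*_\delta,\lambda^*_\delta)\|
+ \|v_\delta(x,y^*_\delta,\lambda^*_\delta) - (\tfrac{d}{dx}y^*(x))^\top \nabla_y f(x,y^*)\|.
\end{align*}
As stated, the lemma uses the exact term $\nabla_x f(x,y^*(x))$. If $\Approx{y}$ is used instead, one extra term $C_f\|\Approx{y}-y^*\|\leq C_f\delta^2$ appears, and it is absorbed as in the proof of Theorem~\ref{theorem:sym_equality_convergence}.

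\textbf{Second term.} This is the exact finite-difference bias. In the unconstrained symmetric case it is exactly Theorem~\ref{theorem:grad_approx}, giving $\bigO(N\delta)$. In the constrained case we repeat that argument using Lemma~\ref{lemma:limit_v} and Lemma~\ref{lemma:lipschitz_y_delta}. For each player $i$, write the difference quotient of $\nabla_x g_i(x,y^*_{t,i},y^*_{\neg i}) + \partialD{e^{(i)}(x,y^*_i)}{x}^\top \lambda^*_{t,i}$ as an integral over $t\in[0,\delta]$ by the fundamental theorem of calculus. The $\lambda$-part is linear in $\lambda$, so its error is at most $\|\partialD{e^{(i)}}{x}\|\,C_{\lambda,\delta}\,\delta/2$. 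The $y$-part is bounded by $(SL_{y,\delta}^2 + C C_{y,\delta})\delta/2$ exactly as before. Summing over the $N$ players gives $\bigO(N\delta)$, with constants independent of $\delta$ because $\delta\leq \mu/(2C_f)$.

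\textbf{First term.} Here we expand $\Approx{v}_\delta - v_\delta$ player by player. Each summand is $\frac{1}{\delta}$ times a difference of $\nabla_x g_i$ evaluated at points within $\bigO(\Approx{\delta})$ of each other; the two arguments are $(\Approx{y}_{\delta,i},\Approx{y}_{\neg i})$ versus $(y^*_{\delta,i},y^*_{\neg i})$, and similarly for the base point. Since $V$ is jointly $C$-Lipschitz and $\nabla_x g_i$ shares the same Lipschitz modulus through the smoothness of the $g_i$, each such difference is at most $2C\Approx{\delta}$. The multiplier terms contribute $\|\partialD{e^{(i)}}{x}\|\,(\|\Approx{\lambda}_{\delta,i}-\lambda^*_{\delta,i}\| + \|\Approx{\lambda}_i-\lambda^*_i\|)\leq 2\|A\|\Approx{\delta}$. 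If the constraint Jacobian is evaluated at $\Approx{y}_i$ rather than $y^*_i$, $\partialD{e^{(i)}}{x}=A^{(i)}$ is constant for linear constraints, so no extra term arises. Summing over the $N$ players gives a bound of $\frac{N(3C+2\|A\|)\Approx{\delta}}{\delta}$. With $\Approx{\delta}=\delta^2$ this is $\bigO(N\delta)$, and combining the two terms finishes the proof.

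\textbf{Main obstacle.} The delicate point is the $\frac{1}{\delta}$ amplification of the lower-level errors. It is only harmless because $\Approx{\delta}=\delta^2$, so we must check that every error term is linear in $\Approx{\delta}$, with no hidden $\Approx{\delta}^{1/2}$ or $\Approx{\delta}/\delta^2$ factor. A second point is verifying that $\nabla_x g_i$ is Lipschitz in $y$ from the stated assumptions. The paper treats this as implicit in the $C$-smoothness of $g_i$, and I would state it explicitly as the constant $C$.
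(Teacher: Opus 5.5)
Your proposal is correct and follows the paper's proof. Both use the same triangle-inequality split into three pieces: the $\nabla_x f$ error (at most $C_f\hat{\delta}$), the inexact-versus-exact finite-difference error bounded by $\frac{1}{\delta}\sum_{i=1}^N(3C+2\|A^{(i)}\|)\hat{\delta}$, and the exact finite-difference bias of order $N\delta$. Your explicit fundamental-theorem-of-calculus bound on the constrained bias via Lemma~\ref{lemma:limit_v} and Lemma~\ref{lemma:lipschitz_y_delta}, and your remark that the $C$-Lipschitzness of $\nabla_x g_i$ in $y$ is used but is not literally part of Assumption~\ref{assum:VI}, fill in steps the paper leaves implicit.
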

\begin{proof}
\begin{align*}
    v_\delta(x,y^*_{\delta},\lambda^*_\delta) = \frac{1}{\delta} \sum_{i=1}^N \nabla_x[g_i(x,y^*_{\delta,i}, y^*_{\neg i}) + (\lambda^*_{\delta,i})^\top \ei(x,y^*_i)] - \nabla_x[g_i(x,y^*_{i}, y^*_{\neg i}) + (\lambda^*_{i})^\top \ei(x,y^*_i)]
\end{align*}
\begin{align*}
    & \|\Approx{v}_\delta(x,\Approx{y}_\delta,\Approx{\lambda}_\delta)  - v_\delta(x,y^*_{\delta},\lambda^*_\delta)\| \\
    & \leq
    \frac{1}{\delta} \sum_{i=1}^N
    \|
    \nabla_x g_i(x,\Approx{y}_{\delta,i}, \Approx{y}_{\neg i})
    - \nabla_x g_i(x,y^*_{\delta,i}, y^*_{\neg i})
    + \nabla_x g_i(x,y^*_{i}, y^*_{\neg i})
    - \nabla_x g_i(x,\Approx{y}_{i}, \Approx{y}_{\neg i})
    \| \\
    & + \frac{1}{\delta} \sum_{i=1}^N
    \|
    \nabla_x[(\Approx{\lambda}_{\delta,i})^\top \ei(x,\Approx{y}_i)
    - (\lambda^*_{\delta,i})^\top \ei(x,y^*_i)
    + (\lambda^*_{i})^\top \ei(x,y^*_i)
    - (\Approx{\lambda}_{i})^\top \ei(x,\Approx{y}_i)]
    \| \\
    & \leq
    \frac{1}{\delta} \sum_{i=1}^N (3C \Approx{\delta} + 2\|A^{(i)}\|\Approx{\delta}) \\
    & = \bigO(N\delta)
\end{align*}

\begin{align*}
    & \|\Approx{\nabla}F(x) - \nabla F(x)\| \\
    & \leq 
    \|\nabla_xf(x,\Approx{y}) - \nabla_x f(x,y^*)\| 
    + \|\Approx{v}_\delta(x,\Approx{y}_\delta,\Approx{\lambda}_\delta)  - v_\delta(x,y^*_{\delta},\lambda^*_\delta)\|
    + \|v_\delta(x,y^*_{\delta},\lambda^*_\delta) - (\totalD{y^*(x)}{x})^\top \nabla_y f(x,y^*)\| \\
    & \leq C_f \Approx{\delta} + \bigO(N\delta) \\
    & = \bigO(N\delta)
\end{align*}
\end{proof}

\begin{lemma}[Inexact gradient descent on smooth function ]\label{lemma:inexact_GD}
Suppose the inexactness in the gradient of $C_F$-smooth function $F$ is given by $\|\Approx{\nabla} F(x) - \nabla F(x)\| \leq \alpha = \bigO(\epsilon)$. If learning rate $\eta=\frac{1}{C_F}$, then gradient descent using the inexact gradient $\Approx{\nabla} F(x)$ requires $\bigO(\epsilon^{-2})$ to reach $\mathcal{O}(\epsilon)$-stationarity. 
\end{lemma}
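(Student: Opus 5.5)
We will prove this with the standard descent-lemma argument for $C_F$-smooth functions, treating the gradient error as an additive perturbation. Write $x_{t+1} = x_t - \eta \Approx{\nabla} F(x_t)$ and $e_t \define \Approx{\nabla} F(x_t) - \nabla F(x_t)$, so that $\|e_t\| \leq \alpha$. By $C_F$-smoothness (Theorem~\ref{theorem:smooth_F} or Corollary~\ref{cor:unconstrained_F_smooth}),
\begin{align*}
F(x_{t+1}) \leq F(x_t) - \eta\langle \nabla F(x_t), \nabla F(x_t) + e_t\rangle + \frac{C_F\eta^2}{2}\|\nabla F(x_t)+e_t\|^2 .
\end{align*}
With $\eta = 1/C_F$ the quadratic terms simplify exactly. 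Expanding $\|\nabla F + e\|^2$ gives
\begin{align*}
F(x_{t+1}) \leq F(x_t) - \frac{1}{2C_F}\|\nabla F(x_t)\|^2 + \frac{1}{2C_F}\|e_t\|^2 \leq F(x_t) - \frac{1}{2C_F}\|\nabla F(x_t)\|^2 + \frac{\alpha^2}{2C_F}.
\end{align*}
This is the key per-step inequality.

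Next we telescope over $t = 0,\dots,T-1$. We need $F$ to be bounded below, by some $F^* > -\infty$. In the auction setting this holds because revenue is bounded: values lie in $[0,\omega_i]$. More generally, this is the standard assumption implicit in any $\epsilon$-stationarity complexity claim, and we will state it explicitly. Telescoping gives
\begin{align*}
\min_{0\le t<T}\|\nabla F(x_t)\|^2 \leq \frac{1}{T}\sum_{t=0}^{T-1}\|\nabla F(x_t)\|^2 \leq \frac{2C_F(F(x_0)-F^*)}{T} + \alpha^2 .
\end{align*}
Take $T = \lceil 2C_F(F(x_0)-F^*)\epsilon^{-2}\rceil = \bigO(\epsilon^{-2})$ and $\alpha = \bigO(\epsilon)$. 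Then some iterate satisfies $\|\nabla F(x_t)\| \leq \sqrt{\epsilon^2+\alpha^2} = \bigO(\epsilon)$. The returned point is the best iterate, or equivalently a uniformly random one in expectation.

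The computations are routine. The only real obstacle is bookkeeping:
\begin{itemize}
\item[(i)] The lower-boundedness of $F$ must be made explicit.
\item[(ii)] The output must be the best (or a random) iterate rather than the last one, since Algorithm~\ref{alg:ffo_algo} formally returns $x_T$.
\end{itemize}
For (ii), a last-iterate guarantee is not available without extra structure. We will therefore phrase the conclusion as $\min_t \|\nabla F(x_t)\| = \bigO(\epsilon)$. We will also note that $\|\nabla F(x_t)\|$ can be certified up to error $\alpha$ through $\|\Approx{\nabla} F(x_t)\|$, so the best iterate can be selected using only the computed inexact gradients.
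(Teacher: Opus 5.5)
Your proposal is correct and follows the paper's proof essentially line for line. With $\eta = 1/C_F$, the descent lemma makes the cross term $\langle \nabla F(x_t), e_t\rangle$ cancel exactly, giving $F(x_{t+1}) \le F(x_t) - \frac{1}{2C_F}\|\nabla F(x_t)\|^2 + \frac{\alpha^2}{2C_F}$. Telescoping then bounds $\min_t \|\nabla F(x_t)\|^2$ by $\frac{2C_F(F(x_0)-F^*)}{T} + \alpha^2$. The paper assumes $F$ is bounded below implicitly, by invoking a minimizer $x^*$ of $F$, and it also states the guarantee for the best iterate, so your two bookkeeping remarks make explicit what the paper leaves tacit.
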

\begin{proof}
Suppose $(\forall x \in \inputspace)\,(\Approx{\nabla} F(x) = \nabla F(x) + \tau(x))$ where $\|\tau(x)\| \leq \bigO(\alpha)$. 
Denote $x^+ = x - \eta \Approx{\nabla} F(x)$,
\begin{align*}
    & F(x^+) \\
    & \leq 
    F(x) + \nabla F(x)^\top (x^+ - x) + \frac{C_F}{2}\|x^+ - x\|^2 \\
    & = 
    F(x) - \eta \nabla F(x)^\top \Approx{\nabla} F(x) + \frac{C_F}{2}\|\eta \Approx{\nabla} F(x)\|^2 \\
    & = 
    F(x) - \eta \|\nabla F(x)\|^2 + \frac{\eta^2 C_F}{2} \|\nabla F(x) + \tau(x)\|^2 - \eta \nabla F(x)^\top \tau(x)\\
    & = 
    F(x) - \eta \|\nabla F(x)\|^2 + \frac{\eta^2 C_F}{2}\|\nabla F(x)\|^2 + (\eta^2 C_F - \eta)\nabla F(x)^\top \tau(x) + \frac{\eta^2 C_F}{2}\|\tau(x)\|^2 \\
    & = 
    F(x) - \frac{2\eta - \eta^2 C_F}{2} \|\nabla F(x)\|^2 
    + (\eta^2 C_F - \eta)\nabla F(x)^\top \tau(x) + \frac{\eta^2 C_F}{2}\|\tau(x)\|^2 \\
    & =
    F(x) - \frac{\eta}{2}\|\nabla F(x)\|^2 + 0 + \frac{\eta}{2}\|\tau(x)\|^2 \\
    & \leq
    F(x) - \frac{\eta}{2}\|\nabla F(x)\|^2 + \frac{\eta}{2}\alpha^2
\end{align*}
Denote each step of inexact gradient descent by $x_k = x_{k-1} + \eta \Approx{\nabla}F(x_{k-1})$ for $k=0,1,2,\cdots,T-1$. Applying a telescope sum, we have
\begin{align*}
    F(x_T) - F(x_0) 
    &\leq -\frac{\eta}{2}\sum_{k=0}^{T-1}\|\nabla F(x_k)\|^2 + \frac{\eta}{2}\sum_{k=0}^{T-1}\alpha^2 \\
    \frac{1}{T}\sum_{k=0}^{T-1}\|\nabla F(x_k)\|^2 
    &\leq \frac{2(F(x_0)-F(x_T))}{\eta T} + \alpha^2 \\
    \Min_{k=0,1,\cdots,T-1}\|\nabla F(x_k)\|^2 
    &\leq \frac{2(F(x_0)-F(x^*))}{\eta T} + \alpha^2 \quad \text{  ($x^*$ is the minimizer of $F$)}\\
\end{align*}
When $T=\bigO(\epsilon^{-2})$ and $\alpha=\bigO(\epsilon)$, we have $\Min_{k=0,1,\cdots,T-1}\|\nabla F(x_k)\| \leq \bigO(\epsilon)$.
\end{proof}

\section{Solving Smooth and Strongly Monotone Game}
 Smooth and Strongly monotone game can be solved using projected gradient descent (PGD) (\cite{facchinei2003finite}). Let $\eta>0$ be the learning rate and let $\{y_k\}$ be the iterates that converges to the Nash equilibrium, the update rule is given by $y_{k+1} = \Pi_Y(y_k - \eta V(y_k))$. Here, we omit the upper-level variable for convenience. Following \cite{facchinei2003finite}, we can show a linear convergence guarantee. We extend the guarantee to the case when we only have access to approximate projection and approximate gradient operator $V$ in Lemma \ref{lemma:proj_GD_VI_convergence}.

\noindent
Finding a Nash equilibrium is equivalent to solving a variational inequality in smooth and strongly monotone game (Lemma \ref{lemma:local_nash_VI}). We consider solving for $y^*$ in the following variational inequality
\begin{align*}
    \forall y \in Y: \langle V(y^*), y-y^*\rangle \geq 0
\end{align*}
where $Y$ is a closed convex set and $V$ is a $\mu$-strongly monotone and $C$-Lipschitz vector-valued function. We also define Euclidean projection as $\Pi_Y(z) = \argmin_{y\in Y} \|y-z\|^2$. Lemma \ref{lemma:fixed_point} shows that $y^*$ satisfies the variational inequality above if and only $y^* = \Pi_Y(y^* - \eta V(y^*))$ for some $\eta > 0$. 

\begin{lemma}\label{lemma:proj_GD_VI_contraction}
Let $\gamma=1-2\mu \eta + \eta^2 C^2$
$\forall x,y \in Y$,
\begin{align*}
    \|\Pi_Y(x-\eta V(x)) - \Pi_Y(y-\eta V(y))\|^2
    & \leq \gamma\|y-x\|^2
\end{align*}
If $0 < \eta < \frac{2\mu}{C^2}$, then $0 < \gamma < 1$ and thus $\Pi_Y(y-\eta V(y))$ is a contraction mapping in $y$.
\end{lemma}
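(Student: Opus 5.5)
The plan is to expand the squared norm and use the standard properties of the Euclidean projection onto a closed convex set together with strong monotonicity and Lipschitzness of $V$. First, invoke nonexpansiveness of $\Pi_Y$: for any $u,w\in\mathbb{R}^n$, $\|\Pi_Y(u)-\Pi_Y(w)\|\le\|u-w\|$. This follows from the projection optimality condition $\langle u-\Pi_Y(u), z-\Pi_Y(u)\rangle\le 0$ for all $z\in Y$, the same characterization used in Lemma~\ref{lemma:fixed_point}. Apply it at $z=\Pi_Y(w)$ and symmetrically at $z=\Pi_Y(u)$. Adding the two inequalities gives $\|\Pi_Y(u)-\Pi_Y(w)\|^2\le\langle u-w,\Pi_Y(u)-\Pi_Y(w)\rangle$, and Cauchy--Schwarz then yields nonexpansiveness. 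Setting $u=x-\eta V(x)$ and $w=y-\eta V(y)$ reduces the claim to bounding $\|(x-y)-\eta(V(x)-V(y))\|^2$.

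Next, expand this quantity as
\begin{align*}
\|x-y\|^2-2\eta\langle V(x)-V(y),x-y\rangle+\eta^2\|V(x)-V(y)\|^2 .
\end{align*}
Strong monotonicity (Assumption~\ref{assum:VI}, item 1, with $x$ suppressed) bounds the middle term by $-2\eta\mu\|x-y\|^2$, since $\eta>0$. The $C$-Lipschitz property (item 2, restricted to fixed $x$) bounds the last term by $\eta^2C^2\|x-y\|^2$. Combining these gives exactly $\gamma\|x-y\|^2$ with $\gamma=1-2\mu\eta+\eta^2C^2$.

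Finally, for the range of $\eta$: $\gamma<1$ is equivalent to $\eta(\eta C^2-2\mu)<0$, which holds precisely when $0<\eta<2\mu/C^2$. For $\gamma>0$, note that $\mu\le C$, since strong monotonicity together with Cauchy--Schwarz gives $\mu\|d\|^2\le\|V(y+d)-V(y)\|\|d\|\le C\|d\|^2$. Then $\gamma=(1-\eta\mu)^2+\eta^2(C^2-\mu^2)\ge 0$. If $C>\mu$, this is strictly positive. The degenerate case $C=\mu$ with $\eta=1/\mu$ gives $\gamma=0$, which makes the map constant and is still a contraction. I will remark that the conclusion "contraction" only needs $\gamma<1$, so $\sqrt{\gamma}<1$ is the contraction factor.

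None of these steps is a real obstacle. The one point needing care is the strict positivity $\gamma>0$, which requires the $\mu\le C$ observation and a brief note on the edge case.
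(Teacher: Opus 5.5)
Your proof is correct and follows the same route as the paper: nonexpansiveness of $\Pi_Y$, expansion of $\|(x-y)-\eta(V(x)-V(y))\|^2$, the strong-monotonicity and Lipschitz bounds, and then $\gamma<1 \iff 0<\eta<2\mu/C^2$. The paper never addresses $\gamma>0$; your extra argument for it (via $\mu\le C$ and $\gamma=(1-\eta\mu)^2+\eta^2(C^2-\mu^2)$) is correct. Your edge case $C=\mu$, $\eta=1/\mu$ (e.g.\ $V(y)=\mu y$) shows that the statement's strict inequality $\gamma>0$ can actually fail, which is harmless because the contraction conclusion only needs $\gamma<1$.
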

\begin{proof}
\begin{align*}
    & \|\Pi_Y(x-\eta V(x)) - \Pi_Y(y-\eta V(y))\|^2 \\
    & \leq
    \|x - \eta V(x) - y + \eta V(y)\|^2 \\
    & =
    \|x-y\|^2 - 2\eta\langle y-x, V(y) - V(x)\rangle + \eta^2\|V(y)-V(x)\|^2 \\
    & \leq
    \|x-y\|^2 - 2\mu\eta\|x-y\|^2 + \eta^2 C^2\|y-x\|^2 \\
    & = 
    (1-2\mu\eta+\eta^2C^2)\|y-x\|^2
\end{align*}
Solving $(1-2\mu\eta+\eta^2C^2)<1$, we get the $0 < \eta < \frac{2\mu}{C^2}$.
\end{proof}

\begin{lemma}[PGD with inexact projection and gradient]\label{lemma:proj_GD_VI_convergence}
Suppose for any vector $z$, we only have access to an approximate projection $\Approx{\Pi}_Y(z)$ such that $\|\Approx{\Pi}_Y(z) - \Pi_Y(z)\| \leq \epsilon_0$ for some $\epsilon_0$.
Let $\Approx{V}(z)$ be an approximate gradient operator such that $\|\Approx{V}(z) - V(z)\| \leq \delta_0$ for some $\delta_0 > 0$. 
Let $\gamma=1-2\mu \eta + \eta^2 C^2$ and $0 < \eta < \frac{2\mu}{C^2}$. Define an arbitrary initial iterate $y_0\in Y$ and apply the following update rule recursively: 
\begin{align*}
    y_{t+1} = \Approx{\Pi}_Y(y_t - \eta \Approx{V}(y_t))
\end{align*}
If $\epsilon_0 = \frac{(1-\sqrt{\gamma})\epsilon}{4}$ and $\delta_0 = \frac{(1-\sqrt{\gamma})\epsilon}{4 \eta}$, then $\|y_{t} - y^*\| \leq \epsilon$ when $T \geq \bigO(\log(\frac{2 \|y_0 - y^*\|}{\epsilon}))$
\end{lemma}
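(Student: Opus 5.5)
We run the standard contraction argument for the inexact map and then unroll the resulting scalar recursion. Let $T(y)\define \Pi_Y(y-\eta V(y))$ denote the exact PGD map. By Lemma~\ref{lemma:fixed_point}, $y^*=T(y^*)$. By Lemma~\ref{lemma:proj_GD_VI_contraction}, $\|T(y)-T(y')\|\le\sqrt{\gamma}\,\|y-y'\|$ with $\sqrt\gamma<1$ whenever $0<\eta<2\mu/C^2$.

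\textbf{Step 1: per-step error of the inexact update.} Write
\begin{align*}
y_{t+1}-y^* = \bigl[\Approx{\Pi}_Y(z_t)-\Pi_Y(z_t)\bigr] + \bigl[\Pi_Y(z_t)-\Pi_Y(y_t-\eta V(y_t))\bigr] + \bigl[T(y_t)-T(y^*)\bigr],
\end{align*}
where $z_t\define y_t-\eta\Approx{V}(y_t)$. The first bracket is at most $\epsilon_0$ by assumption. The second is at most $\eta\|\Approx{V}(y_t)-V(y_t)\|\le\eta\delta_0$, because the Euclidean projection onto a closed convex set is nonexpansive. The third is at most $\sqrt\gamma\,\|y_t-y^*\|$ by the contraction. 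Hence
\begin{align*}
e_{t+1}\le\sqrt\gamma\, e_t+\epsilon_0+\eta\delta_0, \qquad e_t\define\|y_t-y^*\|.
\end{align*}

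\textbf{Step 2: unroll the recursion.} With the prescribed choices, $\epsilon_0+\eta\delta_0=\frac{(1-\sqrt\gamma)\epsilon}{2}$. Unrolling gives
\begin{align*}
e_t\le\gamma^{t/2}e_0+\frac{\epsilon_0+\eta\delta_0}{1-\sqrt\gamma}=\gamma^{t/2}\|y_0-y^*\|+\frac{\epsilon}{2}.
\end{align*}
It remains to make the first term at most $\epsilon/2$. This holds as soon as
\begin{align*}
t\ge\frac{\log(2\|y_0-y^*\|/\epsilon)}{\log(1/\sqrt\gamma)},
\end{align*}
which is $\bigO(\log(2\|y_0-y^*\|/\epsilon))$, since $\gamma$ is a constant depending only on $\mu$, $C$ and $\eta$. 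Then $e_t\le\epsilon$ for every $t\ge T$.

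\textbf{Main subtlety.} The iterates must stay where the contraction applies. Lemma~\ref{lemma:proj_GD_VI_contraction} is stated for points in $Y$, but an approximate projection may leave $Y$. I would handle this by noting that its proof only uses nonexpansiveness of $\Pi_Y$ and the strong monotonicity and Lipschitz bounds of $V$, and these hold on all of $\mathbb{R}^n$ under Assumption~\ref{assum:VI}. So the contraction bound holds for any $y_t\in\mathbb{R}^n$, not only for $y_t\in Y$, and Step 1 goes through unchanged.

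In the later local (auxiliary-game) uses, $V$ is only locally monotone on $\bar B_R(y^*)$. There I would add an induction showing $e_t\le\max(e_0,\epsilon)\le R$, which follows from the same recursion.
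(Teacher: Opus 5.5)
Your proof is correct and follows the paper's argument essentially line for line. Both use the same three-term decomposition: the projection error, nonexpansiveness applied to the gradient error, and the $\sqrt{\gamma}$-contraction. Both then unroll the geometric recursion and take $t \ge \log(2\|y_0-y^*\|/\epsilon)/\log(1/\sqrt{\gamma})$. Your remark that the contraction bound also covers iterates outside $Y$ fills a point the paper leaves implicit, since the contraction proof only uses nonexpansiveness of $\Pi_Y$ and the global monotonicity and Lipschitz bounds.
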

\begin{proof}
For every vector $z$, $\Approx{\Pi}_Y(z) = \Pi_Y(z) + \tau_0$ for some vector $\tau_0$ such that $\|\tau_0\| \leq \epsilon_0$, 
\begin{align*}
    & \|y_{t+1} - y^*\| \\
    &=
    \|\Approx{\Pi}_Y(y_t - \eta \Approx{V}(y_t)) - \Pi_Y(y_t - \eta V(y_t)) + \Pi_Y(y_t - \eta V(y_t)) - \Pi_Y(y^* - \eta V(y^*))\| \\
    & \leq
    \|\Pi_Y(y_t - \eta \Approx{V}(y_t)) - \Pi_Y(y_t - \eta V(y_t))\| + \|\Pi_Y(y_t - \eta V(y_t)) - \Pi_Y(y^* - \eta V(y^*))\| + \epsilon_0 \\
    & \leq
    \|- \eta \Approx{V}(y_t) + \eta V(y_t)\| + \|\Pi_Y(y_t - \eta V(y_t)) - \Pi_Y(y^* - \eta V(y^*))\| + \epsilon_0 \\
    & \leq
    \sqrt{\gamma} \|y_t - y^*\| + \eta \delta_0 + \epsilon_0 \\
\end{align*}
\begin{align*}
    & \|y_{t+1} - y^*\| \\
    & \leq
    (\sqrt{\gamma})^{t+1} \|y_0 - y^*\| + \sum_{k=0}^t (\sqrt{\gamma})^k\epsilon_0 + 
    \eta \sum_{k=0}^t (\sqrt{\gamma})^k\delta_0\\
    & = 
    (\sqrt{\gamma})^{t+1} \|y_0 - y^*\| + 
    \frac{1-\sqrt{\gamma}^{t+1}}{1-\sqrt{\gamma}} \epsilon_0 +
    \frac{1-\sqrt{\gamma}^{t+1}}{1-\sqrt{\gamma}} \eta \delta_0\\
    & \leq 
    (\sqrt{\gamma})^{t+1} \|y_0 - y^*\| + \frac{1}{1-\sqrt{\gamma}} (\epsilon_0 + \eta \delta_0)\\
    & =
    (\sqrt{\gamma})^{t+1} \|y_0 - y^*\| + \frac{\epsilon}{2}
\end{align*}
Solving for $(\sqrt{\gamma})^{t+1} \|y_0 - y^*\| \leq \frac{\epsilon}{2}$, we have
\begin{align*}
    (\sqrt{\gamma})^{t+1} \|y_0 - y^*\| & \leq \frac{\epsilon}{2} \\
    (t+1)\log(\frac{1}{\sqrt{\gamma}}) + \log(\frac{1}{\|y_0 - y^*\|}) &\geq \log(\frac{2}{\epsilon}) \\
    t+1 & \geq \log(\frac{2\|y_0 - y^*\|}{\epsilon})/\log(\frac{1}{\sqrt{\gamma}})
\end{align*}
\end{proof}

\noindent
We leverage Lemma \ref{lemma:natural_residual} to verify convergence of PGD in practice.
\begin{lemma}[Sufficient condition for convergence to $y^*$]\label{lemma:natural_residual}
Let $r(y) \define y - \Pi_Y(y-V(y))$ be the natural residual. Suppose $\|r(\Approx{y})\| \leq \beta$, then $\|\Approx{y} - y^*\| \leq \frac{C+1}{\mu} \beta$. 
\end{lemma}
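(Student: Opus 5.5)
The plan is to compare $\Approx{y}$ with $\Approx{y}^+ \define \Pi_Y(\Approx{y} - V(\Approx{y}))$, which is the natural-residual step with unit step size. The target is a one-line inequality of the form $\mu\|\Approx{y}-y^*\|^2 \le (C+1)\beta\|\Approx{y}-y^*\|$.

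The first step uses the projection characterization (the same optimality condition as in Lemma~\ref{lemma:fixed_point}). Applied to $z=\Approx{y}-V(\Approx{y})$ and tested at $y^*\in Y$, it gives
\[
\langle \Approx{y} - V(\Approx{y}) - \Approx{y}^+,\; y^* - \Approx{y}^+\rangle \le 0 .
\]
Since $\Approx{y}-\Approx{y}^+ = r(\Approx{y})$, this is $\langle r(\Approx{y}) - V(\Approx{y}),\, y^*-\Approx{y}^+\rangle\le 0$.

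The second step is the variational inequality at $y^*$, tested at $\Approx{y}^+\in Y$: $\langle V(y^*), \Approx{y}^+ - y^*\rangle \ge 0$. Adding the two inequalities yields
\[
\langle V(\Approx{y}) - V(y^*),\, \Approx{y}^+ - y^*\rangle \le \langle r(\Approx{y}),\, \Approx{y}^+ - y^*\rangle .
\]

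The third step substitutes $\Approx{y}^+ - y^* = (\Approx{y}-y^*) - r(\Approx{y})$ on the left. That gives
\[
\langle V(\Approx{y})-V(y^*), \Approx{y}-y^*\rangle \le \langle V(\Approx{y})-V(y^*), r(\Approx{y})\rangle + \langle r(\Approx{y}), \Approx{y}-y^*\rangle - \|r(\Approx{y})\|^2 .
\]
I drop the $-\|r\|^2$ term and bound the rest. Strong monotonicity gives the left side at least $\mu\|\Approx{y}-y^*\|^2$. Lipschitzness gives the first right-hand term at most $C\|\Approx{y}-y^*\|\beta$, and Cauchy--Schwarz gives the second at most $\beta\|\Approx{y}-y^*\|$. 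Dividing by $\|\Approx{y}-y^*\|$ (the case where it is zero is trivial) gives $\|\Approx{y}-y^*\|\le \frac{C+1}{\mu}\beta$.

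The only delicate point is keeping the test points feasible. This requires $\Approx{y}^+\in Y$, which holds by definition of the projection. It also requires $y^*\in Y$ and the variational-inequality characterization of $y^*$ from Lemma~\ref{lemma:local_nash_VI}. No step of the argument needs $\Approx{y}\in Y$.
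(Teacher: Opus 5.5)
Your proposal is correct and follows the paper's argument step for step. The paper also tests the projection optimality condition at $\hat{y}-V(\hat{y})$ against $y^*$ and tests the variational inequality at $y^*$ against $\hat{y}-r(\hat{y})$; it then sums the two inequalities and closes with Lipschitzness, Cauchy--Schwarz and strong monotonicity, while you are slightly more explicit about discarding $-\|r(\hat{y})\|^2$ and the degenerate case $\hat{y}=y^*$.
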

\begin{proof}
Let $u \define \Approx{y} - V(\Approx{y})$, by first-order optimality of Euclidean projection, 
\begin{align*}
    \langle \Pi_Y(u) - u, z - \Pi_Y(u) \rangle 
    & \geq 0 \quad \text{, $\forall z \in Y$} \\
    \langle V(\Approx{y}) - r(\Approx{y}), y^* - \Approx{y} + r(\Approx{y}) \rangle & \geq 0 \quad \text{, substitute $z=y^*$}\\
\end{align*}
Since $y^*$ solves the variational inequality $\langle V(y^*), y-y^* \rangle \geq 0, \, \forall y \in Y$, we have
\begin{align*}
    \langle V(y^*), \Approx{y} - r(\Approx{y}) - y^*\rangle \geq 0
\end{align*}
Summing the inequalities toegther, we have
\begin{align*}
    \langle V(\Approx{y}) - V(y^*) - r(\Approx{y}),  y^* - \Approx{y} + r(\Approx{y})\rangle 
    & \geq 0 \\
\end{align*}
Rearranging terms and applying Lipschitzness of $V$,
\begin{align*}
    \langle V(\Approx{y}) - V(y^*),  \Approx{y} - y^*\rangle 
    & \leq 
    \langle V(\Approx{y}) - V(y^*), r(\Approx{y}) \rangle + \langle r(\Approx{y}), \Approx{y} - y^*\rangle \\
    & \leq 
    \| V(\Approx{y}) - V(y^*)\| \| r(\Approx{y}) \| + \| r(\Approx{y})\| \| \Approx{y} - y^*\| \\
    & \leq \beta (C+1) \|\Approx{y} - y^*\| 
\end{align*}
By strong monotonicity and rearranging terms, we get the desired bound,
\begin{align*}
\mu\|\Approx{y} - y^*\|^2
& \leq  \langle V(\Approx{y}) - V(y^*),  \Approx{y} - y^*\rangle \\
& \leq \beta (C+1) \|\Approx{y} - y^*\| 
\end{align*}
\end{proof}

\begin{lemma}[Inexact dual solution oracle]\label{lemma:dual_oracle}
Consider a $\mu$-strongly monotone and $C$-smooth game $\mathcal{G} \define (x, [N], \{Y_i\}_{i\in[N]}, \{g_i\}_{i\in[N]})$. Given an approximate lower-level primal solution $\Approx{y}$ such that $\|\Approx{y} - y^*\|\leq \Approx{\delta}$, we can obtain an approximate lower-level dual solution $\Approx{\lambda}$ such that $\|\Approx{\lambda} - \lambda^*\| \leq \Approx{\delta}$ using the same method in \cite{kornowski2024first} within $\bigO(1)$ oracle calls. 
\end{lemma}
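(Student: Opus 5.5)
The plan is to recover the multipliers in closed form from the stationarity part of the KKT system in Lemma~\ref{lemma:KKT_diff}, evaluated at the approximate primal point. For the equality-constrained game, the exact pair $(y^*,\lambda^*)$ satisfies
\begin{align*}
V(x,y^*) + B^\top \lambda^* = 0 ,
\end{align*}
where $B=\diag((\partialD{\ei(x,y^*_i)}{y})_{i\in[N]})$ is block diagonal. Since the constraints are linear, $B$ does not depend on $y$. By the LICQ-type assumption (Assumption~\ref{assum:LICQ}), $B$ has full row rank, so $BB^\top$ is invertible. Multiplying by $B$ and solving gives
\begin{align*}
\lambda^* = -(BB^\top)^{-1}BV(x,y^*) = -(B^\dagger)^\top V(x,y^*).
\end{align*}
I would therefore define the estimator
\begin{align*}
\Approx{\lambda} \define -(BB^\top)^{-1}BV(x,\Approx{y}).
\end{align*}
This is the least-squares multiplier used by \cite{kornowski2024first}. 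Because $B$ is block diagonal, it decouples into $N$ independent per-player problems
\begin{align*}
\Approx{\lambda}_i = -\big(B^{(i)}B^{(i)\top}\big)^{-1}B^{(i)}\nabla_{y_i}g_i(x,\Approx{y}).
\end{align*}
Each of these needs a single gradient evaluation of $g_i$ plus linear algebra on the known constraint matrix $B^{(i)}$. So the dual costs $\bigO(1)$ oracle calls per player and reuses the gradients already computed at the last primal iterate.

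The error bound then follows by subtracting the two formulas and using the joint $C$-Lipschitzness of $V$ (Assumption~\ref{assum:VI}) together with Lemma~\ref{lemma:matrix_norm}:
\begin{align*}
\|\Approx{\lambda}-\lambda^*\| \le \|B^\dagger\|\,\|V(x,\Approx{y})-V(x,y^*)\| \le \frac{C}{\sigma_{\min}(B)}\,\|\Approx{y}-y^*\| \le \frac{C}{\sigma_{\min}(B)}\,\Approx{\delta}.
\end{align*}
To match the statement exactly, with constant one, I would run the primal solver of Lemma~\ref{lemma:proj_GD_VI_convergence} to accuracy $\Approx{\delta}\,\sigma_{\min}(B)/C$ instead of $\Approx{\delta}$. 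Since that solver converges linearly, this changes only the additive $\log(C/\sigma_{\min}(B))$ term in the iteration count, which is absorbed in the $\tilde{\bigO}$ complexity. Alternatively, one can read ``$\le\Approx{\delta}$'' as $\bigO(\Approx{\delta})$, which is all Lemma~\ref{lemma:inexact_lower_level} needs.

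The step I expect to be the main obstacle is the inequality-constrained case of Appendix~\ref{sec:ineq}, where $B$ must be restricted to the rows of the \emph{active} constraints at $y^*$, and that active set is unknown. My approach there is the following.
\begin{itemize}
\item Assume strict complementarity with margin $\gamma>0$: active multipliers satisfy $\lambda^*_j\ge\gamma$, and inactive constraints have slack at least $\gamma$.
\item Estimate the active set from $\Approx{y}$ by thresholding the slacks $A^{(i)}x+B^{(i)}\Approx{y}_i+b^{(i)}$ at $\gamma/2$.
\item Once $\Approx{\delta}$ is below a constant multiple of $\gamma$, this thresholding recovers exactly the true active set, by Lipschitzness of the linear constraints.
\item With the correct active set, the equality-case computation applies verbatim, and inactive multipliers are set to zero.
\end{itemize}
This needs only a one-time tightening of the primal accuracy by a constant factor, so the $\bigO(1)$ oracle bound is preserved.
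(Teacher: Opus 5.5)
Your proposal is correct and is essentially the paper's argument: the paper also takes $\hat\lambda$ to be the least-squares multiplier, $B^\top\hat\lambda=\arg\min_{s\in\mathrm{span}(B^\top)}\|-V(x,\hat y)-s\|^2$, and uses the $C$-Lipschitzness of $V$ to get $\|B^\top(\hat\lambda-\lambda^*)\|\le C\hat\delta$. It then bounds the error through the inverse of $\begin{bmatrix} I & B^\top\\ B & 0\end{bmatrix}$, which implicitly requires $B\hat y=-Ax-b$, whereas your direct bound via $\|(B^\dagger)^\top\|=1/\sigma_{\min}(B)$ is cleaner and avoids that; in both versions the constant is not $1$, so reading the conclusion as $\bigO(\hat\delta)$ is right, and your inequality-constrained discussion goes beyond the paper, which simply assumes active-set identification (Assumption~\ref{assum:active_set}).
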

\begin{proof}
The KKT system for the Nash equilibrium of $\mathcal{G}$ is given by
\begin{align*}
    \begin{bmatrix} V(x,y^*) + B^\top \lambda^* \\ \begin{bmatrix} e^{(1)}(x,y^*_1) \\ \vdots \\ e^{(N)}(x,y^*_N) \end{bmatrix} \end{bmatrix} 
    &= 
    0 \\
    \begin{bmatrix}  V(x,y^*) + B^\top \lambda^* \\ By^* \end{bmatrix} 
    &= 
    \begin{bmatrix} 0 \\ -Ax - b\end{bmatrix}
\end{align*}
Denote $c_x = -Ax-b$, $(y^*, \lambda^*)$ are given by
\begin{align*}
    \begin{bmatrix}I & B^\top \\ B & 0\end{bmatrix}
    \begin{bmatrix} y^* \\ \lambda^* \end{bmatrix}
    & = 
    \begin{bmatrix} -V(x,y^*)+y^* \\ c_x \end{bmatrix}
\end{align*}
and the approximate lower-level solution $(\Approx{y}, \Approx{\lambda})$ are given by
\begin{align*}
    \begin{bmatrix}I & B^\top \\ B & 0\end{bmatrix}
    \begin{bmatrix} \Approx{y} \\ \Approx{\lambda} \end{bmatrix}
    & = 
    \begin{bmatrix} B^\top\Approx{\lambda} + \Approx{y} \\ c_x \end{bmatrix}
\end{align*}
Given $\Approx{y}$, we can obtain $\Approx{\lambda}$ by solving
\begin{align*}
    B^\top \Approx{\lambda} = \argmin_{s \in \text{span}(B^\top)} \|-V(x,\Approx{y}) - s\|^2
\end{align*}
From the KKT system, observe that
\begin{align*}
    -V(x,y^*) = B^\top\lambda^* = \argmin_{s \in \text{span}(B^\top)} \|-V(x,y^*) - s\|^2
\end{align*}
\begin{align*}
    \|B^\top \Approx{\lambda} - (-V(x,y^*))\|
    & =
    \|B^\top \Approx{\lambda} - B^\top \lambda^*\| \\
    & \leq
    \|V(x,y^*)-V(x,\Approx{y})\| \\
    & \leq
    C \|\Approx{y} - y^*\| \\
    & \leq
    C \Approx{\delta}
\end{align*}
For some $\tau$ such that $\|\tau\| \leq (1+C)\Approx{\delta}$,
\begin{align*}
    \|\begin{bmatrix}\Approx{y} \\ \Approx{\lambda}\end{bmatrix} - \begin{bmatrix}y^* \\ \lambda^*\end{bmatrix}\|
    & = 
    \|\begin{bmatrix}I & B^\top \\ B & 0\end{bmatrix}^{-1}\| \|\begin{bmatrix}\tau \\ 0\end{bmatrix}\| \\
    & \leq
    \|\begin{bmatrix}I & B^\top \\ B & 0\end{bmatrix}^{-1}\|(1+C)\Approx{\delta}
\end{align*}

\end{proof}


\section{Unconstrained Stackelberg Game}
The gradient of the potential function is 
\begin{align*}
    \nabla_{y_i}\phi_\epsilon(x,y) 
    &= 
    \frac{1}{\epsilon} [\nabla_{y_i} g_i(x, y_i + \epsilon (y_i - y^*_i), y_{\neg i})(1+\epsilon) - \nabla_{y_i} g_i(x,y)] \\
    & + \frac{1}{\epsilon} \sum_{j \neq i} [\nabla_{y_i} g_j(x, y_j + \epsilon (y_j - y^*_j), y_{\neg j}) - \nabla_{y_i} g_j(x,y)] \\
    &=
    \frac{1}{\epsilon} \sum_{j=1}^N [\nabla_{y_i} g_j(x, y_j + \epsilon (y_j - y^*_j), y_{\neg j}) - \nabla_{y_i} g_j(x,y)] + g_i(x, y_i + \epsilon (y_i - y^*_i), y_{\neg i})
\end{align*}

\begin{lemma}[Local strong Monotonicity and smoothness of the perturbed game $\bar{\mathcal{G}}_\delta$]\label{lemma:asym_local_strong_monotone}
Consider a closed ball around $y^*$, $\bar{B}_{R}(y^*)=\{y \in \mathbb{R}^n : \|y-y^*\| \leq R\}$. For $0 < \delta < \frac{\mu}{2 C_f}$ and $0 < R < \frac{\mu - \delta C_f}{S(N+1)(2+\sqrt{N})}$ and $0 < \epsilon < 1$, the perturbed game $\bar{\mathcal{G}}_\delta$ is locally $\bar{\mu}$-strongly monotone:
\begin{align*}
    \langle \bar{V}(x,y) - \bar{V}(x,z), y - z \rangle \geq \bar{\mu} \|y - z\|^2, \, \forall y, z \in \bar{B}_R(y^*)
\end{align*}
For any $\epsilon>0$, $\delta > 0$ and $R > 0$, $\bar{\mathcal{G}}_\delta$ is locally $\bar{C}$-smooth:
\begin{align*}
    \| \bar{V(}x,y) - \bar{V}(x,z)\| \leq \bar{C} \|y-z\|, \, \forall y,z\in \bar{B}_R(y^*)
\end{align*}
The local strong monotonicity modulus is $\bar{\mu} \define \mu - \frac{S R N}{2} \epsilon - (S R N)(1 + \frac{\epsilon}{2} + \sqrt{N}) + \epsilon \mu - S R \epsilon (1 + \frac{\epsilon}{2} + \sqrt{N}) - \delta C_f > 0$. The local smoothness modulus is $\bar{C} \define \sqrt{N}\{[C \sqrt{N} + S R(\sqrt{N}(1+\frac{\epsilon}{2}) + N) + \delta C_f] + \epsilon [ C + S R (2+\frac{\epsilon}{2})]\}$ 
\end{lemma}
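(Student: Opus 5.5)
Define the auxiliary operator through its Jacobian and bound that Jacobian on $\bar B_R(y^*)$. Since $\bar V(x,\cdot)$ is $C^1$ and the ball is convex, we have
\[
\langle \bar V(x,y)-\bar V(x,z),y-z\rangle=\int_0^1\langle d,\bar J(x,z+s d)\,d\rangle\,ds, \qquad d=y-z,
\]
where $\bar J\define\partialD{\bar V}{y}$. So both claims reduce to two uniform statements over $w\in\bar B_R(y^*)$. The first is a lower bound $\langle d,\bar J(x,w)d\rangle\ge\bar\mu\|d\|^2$. The second is an upper bound $\|\bar J(x,w)\|\le\bar C$, which through the fundamental theorem of calculus gives the local $\bar C$-Lipschitz property.

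\textbf{Step 1: compute $\bar V$ and compare its Jacobian with $J^\top$.} Using the displayed formula for $\nabla_{y_i}\phi_\epsilon$ and $\bar V=\nabla_y\phi_\epsilon-V+\delta\nabla_y f$, write the $i$-th block as
\[
\bar V_i=\frac1\epsilon\sum_j\big[\nabla_{y_i}g_j(x,w^{(j)})-\nabla_{y_i}g_j(x,y)\big]+\nabla_{y_i}g_i(x,w^{(i)})-\nabla_{y_i}g_i(x,y)+\delta\nabla_{y_i}f,
\]
where $w^{(j)}\define(y_j+\epsilon(y_j-y_j^*),y_{\neg j})$.

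Apply the fundamental theorem of calculus along the segment from $y$ to $w^{(j)}$. The first sum becomes $\sum_j\int_0^1\partialD{\nabla_{y_i}g_j}{y_j}(x,y+s\epsilon\,\Delta_j)\,ds\,(y_j-y_j^*)$, with $\Delta_j$ the $j$-th block of $y-y^*$. At $y=y^*$ this equals $\sum_j (J_{ji})^\top(y_j-y_j^*)$, which is exactly $J^\top(y-y^*)$; this is the mechanism in Section~\ref{sec:asym_unconstr}.

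Then differentiate $\bar V$ in $y$ and split $\bar J(x,w)$ into four pieces:
\begin{enumerate}[label=(\roman*)]
\item a principal part $J(x,y^*)^\top+\epsilon\,J(x,y^*)$, where the term $\epsilon J$ comes from the chain rule through $w^{(i)}$, carrying the factor $(1+\epsilon)$, minus $V$;
\item Hessian-Lipschitz remainders;
\item terms of the form $\partialD{J}{y}\cdot(y-y^*)$, which appear because $w^{(j)}$ itself depends on $y$;
\item the term $\delta\nabla^2_{yy}f$.
\end{enumerate}

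\textbf{Step 2: strong monotonicity.} The quadratic form of the principal part is
\[
\langle d,(J^\top+\epsilon J)d\rangle=(1+\epsilon)\langle d,Jd\rangle\ge(1+\epsilon)\mu\|d\|^2
\]
by Lemma~\ref{lemma:singular_value_1}. This produces the $\mu+\epsilon\mu$ terms in $\bar\mu$.

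Each remainder is bounded by $S$ times the distance between evaluation points, and that distance is at most $R(1+\epsilon/2)$ after integrating $s$ over $[0,1]$. The type (iii) terms are bounded by $S\|y-y^*\|\le SR$ per player. Summing the $N$ player blocks costs a factor of $N$ or $\sqrt N$, depending on whether a block sum or Cauchy–Schwarz across blocks is used. This yields the $SRN(1+\epsilon/2+\sqrt N)$ and $SR\epsilon(\cdots)$ terms. The $f$-term costs $\delta C_f$.

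Finally, check that the stated conditions force $\bar\mu>0$. The hypotheses $R<(\mu-\delta C_f)/(S(N+1)(2+\sqrt N))$ and $\epsilon<1$ make every $S$-term at most $SR(N+1)(2+\sqrt N)<\mu-\delta C_f$, while the $\epsilon\mu$ term is nonnegative.

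\textbf{Step 3: smoothness.} Bound each block of $\bar J$ in operator norm instead of quadratic form. The principal part gives $\|J^\top+\epsilon J\|\le C(1+\epsilon)$, and the remaining pieces use the same $S R$ estimates plus $\delta C_f$. Stacking $N$ blocks costs $\sqrt N$, which gives the stated $\bar C$. This step needs no restriction on $\epsilon,\delta,R$, matching the statement.

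The main obstacle is the bookkeeping in Step 1. The derivative of $\nabla_{y_i}g_j(x,w^{(j)})$ with respect to $y_k$ involves the chain rule through the moving point $w^{(j)}$, including the $(1+\epsilon)$ factor when $k=j$. Keeping the third-order remainders (iii) separate from the Hessian-Lipschitz errors (ii), and tracking which sums pick up $N$ versus $\sqrt N$, is where the constants in $\bar\mu$ and $\bar C$ come from. I would organize that calculation blockwise, bounding each $(i,k)$ block and then assembling the blocks with $\|M\|\le\sqrt N\max_i\|M_{i\cdot}\|$.
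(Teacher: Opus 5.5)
Your overall strategy works, but Step~1 misidentifies the principal part, so Step~2 applies a correct inequality to the wrong matrix.

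Since $w^{(i)}=(y_i+\epsilon(y_i-y_i^*),y_{\neg i})$ moves only block $i$, the chain rule gives $\frac{\partial}{\partial y}\big[\nabla_{y_i}g_i(x,w^{(i)})\big]=J_{i\cdot}(x,w^{(i)})(I+\epsilon E_i)$, where $E_i$ is the coordinate projection onto block $i$. After subtracting $J_{i\cdot}(x,y)$, the leftover is $\epsilon J_{ii}$ in the diagonal block only. The linear part is therefore $J(x,y^*)^\top+\epsilon D(x,y^*)$ with $D=\diag(J_{11},\dots,J_{NN})$, not $J^\top+\epsilon J$. This is the same matrix that reappears in Lemma~\ref{lemma:asym_grad_1} and in the quadratic example of Appendix~\ref{sec:asym_toy_example}.

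For the present lemma the fix costs nothing:
\begin{itemize}
\item Testing strong monotonicity with $d$ supported on a single block gives $\langle d_i,J_{ii}d_i\rangle\ge\mu\|d_i\|^2$ at every point. Hence $\langle d,(J^\top+\epsilon D)d\rangle\ge(1+\epsilon)\mu\|d\|^2$, which is exactly how the paper gets its $+\epsilon\mu\|y-z\|^2$ term.
\item Since $\|D\|=\max_i\|J_{ii}\|\le C$, your operator bound $C(1+\epsilon)$ also survives.
\end{itemize}

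With that corrected, your route is valid but differs from the paper's. The paper never differentiates $\bar V$. It writes $\bar V_i$ in integral form and splits $\bar V(x,y)-\bar V(x,z)$ with the secant identity $A(y)(y-y^*)-A(z)(z-y^*)=A(y)(y-z)+[A(y)-A(z)](z-y^*)$. Every remainder is then bounded using only the Lipschitz constant $S$ of $J$; its terms (b) and (c) are the secant counterparts of your type (iii) terms.

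Your tangent argument instead needs $\bar V(x,\cdot)\in C^1$ and pointwise bounds on $\bar J$. Assumption~\ref{assum:VI} makes $J$ only Lipschitz, so the $\partialD{J}{y}$ you invoke need not exist. This is repairable:
\begin{itemize}
\item If the $g_j$ are $C^2$ (both arguments already use this to identify $\partialD{\nabla_{y_i}g_j}{y_j}$ with $J_{ji}^\top$), then $\bar J$ exists. Your type (iii) terms are really $\frac1\epsilon[\nabla^2_{yy}g_j(x,w^{(j)})-\nabla^2_{yy}g_j(x,y)]$.
\item These contain cross blocks $\nabla^2_{y_iy_k}g_j$ with $i,k\neq j$, which are not entries of $J$.
\item They are still bounded by $S\|y_j-y_j^*\|$. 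For fixed $v$ supported on block $j$, the map $y\mapsto\nabla_y g_j(x,y+v)-\nabla_y g_j(x,y)=\int_0^1 J_{j\cdot}(x,y+tv)^\top v_j\,dt$ is $S\|v\|$-Lipschitz; then take $v=\epsilon E_j(y-y^*)$.
\end{itemize}
You should spell this step out, since it is exactly where it matters that $\phi_\epsilon$ shifts only each player's own block.

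In return, the Jacobian route gives a cleaner and sharper count. For $\epsilon<1$ the remainders come out of order $SR(\sqrt N+\epsilon)$ rather than the paper's $SRN^{3/2}$, with leading smoothness term $C(1+\epsilon)$ instead of $CN$. These are dominated by the stated expressions, so you obtain $\bar\mu$ and $\bar C$ as implied bounds, not term by term as Step~2 claims. Also, in your positivity check it is the sum of the $S$-terms, not each one, that is at most $SR(N+1)(2+\sqrt N)$.
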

\begin{proof}
By the Fundamental Theorem of Calculus,
\begin{align*}
    \nabla_{y_i} g_j(x, y_j + \epsilon (y_j - y^*_j), y_{\neg j}) - \nabla_{y_i} g_j(x,y)
    &=
    \epsilon \int^1_0 \partialD{\nabla_{y_i} g_j(x, y_j + t\epsilon (y_j - y^*_j), y_{\neg j})}{y_j} (y_j - y^*_j) dt
\end{align*}
We can express the gradient of the perturbed cost function as
\begin{align*}
    \nabla_{y_i} \bar{g}_i(x,y)
    &=
    \int_0^1 \sum_{j=1}^N \partialD{\nabla_{y_i}}{y_j} g_j(x, y_j + t\epsilon(y_j - y^*_j), y_{\neg j}) (y_j - y^*_j) dt \\
    & + \epsilon \int_0^1 \partialD{\nabla_{y_i}}{y_i} g_i(x, y_i + t\epsilon(y_i - y^*_i), y_{\neg i}) (y_i - y^*_i) dt + \delta \nabla_{y_i} f(x,y) \\
    &= 
    \int_0^1 \sum_{j=1}^N J_{ji}(x, y_j + t\epsilon(y_j - y^*_j), y_{\neg j})^\top (y_j - y^*_j) dt \\
    & + \epsilon \int_0^1 J_{ii}(x, y_i + t\epsilon(y_i - y^*_i), y_{\neg i}) (y_i - y^*_i) dt + \delta \nabla_{y_i} f(x,y) 
\end{align*}
We can further expand it according to the identity $A(y)(y-y^*) - A(z)(z-y^*) = A(y)(y-z) + [A(y) - A(z)](z-y^*)$ for arbitrary matrices $A(y)$ and $A(z)$. 
\begin{align*}
    & \langle \nabla_{y_i} \bar{g}_i(x,y) - \nabla_{y_i} \bar{g}_i(x,z), y_i - z_i\rangle \\
    &=
    \underbrace{\int_0^1 \sum_{j=1}^N \langle \, y_i - z_i, \, J_{ji}(x, y_j + t\epsilon(y_j - y^*_j), y_{\neg j})^\top (y_j - z_j) \, \rangle dt}_{\text{(a)}} \\ 
    & + \int_0^1 \sum_{j=1}^N \langle \, y_i - z_i, \, [J_{ji}(x, y_j+t\epsilon(y_j - y^*_j), y_{\neg j})^\top - J_{ji}(x, z_j+t\epsilon(z_j - y^*_j), z_{\neg j})^\top](z_j - y^*_j) \, \rangle dt \\
    & + \epsilon \int_0^1 \langle \, y_i - z_i, \, J_{ii}(x, y_i + t\epsilon(y_i-y^*_i), y_{\neg i})(y_i - z_i) \, \rangle dt \\
    & + \epsilon \int_0^1 \langle \, y_i - z_i, \, [J_{ii}(x, y_i + t\epsilon(y_i-y^*_i), y_{\neg i}) - J_{ii}(x, z_i + t\epsilon(z_i-y^*_i), z_{\neg i})](z_i - y^*_i) \, \rangle dt \\
    & + \delta \langle \, y_i - z_i, \nabla_{y_i} f(x,y) - \nabla_{y_i} f(x,z) \, \rangle
\end{align*}
We can further decompose term (a) according to the identity
\begin{align*}
    J_{ji}(x, y_j+t\epsilon(y_j - y^*_j), y_{\neg j})^\top
    = 
    J_{ji}(x,y)^\top + (J_{ji}(x, y_j + t\epsilon(y_j - y^*_j), y_{\neg j})^\top - J_{ji}(x,y)^\top )
\end{align*}
For all $y, z \in B_R(y^*)$
\begin{align*}
    & \langle \, \bar{V}(x,y) - \bar{V}(x,z), \, y - z \, \rangle \\
    &= 
    \sum_{i=1}^N \langle \, \nabla_{y_i}\bar{g}_i(x,y) - \nabla_{y_i}\bar{g}_i(x,z), \, y_i - z_i \, \rangle \\
    & \geq
    \mu \|y-z\|^2 
    - \underbrace{\int_0^1 \sum_{i=1}^N \sum_{j=1}^N \|y_i - z_i\| \|y_j - z_j\| S \|t \epsilon (y_j - y^*_j)\| dt}_{\text{(a)}} \\
    & - \underbrace{\int_0^1 \sum_{i=1}^N \sum_{j=1}^N \|y_i - z_i\| S ((1+t\epsilon)\|y_j - z_j\| + \|y_{\neg j} - z_{\neg j}\| ) \|z_j - y^*_j\| dt}_{\text{(b)}} \\
    & + \epsilon \mu \|y-z\|^2 
    - \underbrace{\epsilon \int_0^1 \sum_{i=1}^N \|y_i - z_i\| S ((1+t\epsilon)\| y_i - z_i\| + \|y_{\neg i} - z_{\neg i}\|) \|z_i - y^*_i\| dt}_{\text{(c)}} \\
    & - \delta C_f \|y-z\|^2
\end{align*}
\begin{align*}
    \text{(a)}
    &= 
    S \epsilon \int_0^1 \sum_{i=1}^N t \|y_i - z_i\| (\sum_j \|y_j - z_j\| \|y_j - y^*_j\|) \\
    &\leq 
    S \sqrt{N} \epsilon \int_0^1 t \|y - z\| (\sum_j \|y_j - z_j\| \|y_j - y^*_j\|) \\
    &\leq 
    S \sqrt{N} \epsilon \int_0^1 t \|y - z\| (\sum_j \|y_j - z_j\| \|y - y^*\|) \\
    &\leq 
    \frac{S N }{2} \epsilon \|y - z\|^2 \|y - y^*\| \\
    & \leq 
    \frac{S R N}{2} \epsilon \|y - z\|^2 \\
\end{align*}
where the first and third inequalities are obtained using the identity $\frac{1}{\sqrt{N}}\|y\| = \sqrt{\frac{1}{N}\sum_i\|y_i\|^2} \geq \sum_i \frac{1}{N} \|y_i\|$ derived from Jensen's inequality. 
Using similar techniques, we can bound
\begin{align*}
    \text{(b)}
    & \leq
    S N \|y-z\|^2\|z-y^*\| + \frac{S N}{2}\epsilon \|y-z\|^2 \|z-y^*\| + SN\sqrt{N}\|y-z\|^2\|z-y^*\| \\
    & \leq 
    (S R N)(1 + \frac{\epsilon}{2} + \sqrt{N})\|y-z\|^2
\end{align*}
\begin{align*}
    \text{(c)}
    & \leq
    S \epsilon ((1+\frac{\epsilon}{2})\|y-z\|^2\|z-y^*\| + \sqrt{N}\|y-z\|^2\|z-y^*\|) \\
    & \leq 
    S R \epsilon (1 + \frac{\epsilon}{2} + \sqrt{N})\|y-z\|^2
\end{align*}
Putting these together, we have local strong monotonicity with modulus $\bar{\mu}$ for sufficiently small $\epsilon$, $\delta$ and $R$
\begin{align*}
    & \langle \, \bar{V}(x,y) - \bar{V}(x,z), \, y - z \, \rangle \\
    & \geq
    [\mu - \frac{S R N}{2} \epsilon - (S R N)(1 + \frac{\epsilon}{2} + \sqrt{N}) + \epsilon \mu - S R \epsilon (1 + \frac{\epsilon}{2} + \sqrt{N}) - \delta C_f] \|y-z\|^2
\end{align*}
Rearranging terms, we get $\bar{\mu}>0$ when
\begin{align*}
    R 
    < 
    \frac{(1+\epsilon)\mu - \delta C_f}{SN(1+\epsilon+\sqrt{N}) + S\epsilon(1+\frac{\epsilon}{2}+\sqrt{N})}
\end{align*}
\begin{align*}
\frac{(1+\epsilon)\mu - \delta C_f}{SN(1+\epsilon+\sqrt{N}) + S\epsilon(1+\frac{\epsilon}{2}+\sqrt{N})} 
& >
\frac{\mu - \delta C_f}{SN(1+\epsilon+\sqrt{N}) + S\epsilon(1+\epsilon+\sqrt{N})}  \\
& >
\frac{\mu - \delta C_f}{S(N+1)(2+\sqrt{N})} \, \text{(assume $0<\epsilon<1$)}
\end{align*}
$0 < \delta < \frac{\mu}{2 C_f}$ and $0 < R < \frac{\mu - \delta C_f}{S(N+1)(2+\sqrt{N})}$ is a sufficient condition for $\bar{\mu}>0$. \\


\noindent
To derive the smoothness bound,
\begin{align*}
    \|\nabla_{y_i}\bar{g}_i(x,y) - \nabla_{y_i}\bar{g}(x,z)\|
    &= \|
    \int_0^1 \sum_{j=1}^N J_{ji}(x, y_j + t\epsilon(y_j - y^*_j), y_{\neg j})^\top (y_j - z_j)dt \\ 
    & + \int_0^1 \sum_{j=1}^N [J_{ji}(x, y_j+t\epsilon(y_j - y^*_j), y_{\neg j})^\top - J_{ji}(x, z_j+t\epsilon(z_j - y^*_j), z_{\neg j})^\top](z_j - y^*_j) dt \\
    & + \epsilon \int_0^1 J_{ii}(x, y_i + t\epsilon(y_i-y^*_i), y_{\neg i})(y_i - z_i) dt \\
    & + \epsilon \int_0^1 [J_{ii}(x, y_i + t\epsilon(y_i-y^*_i), y_{\neg i}) - J_{ii}(x, z_i + t\epsilon(z_i-y^*_i), z_{\neg i})](z_i - y^*_i)dt \\
    & + \delta \nabla_{y_i} f(x,y) - \nabla_{y_i} f(x,z)
    \| \\
    & \leq
    C \sqrt{N} \|y-z\|
    + S \|z-y^*\|[\sqrt{N}(1+\frac{\epsilon}{2})\|y-z\| + N\|y-z\|] \\
    & + \epsilon C\|y_i-z_i\| + \epsilon S\|z-y^*\|[(1+\frac{\epsilon}{2})\|y_i-z_i\| + \|y-z\|] 
    + \delta C_f \|y-z\| \\
    & \leq
    [C \sqrt{N} + S R(\sqrt{N}(1+\frac{\epsilon}{2}) + N) + \delta C_f] \|y-z\| + \epsilon [ C + S R (2+\frac{\epsilon}{2})] \|y-z\|
\end{align*}
Hence,
\begin{align*}
    & \|\bar{V}(x,y) - \bar{V}(x,z)\| \\
    & \leq 
    \sqrt{N}\{[C \sqrt{N} + S R(\sqrt{N}(1+\frac{\epsilon}{2}) + N) + \delta C_f] + \epsilon [ C + S R (2+\frac{\epsilon}{2})]\} \|y-z\|
\end{align*}
\end{proof}

\noindent
We need to guarantee the existence of Nash equilibrium $\bar{y}^*_\delta$ in the perturbed game $\bar{\mathcal{G}}_\delta$, and guarantee that $\|\bar{y}^*_\delta - y^*\|$ shrinks with decreasing $\delta>0$.

\begin{lemma}[Existence of Nash equilibrium in the perturbed game]\label{lemma:asym_y_delta}
Let $R$ be the radius of the closed ball $\bar{B}_R(y^*)$ on which $\bar{V}$ is $\bar{\mu}$-strongly-monotone and $\bar{C}$-Lipschitz. For $0 < \delta \leq \min\{\frac{\bar{\mu} R}{2 L_f}, \frac{\mu}{2 C_f}\}$, the perturbed game $\bar{\mathcal{G}}_\delta$ has a locally unique Nash equilibrium $\bar{y}^*_\delta$ in $\bar{B}_R(y^*)$, and
\begin{align*}
    \|\bar{y}^*_\delta - y^*\| 
    \leq
    \frac{2 L_f\delta}{\bar{\mu}}
\end{align*}
\end{lemma}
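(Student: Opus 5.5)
The plan is to show that the projected (here, unprojected) gradient map $T(y) = y - \eta \bar V(x,y)$ sends $\bar B_R(y^*)$ into itself and is a contraction there. Banach's fixed point theorem then gives a unique fixed point $\bar y^*_\delta \in \bar B_R(y^*)$, i.e.\ $\bar V(x,\bar y^*_\delta)=0$. Since $\bar V$ is locally strongly monotone, each $\bar g_i$ is locally strongly convex in $y_i$ on the ball, so this zero is a (local) Nash equilibrium of $\bar{\mathcal G}_\delta$. Local uniqueness follows directly from local strong monotonicity: if $y,z$ are two zeros in the ball, then $0=\langle \bar V(x,y)-\bar V(x,z),y-z\rangle\ge \bar\mu\|y-z\|^2$.

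\textbf{Contraction.} I take the ingredients from Lemma~\ref{lemma:asym_local_strong_monotone}. On $\bar B_R(y^*)$, $\bar V$ is $\bar\mu$-strongly monotone and $\bar C$-Lipschitz. The calculation in Lemma~\ref{lemma:proj_GD_VI_contraction}, with $\Pi_Y$ the identity, only uses these two properties at the two points being compared. Hence for $y,z\in\bar B_R(y^*)$,
\begin{align*}
\|T(y)-T(z)\|\le \sqrt{\gamma}\,\|y-z\|,\qquad \gamma = 1-2\bar\mu\eta+\eta^2\bar C^2,
\end{align*}
which lies in $(0,1)$ for $0<\eta<2\bar\mu/\bar C^2$.

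\textbf{Self-map (the main step).} I need the residual at the center. Since $V(x,y^*)=0$ and $\phi_\epsilon$ is built with $y^*$, every finite-difference term vanishes at $y=y^*$ together with its $y$-gradient contribution. From the explicit formula for $\nabla_{y_i}\bar g_i$ (the integral representation in the proof of Lemma~\ref{lemma:asym_local_strong_monotone}), every term except $\delta\nabla_{y_i} f$ carries a factor $(y_j-y_j^*)$. Therefore $\bar V(x,y^*)=\delta\nabla_y f(x,y^*)$ and $\|\bar V(x,y^*)\|\le \delta L_f$.

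For $y\in\bar B_R(y^*)$ this gives
\begin{align*}
\|T(y)-y^*\| \le \|T(y)-T(y^*)\|+\|T(y^*)-y^*\| \le \sqrt\gamma R+\eta\delta L_f .
\end{align*}
I choose $\eta=\bar\mu/\bar C^2$, so $\gamma=1-\bar\mu^2/\bar C^2$ and $1-\sqrt\gamma\ge \bar\mu^2/(2\bar C^2)$. Then $\eta\delta L_f\le (1-\sqrt\gamma)R$ follows from $\delta L_f\le \bar\mu R/2$, which is exactly the hypothesis $\delta\le \bar\mu R/(2L_f)$. This is the step where the constants must line up. If the bookkeeping with $\eta$ turns out to be lossy, I would instead use a monotonicity argument: for $y$ on the sphere $\|y-y^*\|=R$, $\langle \bar V(x,y),y-y^*\rangle\ge \bar\mu R^2-\delta L_f R>0$. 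The standard acute-angle (Brouwer) existence theorem for variational inequalities on a ball then gives a zero inside.

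\textbf{Distance bound.} Take $\bar y=\bar y^*_\delta$ in the ball with $\bar V(x,\bar y)=0$. Local strong monotonicity between $\bar y$ and $y^*$ gives
\begin{align*}
\bar\mu\|\bar y-y^*\|^2\le\langle \bar V(x,\bar y)-\bar V(x,y^*),\bar y-y^*\rangle=-\langle \delta\nabla_y f(x,y^*),\bar y-y^*\rangle\le \delta L_f\|\bar y-y^*\|.
\end{align*}
So $\|\bar y^*_\delta-y^*\|\le \delta L_f/\bar\mu\le 2L_f\delta/\bar\mu$. The condition $\delta\le\mu/(2C_f)$ is used only to make Lemma~\ref{lemma:asym_local_strong_monotone} applicable with $\bar\mu>0$.
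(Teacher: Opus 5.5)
Your proof is correct and follows the paper's argument: you use the same map $T(y)=y-\eta\bar V(x,y)$ with $\eta=\bar\mu/\bar C^2$, the same self-map estimate $\sqrt{\gamma}R+\eta\delta L_f\le R$ obtained from $\sqrt{1-b}\le 1-b/2$, and Banach's theorem on the closed ball. The only deviation is the final bound. You apply local strong monotonicity between $\bar y^*_\delta$ and $y^*$, which gives the sharper $L_f\delta/\bar\mu$, whereas the paper rearranges the fixed-point recursion $\|\bar y^*_\delta-y^*\|\le\sqrt{\gamma}\|\bar y^*_\delta-y^*\|+\eta L_f\delta$. You also make explicit the identity $\bar V(x,y^*)=\delta\nabla_y f(x,y^*)$ and the local-uniqueness and local-Nash reasoning that the paper leaves implicit.
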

\begin{proof}
Let $\gamma=1-2\bar{\mu} \eta + \eta^2 \bar{C}^2$. Denote $T(y) = y-\eta \bar{V}(x,y)$. From Lemm \ref{lemma:proj_GD_VI_contraction}, we have
$\forall y,z \in \bar{B}_R(y^*)$,
\begin{align*}
    \|T(y) - T(z)\|^2
    & \leq \gamma\|y-z\|^2
\end{align*}
If $0 < \eta < \frac{2\bar{\mu}}{\bar{C}^2}$, then $0 < \gamma < 1$ and thus $T(y)$ is a contraction mapping in $y$. We want to prove that $T$ maps from $\bar{B}_{R}(y^*)$ to $\bar{B}_{R}(y^*)$. Since $T$ is a contraction, there exists a unique fixed point $\bar{y}^*_\delta = T(\bar{y}^*_\delta)$ by Banach fixed-point theorem. The fixed point is the Nash equilibrium by Lemma \ref{lemma:fixed_point}.

\noindent
Let $y \in \bar{B}_R(y^*)$,
\begin{align*}
    \|T(y) - y^*\| 
    &\leq 
    \|T(y) - T(y^*)\| + \|T(y^*) - y^*\| \\
    & \leq
    \sqrt{\gamma} \|y-y^*\| + \eta \|\bar{V}(y^*)\| \\
    & \leq
    \sqrt{\gamma} R + \eta L_f\delta
\end{align*}

\noindent
Set $\eta = \frac{\bar{\mu}}{\bar{C}^2}$, we have $1-\sqrt{\gamma} = 1 - \sqrt{1 - \frac{\bar{\mu}^2}{\bar{C}^2}}$. We need 
\begin{align*}
    \delta \leq \frac{(1-\sqrt{\gamma})R}{\eta L_f} = \frac{(1-\sqrt{1 - \frac{\bar{\mu}^2}{\bar{C}^2}})R}{\frac{\bar{\mu}}{\bar{C}^2} L_f} 
\end{align*}
For any concave function $h: \mathbb{R} \rightarrow \mathbb{R}$, $\forall a, b \in \mathbb{R}$, $h(b) \leq h(a) + \nabla h(a)^\top(b-a)$. Set $h(b) = \sqrt{1-b}$, $a=0$, we have the identity $\sqrt{1-b} \leq 1-\frac{b}{2}$. Then
\begin{align*}
    \frac{(1-\sqrt{1 - \frac{\bar{\mu}^2}{\bar{C}^2}})R}{\frac{\bar{\mu}}{\bar{C}^2} L_f} 
    \geq 
    \frac{(1-(1 - \frac{\bar{\mu}^2}{2\bar{C}^2}))R}{\frac{\bar{\mu}}{\bar{C}^2} L_f}
    =
    \frac{\bar{\mu} R}{2 L_f}
\end{align*}
A sufficient condition for the fixed point $\bar{y}^*_\delta$ to exist is $\delta \leq \frac{\bar{\mu} R}{2 L_f}$.
We can bound the distance to $y^*$ in terms of $\delta$,
\begin{align*}
    \|\bar{y}^*_\delta - y^*\|
    & \leq
    \sqrt{\gamma} \|\bar{y}^*_\delta-y^*\| + \eta L_f\delta \\
    \|\bar{y}^*_\delta - y^*\| 
    & \leq
    \frac{\eta L_f\delta}{1-\sqrt{\gamma}} \\
    \|\bar{y}^*_\delta - y^*\| 
    & \leq
    \frac{2 L_f\delta}{\bar{\mu}} \\
\end{align*}
\end{proof}

\begin{lemma}\label{lemma:asym_grad_1}
\begin{align*}
    \|\frac{\bar{y}^*_\delta - y^*}{\delta} - (J(x,y^*)^{\top} + \epsilon D(x,y^*))^{-1} (-\nabla_y f(x,y^*))\| \leq \frac{\sqrt{N}}{\bar{\mu}^2}\mathcal{O}(\delta)
\end{align*}
\end{lemma}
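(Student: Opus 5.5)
The approach is to linearize the auxiliary equilibrium condition $\bar V(x,\bar y^*_\delta)=0$ around $y^*$ and solve the resulting linear system. Throughout, I take $D(x,y^*)\define\diag\big((J_{ii}(x,y^*))_{i\in[N]}\big)$, the block-diagonal part of $J$. This is the term contributed by the $(1+\epsilon)$ factor in $\nabla_{y_i}\phi_\epsilon$.

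First, I would reuse the integral representation from the proof of Lemma~\ref{lemma:asym_local_strong_monotone}. It gives
\begin{align*}
\bar V(x,y)=M(y)(y-y^*)+\delta\nabla_y f(x,y),
\end{align*}
where the $i$-th block row of $M(y)$ is $\int_0^1\big[\sum_j J_{ji}(x,y_j+t\epsilon(y_j-y_j^*),y_{\neg j})^\top(\cdot)_j+\epsilon J_{ii}(x,y_i+t\epsilon(y_i-y_i^*),y_{\neg i})(\cdot)_i\big]dt$. At $y=y^*$ every shifted argument collapses to $y^*$, so $M(y^*)=J(x,y^*)^\top+\epsilon D(x,y^*)\define K$.

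Next, set $y=\bar y^*_\delta$, which exists and lies in $\bar B_R(y^*)$ by Lemma~\ref{lemma:asym_y_delta}. Then $\bar V(x,\bar y^*_\delta)=0$ rearranges to
\begin{align*}
K\,\frac{\bar y^*_\delta-y^*}{\delta}=-\nabla_y f(x,y^*)-\big(\nabla_y f(x,\bar y^*_\delta)-\nabla_y f(x,y^*)\big)-\frac{1}{\delta}\big(M(\bar y^*_\delta)-K\big)(\bar y^*_\delta-y^*).
\end{align*}
To invert $K$, I would note that the Jacobian of $\bar V$ at $y^*$ is $K+\delta\nabla^2_{yy}f(x,y^*)$. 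Local $\bar\mu$-strong monotonicity on the ball (Lemma~\ref{lemma:asym_local_strong_monotone}), combined with Lemma~\ref{lemma:singular_value_1}, gives $\langle d,Kd\rangle\geq(\bar\mu-\delta C_f)\|d\|^2$. Alternatively, I can read off directly from that proof (set $z=y^*$, with the $\delta$-term absent) that $K$ has modulus at least $\bar\mu+\delta C_f\geq\bar\mu$. Either way, $\|K^{-1}\|\leq 1/\bar\mu$ up to constants. Multiplying the display by $K^{-1}$, the target vector $K^{-1}(-\nabla_y f(x,y^*))$ appears, and two remainder terms are left.

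For the first remainder, $\|\nabla_y f(x,\bar y^*_\delta)-\nabla_y f(x,y^*)\|\leq C_f\|\bar y^*_\delta-y^*\|\leq 2C_fL_f\delta/\bar\mu$ by Lemma~\ref{lemma:asym_y_delta}. After applying $K^{-1}$ this is $\bigO(\delta/\bar\mu^2)$.

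The second remainder is where I expect the main work to lie. I need $\|M(\bar y^*_\delta)-M(y^*)\|\leq c\,S\sqrt N\,\|\bar y^*_\delta-y^*\|$. The plan is to use the $S$-Hessian Lipschitz assumption on each block $J_{ji}$: each shifted argument moves by at most $(1+\epsilon)\|\bar y^*_\delta-y^*\|$. I would then sum the blocks with the same Jensen/Cauchy--Schwarz trick ($\sum_j\|a_j\|\leq\sqrt N\|a\|$) used for terms (a)--(c) in Lemma~\ref{lemma:asym_local_strong_monotone}, and keep the $\epsilon$-block-diagonal part separate so that it does not add extra factors of $N$. With this bound, the remainder is at most $\frac{1}{\delta}\cdot cS\sqrt N\big(\frac{2L_f\delta}{\bar\mu}\big)^2$, which is $\bigO(\sqrt N\,\delta/\bar\mu^2)$. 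The extra $1/\bar\mu$ from $K^{-1}$ is absorbed into the stated $\bigO$, since $\bar\mu\leq(1+\epsilon)\mu$ is bounded above by a constant.

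Summing the two remainder bounds gives the claimed $\frac{\sqrt N}{\bar\mu^2}\bigO(\delta)$.
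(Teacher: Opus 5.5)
\textbf{Gap: the bound on $K^{-1}$.} Your decomposition is the paper's. Writing $\bar V(x,\bar y^*_\delta)=0$ as $K(\bar y^*_\delta-y^*)+\delta\nabla_y f(x,y^*)+\text{remainder}=0$ with $K=J(x,y^*)^\top+\epsilon D(x,y^*)$ is exactly the split into terms (a)--(e) used there. Your remainder estimates (via $C_f$, the blockwise $S$-Hessian Lipschitz bound, and $\|\bar y^*_\delta-y^*\|\le 2L_f\delta/\bar\mu$) are fine.

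You bound $\|K^{-1}\|$ by roughly $1/\bar\mu$. The $(M(\bar y^*_\delta)-K)$ remainder then becomes $\frac{1}{\bar\mu}\cdot\bigO(\sqrt N\delta/\bar\mu^2)=\bigO(\sqrt N\delta/\bar\mu^3)$. You then say the extra $1/\bar\mu$ is absorbed because $\bar\mu\le(1+\epsilon)\mu$. That inequality runs the wrong way: an upper bound on $\bar\mu$ gives $1/\bar\mu\ge 1/((1+\epsilon)\mu)$, so it cannot absorb a factor of $1/\bar\mu$. Absorbing it would need $\bar\mu$ bounded \emph{below} by a constant, and nothing in the hypotheses gives that independently of $R$, $\epsilon$ and $N$. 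As written, you have only proved $\frac{\sqrt N}{\bar\mu^3}\bigO(\delta)$.

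Your first route to invertibility is also unsafe. The quantity $\bar\mu-\delta C_f$ need not be positive, because $\bar\mu$ already has $\delta C_f$ subtracted and can be smaller than it.

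\textbf{Fix.} Use the \emph{global} coercivity of $K$ rather than the local modulus $\bar\mu$; this is what the paper does.
By Lemma~\ref{lemma:singular_value_1}, $\langle d,J(x,y^*)d\rangle\ge\mu\|d\|^2$ for all $d$, and $\langle d,J^\top d\rangle$ is the same quadratic form. Taking $d$ supported on a single block gives $\langle d_i,J_{ii}(x,y^*)d_i\rangle\ge\mu\|d_i\|^2$, hence $\langle d,Dd\rangle\ge\mu\|d\|^2$. Therefore $\langle d,Kd\rangle\ge(1+\epsilon)\mu\|d\|^2$ and $\|K^{-1}\|\le\frac{1}{(1+\epsilon)\mu}$, a constant independent of $\bar\mu$.

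With this bound, the $(M(\bar y^*_\delta)-K)$ remainder is $\bigO(\sqrt N\delta/\bar\mu^2)$. The $\nabla_y f$ remainder is $\bigO(\delta/\bar\mu)$, which is $\bigO(\delta/\bar\mu^2)$ precisely because $\bar\mu\le(1+\epsilon)\mu$. That is the correct use of your upper bound. The rest of your argument then goes through unchanged.
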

\begin{proof}
Let $y=\bar{y}^*_\delta$, for all $i \in [N]$, we have 
\begin{align*}
    0 &=
    \nabla_{y_i} \bar{g}_i(x,y) \\
    &= 
    \int_0^1 \sum_{j=1}^N J_{ji}(x, y_j + t\epsilon(y_j - y^*_j), y_{\neg j})^\top (y_j - y^*_j) dt \\
    & + \epsilon \int_0^1 J_{ii}(x, y_i + t\epsilon(y_i - y^*_i), y_{\neg i}) (y_i - y^*_i) dt + \delta \nabla_{y_i} f(x,y) \\ 
    &=
    \underbrace{\int_0^1 \sum_{j=1}^N [J_{ji}(x, y_j + t\epsilon(y_j - y^*_j), y_{\neg j})^\top - J_{ji}(x, y^*_j, y_{\neg j})^\top] (y_j - y^*_j) dt}_{\text{(a)}} \\
    & + \underbrace{\epsilon \int_0^1 [J_{ii}(x, y_i + t\epsilon(y_i - y^*_i), y_{\neg i}) - J_{ii}(x, y^*_i, y_{\neg i})] (y_i - y^*_i) dt}_{\text{(b)}}\\
    & + \underbrace{\delta(\nabla_{y_i} f(x,y) - \nabla_{y_i} f(x,y^*))}_{\text{(c)}} \\
    & + \underbrace{\sum_{j=1}^N [J_{ji}(x, y^*_j, y_{\neg j})^\top - J_{ji}(x, y^*)^\top] (y_j - y^*_j)}_{\text{(d)}} \\
    & + \underbrace{[J_{ii}(x, y^*_i, y_{\neg i})^\top - J_{ii}(x, y^*)^\top] (y_i - y^*_i)}_{\text{(e)}} \\
    & + \sum_{j=1}^N J_{ji}(x,y^*)^\top(y_j - y^*_j) + \epsilon J_{ii}(x, y^*)(y_i - y^*_i) + \delta \nabla_{y_i}f(x,y^*)
\end{align*}
\begin{align*}
    & \|a\| \leq S(1+\frac{\epsilon}{2})\|y-y^*\|^2 \\
    & \|b\| \leq S(1+\frac{\epsilon}{2})\|y-y^*\|^2 \\
    & \|c\| \leq \delta L_f \|y-y^*\| \\
    & \|d\| \leq S \|y-y^*\|^2 \\
    & \|e\| \leq \epsilon S \|y-y^*\|^2
\end{align*}
Combining the equations for $i=1,2,\cdots,N$,
\begin{align*}
    J(x,y^*)^\top (y-y^*) + \epsilon D(x,y^*) (y-y^*) + \delta \nabla_y f(x,y^*) + \text{extra} = 0
\end{align*}
where 
$\|\text{extra}\| \leq \sqrt{N}S(1+\frac{\epsilon}{2})\|y-y^*\|^2 
+ S(1+\frac{\epsilon}{2})\|y-y^*\|^2 
+ \delta L_f \|y-y^*\|
+ S \|y-y^*\|^2
+ \epsilon S \|y-y^*\|^2$. 
Solving the equation, we have
\begin{align*}
    \frac{\bar{y}^*_\delta - y^*}{\delta}
    =
    (J(x,y^*)^\top + \epsilon D(x,y^*))^{-1} (-\nabla_y f(x,y^*) -\frac{\text{extra}}{\delta} )
\end{align*}
Using Lemma \ref{lemma:asym_y_delta}, we have
\begin{align*}
    \|\frac{\bar{y}^*_\delta - y^*}{\delta} - (J(x,y^*)^{\top} + \epsilon D(x,y^*))^{-1} (-\nabla_y f(x,y^*))\|
    & \leq
    \|(J(x,y^*)^\top + \epsilon D(x,y^*))^{-1}\| \|\frac{\text{extra}}{\delta}\| \\
    & \leq \frac{1}{\mu(1+\epsilon)} [\mathcal{O}(\frac{\sqrt{N}\delta}{\bar{\mu}^2}) + \mathcal{O}(\frac{\sqrt{N}\delta \epsilon}{\bar{\mu}^2}))] \\
    & \leq
    \mathcal{O}(\frac{\sqrt{N}\delta}{\bar{\mu}^2}) \quad \text{(we ignore $\mathcal{O}(\delta \epsilon)$)}
\end{align*}
\end{proof}

\begin{lemma}\label{lemma:asym_grad_2}
\begin{align*}
    \|(J(x,y^*)^{\top} + \epsilon D(x,y^*))^{-1} (-\nabla_y f(x,y^*)) - J(x,y^*)^{-\top}(-\nabla_y f(x,y^*))\|
    \leq 
    \mathcal{O}(\epsilon)
\end{align*}
\end{lemma}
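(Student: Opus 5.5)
This is a matrix perturbation bound: write $J \define J(x,y^*)$, $D \define D(x,y^*)$, $g \define -\nabla_y f(x,y^*)$. Here $D$ is the block-diagonal matrix with blocks $J_{ii}(x,y^*)$, as it arises from the $\epsilon J_{ii}$ terms in the proof of Lemma~\ref{lemma:asym_grad_1}. The plan is to use the resolvent identity $A^{-1}-B^{-1}=A^{-1}(B-A)B^{-1}$ with $A=J^\top+\epsilon D$ and $B=J^\top$, which gives
\begin{align*}
(J^\top+\epsilon D)^{-1}g - J^{-\top}g = -\epsilon\,(J^\top+\epsilon D)^{-1} D\, J^{-\top} g .
\end{align*}
Taking norms, the left-hand side is at most $\epsilon\,\|(J^\top+\epsilon D)^{-1}\|\,\|D\|\,\|J^{-\top}\|\,\|g\|$. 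It then suffices to bound each of these four factors by a constant independent of $\epsilon$.

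The bounds go in this order.

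First, $\|g\|\le L_f$ follows from the $L_f$-Lipschitzness of $f$ in Assumption~\ref{assum:optim}.

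Second, by Assumption~\ref{assum:VI} and Lemma~\ref{lemma:singular_value_1}, $\langle d, Jd\rangle\ge\mu\|d\|^2$, so $\|J^{-\top}\|=\|J^{-1}\|\le 1/\mu$.

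Third, $\|D\|\le C$, because each diagonal block $J_{ii}$ is a principal submatrix of $J$ and Lemma~\ref{lemma:singular_value_2} gives $\|J\|\le C$; the norm of a block-diagonal matrix is the maximum of its block norms.

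Fourth, and most delicate, is $\|(J^\top+\epsilon D)^{-1}\|$. Since $\langle d,J^\top d\rangle=\langle d,Jd\rangle\ge\mu\|d\|^2$, it suffices to show $\langle d,Dd\rangle\ge\mu\|d\|^2$. Taking $d$ supported on a single block $i$ in the strong monotonicity of $J$ gives $\langle d_i,J_{ii}d_i\rangle\ge\mu\|d_i\|^2$, and summing over blocks yields $\langle d,Dd\rangle=\sum_i\langle d_i,J_{ii}d_i\rangle\ge\mu\|d\|^2$. Hence $J^\top+\epsilon D$ satisfies $\langle d,(J^\top+\epsilon D)d\rangle\ge(1+\epsilon)\mu\|d\|^2$. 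The argument of Lemma~\ref{lemma:singular_value_1}, applied to this matrix, then shows it is invertible with $\|(J^\top+\epsilon D)^{-1}\|\le \frac{1}{(1+\epsilon)\mu}\le\frac1\mu$. This is the same bound already used in the proof of Lemma~\ref{lemma:asym_grad_1}.

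Combining the four bounds gives
\begin{align*}
\|(J^\top+\epsilon D)^{-1}g-J^{-\top}g\|\le \frac{\epsilon\, C L_f}{\mu^2(1+\epsilon)}\le \frac{C L_f}{\mu^2}\,\epsilon=\mathcal{O}(\epsilon).
\end{align*}
The only step requiring care is the invertibility and uniform bound on $(J^\top+\epsilon D)^{-1}$, which rests on the observation that the diagonal blocks inherit strong monotonicity from $J$. Everything else is routine.
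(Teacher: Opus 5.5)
Your proof is correct and follows the paper's argument: the resolvent identity $A^{-1}-B^{-1}=A^{-1}(B-A)B^{-1}$ with $A=J^\top+\epsilon D$ and $B=J^\top$, followed by the operator-norm bounds $\|(J^\top+\epsilon D)^{-1}\|\le \frac{1}{(1+\epsilon)\mu}$, $\|J^{-\top}\|\le \frac{1}{\mu}$ and $\|\nabla_y f\|\le L_f$. Your version is slightly more careful than the paper's: you justify that the diagonal blocks $J_{ii}$ inherit $\mu$-strong monotonicity, and you keep the factor $\|D\|\le C$, which the paper's displayed constant $\frac{\epsilon L_f}{\mu^2(1+\epsilon)}$ omits (this omission does not affect the $\mathcal{O}(\epsilon)$ conclusion).
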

\begin{proof}
\begin{align*}
     & \|(J(x,y^*)^{\top} + \epsilon D(x,y^*))^{-1} (-\nabla_y f(x,y^*)) - J(x,y^*)^{-\top}(-\nabla_y f(x,y^*))\| \\
     &= 
     \|(J(x,y^*)^{\top} + \epsilon D(x,y^*))^{-1} [J(x,y^*)^{\top}-(J(x,y^*)^{\top} + \epsilon D(x,y^*))]J(x,y^*)^{-\top}(-\nabla_y f(x,y^*))\| \\
     &\leq
     \frac{\epsilon}{\mu^2(1+\epsilon)} L_f \\
     &=
     \mathcal{O}(\epsilon)
\end{align*}
The first equality uses the matrix identity $A^{-1} - B^{-1} = A^{-1} (B - A) B^{-1}$ for any invertible matrices $A$ and $B$ of the same dimensions.
\end{proof}

\subsection{Proof of Theorem \ref{theorem:asym_grad_approx}}
\begin{proof}
\begin{align*}
    v_\delta(x,\bar{y}^*_\delta) 
    &= 
    \frac{1}{\delta}\sum_{i=1}^N [\nabla_x g_i(x,\bar{y}^*_{\delta,i},y^*_{\neg i}) - \nabla_xg_i(x,y^*)] \\
    &=
    \frac{1}{\delta} \sum_{i=1}^N [\partialD{\nabla_x g_i(x,y^*)}{y_i} (\bar{y}^*_{\delta,i} - y^*_i) + \text{H.O.T.}] \\
    &=
    \partialD{V(x,y^*)}{x}^\top \frac{\bar{y}^*_{\delta} - y^*}{\delta} + N\frac{\text{H.O.T.}}{\delta}]
\end{align*}
where $\|\text{H.O.T.}\| = \mathcal{O}(\|\bar{y}^*_{\delta} - y^*\|^2) \leq \mathcal{O}(\frac{\delta^2}{\bar{\mu}^2})$ by Lemma \ref{lemma:asym_y_delta}. Using Lemma \ref{lemma:asym_grad_1} and \ref{lemma:asym_grad_2}, we have
\begin{align*}
    & \|v_\delta(x,\bar{y}^*_\delta) - \partialD{V(x,y^*)}{x}^\top \partialD{V(x,y^*)}{y}^{-\top} (-\nabla_y f(x,y^*))\| \\
    & \leq 
    \|v_\delta(x,\bar{y}^*_\delta) - \partialD{V(x,y^*)}{x}^\top (J(x,y^*)^{\top} + \epsilon D(x,y^*))^{-1} (-\nabla_y f(x,y^*))\| \\
    & + \|\partialD{V(x,y^*)}{x}^\top (J(x,y^*)^{\top} + \epsilon D(x,y^*))^{-1} (-\nabla_y f(x,y^*)) - \partialD{V(x,y^*)}{x}^\top \partialD{V(x,y^*)}{y}^{-\top} (-\nabla_y f(x,y^*))\| \\
    & \leq
    C(\mathcal{O}(\frac{\sqrt{N}\delta}{\bar{\mu}^2}) + \mathcal{O}(\frac{N \delta}{\bar{\mu}^2})
    + \mathcal{O}(\epsilon)) \\
    & = 
    \mathcal{O}(\frac{N \delta}{\bar{\mu}^2} + \epsilon)
\end{align*}
\end{proof}

\section{Stackelberg Game with Linear Equality Constraints}\label{sec:asym_constr}
By construction, $\bar{\mathcal{G}}_\delta$ has the same local strong monotonicity modulus $\bar{\mu}$ and local smoothness modulus $\bar{C}$ on the ball $\bar{B}_R(y^*)$ as the unconstrained case. We are making the following assumption. 
\begin{assumption}[Constraint qualification (LICQ)]\label{assum:LICQ} 
Assume the rows of the Jacobian of equality constraints and active inequality constraints are linearly independent for each player.
\end{assumption}

\begin{lemma}[Existence of Nash equilibrium in the perturbed game]\label{lemma:asym_y_delta_constr}
Let $R$ be the radius of the closed ball $\bar{B}_R(y^*)$ on which $\bar{V}$ is $\bar{\mu}$-strongly-monotone and $\bar{C}$-Lipschitz. For $0 < \delta \leq \frac{\bar{\mu} R}{2 L_f}$, the perturbed game $\bar{\mathcal{G}}_\delta$ has a locally unique Nash equilibrium $\bar{y}^*_\delta$ in $\bar{B}_R(y^*)$, and
\begin{align*}
    \|\bar{y}^*_\delta - y^*\| 
    \leq
    \frac{2 L_f\delta}{\bar{\mu}}
\end{align*}
\begin{align*}
    \|\bar{\lambda}^*_\delta - \lambda^*\| 
    \leq 
    \|B^{\dagger}\|L_f(1 + 2\frac{\bar{C}}{\bar{\mu}}) \delta
\end{align*}
\end{lemma}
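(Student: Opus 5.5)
The plan is to mirror the unconstrained proof of Lemma~\ref{lemma:asym_y_delta}, replacing the plain gradient map by a projected map on the affine feasible set, and then to recover the multiplier bound from the stationarity part of the KKT system. Let $Y(x)=\{y: e^{(i)}(x,y_i)=0,\ \forall i\}$, which is affine because the constraints are linear, and set $K\define Y(x)\cap\bar B_R(y^*)$. The set $K$ is closed and convex, and it is nonempty since it contains $y^*$. Define
\[
T(y)=\Pi_{Y(x)}\bigl(y-\eta\bar V(x,y)\bigr),
\]
where $\bar V$ is the operator of $\bar g_i=\phi_\epsilon-g_i-(\lambda^*)^\top By+\delta f$. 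The extra term $-(\lambda^*)^\top By$ is linear in $y$. It therefore shifts $\bar V$ by the constant $-B^\top\lambda^*$, and $\bar V$ keeps the local modulus $\bar\mu$ and Lipschitz constant $\bar C$ on $\bar B_R(y^*)$ from Lemma~\ref{lemma:asym_local_strong_monotone}. The argument of Lemma~\ref{lemma:proj_GD_VI_contraction} then applies verbatim on $K$: projection onto an affine set is nonexpansive, so $\|T(y)-T(z)\|\le\sqrt{\gamma}\|y-z\|$ with $\gamma=1-2\bar\mu\eta+\eta^2\bar C^2$.

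\textbf{Self-map and existence.} I need a bound on $\|T(y^*)-y^*\|$. Because $y^*\in Y(x)$, we can write $\|T(y^*)-y^*\|=\|\Pi_{Y(x)}(y^*-\eta\bar V(x,y^*))-\Pi_{Y(x)}(y^*)\|\le\eta\|P_{\ker B}\bar V(x,y^*)\|$. At $y=y^*$ the potential part satisfies $\nabla_y\phi_\epsilon-V=0$, since every finite difference vanishes when $y=y^*$. Hence $\bar V(x,y^*)=-B^\top\lambda^*+\delta\nabla_yf(x,y^*)$. The component $B^\top\lambda^*$ is annihilated by the projection onto $\ker B$, which leaves $\|T(y^*)-y^*\|\le\eta L_f\delta$. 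This is the key place where the constrained construction is used, because subtracting $(\lambda^*)^\top By$ makes $y^*$ nearly a fixed point.

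From here the computation is identical to Lemma~\ref{lemma:asym_y_delta}. Take $\eta=\bar\mu/\bar C^2$ and use $\sqrt{1-b}\le 1-b/2$; then $\delta\le\bar\mu R/(2L_f)$ gives $T(K)\subseteq K$. Banach's fixed-point theorem yields a unique fixed point $\bar y^*_\delta\in K$, and Lemma~\ref{lemma:fixed_point} identifies it as a solution of the VI on $Y(x)\cap\bar B_R$. Since the contraction estimate keeps the fixed point strictly interior to the ball for small $\delta$, the VI holds locally on $Y(x)$. Corollary~\ref{corollary:local_nash_KKT} under Assumption~\ref{assum:LICQ} then makes $\bar y^*_\delta$ a local Nash equilibrium. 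The fixed-point inequality $\|\bar y^*_\delta-y^*\|\le\sqrt\gamma\|\bar y^*_\delta-y^*\|+\eta L_f\delta$ gives $\|\bar y^*_\delta-y^*\|\le 2L_f\delta/\bar\mu$.

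\textbf{Multiplier bound.} The auxiliary KKT stationarity condition is $\bar V(x,\bar y^*_\delta)+B^\top\bar\lambda^*_\delta=0$. I will write $\bar\lambda^*_\delta$ for the full multiplier of the constraint $By+Ax+b=0$ and compare it with $\lambda^*$ through the representation $\bar V(x,y^*)=-B^\top\lambda^*+\delta\nabla_y f(x,y^*)$. Subtracting the two relations gives
\[
B^\top(\bar\lambda^*_\delta-\lambda^*)=-\bigl[\bar V(x,\bar y^*_\delta)-\bar V(x,y^*)\bigr]-\delta\nabla_y f(x,y^*).
\]
Since $B$ has full row rank, I can apply $(B^\dagger)^\top$ to both sides. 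This gives
\[
\|\bar\lambda^*_\delta-\lambda^*\|\le\|B^\dagger\|\bigl(\bar C\|\bar y^*_\delta-y^*\|+\delta L_f\bigr)\le\|B^\dagger\|L_f\Bigl(1+\frac{2\bar C}{\bar\mu}\Bigr)\delta,
\]
which is the stated bound.

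\textbf{Main obstacle.} I expect the main difficulty to be bookkeeping about the multiplier convention, that is, whether $\bar\lambda^*_\delta$ absorbs the $-(\lambda^*)^\top By$ shift. The convention has to be chosen so that the difference $\bar\lambda^*_\delta-\lambda^*$ comes out as above; if the shift is absorbed the other way, one bounds $\bar\lambda^*_\delta$ itself by the same argument. A second point is making sure $\bar C$-Lipschitzness is only invoked inside $\bar B_R(y^*)$, which holds because both $y^*$ and $\bar y^*_\delta$ lie in that ball.
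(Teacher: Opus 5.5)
Your proposal is correct and follows essentially the same route as the paper. Both arguments use the projected contraction $T(y)=\Pi_{Y}(y-\eta\bar V(x,y))$ on $\bar B_R(y^*)$, the same estimate $\|T(y^*)-y^*\|\le\eta L_f\delta$ with step size $\eta=\bar\mu/\bar C^2$, and the same KKT subtraction inverted by $(B^\dagger)^\top$ for the multiplier. The paper gets the estimate by inserting $H(y^*)=\Pi_Y(y^*-\eta V(x,y^*))=y^*$ and using $V(x,y^*)=-B^\top\lambda^*$, whereas you project $\bar V(x,y^*)$ onto $\ker B$; this is the same computation.

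One simplification: because $T$ projects onto all of $Y(x)$, Lemma~\ref{lemma:fixed_point} already gives the variational inequality on $Y(x)$, and hence the existence of $\bar\lambda^*_\delta$. Your interiority argument for the fixed point is therefore unnecessary.
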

\begin{proof}
Let $\gamma=1-2\bar{\mu} \eta + \eta^2 \bar{C}^2$. The feasible set is $Y = \{(y_1, \cdots,y_N) \in \mathbb{R}^n : e^{(i)}(x,y_i)=0 \, , \forall i \in [N]\}$. Denote $T(y) = \Pi_Y(y-\eta \bar{V}(x,y))$. From Lemma \ref{lemma:proj_GD_VI_contraction}, we have
$\forall y,z \in \bar{B}_R(y^*)$,
\begin{align*}
    \|T(y) - T(z)\|^2
    & \leq \gamma\|y-z\|^2
\end{align*}
If $0 < \eta < \frac{2\bar{\mu}}{\bar{C}^2}$, then $0 < \gamma < 1$ and thus $T(y)$ is a contraction mapping in $y$. We want to prove that $T$ maps from $\bar{B}_{R}(y^*)$ to $\bar{B}_{R}(y^*)$. Since $T$ is a contraction, there exists a unique fixed point $\bar{y}^*_\delta = T(\bar{y}^*_\delta)$ by Banach fixed-point theorem. The fixed point is the Nash equilibrium by Lemma \ref{lemma:fixed_point}.

\noindent
Denote $H(y) = \Pi_Y(y - \eta V(x,y))$. We know that $y^* = H(y^*)$. KKT condition says that $V(x,y^*) + B^\top \lambda^* = 0$. Let $y \in \bar{B}_R(y^*)$,
\begin{align*}
    \|T(y) - y^*\| 
    & \leq
    \|T(y) - T(y^*)\| + \|T(y^*) - H(y^*)\| + \|H(y^*) - y^*\| \\
    & \leq
    \sqrt{\gamma} \|y-y^*\| + \eta \|- V(x,y^*) - B^\top \lambda^* - \delta \nabla_y f(x,y^*)\| + 0 \\
    & \leq 
    \sqrt{\gamma} \|y-y^*\| + \eta L_f \delta \\
    & \leq
    \sqrt{\gamma} R + \eta \delta L_f
\end{align*}
Thus, we can show the existence of the Nash equilibrium $\bar{y}^*_\delta$ and the desired bound on $\|\bar{y}^*_\delta - y^*\|$ using the same steps in Lemma \ref{lemma:asym_y_delta}.

\noindent
By KKT condition,
\begin{align*}
    & \bar{V}(x,\bar{y}^*_\delta) + B^\top \bar{\lambda}^*_\delta = 0 \\
    & V(x,y^*) + B^\top \lambda^* = 0 
\end{align*}
subtracting these two equations and rearranging terms,
\begin{align*}
    \|\bar{\lambda}^*_\delta - \lambda^*\| 
    & \leq 
    \|B^{\dagger}\| \|\bar{V}(x,\bar{y}^*_\delta) - \bar{V}(x,y^*) + \bar{V}(x,y^*) - V(x,y^*)\| \\
    & \leq 
    \|B^{\dagger}\|(\bar{C}\|\bar{y}^*_\delta - y^*\| + \delta L_f) \\
    & \leq
    \|B^{\dagger}\|L_f(1 + 2\frac{\bar{C}}{\bar{\mu}}) \delta \\
\end{align*}
\end{proof}

\noindent
Let $K(x,y) = \begin{bmatrix}\partialD{V(x,y)}{y} & B^\top \\ B & 0\end{bmatrix}$. Let $K_\epsilon(x,y) = \begin{bmatrix}\partialD{V(x,y)}{y}+\epsilon D(x,y) & B^\top \\ B & 0\end{bmatrix}$.

\begin{lemma}\label{lemma:KKT_matrix_bound}
Given a matrix $M = \begin{bmatrix}J & B^\top \\ B & 0\end{bmatrix}$ where $J$ is invertible. Then,
\begin{align*}
    \|M^{-1}\| \leq (1 + \|J^{-1}\|\|B\|)^2\max(\|J^{-1}\|, \|(B J^{-1} B^\top)^{-1}\|)
\end{align*}
\end{lemma}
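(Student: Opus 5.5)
The plan is a block LDU (Schur complement) factorization of $M$, followed by bounding each factor separately. Write $S \define -BJ^{-1}B^\top$, the Schur complement of $J$ in $M$. The intended factorization is
\begin{align*}
M = \begin{bmatrix} I & 0 \\ BJ^{-1} & I\end{bmatrix}
\begin{bmatrix} J & 0 \\ 0 & S\end{bmatrix}
\begin{bmatrix} I & J^{-1}B^\top \\ 0 & I\end{bmatrix}.
\end{align*}
Multiplying out recovers $M$. The bottom-right block is $BJ^{-1}J J^{-1}B^\top + S = BJ^{-1}B^\top - BJ^{-1}B^\top = 0$, as required. Invertibility of $S$ is implicitly assumed by the statement, since $\|(BJ^{-1}B^\top)^{-1}\|$ appears on the right-hand side. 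In our applications it is supplied by Lemma~\ref{lemma:singular_value_3} when $J$ satisfies $\langle d, Jd\rangle \ge \mu\|d\|^2$ and $B$ has full row rank.

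Inverting each factor gives
\begin{align*}
M^{-1} = \begin{bmatrix} I & -J^{-1}B^\top \\ 0 & I\end{bmatrix}
\begin{bmatrix} J^{-1} & 0 \\ 0 & S^{-1}\end{bmatrix}
\begin{bmatrix} I & 0 \\ -BJ^{-1} & I\end{bmatrix}.
\end{align*}
The norm bound then follows from submultiplicativity, using three facts:
\begin{enumerate}[label=(\roman*)]
\item The norm of a block-diagonal matrix is the maximum of its block norms, so $\|\diag(J^{-1},S^{-1})\| = \max(\|J^{-1}\|, \|(BJ^{-1}B^\top)^{-1}\|)$, because $\|S^{-1}\|=\|(BJ^{-1}B^\top)^{-1}\|$.
\item A unit block-triangular matrix $\begin{bmatrix} I & X \\ 0 & I\end{bmatrix}$ equals $I$ plus an off-diagonal block, so its norm is at most $1+\|X\|$ by the triangle inequality. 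With $X = -J^{-1}B^\top$, this gives $\|X\|\le \|J^{-1}\|\|B\|$.
\item By the same argument, the lower factor has norm at most $1+\|BJ^{-1}\| \le 1+\|B\|\|J^{-1}\|$.
\end{enumerate}
Multiplying the three bounds yields exactly $(1+\|J^{-1}\|\|B\|)^2\max(\|J^{-1}\|,\|(BJ^{-1}B^\top)^{-1}\|)$.

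The only genuine care point is confirming the factorization and the invertibility of $S$, and that is routine. The remaining steps are direct norm estimates. Note also that $\|B^\top\|=\|B\|$ by Lemma~\ref{lemma:matrix_norm}, since transposition preserves singular values.
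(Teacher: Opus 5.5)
Your proposal is correct and follows the paper's proof essentially verbatim. Both use the same block LDU factorization through the Schur complement $S=-BJ^{-1}B^\top$, invert the three factors, and apply submultiplicativity. You go slightly further by spelling out the norm bounds for the unit-triangular and block-diagonal factors, and by noting that the statement implicitly requires $BJ^{-1}B^\top$ to be invertible; the paper leaves both points unstated.
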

\begin{proof}
Let $S = - B J^{-1} B^\top$. 
Let $L = \begin{bmatrix}I & 0 \\ -B J^{-1} & I\end{bmatrix}$ 
and $U = \begin{bmatrix}I & -J^{-1}B^\top \\ 0 & I\end{bmatrix}$. Using $L$ to eliminate rows of $M$ and $U$ to eliminates columns of $M$, we have
\begin{align*} 
    M = L^{-1}
    \begin{bmatrix} 
    J & 0 \\ 0 & S
    \end{bmatrix}
    U^{-1}
\end{align*}
Taking inverse,
\begin{align*} 
    M^{-1} 
    & = 
    U 
    \begin{bmatrix} 
    J^{-1} & 0 \\ 0 & S^{-1}
    \end{bmatrix}
    L \\
    & = 
    \begin{bmatrix}I & -J^{-1}B^\top \\ 0 & I\end{bmatrix} 
    \begin{bmatrix} 
    J^{-1} & 0 \\ 0 & S^{-1}
    \end{bmatrix}
    \begin{bmatrix}I & 0 \\ -B J^{-1} & I\end{bmatrix}
\end{align*}
Then,
\begin{align*}
    \|M^{-1}\| \leq (1 + \|J^{-1}\|\|B\|)^2\max(\|J^{-1}\|, \|(B J^{-1} B^\top)^{-1}\|)
\end{align*}
\end{proof}

\begin{lemma}\label{lemma:asym_grad_1_constr}
\begin{align*}
    \|
    \begin{bmatrix} 
    \frac{\bar{y}^*_\delta - y^*}{\delta} \\
    \frac{\bar{\lambda}^*_\delta - \lambda^*}{\delta}
    \end{bmatrix}
    - 
    (K_\epsilon(x,y^*))^{-\top} 
    (-\begin{bmatrix} 
    \nabla_y f(x,y^*) \\ 0 
    \end{bmatrix}) 
    \| \leq 
    \mathcal{O}(\frac{\sqrt{N}\delta}{\bar{\mu}^2})
\end{align*}
\end{lemma}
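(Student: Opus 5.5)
We mirror the proof of Lemma~\ref{lemma:asym_grad_1}, replacing the game Jacobian by the KKT matrix. Write $y=\bar{y}^*_\delta$ and $\lambda=\bar{\lambda}^*_\delta$, and set $\Delta_y = y-y^*$, $\Delta_\lambda=\lambda-\lambda^*$. We start from the two KKT systems. For the perturbed game, $\bar V(x,y) - B^\top\lambda^* + B^\top \lambda = 0$ together with $By + Ax + b = 0$; note that the term $-(\lambda^*)^\top By$ in $\bar g_i$ contributes $-B^\top\lambda^*$ to $\bar V$. For the original game, $V(x,y^*)+B^\top\lambda^*=0$ together with $By^*+Ax+b=0$. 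Subtracting the primal-feasibility rows gives exactly $B\Delta_y = 0$.

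For the stationarity rows, we expand $\bar V(x,y)$ around $y^*$ exactly as in Lemma~\ref{lemma:asym_grad_1}. The unconstrained part $\nabla_y\phi_\epsilon - V + \delta\nabla_y f$ equals
\[
J(x,y^*)^\top\Delta_y + \epsilon D(x,y^*)\Delta_y + \delta\nabla_y f(x,y^*) + \text{extra},
\]
where $\|\text{extra}\| \le \mathcal{O}(\sqrt N S\|\Delta_y\|^2 + \delta L_f\|\Delta_y\|)$. This uses the same terms (a)--(e) with the Hessian-Lipschitz constant $S$. The multiplier contributions then combine to $B^\top\Delta_\lambda$. We need to handle carefully the fact that the original equilibrium has $V(x,y^*) = -B^\top\lambda^*\neq 0$, so $\phi_\epsilon$'s gradient identity should be used with the $-V$ term cancelling against the $-B^\top\lambda^*$ shift, leaving only $B^\top\Delta_\lambda$. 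Collecting the rows yields the linear system
\[
K_\epsilon(x,y^*)^\top \begin{bmatrix}\Delta_y\\ \Delta_\lambda\end{bmatrix} = \begin{bmatrix} -\delta\nabla_y f(x,y^*) - \text{extra}\\ 0\end{bmatrix}.
\]
This matches the statement once we use that the lower block of $K_\epsilon^\top$ is $[B\ \ 0]$ and the upper block is $[(J+\epsilon D)^\top\ \ B^\top]$. Here we should interpret $K_\epsilon^\top$ as having $(J+\epsilon D)^\top$ in the top-left and $B^\top$, $B$ unchanged, since $K_\epsilon$ is symmetric in its off-diagonal blocks.

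Dividing by $\delta$ and inverting, the error is $\|K_\epsilon(x,y^*)^{-\top}\|\,\|\text{extra}\|/\delta$. Lemma~\ref{lemma:asym_y_delta_constr} gives $\|\Delta_y\|\le 2L_f\delta/\bar\mu$, so $\|\text{extra}\|/\delta = \mathcal{O}(\sqrt N\delta/\bar\mu^2)$.

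The main obstacle is bounding $\|K_\epsilon^{-\top}\|$ uniformly, and we use Lemma~\ref{lemma:KKT_matrix_bound} for this. With $J$ replaced by $(J+\epsilon D)^\top$, strong monotonicity gives $\langle d,(J+\epsilon D)^\top d\rangle\ge \mu(1+\epsilon)\|d\|^2$, as in the bound $\|(J^\top+\epsilon D)^{-1}\|\le 1/(\mu(1+\epsilon))$ used in Lemma~\ref{lemma:asym_grad_1}. This requires $D$ to be the block-diagonal $J_{ii}$ part, which is positive definite with modulus $\mu$. By Lemma~\ref{lemma:singular_value_3}, the Schur complement satisfies $\sigma_{\min}\ge \mu\sigma_{\min}(B)^2/(C(1+\epsilon))^2$, using LICQ (Assumption~\ref{assum:LICQ}). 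These give a constant bound on $\|K_\epsilon^{-\top}\|$, independent of $\delta$, that depends only on $\mu, C, B$. Multiplying through yields the claimed $\mathcal{O}(\sqrt N\delta/\bar\mu^2)$, where as before we absorb $\mathcal{O}(\delta\epsilon)$ terms.
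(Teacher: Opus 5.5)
Your proposal is correct and follows the paper's proof essentially step for step: you rewrite the auxiliary stationarity rows with the same (a)--(e) decomposition as in Lemma~\ref{lemma:asym_grad_1}, append $B(\bar y^*_\delta - y^*)=0$ to get a linear system in $K_\epsilon(x,y^*)^{\top}$, and bound $\|K_\epsilon^{-\top}\|$ by Lemma~\ref{lemma:KKT_matrix_bound}, with the Schur complement controlled through Lemma~\ref{lemma:singular_value_3} and LICQ. Two harmless remarks: first, your worry about $V(x,y^*)\neq 0$ is moot, because the identity $\nabla_y\phi_\epsilon - V = \int_0^1(\cdots)(y-y^*)\,dt$ vanishes at $y^*$ regardless, so $-B^\top\lambda^*$ simply pairs with $B^\top\bar\lambda^*_\delta$; second, term (c) is really bounded by $\delta C_f\|\bar y^*_\delta - y^*\|$ via gradient-Lipschitzness, not by $\delta L_f\|\bar y^*_\delta - y^*\|$ (the paper makes the same slip), which does not change the order.
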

\begin{proof}
Let $y=\bar{y}^*_\delta$ and $\lambda = \bar{\lambda}^*_\delta$. for all $i \in [N]$, we have 
\begin{align*}
    0 &=
    \nabla_{y_i} \bar{g}_i(x,y) + (B^\top \lambda)_i\\
    &= 
    \int_0^1 \sum_{j=1}^N J_{ji}(x, y_j + t\epsilon(y_j - y^*_j), y_{\neg j})^\top (y_j - y^*_j) dt \\
    & + \epsilon \int_0^1 J_{ii}(x, y_i + t\epsilon(y_i - y^*_i), y_{\neg i}) (y_i - y^*_i) dt + \delta \nabla_{y_i} f(x,y) + (B^\top (\lambda - \lambda^*))_i\\ 
    &=
    \underbrace{\int_0^1 \sum_{j=1}^N [J_{ji}(x, y_j + t\epsilon(y_j - y^*_j), y_{\neg j})^\top - J_{ji}(x, y^*_j, y_{\neg j})^\top] (y_j - y^*_j) dt}_{\text{(a)}} \\
    & + \underbrace{\epsilon \int_0^1 [J_{ii}(x, y_i + t\epsilon(y_i - y^*_i), y_{\neg i}) - J_{ii}(x, y^*_i, y_{\neg i})] (y_i - y^*_i) dt}_{\text{(b)}}\\
    & + \underbrace{\delta(\nabla_{y_i} f(x,y) - \nabla_{y_i} f(x,y^*))}_{\text{(c)}} \\
    & + \underbrace{\sum_{j=1}^N [J_{ji}(x, y^*_j, y_{\neg j})^\top - J_{ji}(x, y^*)^\top] (y_j - y^*_j)}_{\text{(d)}} \\
    & + \underbrace{[J_{ii}(x, y^*_i, y_{\neg i})^\top - J_{ii}(x, y^*)^\top] (y_i - y^*_i)}_{\text{(e)}} \\
    & + \sum_{j=1}^N J_{ji}(x,y^*)^\top(y_j - y^*_j) + \epsilon J_{ii}(x, y^*)(y_i - y^*_i) + \delta \nabla_{y_i}f(x,y^*) + + (B^\top (\lambda - \lambda^*))_i
\end{align*}
\begin{align*}
    & \|a\| \leq S(1+\frac{\epsilon}{2})\|y-y^*\|^2 \\
    & \|b\| \leq S(1+\frac{\epsilon}{2})\|y-y^*\|^2 \\
    & \|c\| \leq \delta L_f \|y-y^*\| \\
    & \|d\| \leq S \|y-y^*\|^2 \\
    & \|e\| \leq \epsilon S \|y-y^*\|^2
\end{align*}
Combining the equations for $i=1,2,\cdots,N$,
\begin{align*}
    J(x,y^*)^\top (y-y^*) + \epsilon D(x,y^*) (y-y^*) + \delta \nabla_y f(x,y^*) + B^\top (\lambda - \lambda^*) + \text{extra} = 0
\end{align*}
where 
$\|\text{extra}\| \leq \sqrt{N}(S(1+\frac{\epsilon}{2})\|y-y^*\|^2 
+ S(1+\frac{\epsilon}{2})\|y-y^*\|^2) 
+ \delta L_f \|y-y^*\|
+ S \|y-y^*\|^2
+ \epsilon S \|y-y^*\|^2$
Solving the equation with the constraint that $B(y-y^*)=0$, we have
\begin{align*}
    \begin{bmatrix} 
    \frac{\bar{y}^*_\delta - y^*}{\delta} \\
    \frac{\bar{\lambda}^*_\delta - \lambda^*}{\delta}
    \end{bmatrix}
    =
    (K_\epsilon(x,y^*))^{-\top} 
    (-\begin{bmatrix} 
    \nabla_y f(x,y^*) + \frac{\text{extra}}{\delta} \\ 0 
    \end{bmatrix})
\end{align*}
Using Lemma \ref{lemma:asym_y_delta_constr}, we have
\begin{align*}
    & \|
    \begin{bmatrix} 
    \frac{\bar{y}^*_\delta - y^*}{\delta} \\
    \frac{\bar{\lambda}^*_\delta - \lambda^*}{\delta}
    \end{bmatrix}
    - 
    (K_\epsilon(x,y^*))^{-\top} 
    (-\begin{bmatrix} 
    \nabla_y f(x,y^*) \\ 0 
    \end{bmatrix}) 
    \| \\
    & \leq
    \|K_\epsilon(x,y^*)^{-\top}\| \|\frac{\text{extra}}{\delta}\| \\
    & \leq 
    (1 + \frac{1}{\mu(1+\epsilon)}\|B\|)^2\max(\frac{1}{\mu(1+\epsilon)}, \frac{C^2}{\mu(1 + \epsilon)\sigma_{\min}(B)^2}) [\mathcal{O}(\frac{\sqrt{N}\delta}{\bar{\mu}^2}) + \mathcal{O}(\frac{\sqrt{N}\delta \epsilon}{\bar{\mu}^2})] \, \text{(Lemma \ref{lemma:KKT_matrix_bound})}\\
    & \leq
    \mathcal{O}(\frac{\sqrt{N}\delta}{\bar{\mu}^2}) \quad \text{(we ignore $\mathcal{O}(\delta \epsilon)$)}
\end{align*}
Denote $G = J(x,y^*) + \epsilon D(x,y^*)$. The second inequality uses Lemma \ref{lemma:KKT_matrix_bound} and the fact that $d^\top G d \geq (\mu + \epsilon \mu)\|d\|^2$. To bound $\|(B G^{-\top} B^\top)^{-1}\|)$, we use Lemma \ref{lemma:singular_value_3} and Lemma \ref{lemma:singular_value_1}:
\begin{align*}
    \|(B G^{-\top} B^\top)^{-1}\|
    & \leq
    \frac{1}{\sigma_{\min}(B G^{-\top} B^\top)} \\
    & \leq
    \frac{C^2}{\mu(1 + \epsilon)\sigma_{\min}(B)^2}
\end{align*}
\end{proof}

\begin{lemma}\label{lemma:asym_grad_2_constr}
\begin{align*}
    \|K_\epsilon(x,y^*)^{-\top} (-\nabla_y f(x,y^*)) - K(x,y^*)^{-\top}(-\nabla_y f(x,y^*))\|
    \leq 
    \mathcal{O}(\epsilon)
\end{align*}
\end{lemma}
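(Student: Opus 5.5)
The plan is to copy the proof of Lemma~\ref{lemma:asym_grad_2}, with the game Jacobian replaced by the KKT matrix. Write $w \define \begin{bmatrix} -\nabla_y f(x,y^*) \\ 0\end{bmatrix}$, interpreting the right-hand side of the statement as the padded vector, consistent with Lemma~\ref{lemma:asym_grad_1_constr}. Apply the identity $A^{-1}-B^{-1}=A^{-1}(B-A)B^{-1}$ with $A=K_\epsilon(x,y^*)^\top$ and $B=K(x,y^*)^\top$. Since the off-diagonal blocks $B^\top$ and $B$ coincide in both matrices, the difference is
\[
K(x,y^*)^\top - K_\epsilon(x,y^*)^\top = -\epsilon\begin{bmatrix} D(x,y^*)^\top & 0 \\ 0 & 0\end{bmatrix}.
\]
Therefore
\[
\|K_\epsilon^{-\top}w - K^{-\top}w\| \leq \epsilon\, \|K_\epsilon^{-1}\|\,\|D(x,y^*)\|\,\|K^{-1}\|\,\|w\|,
\]
and it remains to show that each factor on the right is bounded independently of $\epsilon$.

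\textbf{Bounding $\|w\|$ and $\|D\|$.} We have $\|w\|\le L_f$ by Assumption~\ref{assum:optim}. The matrix $D(x,y^*)$ is the block-diagonal matrix of the $J_{ii}(x,y^*)$, so $\|D\|\le C$ by Lemma~\ref{lemma:singular_value_2}.

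\textbf{Bounding $\|K^{-1}\|$ and $\|K_\epsilon^{-1}\|$.} We invoke Lemma~\ref{lemma:KKT_matrix_bound} for both matrices. For $K$, take $G=J(x,y^*)$: it satisfies $d^\top G d\ge\mu\|d\|^2$ by Lemma~\ref{lemma:singular_value_1}, which gives $\|J^{-1}\|\le 1/\mu$. For $K_\epsilon$, take $G=J+\epsilon D$: each diagonal block $J_{ii}$ inherits $\mu$-strong monotonicity by restricting $V$ to player $i$'s coordinates, so $d^\top G d\ge \mu(1+\epsilon)\|d\|^2\ge\mu\|d\|^2$. In addition, $\|G\|\le C(1+\epsilon)\le 2C$ because $\epsilon<1$. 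Lemma~\ref{lemma:singular_value_3} then bounds the Schur complement inverse by
\[
\|(BG^{-1}B^\top)^{-1}\| \leq \frac{\|G\|^2}{\mu\,\sigma_{\min}(B)^2} \leq \frac{4C^2}{\mu\,\sigma_{\min}(B)^2},
\]
which uses Assumption~\ref{assum:LICQ}. Substituting into Lemma~\ref{lemma:KKT_matrix_bound} gives
\[
\|K_\epsilon^{-1}\|,\ \|K^{-1}\| \leq \Bigl(1+\frac{\|B\|}{\mu}\Bigr)^2\max\Bigl(\frac1\mu,\frac{4C^2}{\mu\sigma_{\min}(B)^2}\Bigr) \define M_K,
\]
a constant independent of $\epsilon$ and $\delta$. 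Transposition does not change operator norms, so the same bound holds for $K^{-\top}$ and $K_\epsilon^{-\top}$.

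Combining the factors yields $\|K_\epsilon^{-\top}w-K^{-\top}w\|\le \epsilon\, M_K^2\, C\, L_f=\mathcal{O}(\epsilon)$. The main obstacle is this uniform invertibility bound for the perturbed KKT matrix $K_\epsilon$. The minor subtlety is confirming that $J+\epsilon D$ stays coercive, which needs the fact that the diagonal blocks of a strongly monotone Jacobian are themselves $\mu$-coercive, and then handling the Schur complement through Lemma~\ref{lemma:singular_value_3} rather than through the symmetric-case argument.
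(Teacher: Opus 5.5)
Your proposal is correct and follows the paper's own proof. Both apply the resolvent identity $A^{-1}-B^{-1}=A^{-1}(B-A)B^{-1}$ to $K_\epsilon^\top$ and $K^\top$, use the uniform $\mathcal{O}(1)$ bounds on $\|K_\epsilon^{-1}\|$ and $\|K^{-1}\|$ from Lemma~\ref{lemma:KKT_matrix_bound}, and use $\|\nabla_y f\|\le L_f$; you just make the constants explicit, including the bound $\|J+\epsilon D\|\le 2C$ in the Schur-complement estimate, which the paper glosses over.
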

\begin{proof}
\begin{align*}
     & \|K_\epsilon(x,y^*)^{-\top} \begin{bmatrix} -\nabla_y f(x,y^*) \\ 0 \end{bmatrix} 
     - 
     K(x,y^*)^{-\top}\begin{bmatrix} -\nabla_y f(x,y^*) \\ 0 \end{bmatrix}\| \\
     &= 
     \|K_\epsilon(x,y^*)^{-\top} [K(x,y^*)^{\top}-K_\epsilon(x,y^*)^\top]K(x,y^*)^{-\top}(-\nabla_y f(x,y^*))\| \\
     &\leq
     \mathcal{O}(L_f \epsilon) \\
     &=
     \mathcal{O}(\epsilon)
\end{align*}
The first equality uses the matrix identity $A^{-1} - B^{-1} = A^{-1} (B - A) B^{-1}$ for any invertible matrices $A$ and $B$ of the same dimensions. The second inequality uses lemma \ref{lemma:KKT_matrix_bound} which says $\|K_\epsilon(x,y^*)^{-1}\| = \mathcal{O}(1)$ and $\|K(x,y^*)^{-1}\| = \mathcal{O}(1)$ as $\epsilon$ decreases.
\end{proof}

\subsection{Proof of Theorem \ref{theorem:asym_grad_approx_constr}}
\begin{proof}
\begin{align*}
    v_\delta(x,\bar{y}^*_\delta, \bar{\lambda}^*_\delta) 
    &= 
    \frac{1}{\delta}\sum_{i=1}^N [\nabla_x g_i(x,\bar{y}^*_{\delta,i},y^*_{\neg i}) - \nabla_xg_i(x,y^*) 
    + \partialD{e^{(i)}(x,y^*_i)}{x}^\top (\bar{\lambda}^*_\delta - \lambda^*)] \\
    &=
    \frac{1}{\delta} \sum_{i=1}^N [\partialD{\nabla_x g_i(x,y^*)}{y_i} (\bar{y}^*_{\delta,i} - y^*_i) + \text{H.O.T.} 
    + \partialD{e^{(i)}(x,y^*_i)}{x}^\top (\bar{\lambda}^*_\delta - \lambda^*)] \\
    &=
    \partialD{V(x,y^*)}{x}^\top \frac{\bar{y}^*_{\delta} - y^*}{\delta} + N\frac{\text{H.O.T.}}{\delta} 
    + A^\top \frac{\bar{\lambda}^*_\delta - \lambda^*}{\delta}]
\end{align*}
where $\|\text{H.O.T.}\| = \mathcal{O}(\|\bar{y}^*_{\delta} - y^*\|^2) \leq \mathcal{O}(\frac{\delta^2}{\bar{\mu}^2})$ by Lemma \ref{lemma:asym_y_delta_constr}. 
Let $v^*_\epsilon \define K_\epsilon(x,y^*)^{-\top} \begin{bmatrix} -\nabla_y f(x,y^*) \\ 0 \end{bmatrix}$ and $ v^* \define K(x,y^*)^{-\top}\begin{bmatrix} -\nabla_y f(x,y^*) \\ 0 \end{bmatrix}$. 
Using Lemma \ref{lemma:asym_grad_1_constr} and \ref{lemma:asym_grad_2_constr}, we have
\begin{align*}
    & \|
    v_\delta(x,\bar{y}^*_\delta, \bar{\lambda}^*_\delta) 
    - 
    \begin{bmatrix}\partialD{V(x,y^*)}{x}^\top & A^\top\end{bmatrix}
    \begin{bmatrix}\partialD{V(x,y^*)}{y} & B^\top \\ B & 0\end{bmatrix}^{-\top} 
    \begin{bmatrix} -\nabla_yf(x,y^*) \\ 0\end{bmatrix}
    \| \\
    & \leq 
    \|v_\delta(x,\bar{y}^*_\delta, \bar{\lambda}^*_\delta) - \begin{bmatrix}\partialD{V(x,y^*)}{x}^\top & A^\top\end{bmatrix} v^*_\epsilon\| \\
    & + \|\begin{bmatrix}\partialD{V(x,y^*)}{x}^\top & A^\top\end{bmatrix}( v^*_\epsilon -  v^*)\| \\
    & \leq
    (C + \|A\|)(\mathcal{O}(\frac{\sqrt{N}\delta}{\bar{\mu}^2}) + \mathcal{O}(\frac{N \delta}{\bar{\mu}^2}) + \mathcal{O}(\epsilon)) \\
    & = 
    \mathcal{O}(\frac{N \delta}{\bar{\mu}^2} + \epsilon)
\end{align*}
\end{proof}

\begin{lemma}\label{lemma:inexact_potential}
Let $\Approx{y}$ and $\Approx{\lambda}$ be approximate solutions of the original lower-level game such that $\|y^*-\Approx{y}\| \leq \Approx{\delta}$ and $\|\lambda^*-\Approx{\lambda}\| \leq \Approx{\delta}$. Define $\Approx{\phi}_\epsilon(x,y) = \frac{1}{\epsilon} \sum_{j=1}^N g_j(x,y_j + \epsilon(y_j - \Approx{y}_j, y_{\neg_j}) - g_j(x,y)$, $\Approx{{g}}_i(x,y) = \Approx{\phi}_\epsilon(x,y) - g_i(x,y) + \delta f(x,y) - (\Approx{\lambda})^\top B y$ and $\Approx{V}(x,y) = (\nabla_{y_1}\Approx{g}_1(x,y), \cdots, \nabla_{y_N}\Approx{g}_N(x,y))$. Then
\begin{align*}
    \|\bar{V}(x,y) - \Approx{V}(x,y)\| \leq (C N + \|B\|) \Approx{\delta}
\end{align*}
\end{lemma}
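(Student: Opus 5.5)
The plan is to compare $\bar V$ and $\Approx{V}$ term by term. Both operators share the pieces $-\nabla_{y_i} g_i(x,y)$ and $\delta\nabla_{y_i} f(x,y)$, so these cancel exactly. What remains in each block $i$ is the difference of the potential gradients, $\nabla_{y_i}\phi_\epsilon(x,y)-\nabla_{y_i}\Approx{\phi}_\epsilon(x,y)$, plus the multiplier term $(B^\top\lambda^*)_i-(B^\top\Approx{\lambda})_i$.

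The multiplier term is immediate. Stacking the blocks gives $\|B^\top(\lambda^*-\Approx{\lambda})\|\le\|B\|\,\Approx{\delta}$.

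For the potential term, I will use the explicit gradient formula from the start of the unconstrained section. With $w_j \define y_j+\epsilon(y_j-y_j^*)$ and $\Approx{w}_j \define y_j+\epsilon(y_j-\Approx{y}_j)$, the $-\nabla_{y_i}g_j(x,y)$ pieces cancel. What is left is
\begin{align*}
\frac{1}{\epsilon}\sum_{j=1}^N\big[\nabla_{y_i}g_j(x,w_j,y_{\neg j})-\nabla_{y_i}g_j(x,\Approx{w}_j,y_{\neg j})\big],
\end{align*}
plus, for $j=i$, the chain-rule factor $(1+\epsilon)$ on the own-block gradient. The two evaluation points differ only in block $j$, by $\|w_j-\Approx{w}_j\|=\epsilon\|y_j^*-\Approx{y}_j\|\le\epsilon\Approx{\delta}$. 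The factor $\epsilon$ cancels the $1/\epsilon$ prefactor, so the bound does not blow up as $\epsilon\to0$; this is the key point of the argument.

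The main obstacle is the regularity needed here. Assumption \ref{assum:VI} gives Lipschitzness only of $V$, that is of own-gradients $\nabla_{y_j}g_j$. The cross gradients $\nabla_{y_i}g_j$ for $j\ne i$ also appear, so I will assume, as the paper implicitly does when it bounds the blocks $J_{ji}$ by $C$ in Lemma \ref{lemma:asym_local_strong_monotone}, that every $\nabla_y g_j$ is $C$-Lipschitz. Granting this, each summand contributes at most $C\Approx{\delta}$. The own term contributes $(1+\epsilon)C\Approx{\delta}$, and the extra $\epsilon$ is absorbed since $\epsilon<1$ (or kept as $(N+\epsilon)C$ and rounded to $\mathcal{O}(CN)$).

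Finally I sum the blocks. Summing $\ell_2$ norms over $i$ naively would cost an additional factor of $\sqrt{N}$. To avoid it, I bound the full vector difference directly. For each $j$, the map $y\mapsto\nabla_y g_j$ evaluated at the two points perturbed in block $j$ has full-vector difference at most $C\epsilon\Approx{\delta}$. Summing over $j$ with the triangle inequality then gives a total of at most $CN\Approx{\delta}$, up to the $(1+\epsilon)$ own-block factor. Adding the multiplier bound yields $\|\bar V(x,y)-\Approx{V}(x,y)\|\le(CN+\|B\|)\Approx{\delta}$.
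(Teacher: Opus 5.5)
Your argument is correct and is essentially the paper's. You cancel the shared $-\nabla_{y_i}g_i$ and $\delta\nabla_{y_i}f$ terms, bound the multiplier difference by $\|B\|\hat\delta$, and use $\|w_j-\hat w_j\|=\epsilon\|y_j^*-\hat y_j\|$ so that the $1/\epsilon$ cancels. The one real difference is how you aggregate the double sum over blocks $i$ and players $j$. The paper fixes a block $i$ and bounds $\|\nabla_{y_i}\phi_\epsilon-\nabla_{y_i}\hat\phi_\epsilon\|\le\sum_j C\|y_j^*-\hat y_j\|\le C\sqrt N\hat\delta$ by Cauchy--Schwarz. It then picks up a second $\sqrt N$ when stacking the $N$ blocks. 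You instead bound each player $j$'s contribution as a full vector in $\mathbb{R}^n$ and sum over $j$, which reaches $CN\hat\delta$ without the per-block step. Your organization is actually the sharper one. If you keep $\|y_j^*-\hat y_j\|$ instead of replacing it by $\hat\delta$ and apply Cauchy--Schwarz once, the potential term is bounded by $(1+\epsilon)C\sqrt N\,\hat\delta$.

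I have two remarks. First, the extra regularity you impose is not needed. Since $w_j$ and $\hat w_j$ differ only in block $j$, you only need Lipschitzness of $\nabla_y g_j$ in $y_j$. By symmetry of mixed partials, $\partial_{y_j}\nabla_{y_i}g_j=J_{ji}^\top$; this is exactly the identity used in Lemma~\ref{lemma:asym_local_strong_monotone}, and the gradient formula for $\phi_\epsilon$ already presupposes $g_j\in C^2$. Hence $\partial_{y_j}\nabla_y g_j$ is the transpose of the $j$-th block row of $J$, whose norm is at most $\|J\|\le C$ by Assumption~\ref{assum:VI}. This also justifies your full-vector bound.

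Second, your last step overstates what you proved. Your bookkeeping gives $((1+\epsilon)CN+\|B\|)\hat\delta$, and using $\epsilon<1$ only absorbs this into $2CN$, not $CN$. The paper's proof has the same issue silently, since it drops the chain-rule factor $(1+\epsilon)$ on the own block. That factor cannot be discarded in general: for $N=1$, $B=0$, $g_1=\frac{C}{2}\|y\|^2$ and $\|\hat y-y^*\|=\hat\delta$, the difference equals $(1+\epsilon)C\hat\delta$ exactly, which exceeds the stated bound. This only affects constants and is harmless for Theorem~\ref{theorem:asym_convergence}. With the Cauchy--Schwarz refinement above you do recover the stated constant whenever $\sqrt N\ge 1+\epsilon$, for example for all $N\ge 4$.
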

\begin{proof}
\begin{align*}
    \|\nabla_{y_i}\Approx{\phi}_\epsilon(x,y) - \nabla_{y_i} \phi_\epsilon(x,y)\|
    \leq
    \frac{1}{\epsilon} \sum_j C \|\epsilon (y^*_j - \Approx{y}_j)\| 
    = 
    C \sqrt{N} \|y^* - \Approx{y}\| \\
\end{align*}
\begin{align*}
    \|V(x,y) - \Approx{V}(x,y)\|
    \leq  
    C N \|y^* - \Approx{y}\| + \|B\|\|\lambda^* - \Approx{\lambda}\|
    \leq
    (C N + \|B\|)\Approx{\delta}
\end{align*}
\end{proof}

\begin{restatable}{theorem}{asymGradApproxConstr}\label{theorem:asym_grad_approx_constr}
    For $0 < \delta < \frac{\mu}{2 C_f}$ and $0 < R < \frac{\mu - \delta C_f}{S(N+1)(2+\sqrt{N})}$ and $0 < \epsilon < 1$, $\bar{V}$ is locally $\bar{\mu}$-strongly monotone and $\bar{C}$-smooth on $\bar{B}_R(y^*) = \{y \in \mathbb{R}^n : \|y-y^*\| \leq R\}$ for some $\bar{\mu}, \, \bar{C}>0$. Furthermore, if $\delta < \min(\frac{\mu}{2 C_f}, \frac{\bar{\mu} R}{2 L_f})$, then we have a fully first-order hypergradient approximation with non-asymptotic rate
\begin{align*}
    \|v_\delta(x,\bar{y}^*_\delta, \bar{\lambda}^*_\delta) 
    - 
    \begin{bmatrix}\partialD{V(x,y^*)}{x}^\top & A^\top\end{bmatrix}
    \begin{bmatrix}\partialD{V(x,y^*)}{y} & B^\top \\ B & 0\end{bmatrix}^{-\top} 
    \begin{bmatrix} -\nabla_yf(x,y^*) \\ 0\end{bmatrix}
    \|
    \leq 
    \mathcal{O}(\frac{N \delta}{\bar{\mu}^2} + \epsilon)
\end{align*}
\end{restatable}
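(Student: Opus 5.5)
The plan mirrors the unconstrained proof of Theorem~\ref{theorem:asym_grad_approx}, with the game Jacobian replaced by the KKT matrices $K$ and $K_\epsilon$.

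\textbf{Step 1: local properties of the auxiliary game.} The auxiliary operator is
\[
\bar V(x,y)=\nabla_y\phi_\epsilon-V+\delta\nabla_y f-B^\top\lambda^*.
\]
It differs from the unconstrained one only by the constant shift $-B^\top\lambda^*$. Differences $\bar V(x,y)-\bar V(x,z)$ are therefore unchanged, and Lemma~\ref{lemma:asym_local_strong_monotone} applies verbatim. This gives local $\bar\mu$-strong monotonicity and $\bar C$-smoothness on $\bar B_R(y^*)$ under the stated ranges of $\delta$, $R$ and $\epsilon$.

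\textbf{Step 2: existence and localization of the perturbed equilibrium.} Since $\delta<\bar\mu R/(2L_f)$, Lemma~\ref{lemma:asym_y_delta_constr} yields a locally unique equilibrium $\bar y^*_\delta\in\bar B_R(y^*)$. It satisfies $\|\bar y^*_\delta-y^*\|\le 2L_f\delta/\bar\mu$ and $\|\bar\lambda^*_\delta-\lambda^*\|=\bigO(\delta)$. Existence follows from the projected contraction $T(y)=\Pi_Y(y-\eta\bar V(x,y))$. Equality constraints are affine in $y$, so $Y$ is closed and convex and Lemma~\ref{lemma:fixed_point} applies.

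\textbf{Step 3: expansion of the finite difference.} Taylor-expand $\nabla_x g_i(x,\bar y^*_{\delta,i},y^*_{\neg i})$ around $y^*$. The linear term is $\partialD{V(x,y^*)}{x}^\top(\bar y^*_\delta-y^*)$. The remainder is bounded by $S\|\bar y^*_\delta-y^*\|^2=\bigO(\delta^2/\bar\mu^2)$ for each $i$, so after dividing by $\delta$ and summing over the $N$ players it contributes $\bigO(N\delta/\bar\mu^2)$. The multiplier term is exactly linear, $A^\top(\bar\lambda^*_\delta-\lambda^*)/\delta$. Hence
\[
v_\delta=\begin{bmatrix}\partialD{V}{x}^\top & A^\top\end{bmatrix}\begin{bmatrix}(\bar y^*_\delta-y^*)/\delta\\(\bar\lambda^*_\delta-\lambda^*)/\delta\end{bmatrix}+\bigO(N\delta/\bar\mu^2).
\]

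\textbf{Step 4: comparison with the true hypergradient.} Lemma~\ref{lemma:asym_grad_1_constr} compares the normalized primal--dual displacement to $v^*_\epsilon=K_\epsilon^{-\top}[-\nabla_y f;0]$, with error $\bigO(\sqrt N\delta/\bar\mu^2)$. Lemma~\ref{lemma:asym_grad_2_constr} bounds $\|v^*_\epsilon-v^*\|=\bigO(\epsilon)$, where $v^*=K^{-\top}[-\nabla_y f;0]$. The prefactor satisfies $\|[\partialD{V}{x}^\top\ A^\top]\|\le C+\|A\|$. Combining Steps~3 and~4 by the triangle inequality gives the claimed bound $\bigO(N\delta/\bar\mu^2+\epsilon)$. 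The target expression is the correct hypergradient by Corollary~\ref{corollary:hypergrad}.

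\textbf{Main obstacle.} The delicate step is the linearization of the KKT system behind Lemma~\ref{lemma:asym_grad_1_constr}. Two facts are needed there. First, $B(\bar y^*_\delta-y^*)=0$, which holds because the constraints are linear in $y$ at fixed $x$. Second, the inverse $K_\epsilon^{-1}$ must be bounded uniformly in $\epsilon$. I would get the second fact from Lemma~\ref{lemma:KKT_matrix_bound}, together with $d^\top(J+\epsilon D)d\ge\mu(1+\epsilon)\|d\|^2$ and Lemma~\ref{lemma:singular_value_3} for the Schur complement, using the LICQ assumption to ensure $B$ has full row rank.

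I would also check one transpose convention. The block $K_\epsilon$ has $B^\top$ above $B$, so $K_\epsilon^\top$ has the form $\begin{bmatrix}J^\top+\epsilon D & B^\top\\ B&0\end{bmatrix}$ when $D$ is symmetric block-diagonal. This matches the linearized perturbed KKT equations, which is what makes the transpose-aware construction work in the constrained setting.
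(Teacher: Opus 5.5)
Your proposal is correct and follows essentially the same route as the paper. The paper's proof consists of the same four steps: reduction to Lemma~\ref{lemma:asym_local_strong_monotone} via the constant shift $-B^\top\lambda^*$; existence and localization of $\bar y^*_\delta$ (and $\bar\lambda^*_\delta$) through the projected contraction of Lemma~\ref{lemma:asym_y_delta_constr}; a Taylor expansion of $v_\delta$ with an exactly linear multiplier term $A^\top(\bar\lambda^*_\delta-\lambda^*)/\delta$; and a triangle inequality combining Lemmas~\ref{lemma:asym_grad_1_constr} and~\ref{lemma:asym_grad_2_constr} with the prefactor bound $C+\|A\|$.
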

\begin{proof}[Proof Sketch]
Lemma \ref{lemma:asym_local_strong_monotone} proves local strong monoticity and smoothness. For $\delta < \min(\frac{\mu}{2 C_f}, \frac{\bar{\mu} R}{2 L_f})$, Lemma \ref{lemma:asym_y_delta_constr} shows that $T(y) = \Pi_Y(y-\eta \bar{V}(x,y))$ is a contraction on $\bar{B}_R(y^*) \cap Y$. It proves the existence of Nash equilibrium $\bar{y}^*_\delta$ and $\|\bar{y}^*_\delta - y^*\| \leq \frac{2 L_f\delta}{\bar{\mu}}$. 
Let $v^*_\epsilon \define K_\epsilon(x,y^*)^{-\top} \begin{bmatrix} -\nabla_y f(x,y^*) \\ 0 \end{bmatrix}$ 
and 
$ v^* \define K(x,y^*)^{-\top}\begin{bmatrix} -\nabla_y f(x,y^*) \\ 0 \end{bmatrix}$.  
Triangle inequality reduces the proof into bounding $\|v_\delta(x,\bar{y}^*_\delta, \bar{\lambda}^*_\delta) - \begin{bmatrix}\partialD{V(x,y^*)}{x}^\top & A^\top\end{bmatrix} v^*_\epsilon\| 
+ 
\|\begin{bmatrix}\partialD{V(x,y^*)}{x}^\top & A^\top\end{bmatrix} (v^*_\epsilon - v^*)\|$. Combining the following equations yields the desired result,
\begin{align*}
& \|v^*_\epsilon - v^*\| \leq \mathcal{O}(\epsilon) \, \text{(Lemma \ref{lemma:asym_grad_2_constr})} \\
& \|
\begin{bmatrix}
\frac{\bar{y}^*_\delta - y^*}{\delta} \\
\frac{\bar{\lambda}^*_\delta - \lambda^*}{\delta}
\end{bmatrix} 
- 
v^*_\epsilon
\| 
\leq 
\frac{\sqrt{N}}{\bar{\mu}^2}\mathcal{O}(\delta) \, \text{(Lemma \ref{lemma:asym_grad_1_constr})} \\
& v_\delta(x,\bar{y}^*_\delta, \bar{\lambda}^*_\delta) 
=
\partialD{V(x,y^*)}{x}^\top \frac{\bar{y}^*_{\delta} - y^*}{\delta} + N\frac{\text{H.O.T.}}{\delta} 
+ A^\top \frac{\bar{\lambda}^*_\delta - \lambda^*}{\delta} \, \text{(Taylor's expansion)}
\end{align*}
where $\|\text{H.O.T.}\| = \mathcal{O}(\|\bar{y}^*_{\delta} - y^*\|^2) \leq \mathcal{O}(\frac{\delta^2}{\bar{\mu}^2})$.
\end{proof}

\begin{theorem}\label{theorem:asym_convergence_constr}
Consider problem \ref{eq:equal_p1}. Under assumptions \ref{assum:VI}, \ref{assum:optim} and \ref{assum:LICQ}, inexact gradient descent (algorithm \ref{alg:ffo_algo}) outputs an $\mathcal{O}(\epsilon)$-stationary point of the hyperobjective $F$ with $\tilde{\bigO}(N^2\epsilon^{-2})$ first-order oracle calls to $f$ and $g_i$ in both equality constrained and unconstrained settings. 
\end{theorem}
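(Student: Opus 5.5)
The proof mirrors that of Theorem~\ref{theorem:asym_convergence}. The only additions are the dual variables and the projection onto the affine feasible set.

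\textbf{Step 1: smoothness of $F$.} By Theorem~\ref{theorem:smooth_F}, the constrained hyperobjective $F$ is $C_F$-smooth. Its constants come from Lemma~\ref{lemma:lipschitz_y}, and are finite by LICQ (Assumption~\ref{assum:LICQ}) and the invertibility of the KKT matrix (Lemma~\ref{lemma:invH}).

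\textbf{Step 2: hypergradient error.}
\begin{itemize}
\item Choose $\delta\le \epsilon/(2N)$ and use the potential $\phi_{\epsilon/2}$. By Theorem~\ref{theorem:asym_grad_approx_constr}, $v_\delta(x,\bar y^*_\delta,\bar\lambda^*_\delta)$ approximates the second term of Corollary~\ref{corollary:hypergrad} to within $\bigO(N\delta/\bar\mu^2+\epsilon/2)=\bigO(\epsilon)$. We treat $\bar\mu$ as a constant, since $R$ and $\delta$ are bounded independently of $\epsilon$.
\item Suppose we have primal iterates $\hat y,\hat y_\delta$ and dual iterates $\hat\lambda,\hat\lambda_\delta$ that are $\delta^2$-accurate. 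Lemma~\ref{lemma:inexact_lower_level} then shows that the inexact estimator $\hat\nabla F$ stays within $\bigO(N\delta)+\bigO(\epsilon)=\bigO(\epsilon)$ of $\nabla F$.
\item Lemma~\ref{lemma:inexact_GD} with $\eta=1/C_F$ then gives an $\bigO(\epsilon)$-stationary point after $T=\bigO(\epsilon^{-2})$ outer iterations.
\end{itemize}

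\textbf{Step 3: cost per hypergradient.}
\begin{itemize}
\item \emph{Original game.} It is $\mu$-strongly monotone and $C$-smooth. Projected gradient descent onto the affine set $Y$ (exact projection, computable in closed form) converges linearly by Lemma~\ref{lemma:proj_GD_VI_convergence}. It therefore reaches $\|\hat y-y^*\|\le\delta^2$ in $\bigO(\log(1/\delta))$ iterations, each costing $\bigO(N)$ oracle calls. Lemma~\ref{lemma:dual_oracle} recovers $\hat\lambda$ to matching accuracy, up to a constant, with $\bigO(1)$ further calls.
\item \emph{Auxiliary game.} The surrogate is built from $\hat y,\hat\lambda$ rather than $y^*,\lambda^*$. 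By Lemma~\ref{lemma:inexact_potential}, its operator differs from $\bar V$ by at most $(CN+\|B\|)\delta^2$. This perturbation can be absorbed into the $\delta_0$ term of Lemma~\ref{lemma:proj_GD_VI_convergence}, after rescaling $\delta^2$ by a constant.
\item \emph{Main obstacle: locality.} $\bar V$ is only strongly monotone and smooth on $\bar B_R(y^*)$ (Lemma~\ref{lemma:asym_local_strong_monotone}). The PGD iterates must therefore remain in $\bar B_R(y^*)\cap Y$. To guarantee this, take $\delta\le\min\{\epsilon/(2N),\bar\mu R/(4L_f),\sqrt{R/2}\}$.
  \begin{itemize}
  \item Lemma~\ref{lemma:asym_y_delta_constr} gives $\|\bar y^*_\delta-y^*\|\le R/2$, and warm-starting at $\hat y$ gives $\|\hat y-y^*\|\le\delta^2\le R/2$.
  \item Since the projected map $T(y)=\Pi_Y(y-\eta\bar V(x,y))$ is a contraction on the ball (Lemma~\ref{lemma:proj_GD_VI_contraction}), each iterate's distance to $\bar y^*_\delta$ is nonincreasing up to the small inexactness term. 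Hence every iterate stays within $R/2+\|\bar y^*_\delta-y^*\|+\bigO(\delta^2)\le R$ of $y^*$. If needed, shrink the constants slightly to keep this strict, and include an explicit projection onto the ball if required.
  \end{itemize}
\item This yields linear convergence to $\hat y_\delta$ in $\bigO(\log(1/\delta))$ steps, and $\hat\lambda_\delta$ again follows from Lemma~\ref{lemma:dual_oracle}, applied to the locally strongly monotone auxiliary game.
\item Each evaluation of $\bar V$ requires gradients of $N$ surrogate costs. Each surrogate cost involves $\phi_\epsilon$, which sums $N$ follower objectives, so each iteration costs $\bigO(N^2)$ oracle calls.
\end{itemize}

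\textbf{Total.} Multiplying gives $\bigO(\epsilon^{-2})\cdot\bigO(\log(N/\epsilon))\cdot\bigO(N^2)=\tilde\bigO(N^2\epsilon^{-2})$. The unconstrained case is the special case $B=0$, $\lambda\equiv0$, where the argument reduces to Theorem~\ref{theorem:asym_convergence}.
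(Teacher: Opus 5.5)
Your proposal is correct and follows the same route as the paper's proof. The shared steps are:
\begin{itemize}
\item smoothness of $F$ via Theorem~\ref{theorem:smooth_F};
\item the choice $\delta\le\epsilon/(2N)$ with potential $\phi_{\epsilon/2}$, combined with Lemmas~\ref{lemma:inexact_lower_level} and~\ref{lemma:inexact_GD} to get $\bigO(\epsilon^{-2})$ outer steps;
\item the same locality condition $\delta\le\min\{\epsilon/(2N),\bar\mu R/(4L_f),\sqrt{R/2}\}$;
\item $\bigO(N^2)$ oracle calls times $\bigO(\log(1/\delta))$ inner iterations per outer step.
\end{itemize}
You are more explicit than the paper about recovering the multipliers via Lemma~\ref{lemma:dual_oracle} and about why the PGD iterates stay in $\bar B_R(y^*)$. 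As you already note, your bound there gives $R+\bigO(\delta^2)$, so the constants must be shrunk slightly; likewise, the factor $N$ in Lemma~\ref{lemma:inexact_potential} costs only an extra $\log N$.
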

\begin{proof}
Following the argument of Theorem~\ref{theorem:smooth_F}, the hyperobjective $F$ is smooth. Choose $\delta \leq \frac{\epsilon}{2N}$ and use the potential function $\phi_{\frac{\epsilon}{2}}$. By Lemma \ref{lemma:inexact_lower_level}, if the approximate lower-level solutions satisfy $\|y^*-\hat y\|\leq \delta^2$ and $\|\bar y_\delta^*-\hat y_\delta\|\leq \delta^2$, then the inexact hypergradient ${\hat\nabla F}(x)=\nabla_x f(x,\hat y)+\hat v_\delta(x,\hat y_\delta)$ satisfies $\|\hat\nabla F(x)-\nabla F(x)\|\leq \bigO(N\delta+\epsilon/2)\leq \bigO(\epsilon)$. Therefore, by Lemma~\ref{lemma:inexact_GD}, inexact gradient descent requires $\bigO(\epsilon^{-2})$ hypergradient evaluations to reach an $\bigO(\epsilon)$-stationary point of $F$. It remains to bound the cost of each hypergradient evaluation. Since the potential is constructed using the approximate equilibrium $\hat y$, its gradient operator is also inexact, with an error of $\bigO(\epsilon)$ (Lemma~\ref{lemma:inexact_potential}). Moreover, the auxiliary game is only locally strongly monotone on $\bar B_R(y^*)$. We therefore choose $\delta \leq \min\{\epsilon/(2N),\bar\mu R/(4L_f),\sqrt{R/2}\}$, which ensures $\|\bar y_\delta^*-\hat y\|\leq \|\bar y_\delta^*-y^*\|+\|y^*-\hat y\|\leq R$. Hence, gradient descent initialized at $\hat y$ remains in the local neighborhood and converges linearly to the auxiliary equilibrium. By Lemma~\ref{lemma:proj_GD_VI_convergence}, $\bigO(\log(1/\delta^2))$ lower-level iterations are sufficient to obtain the required $\delta^2$-accurate solutions. Each evaluation of the auxiliary game operator requires gradients of all $N$ surrogate costs, and evaluating each surrogate cost requires $\bigO(N)$ calls to the original follower objectives $g_i$. Thus, each lower-level iteration uses $\bigO(N^2)$ first-order oracle calls. Combining this with $\bigO(\epsilon^{-2})$ outer iterations and $\bigO(\log(1/\delta^2))$ lower-level iterations gives a total oracle complexity of $\bigO(N^2\epsilon^{-2}\log(1/\delta^2))=\widetilde{\bigO}(N^2\epsilon^{-2})$.
\end{proof}

\section{Extension to Stackelberg Game with Linear Inequality Constraints}\label{sec:ineq}
Consider the following Stackelberg game with linear inequality constraints,
\begin{align*}\label{eq:inequal_p1}
    \Min_{x\in \inputspace} \,& F(x) \define f(x, y^*(x)) \\
    \text{s.t. } \,
    &g_i(x,y^*(x)) \le g_i(x,y_i, y^*_{\neg i}(x)) \, , \, \forall i \in [N] \, , \, \forall y_i \in Y_i(x) \\
    & Y_i(x) = \bigl\{y_i \in \real^{n_i} : \ei(x, y_i) \leq 0\bigr\} \\
    & \ei(x, y_i) = A^{(i)}x + B^{(i)}y_i + b^{(i)} \tag{Q1}
\end{align*}
\noindent
In problem \ref{eq:inequal_p1}, the hyper-objective $F$ is Lipschitz continuous but nonsmooth because the active set may change discontinuously. $({\delta}, \epsilon)$-stationarity is used to measure convergence because obtaining an $\epsilon$-stationary point of a nonsmooth nonconvex function is intractable (\cite{zhang2020complexity}).

\begin{definition}[$({\delta}, \epsilon)$-stationarity]
Let $F:\mathbb{R}^d \rightarrow \mathbb{R}$ be a locally Lipschitz continuous function. Let $x\in\mathbb{R}^d$. Let $B_\delta(x)$ be the Euclidean ball of radius $\delta>0$ centered at $x$. The generalized gradient of $F$ at $x$ is the set $\partial F(x) = \text{conv}(\{g|g = \lim_{k \rightarrow \infty} \nabla f(x_k), \, x_k\rightarrow x\})$ (\cite{clarke1990optimization}). Also, $\partial F(x) = \cap_{\delta>0} \cup_{y \in B_{\delta}(x)} \partial F(y)$. The Goldstein subdifferential of $F$ at $x$ is the set $\partial_\delta F(x) = \text{conv}(\cup_{y\in B_\delta(x)} \partial F(y))$ (\cite{goldstein1977optimization}). $x$ is called a $({\delta}, \epsilon)$-stationary point of a locally Lipschitz continuous function $F$ if $\text{dist}(0, \partial_\delta F(x))\leq \epsilon$, where $\text{dist}(x,S)\define \inf_{y \in S}\|x-y\|$.
\end{definition}

\begin{assumption}[Active set identification]\label{assum:active_set}
For problem \ref{eq:inequal_p1}, assume that the active constraints can be correctly identified.
\end{assumption}
By identifying the active set at the Nash equilibrium (similar to \cite{zhao2025fully}), we can transform an arbitrary smooth and strongly monotone game $\mathcal{G} \define (x, [N], \{Y_i\}_{i\in[N]},  \{g_i\}_{i\in[N]})$ with linear inequality constraints into an equivalent game with linear equality constraints. Then, we can directly apply the hypergradient approximation derived for Stackelberg game with linear equality constraints in Section \ref{sec:sym} or \ref{sec:asym_constr}. 

\noindent
Given an arbitrary reference point $\bar{x} \in \inputspace$. For each player $i\in [N]$, we define the strict active set $\I_i \define \{j \in [m_i]:(\lambda^*_i)_j > 0\}$, active dual variable $\lambda^*_{i,\I_i} = (\lambda^*_{i,j})_{j \in \I_i}$ and the active equality constraint function $\ei_{\I_i} = (\ei_j)_{j\in\I_i}$. We define the ghost cost function 
$$\ghost{g}_i(x,y) = g_i(x,y) + (\lambda^*_i)^\top \ei(x, y_i)$$
and the ghost equality constraint $\ghost{Y}_i = \{y_i \in \real^{n_i}: \ghost{\ei}(x,y_i) = 0\}$, where 
$$\ghost{\ei}(x,y_i) = \partialD{\ei_{\I_i}(\bar{x},y^*_i)}{x}(x-\bar{x}) + \partialD{\ei_{\I_i}(\bar{x},y^*_i)}{y}(y_i - y^*_i)$$
are the active constraints. We replace the original lower-level game $\mathcal{G} = (x, [N], \{Y_i\}_{i\in[N]},  \{g_i\}_{i\in[N]})$ with ghost lower-level game $\ghost{\mathcal{G}} \define (x, [N], \{\ghost{Y}_i\}_{i\in[N]},  \{\ghost{g}_i\}_{i\in[N]})$. Denote $\ghost{y}^*$ as the Nash equilibrium of $\ghost{\mathcal{G}}$ and $\ghost{\lambda}^*$ as the optimal dual solution for $\ghost{\mathcal{G}}$. The ghost problem is defined as 
\begin{align*}\label{eq:inequal_p2}
    \Min_{x\in \inputspace} \,& \ghost{F}(x) \define f(x, \ghost{y}^*(x)) \\
    \text{s.t. } \,
    &\ghost{g}_i(x,\ghost{y}^*(x)) \le \ghost{g}_i(x,y_i, \ghost{y}^*_{\neg i}(x)) \, , \, \forall i \in [N] \, , \, \forall y_i \in \ghost{Y}_i(x) \\
    & \ghost{Y}_i(x) = \bigl\{y_i \in \real^{n_i} : \partialD{\ei_{\I_i}(\bar{x},y^*_i)}{x}(x-\bar{x}) + \partialD{\ei_{\I_i}(\bar{x},y^*_i)}{y}(y_i - y^*_i) =  0\bigr\} \tag{Q2}
\end{align*}
By comparing KKT conditions of the original and ghost lower-level problem (Lemma \ref{lemma:ghost_equivalence}), the ghost problem \ref{eq:inequal_p2} is in fact equivalent to the original Stackelberg game with inequality constraints \ref{eq:inequal_p1}, such that $\ghost{y}^*(\bar{x}) = y^*(\bar{x})$ and $\totalD{\ghost{y}^*(\bar{x})}{x} = \totalD{{y}^*(\bar{x})}{x}$. Hence, $\ghost{F}(\bar{x})=F (\bar{x})$ and $\nabla \tilde{F}(\bar{x}) = \nabla F(\bar{x})$.

\begin{lemma}\label{lemma:ghost_game}
The ghost lower-level game $\ghost{\mathcal{G}} \define (x, [N], \{\ghost{Y}_i\}_{i\in[N]},  \{\ghost{g}_i\}_{i\in[N]})$ in problem \ref{eq:inequal_p2} is $\mu$-strongly monotone, $C$-smooth and $S$-Hessian smooth.
\end{lemma}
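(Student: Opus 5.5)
The plan is to write down the ghost gradient operator explicitly and observe that it differs from the original operator $V$ only by a term that does not depend on $y$. From there, the three regularity properties are inherited directly from Assumption~\ref{assum:VI}. Since $\ghost{g}_i(x,y) = g_i(x,y) + (\lambda^*_i)^\top \ei(x,y_i)$ and $\ei(x,y_i) = A^{(i)}x + B^{(i)}y_i + b^{(i)}$ is affine, the ghost operator is
\begin{align*}
\ghost{V}(x,y) = \bigl(\nabla_{y_i} g_i(x,y) + (B^{(i)})^\top \lambda^*_i\bigr)_{i\in[N]} = V(x,y) + B^\top \lambda^*,
\end{align*}
with $B = \diag(B^{(1)},\dots,B^{(N)})$. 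Here $\lambda^*$ is the fixed dual solution at the reference point $\bar{x}$, so it is treated as a constant, independent of both $x$ and $y$. This is the crucial point to verify first: the multiplier enters the ghost cost as a fixed vector, not as the function $\lambda^*(x)$. That matches the definition of $\ghost{g}_i$, where $\lambda^*$ is evaluated at $\bar{x}$.

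With this identity in hand, the three properties follow in turn.

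\emph{Strong monotonicity.} For any $y,y'$ we have $\ghost{V}(x,y') - \ghost{V}(x,y) = V(x,y') - V(x,y)$, because the additive term cancels. Hence $\langle \ghost{V}(x,y')-\ghost{V}(x,y), y'-y\rangle \ge \mu\|y'-y\|^2$. Restricting attention to the affine feasible set $\ghost{Y}(x)$ only removes points, so the inequality still holds there.

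\emph{Joint $C$-Lipschitzness.} The difference $\ghost{V}(x',y') - \ghost{V}(x,y)$ equals $V(x',y') - V(x,y)$, again because $B^\top\lambda^*$ is constant in $(x,y)$. The bound with constant $C$ therefore carries over unchanged.

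\emph{$S$-Hessian smoothness.} We have $\partial \ghost{V}/\partial(x,y) = \partial V/\partial(x,y)$, since the affine term has zero derivative. The Hessian-Lipschitz bound with constant $S$ is therefore identical.

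The only step that could be an obstacle is a modelling one. If one insisted that the multiplier in $\ghost{g}_i$ vary with $x$, then $\ghost{V}$ would acquire an $x$-dependent shift $B^\top\lambda^*(x)$. Lipschitzness in $x$ would then need Lemma~\ref{lemma:lipschitz_y}, which gives Lipschitzness of $\lambda^*(x)$, and the constants would change. I would therefore state explicitly that $\lambda^*$ is frozen at $\bar{x}$, as in the definition of (Q2). I would also note that the ghost constraints $\ghost{\ei}$ are affine in $(x,y_i)$ and play no role in the regularity of the operator: they only define the feasible set, exactly as in the equality-constrained setting of Appendix~\ref{sec:asym_constr}.
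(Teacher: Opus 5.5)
Your proposal is correct and matches the paper's proof: both write $\ghost{V}(x,y) = V(x,y) + B^\top\lambda^*$ with $\lambda^*$ a fixed vector, so the shift cancels in every difference and all three constants are inherited directly from Assumption~\ref{assum:VI}. Your explicit remark that $\lambda^*$ is frozen at $\bar{x}$ spells out the assumption the paper's one-line computation relies on implicitly.
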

\begin{proof}
Define $\ghost{V}(x,y) \define (\nabla_{y_1} \ghost{g}_1(x,y), \cdots, \nabla_{y_N} \ghost{g}_N(x,y)) \in \real^n$. Let $x, x^\prime \in \inputspace$ and $y, y^\prime \in \ghost{Y}$,
\begin{align*}
    \ghost{V}(x^\prime,y^\prime) - \ghost{V}(x,y)
    & = {V}(x^\prime,y^\prime) - {V}(x,y) + B^\top \lambda^* - B^\top \lambda^* \\
    & = {V}(x^\prime,y^\prime) - {V}(x,y)
\end{align*}
We can conclude the proof using the definitions of strong monotonicity, smoothness and Hessian smoothness in \ref{assum:VI}.
\end{proof}

\begin{lemma}[Equivalence of the ghost problem and the original problem]\label{lemma:ghost_equivalence}
Let $\ghost{y}^*(x) \in NE(\ghost{\mathcal{G}})$ be the Nash equilibrium of the ghost lower-level game \ref{eq:inequal_p2} and let $y^*(x) \in NE(\mathcal{G})$ be the Nash equilibrium of the original lower-level game \ref{eq:inequal_p1}. For all $\bar{x}\in \inputspace$ where active set is locally constant, $\ghost{y}^*(\bar{x}) = y^*(\bar{x})$, $\totalD{\ghost{y}^*(\bar{x})}{x} = \totalD{{y}^*(\bar{x})}{x}$, $\ghost{\lambda}^*(\bar{x}) = 0$ where $\ghost{\lambda}^*$ is the optimal dual variable of $\ghost{\mathcal{G}}$
\end{lemma}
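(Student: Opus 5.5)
Fix a reference point $\bar{x}$ at which the active set is locally constant. The plan is to compare the KKT systems of $\mathcal{G}$ and $\ghost{\mathcal{G}}$ at $\bar{x}$ first, and then differentiate both systems in $x$.

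\textbf{Step 1: the equilibria coincide at $\bar{x}$.} By Corollary~\ref{corollary:local_nash_KKT}, under Assumption~\ref{assum:LICQ}, the pair $(y^*(\bar{x}),\lambda^*)$ satisfies
\begin{align*}
V(\bar{x},y^*)+B^\top\lambda^*=0,\qquad \lambda^*\ge 0,\qquad \lambda^*_{i,j}e^{(i)}_j(\bar{x},y^*_i)=0.
\end{align*}
Inactive multipliers vanish, so $B^\top\lambda^*=B_{\I}^\top\lambda^*_{\I}$. The ghost gradient operator is
\begin{align*}
\ghost{V}(x,y)=V(x,y)+B^\top\lambda^*.
\end{align*}
Take the candidate $\ghost{y}=y^*(\bar{x})$ with $\ghost{\lambda}=0$. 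Then stationarity holds, since $\ghost{V}(\bar{x},y^*)+0=0$. Primal feasibility also holds, because $\ghost{e}^{(i)}(\bar{x},y^*_i)=0$ by construction. By Lemma~\ref{lemma:ghost_game} the ghost game is strongly monotone, and its feasible set is affine. So Lemma~\ref{lemma:local_nash_VI} (equivalently the KKT corollary) shows this KKT point is the unique Nash equilibrium. Uniqueness of the multiplier follows from LICQ, since $B_{\I}$ has full row rank. This gives $\ghost{y}^*(\bar{x})=y^*(\bar{x})$ and $\ghost{\lambda}^*(\bar{x})=0$.

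\textbf{Step 2: the derivatives coincide.} Since the active set is locally constant near $\bar{x}$ and strict complementarity holds on $\I_i$, the original equilibrium $y^*(x)$ near $\bar{x}$ satisfies an equality-constrained KKT system. Its constraints are $e^{(i)}_{\I_i}(x,y_i)=0$, with multipliers $\lambda^*_{\I}(x)>0$, and the inactive constraints stay slack. Applying Lemma~\ref{lemma:KKT_diff} to this system gives
\begin{align*}
\begin{bmatrix}\tfrac{d y^*}{dx}\\ \tfrac{d\lambda^*_{\I}}{dx}\end{bmatrix}=K^{-1}\begin{bmatrix}-\partial_x V\\ -A_{\I}\end{bmatrix},\qquad K=\begin{bmatrix}\partial_yV & B_{\I}^\top\\ B_{\I} & 0\end{bmatrix}.
\end{align*}
For the ghost game, Lemma~\ref{lemma:KKT_diff} applies to $\ghost{V}$ with the linear constraints $\ghost{e}^{(i)}$. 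We have $\partial_y\ghost{V}=\partial_y V$ and $\partial_x\ghost{V}=\partial_x V$, because $\lambda^*$ is a constant frozen at $\bar{x}$ and $B$ is constant. The constraint Jacobians are the same matrices $A_{\I}$ and $B_{\I}$, evaluated at $(\bar{x},y^*)$. All of these are evaluated at the same point by Step 1, so the two linear systems are identical, giving $\tfrac{d\ghost{y}^*(\bar{x})}{dx}=\tfrac{dy^*(\bar{x})}{dx}$. Invertibility of $K$ comes from Lemma~\ref{lemma:invH}.

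\textbf{Main obstacle.} The delicate point is justifying that $y^*(x)$ near $\bar{x}$ is governed by the reduced equality system. Every index in $\I_i$ must stay active with positive multiplier, and the inactive ones must stay slack. For the first I would use continuity of $(y^*,\lambda^*)$ from the implicit function theorem applied to the reduced system. For the second I would use the hypothesis that the active set is locally constant. A further subtlety: constraints that are weakly active (zero multiplier but binding) are excluded from $\I_i$. These have to be handled by the locally-constant-active-set hypothesis, which I would read as implying strict complementarity at $\bar{x}$.
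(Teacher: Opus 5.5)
Your proposal is correct and takes essentially the same route as the paper. You verify that $(y^*(\bar x),0)$ satisfies the ghost KKT system, use strong monotonicity and LICQ for uniqueness, and note that implicit differentiation of both KKT systems gives the same linear system in $\partial_y V$, $\partial_x V$, $B_\I$, $A_\I$. The only difference is cosmetic: the paper applies the implicit function theorem to the complementarity form of the original KKT system, whose lower block is $\diag(\lambda^*_\I)B_\I$, and divides out $\diag(\lambda^*_\I)$ using $\lambda^*_\I>0$, whereas you pass directly to the reduced equality system; both rely on the locally-constant-active-set hypothesis in the same implicit way, including the weakly active constraints you flag.
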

\begin{proof}
Denote $B_\I \define \diag((\partialD{\ei_{\I_i}(\bar{x},y^*_i)}{y})_{i\in[N]}) $ as a block diagonal matrix corresponding to Jacobians of active inequality constraints, and 
$A_\I \define \begin{bmatrix} \partialD{e^{(1)}_{\I_1}(\bar{x},y^*_1)}{x} \\ \vdots \\ \partialD{e^{(N)}_{\I_N}(\bar{x},y^*_N)}{x}\end{bmatrix}$.
The KKT condition of the original problem \ref{eq:inequal_p1} is
\begin{align*}
    \Gamma(\bar{x}, y^*, \lambda^*) 
    = 
    \begin{bmatrix} V(\bar{x},y^*) + B_\I^\top \lambda^*_\I \\ \begin{bmatrix} \diag(\lambda^*_{1, \I_1}) e^{(1)}_{\I_1}(\bar{x},y^*_1) \\ \vdots \\ \diag(\lambda^*_{N, \I_N}) e^{(N)}_{\I_N}(\bar{x},y^*_N) \end{bmatrix} \end{bmatrix} 
    = 
    0
\end{align*}
The jacobian
\begin{align*}
    \partialD{\Gamma(\bar{x}, y^*, \lambda^*_\I)}{(y^*, \lambda^*_\I)} 
    = 
    \begin{bmatrix}\partialD{V(\bar{x},y^*)}{y} & B_\I^\top \\ \diag(\lambda^*_\I) B_\I & 0\end{bmatrix}
\end{align*}
is invertible following the same steps in Lemma \ref{lemma:invH}. Then by implicit function theorem, $y^*(\bar{x})$, $\lambda^*_\I(\bar{x})$ are continuously differentiable functions of x and their derivatives are given by
\begin{align}\label{eq:orig_ineq_KKT}
    \begin{bmatrix}\partialD{V(\bar{x},y^*)}{y} & B_\I^\top \\ \diag(\lambda^*_\I) B_\I & 0\end{bmatrix}
    \begin{bmatrix} \totalD{y^*(\bar{x})}{x} \\ \totalD{\lambda^*_\I(\bar{x})}{x} \end{bmatrix}
    & = 
    \begin{bmatrix}-\partialD{V(\bar{x},y^*)}{x} \\ -\diag(\lambda^*_\I)A_\I\end{bmatrix} 
    \\ \nonumber
    \begin{bmatrix}\partialD{V(\bar{x},y^*)}{y} & B_\I^\top \\ B_\I & 0\end{bmatrix}
    \begin{bmatrix} \totalD{y^*(\bar{x})}{x} \\ \totalD{\lambda^*_\I(\bar{x})}{x} \end{bmatrix}
    & = 
    \begin{bmatrix}-\partialD{V(\bar{x},y^*)}{x} \\ -A_\I\end{bmatrix} \\
\end{align}

\noindent
On the other hand, the KKT condition of the ghost problem \ref{eq:inequal_p2} is
\begin{align*}
    \ghost{\Gamma}(\bar{x}, \ghost{y}^*, \ghost{\lambda}^*) 
    = 
    \begin{bmatrix} V(\bar{x},\ghost{y}^*) + B_\I^\top \lambda^*_\I + B_\I^\top \ghost{\lambda}^* \\ \begin{bmatrix} \ghost{e}^{(1)}_{\I_1}(\bar{x},\ghost{y}^*_1) \\ \vdots \\ \ghost{e}^{(N)}_{\I_N}(\bar{x},\ghost{y}^*_N) \end{bmatrix} \end{bmatrix} 
    = 
    0
\end{align*}
Substituting $\ghost{y}^* = y^*$ and $\ghost{\lambda}^* = 0$, the KKT condition is satisfied. By LICQ and strong monotonicity (assumptions \ref{assum:LICQ}, \ref{assum:VI}), the unique Nash equilibrium and optimal dual variable of the ghost lower-level game are $\ghost{y}^* = y^*$ and $\ghost{\lambda}^* = 0$ respectively.
Using implicit function theorem, the derivatives of $\ghost{y}^*(\bar{x})$ and $\ghost{\lambda}^*(\bar{x})$ are given by
\begin{align}\label{eq:ghost_KKT}
    \begin{bmatrix}\partialD{V(\bar{x},\ghost{y}^*)}{y} & B_\I^\top \\ B_\I & 0\end{bmatrix}
    \begin{bmatrix} \totalD{\ghost{y}^*(\bar{x})}{x} \\ \totalD{\ghost{\lambda}^*_\I(\bar{x})}{x} \end{bmatrix}
    & = 
    \begin{bmatrix}-\partialD{V(\bar{x},\ghost{y}^*)}{x} \\ -A_\I\end{bmatrix} 
\end{align}
The derivatives of the KKT conditions of \ref{eq:inequal_p1} and \ref{eq:inequal_p2} are the same. Hence, $\totalD{\ghost{y}^*(\bar{x})}{x} = \totalD{{y}^*(\bar{x})}{x}$.
\end{proof}

Assuming exact active set identification (assumption \ref{assum:active_set}), we can directly apply the finite difference hypergradient approximation in problem \ref{eq:equal_p1} and the gradient approximation guarantees follow. 

\noindent
However, the hyper-objective $F$ is nonsmooth since the active set can change with $x$ discontinuously. We can utilize Lemma \ref{lemma:lipschitz_opt} (see \cite{kornowski2024first}) for minimizing Lipschitz nonsmooth nonconvex function $F$, and then we can conclude the convergence rate for problem \ref{eq:inequal_p1} with theorem \ref{thm:inequality}. Note that $\mathcal{O}(\hat{\delta}^{-1}\epsilon^{-3})$ for reaching a $(\hat{\delta}, \epsilon)$-stationary point is the optimal rate for nonsmooth nonconvex optimization \cite{cutkosky2023optimal}.
\begin{lemma}\label{lemma:lipschitz_opt}
Suppose $F: \real^d \rightarrow \real$ is $L$-Lipschitz and that $\|\Approx{\nabla}F(\cdot) - \nabla F(\cdot)\| \leq \alpha$. Then there exists an algorithm which outputs a $(\hat{\delta}, \epsilon + \alpha)$-stationary point with $T = \bigO(\hat{\delta}^{-1}\epsilon^{-3})$ calls to $\Approx{\nabla}F(\cdot)$ for arbitrary $\hat{\delta} > 0$.
\end{lemma}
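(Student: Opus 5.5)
The statement is Lemma~\ref{lemma:lipschitz_opt}: an algorithm for Lipschitz nonsmooth nonconvex $F$ that uses only an $\alpha$-inexact gradient oracle. I would not build a new method. Instead I would reuse the online-to-nonconvex conversion of \cite{cutkosky2023optimal}, as adapted in \cite{kornowski2024first}, and check that the inexactness passes through additively.

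\textbf{The exact method.} The algorithm keeps iterates $x_t = x_{t-1} + \Delta_t$, where the increments $\Delta_t$ come from an online linear optimizer constrained to the ball of radius $D$. It queries the gradient at $w_t = x_{t-1} + s_t\Delta_t$, with $s_t\sim \mathrm{Unif}[0,1]$. By the fundamental theorem of calculus along the segment (valid almost everywhere, since $F$ is Lipschitz and hence differentiable a.e.), $\mathbb{E}[F(x_t)-F(x_{t-1})] = \mathbb{E}\langle \nabla F(w_t),\Delta_t\rangle$. The steps are grouped into epochs of length $K$ with $KD\le\hat\delta$. Within an epoch, every $w_t$ lies in $B_{\hat\delta}$ of the epoch's average point, so the average of the gradients in that epoch belongs to $\partial_{\hat\delta}F(\bar x)$. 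The regret bound of online gradient descent against the comparator $u=-D\,\bar g/\|\bar g\|$, summed over the $T/K$ epochs and telescoped in $F$, gives
\[
\frac{1}{T/K}\sum_{\text{epochs}}\|\bar g\| \le \frac{F(x_0)-\inf F}{DT} + \frac{L}{\sqrt K}.
\]
Choosing $K=\Theta(L^2\epsilon^{-2})$, $D=\hat\delta/K$ and $T=\Theta(\hat\delta^{-1}\epsilon^{-3})$ makes the right-hand side at most $\epsilon$.

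\textbf{Where inexactness enters.} I would feed $\hat g_t=\Approx{\nabla}F(w_t)$ to the online learner, so the regret bound holds for the $\hat g_t$ with $L$ replaced by $L+\alpha$. The per-step error term is
\[
\langle \nabla F(w_t)-\hat g_t,\Delta_t\rangle \le \alpha D,
\]
which sums to at most $\alpha D T$. Dividing by $DT$ in the rearrangement contributes at most $\alpha$. The output criterion is $\|\frac1K\sum \hat g_t\|$, and it differs from the norm of the true Goldstein element $\frac1K\sum\nabla F(w_t)$ by at most $\alpha$. Together these yield $\mathrm{dist}(0,\partial_{\hat\delta}F)\le\epsilon+O(\alpha)$. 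A constant rescaling of $\epsilon$, or tighter bookkeeping in which the two $\alpha$ contributions appear in separate inequalities, gives the stated $\epsilon+\alpha$.

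\textbf{Main obstacle.} The delicate step is that $\Approx{\nabla}F$ is only assumed close to $\nabla F$ where the gradient exists, while the random points $w_t$ might land on nondifferentiability points. Because $s_t$ is uniformly distributed, each $w_t$ avoids the measure-zero nondifferentiable set of the segment almost surely, provided the segment is not contained in it. To handle this rigorously I would add a small uniform random perturbation to the base point, as done in \cite{kornowski2024first}. The second point to watch is that the guarantee holds in expectation over a randomly chosen epoch. If a high-probability statement is wanted, I would convert it using Markov's inequality and a few repetitions.
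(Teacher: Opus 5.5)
\paragraph{Verdict.} Your approach is the right one, but the constant on $\alpha$ does not come out as the lemma states. The paper gives no proof of Lemma~\ref{lemma:lipschitz_opt}; it simply defers to \cite{kornowski2024first}, and your sketch is the online-to-nonconvex argument behind that citation. The following parts all go through: the epoch geometry ($KD\le\hat\delta$), the parameter choices $K=\Theta((L+\alpha)^2\epsilon^{-2})$, $D=\hat\delta/K$, $T=\Theta(\hat\delta^{-1}\epsilon^{-3})$, the per-step bound $\langle\nabla F(w_t)-\hat g_t,\Delta_t\rangle\le\alpha D$, and the random-perturbation fix for almost-everywhere differentiability. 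One small addition: the online learner should be restarted each epoch, so that using a different comparator in each epoch is legitimate.

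\paragraph{The gap: the constant on $\alpha$.} Your bookkeeping loses $\alpha$ twice.
\begin{itemize}
\item The first loss comes from telescoping $F$ with $\hat g_t$ in place of $\nabla F(w_t)$. This gives $\frac{K}{T}\sum_{\text{epochs}}\|\bar{\hat g}\|\le\epsilon+\alpha$.
\item The second loss comes from passing from $\|\bar{\hat g}\|$ to the actual Goldstein element $\bar g=\frac1K\sum_t\nabla F(w_t)$.
\end{itemize}
So what you prove is $(\hat\delta,\epsilon+2\alpha)$-stationarity, not $(\hat\delta,\epsilon+\alpha)$.

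Rescaling $\epsilon$ cannot remove the extra term. The oracle error $\alpha$ is fixed and independent of $\epsilon$, and it may be much larger than $\epsilon$. Your ``tighter bookkeeping'' would have to couple the two error terms, the way the polarization identity does in the smooth analysis of Lemma~\ref{lemma:inexact_GD}. Nothing in your sketch supplies such a coupling, because $\Delta_t$ is an online-gradient-descent iterate rather than a multiple of $-\hat g_t$. Your Markov-plus-repetitions upgrade has the same issue. It also inflates the $\alpha$ term, and you can only identify a successful run through $\|\bar{\hat g}\|$, which costs yet another $\alpha$.

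\paragraph{Consequence for the paper.} This is harmless for how the paper uses the lemma. It is only applied with $\alpha=\bigO(\epsilon)$, and choosing the hypergradient accuracy $\alpha\le\epsilon/2$ still gives the $(\hat\delta,2\epsilon)$ guarantee of Theorem~\ref{thm:inequality}. As a proof of the lemma as written, however, you should either state the conclusion as $\epsilon+2\alpha$ (equivalently $\epsilon+\bigO(\alpha)$), or give an actual argument for the sharper constant.
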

\begin{theorem}\label{thm:inequality}
Consider problem \ref{eq:inequal_p1}. Under assumptions \ref{assum:VI}, \ref{assum:optim}, \ref{assum:LICQ} and \ref{assum:active_set}, there exists an algorithm that outputs an $(\hat{\delta}, 2\epsilon)$-stationary point of the hyper-objective $F$ with $\tilde{\bigO}(\hat{\delta}^{-1}\epsilon^{-3})$ oracle calls to $f$ and $g_i$ for $\hat{\delta} > 0$. 
\end{theorem}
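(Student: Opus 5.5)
The plan is to reduce Theorem~\ref{thm:inequality} to Lemma~\ref{lemma:lipschitz_opt}. That lemma needs two ingredients: (i) the hyperobjective $F$ of \ref{eq:inequal_p1} is Lipschitz, and (ii) at every query point we can build an inexact gradient $\Approx{\nabla}F(x)$ with $\|\Approx{\nabla}F(x)-\nabla F(x)\|\le\alpha=\epsilon$ using $\tilde{\bigO}(\mathrm{poly}(N))$ first-order oracle calls. Given these, Lemma~\ref{lemma:lipschitz_opt} returns a $(\hat\delta,\epsilon+\alpha)=(\hat\delta,2\epsilon)$-stationary point after $\bigO(\hat\delta^{-1}\epsilon^{-3})$ hypergradient calls. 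Multiplying by the per-call cost, which is polylogarithmic in $1/\epsilon$ with $N$ treated as a constant, gives $\tilde{\bigO}(\hat\delta^{-1}\epsilon^{-3})$ oracle calls.

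\textbf{Lipschitzness.} Where the active set is locally constant, Lemma~\ref{lemma:ghost_equivalence} gives $\totalD{y^*}{x}=\totalD{\ghost y^*}{x}$. The ghost game is an equality-constrained game that is $\mu$-strongly monotone and $C$-smooth (Lemma~\ref{lemma:ghost_game}), and by Assumption~\ref{assum:LICQ} its constraint matrix $B_\I$ has full row rank. Lemma~\ref{lemma:lipschitz_y} therefore bounds $\|\totalD{y^*}{x}\|\le L_y$, with the constant uniform over the finitely many possible active sets. At points where the active set changes, $y^*(x)$ is the solution of a strongly monotone VI with parametric linear constraints, so it is continuous and piecewise smooth. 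Continuity together with uniformly bounded derivatives on each piece gives global Lipschitzness, and hence $L_F=L_f(1+L_y)$ as in Theorem~\ref{theorem:smooth_F}. Clarke gradients are then well defined, which is all Lemma~\ref{lemma:lipschitz_opt} requires.

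\textbf{Inexact hypergradient.} At a query point $\bar x$, I carry out the following steps in order.
\begin{enumerate}
\item Solve the original inequality-constrained game by projected gradient descent to accuracy $\delta^2$, which takes $\bigO(\log(1/\delta))$ iterations by Lemma~\ref{lemma:proj_GD_VI_convergence}.
\item Recover $\Approx\lambda$ with $\bigO(1)$ calls using Lemma~\ref{lemma:dual_oracle}.
\item Identify the active set, which Assumption~\ref{assum:active_set} allows.
\item Form the ghost equality-constrained game. By Lemma~\ref{lemma:ghost_equivalence}, its hypergradient at $\bar x$ equals $\nabla F(\bar x)$.
\item Apply the equality-constrained estimator of Theorem~\ref{theorem:asym_grad_approx_constr}, or the symmetric one when $J$ is symmetric, with $\delta\le\epsilon/(2N)$ and potential parameter $\epsilon/2$.
\item Combine this with Lemma~\ref{lemma:inexact_lower_level} and Lemma~\ref{lemma:inexact_potential}, exactly as in the proof of Theorem~\ref{theorem:asym_convergence_constr}, to obtain error $\bigO(\epsilon)$. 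Each call costs $\bigO(N^2\log(1/\delta))$ oracle calls.
\end{enumerate}
Rescaling constants gives $\alpha=\epsilon$.

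\textbf{Main obstacle.} The delicate point is at nonsmooth $\bar x$, where $\nabla F(\bar x)$ need not exist and the ghost construction only reproduces one piece's gradient. I would handle this by noting that Lemma~\ref{lemma:lipschitz_opt}, following \cite{kornowski2024first}, queries gradients at randomly perturbed points, where $F$ is differentiable almost surely because $F$ is Lipschitz (Rademacher). The kink set is measure zero, so with probability one every query lands where the active set is locally constant, and there the ghost equivalence is exact. If that argument turns out to be insufficient, the fallback is to show that the ghost gradient always lies in $\partial F(\bar x)$, since it is a limit of gradients along the piece, and to use the Goldstein-subdifferential tolerance.
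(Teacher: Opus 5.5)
Your proposal is correct and follows the paper's route: the paper's one-line proof reuses the proofs of Theorems~\ref{theorem:sym_equality_convergence} and~\ref{theorem:asym_convergence_constr} (active-set identification, the ghost equality-constrained game, the first-order hypergradient estimator with $\delta^2$-accurate inner solves) and only replaces the $\bigO(\epsilon^{-2})$ outer loop with the $\bigO(\hat\delta^{-1}\epsilon^{-3})$ guarantee of Lemma~\ref{lemma:lipschitz_opt} with $\alpha=\epsilon$, exactly as you do. Your Lipschitzness argument and your handling of nondifferentiable query points add detail the paper omits (it asserts Lipschitzness of $F$ and invokes Lemma~\ref{lemma:ghost_equivalence} only where the active set is locally constant). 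One caveat if you keep the almost-sure argument: Rademacher gives only differentiability of $F$ almost everywhere, not local constancy of the strict active set (which can fail on a positive-measure set in degenerate cases), so the fact you actually need is that the ghost hypergradient coincides with $\nabla F$ almost everywhere.
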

\begin{proof}
The proof is identical to that of theorem \ref{theorem:sym_equality_convergence} and \ref{theorem:asym_convergence_constr}. Except that the complexity of the outer loop is $\mathcal{O}(\hat{\delta}^{-1}\epsilon^{-3})$ instead of $\mathcal{O}(\epsilon^{-2})$
\end{proof}

\section{Discussion: Challenges of Fully First-Order Hypergradient Approximation in Asymmetric Game}\label{sec:asym_toy_example}
Let $x\in\mathbb{R}$. Suppose the upper-level objective function is $f(x,y) = u^\top y$. Consider an unconstrained game $(x, [N], \{\mathbb{R}^{n_i}\}_{i\in[N]}, \{g_i\}_{i\in [N]})$, where the the cost function for each player $i \in [N]$ is
\begin{align*}
    g_i(x,y) = \frac{1}{2}y_i^\top Q_{ii} y_i + y_i^\top \sum_{j\neq i}Q_{ij}y_j + x c_i^\top y_i
\end{align*}
At equilibrium $y^*$, $\nabla_{y_i}g_i(x, y^*_i, y^*_{\neg i}) = 0$, $\forall i\in[N]$. Then,
\begin{align*}
    V(x,y^*) 
    = 
    \begin{bmatrix} 
    \nabla_{y_1} g_1(x,y^*) \\
    \vdots \\
    \nabla_{y_N} g_N(x,y^*)
    \end{bmatrix}
    = 
    Qy + xc
\end{align*}
Solving $V(x,y^*) = 0$, we have
\begin{align*}
    y^* = - Q^{-1}(xc)
\end{align*}
Using implict differentiation, the exact hypergradient term to be approximated is 
\begin{align}\label{eq:quad_game_true_grad}
    (\totalD{y^*(x)}{x})^\top \nabla_yf(x,y^*(x))
    &=
    (-Q^{-1} c)^\top u \nonumber \\ 
    & =
    - c^\top Q^{-\top} u 
\end{align}

\noindent
We may add perturbation $\delta f(x,y)$ to each player's cost function. The equilibrium $y^*_\delta$ of the perturbed game is given by
\begin{align*}
    y^*_\delta = - Q^{-1}(xc + \delta u) = y^* - \delta Q^{-1}u
\end{align*}
The first-order hypergradient approximation in Section \ref{sec:sym} only matches equation \ref{eq:quad_game_true_grad} when $Q$ is symmetric
\begin{align*}
    v_\delta(x, {y}^*_\delta) 
    &= 
    \frac{1}{\delta}\sum_{i=1}^N \nabla_x g_i(x,y^*_{\delta,i}, y^*_{\neg i}) - \nabla_xg_i(x,y^*) \\
    & = 
    \frac{1}{\delta}\sum_{i=1}^N c_i^\top (y^*_{\delta,i} - y^*_i) \\
    & =
    \frac{1}{\delta}\sum_{i=1}^N c_i^\top [\delta Q^{-1}u]_i \\
    & = 
    -c^\top Q^{-1} u
\end{align*}

\noindent
Alternatively, using results in Section \ref{sec:asym}, we can construct a perturbed game with cost function $\bar{g}_i(x,y) = \phi_\epsilon(x,y) - g_i(x,y) + \delta f(x,y)$, where $\phi_\epsilon(x,y) = \frac{1}{\epsilon} \sum_{j=1}^N [g_j(x, y_j + \epsilon (y_j - y^*_j), y_{\neg j}) - g_j(x,y)]$ is a potential function. Its equilibrium $\bar{y}^*_\delta$ satisfies $\lim_{\delta \rightarrow 0}\bar{y}^*_\delta = y^*$ and $\frac{d \bar{y}^*_\delta}{d \delta}|_{\delta=0} = (Q+\epsilon D)^{-\top} u$, where $D = \text{diag}(Q_{11}, \cdots, Q_{NN})$ is a block-diagonal matrix. Then the first-order hypergradient approximation $v_\delta(x, \bar{y}^*_\delta)$ exactly matches the exact hypergradient in this toy example. The intuition of the construction is that the solution of $\bar{V}(x,y) = (\nabla_{y_1}\bar{g}_1(x,y), \cdots, \nabla_{y_N}\bar{g}_N(x,y)) = 0$ approximates the solution of $Q^\top(y - y^*) + \delta u = 0$ as $\epsilon \rightarrow 0$. Solving $\nabla_{y_i} \phi_\epsilon(x, \bar{y}^*_\delta) - \nabla_{y_i}g_i(x,\bar{y}^*_\delta) + \delta \nabla_{y_i}f(x,\bar{y}^*_\delta) = 0 , \, \forall i \in [N]$: 
\begin{align*}
& \nabla_{y_i} \phi_\epsilon(x, y) - \nabla_{y_i}g_i(x,y) + \delta \nabla_{y_i}f(x,y) \\
& = 
\frac{1}{\epsilon} \sum_{j=1}^N [\nabla_{y_i} g_j(x, y_j + \epsilon (y_j - y^*_j), y_{\neg j}) - \nabla_{y_i} g_j(x,y)] + \nabla_{y_i} g_i(x, y_i + \epsilon (y_i - y^*_i), y_{\neg i}) - \nabla_{y_i}g_i(x,y) + \delta \nabla_{y_i}f(x,y) \\
&=
\frac{1}{\epsilon}[Q_{ii}(y_i + \epsilon(y_i - y^*_i)) + \sum_{j\neq i}Q_{ij} y_j - Q_{ii} y_i - \sum_{j\neq i}Q_{ij} y_j] \\
& + \frac{1}{\epsilon}[\sum_{j\neq i} Q^{\top}_{ji} (y_j+\epsilon(y_j - y^*_j)) - \sum_{j\neq i} Q^{\top}_{ji} y_j] \\
& + [Q_{ii} (y_i + \epsilon(y_i - y^*_i)) + \sum_{j \neq i} Q_{ij} y_j] \\
& - [Q_{ii} y_i + \sum_{j \neq i} Q_{ij} y_j] + \delta u_i \\
&=
Q_{ii}(y_i - y_i^*) + \sum_{j\neq i} Q^{\top}_{ji} (y_j - y^*_j) + Q_{ii} \epsilon (y_i - y^*_i)
= 0
\end{align*}
Stacking these equations together, we have
\begin{align*}
Q^\top (y - y^*) + \epsilon D(y-y^*) + \delta u &= 0 \\
y &= y^* + (Q^\top + \epsilon D)^{-1}(-\delta u)
\end{align*}
We have achieved a desired formula for $\bar{y}^*_\delta$ in this toy example.

\section{Failure of Revelation Principle}
To understand why non-truthful auction is important, we construct single-bidder and two-bidder settings where there exist non-truthful auctions with more revenue than any truthful auctions. Inspired by Example B.1. in \cite{berzack2026optimal}, the key idea is to introduce valuation-dependent constraints over bids. 

\subsection{Single-Item Single-Bidder}
Let the valuation be $t \sim \text{Uniform}[0,1]$. Fix $ \alpha > 1$. $\{a \in \mathbb{R}_{+}: \, a \leq k(t)\}$ is the feasible set of the bid given valuation $t$, where
\begin{align*}
    k(t) = 
    \begin{cases}
          \alpha        & \text{if } t < \frac{2}{3}, \\
          1         & \text{if } t \geq \frac{2}{3}.
    \end{cases}
\end{align*}
Consider the following allocation $\rho: [0,1] \rightarrow \{0,1\}$ and payment rule $P: [0,1] \rightarrow \mathbb{R}$,
\begin{align*}
    \rho(a) = 
    \begin{cases}
          1        & \text{if } a > 0, \\
          0         & \text{if } a = 0.
    \end{cases}
\end{align*}
\begin{align*}
    P(a) = 
    \begin{cases}
          0        & \text{if } a=0 \\
          \frac{2}{3}        & \text{if } 0 < a \leq 1, \\
          \frac{1}{3}         & \text{if } a > 1.
    \end{cases}
\end{align*}
The utility of the bidder is $u(a,t) = t \rho(a) - P(a)$. We can now derive the best response of the bidder that maximizes the utility. When $t \leq \frac{1}{3}$, bidding anything above zero would results in negative utility. The best response is to bid zero. If $\frac{1}{3} < t < \frac{2}{3}$, the constraint is $ a \leq \alpha$. The best response is to bid $a > 1$ and pay $\frac{1}{3}$. If $t \geq \frac{2}{3}$, the constraint is $a \leq 1$. The best response is to bid any $0 < a \leq 1$ and pay $\frac{2}{3}$. Hence, the expected revenue of this auction is
\begin{align*}
    0 + \mathbb{E}_{t \in [\frac{1}{3}, \frac{2}{3}]}\left[\frac{1}{3}\right] + \mathbb{E}_{t \geq \frac{2}{3}}\left[\frac{2}{3}\right] 
    &= \frac{1}{3} \cdot \frac{1}{3} + \frac{1}{3} \cdot \frac{2}{3} \\
    &= \frac{1}{3}
\end{align*}
On the other hand, the optimal truthful auction is Myerson's optimal auction \cite{myerson1981optimal} with an expected revenue of $\frac{1}{4}$. Hence, the non-truthful auction we propose in fact has a higher revenue than any truthful auction in this setting.

\subsection{Single-Item Two-Bidder}
Let the valuation of bidder $i$ be $t_i \sim \text{Uniform}[0,1]$ sampled independently. Consider a singleton $B_1 = \{l_1=0\}$ and contiguous intervals $B_j = (l_j, m_j]$ for $j=2,3,4$. $m_j = l_{j+1}$ for $j=2,3$ and $m_j > l_j > l_1$ for $j=2,3,4$. $\cup_{j=1}^4 B_j$ is the set of possible bids. Let $0< c_1 < c_2, < c_3 < 1$. $\{a_i \in \mathbb{R}_{+}: \, a_i \leq k(t_i)\}$ is the feasible set of the $i$th bidder's bid, where
\begin{align*}
    k(t) = 
    \begin{cases}
          m_4        & \text{if } t < c_2, \\
          m_3         & \text{if } c_2 \leq t < c_3, \\
          m_2          & \text{if } c_3 \leq t
    \end{cases}
\end{align*}
Consider the following allocation $\rho: [0,1] \times [0,1] \rightarrow \Delta^2$ and payment rule $P: [0,1] \times [0,1] \rightarrow \mathbb{R}^2$,
\begin{align*}
    \rho_i(a_1, a_2) = 
    \begin{cases}
          1        & \text{if } a_i \geq a_{\neg i}, \\
          0         & \text{if } a_i < a_{\neg i} \text{ or } a_i=0.
    \end{cases}
\end{align*}
\begin{align*}
    P_i(a_1, a_2) = 
    \begin{cases}
          0        & \text{if } a_i \in B_1 \\
          c_3 \rho_i(a)       & \text{if } a_i \in B_2, \\
          c_2 \rho_i(a)         & \text{if } a_i \in B_3, \\
          c_1 \rho_i(a)      & \text{if } a_i \in B_4. 
    \end{cases}
\end{align*}
The utility of the bidder $i$ given valuation $t_i$ is $u_i(a, t_i) = t_i \rho_i(a) - P_i(a)$. Now we derive the Nash equilibrium. Consider player $i$. When $t_i < c_1$, The dominant strategy is to bid zero, because the utility will be negative even if bidder $i$ wins. When $t_i \in [c_1, c_2)$, the constraint is $a_i \leq m_4$, the dominant strategy is bidding $a_i \in B_4$ because it results in the least payment and maximizes the chance of winning. Following this logic, a Nash equilibrium bidding strategy is
\begin{align*}
    \beta(t_i) = 
    \begin{cases}
          0        & \text{if }  t < c_1\\
          \text{any } a_i \in B_4        & \text{if } t \in [c_1, c_2), \\
          \text{any } a_i \in B_3         & \text{if } t \in [c_2, c_3), \\
          \text{any } a_i \in B_2         & \text{if } t \in [c_3, 1). 
    \end{cases}
\end{align*}
Denote $\mathbb{P}_1 = \Pr(t_i < c_1)$, $\mathbb{P}_2 = \Pr(t_i \in [c_1, c_2))$, $\mathbb{P}_3 = \Pr(t_i \in [c_2, c_3))$, $\mathbb{P}_4 = \Pr(t_i \in [c_3, c_4))$. The expected revenue can be computed by the following Python program:
\begin{lstlisting}
    def two_bidder_four_interval(C):
        # C is a list containing c1, c2, c3
        C = numpy.array(C)
        P = (C[1:]-C[:-1])
        P = numpy.concatenate([ [C[0]], P, [(1-C[-1])] ])
    
        C_dim = len(C)
        P_dim = len(P)
    
        c1, c2, c3 = C[0], C[1], C[2]
        payment_matrix = np.array([
            [0, c1, c2, c3],
            [c1, c1, c1, c1],
            [c2, c1, c2, c2],
            [c3, c1, c2, c3]
        ])
    
        revenue = 0
        for i in range(P_dim):
            for j in range(P_dim):
                revenue += payment_matrix[i, j] * P[i] * P[j]
        
        return revenue
\end{lstlisting}

Substitute $c_1=0.4$, $c_2=0.6$, $c_3=0.8$, $B_2=(0,1]$, $B_3=(1,1.5]$, $B_4=(1.5,2]$. The expected revenue is $0.472$. Myerson's optimal auction is the optimal truthful auction and has an expected revenue $\frac{5}{12}=0.417$. The proposed non-truthful auction has a revenue higher than any truthful auction.
\section{Experiments}\label{appx:exp}
\warmStart*
\begin{proof}
 Let $y^*(x_k) \in \text{LNE}(\mathcal{G}(x_k))$. Since the KKT matrix $K(x_k, y^*(x_k))$ is invertible, by implicit function theorem, there exists a neighborhood $N_{x_k} \times N_{y^*(x_k)}$ of $(x_k,y^*(x_k))$ and a continuously differentiable and $L_k$-Lipschitz continuous function $z^*_k: N_{x_k} \rightarrow N_{y^*(x_k)}$ such that $z^*_k(x) \in \text{LNE}(\mathcal{G}(x))$,  $\forall x \in N_{x_k}$, and $z^*_k(x_k) = y^*(x_k)$. Suppose $\|x_k - x_{k+1}\| \leq \mathcal{O}(\eta)$ is sufficiently small such that $x_{k+1} \in N_{x_k}$ and $\|x_k - x_{k+1}\| \leq \frac{r}{2 L_k}$, then $\|z^*_k(x_k) - z^*_k(x_{k+1})\| \leq \frac{r}{2}$. Suppose $\|y_k - y^*(x_k)\| \leq \frac{r}{2}$, $\|y_k - z^*_k(x_{k+1})\| \leq r$ by triangle inequality. For arbitrary precision $\hat{\delta}$, the solver outputs $y_{k+1} = S(x_{k+1}, y_k, \hat{\delta})$ such that $\|y_{k+1} - z^*_k(x_{k+1})\| \leq \hat{\delta}$. By induction, the claim is proved.
\end{proof}
\subsection{Experimental Details}
Each bidding strategy $\beta_{y_i}$ is two-layer fully connected neural network of width $10$ with SeLU activation (\cite{klambauer2017self}) and ReLU activation (\cite{agarap2018deep}) for the output layer. The allocation rule is a two-layer fully connected neural network of width $10$ with SeLU activation and Layer Normalization (\cite{ba2016layer}). We use Adam (\cite{kingma2014adam}) to optimize the parameters of the auction mechanism. We choose the hyperobjective step size $\eta=10^{-5}$ for single-item auction and $\eta=10^{-4}$ for multi-item auction. For single-item auction, we run gradient descent with step size $10^{-4}$ for $30000$ iterations to solve the original game, and with step size $10^{-4}$ for $1000$ iterations to solve the perturbed game. For multi-item auction, we run gradient descent with step size $10^{-4}$ for $40000$ iterations to solve the original game, and with step size $10^{-4}$ for $2000$ iterations to solve the perturbed game. 

We initialize the allocation rule of the mechanism to mimick first-price auction so that the initial lower-level game is easy to solve. Let $p \in \Delta^{N+1}$ be the output of the allocation rule of first-price auction. Let $q \in \Delta^{N+1}$ be the output of the allocation rule of the neural-network-parameterized allocation rule. Then, we run Adam with step size $10^{-4}$ for $10^{6}$ iterations to minimize their KL divergence $D_{\text{KL}}(p||q)\propto \sum_i p_i \log q_i$.

\begin{figure}[h!]
    \centering
    \centering
    \begin{subfigure}{0.32\textwidth}
        \centering
        \includegraphics[width=\linewidth]{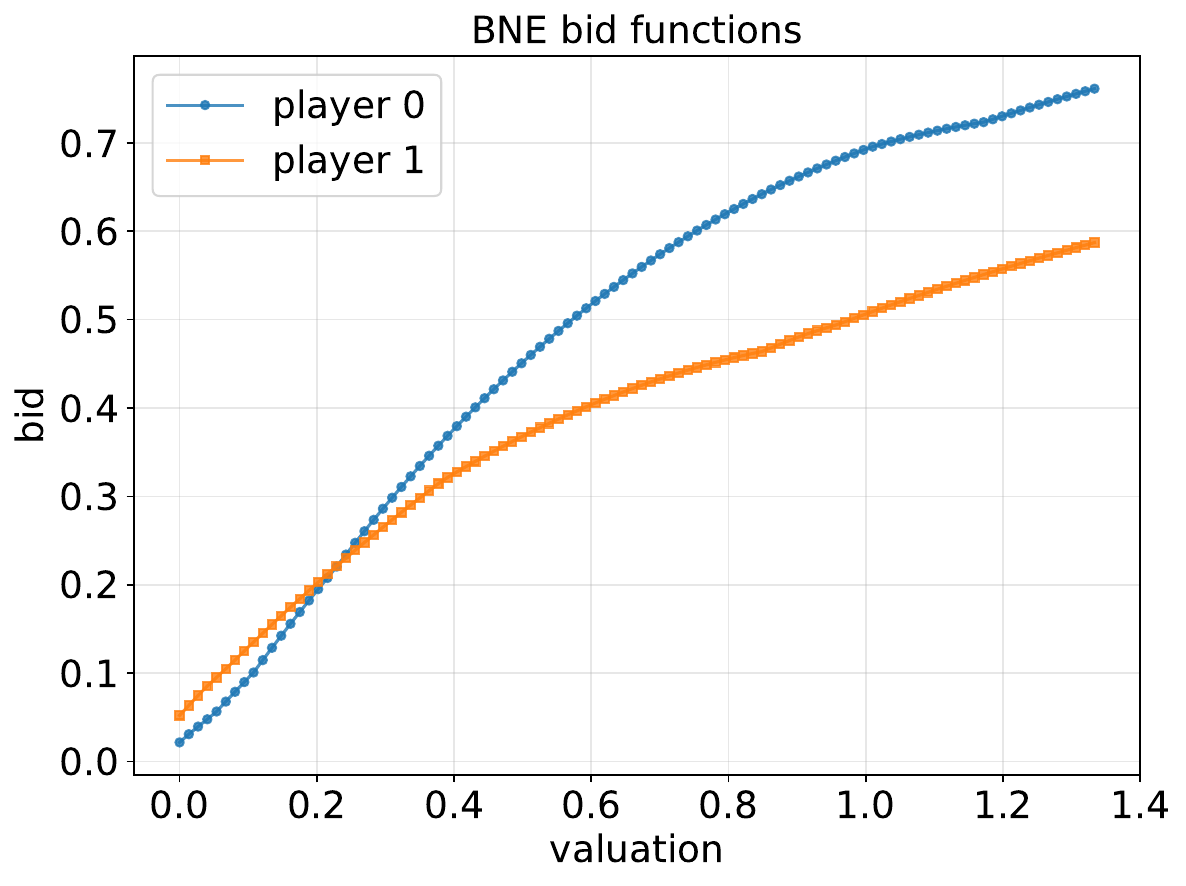}
        \caption{bidding strategy}
    \end{subfigure}
    \hfill
    \begin{subfigure}{0.32\textwidth}
        \centering
        \includegraphics[width=\linewidth]{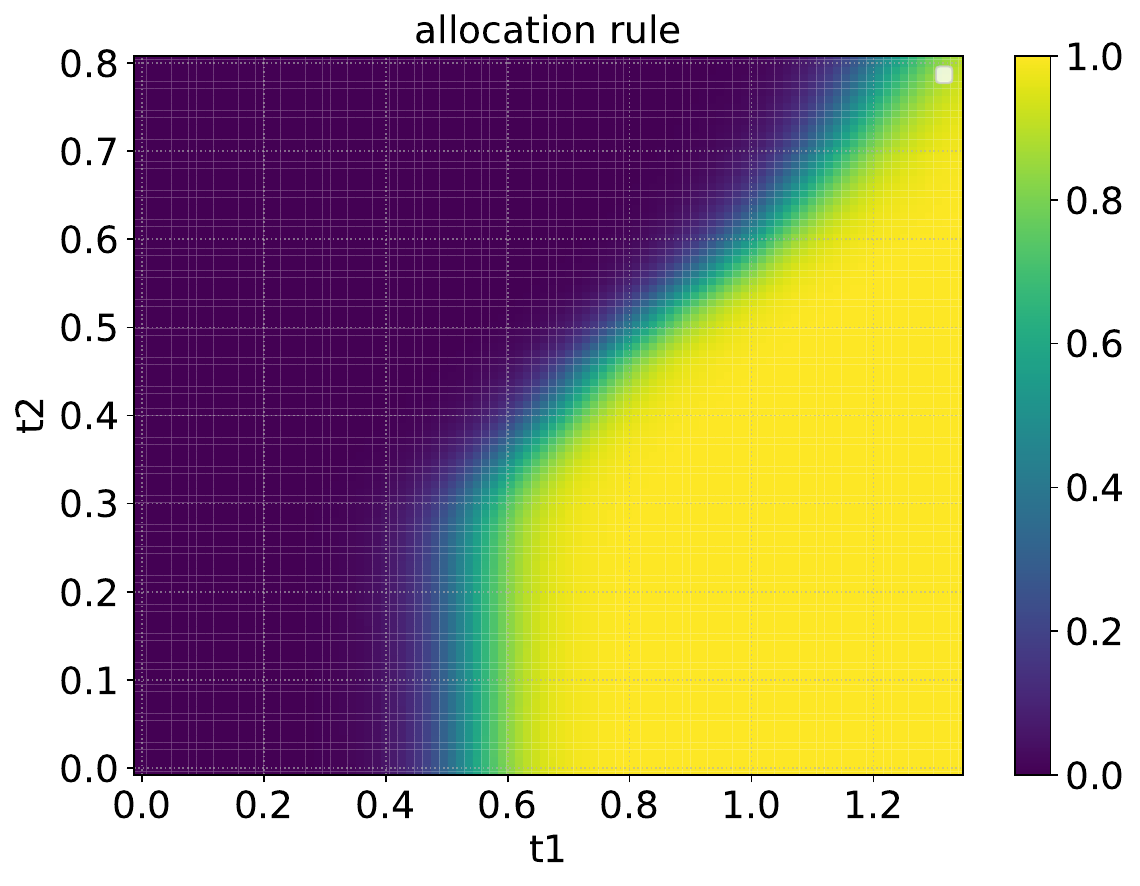}
        \caption{allocation for bidder 1}
    \end{subfigure}
    \hfill
    \begin{subfigure}{0.32\textwidth}
        \centering
        \includegraphics[width=\linewidth]{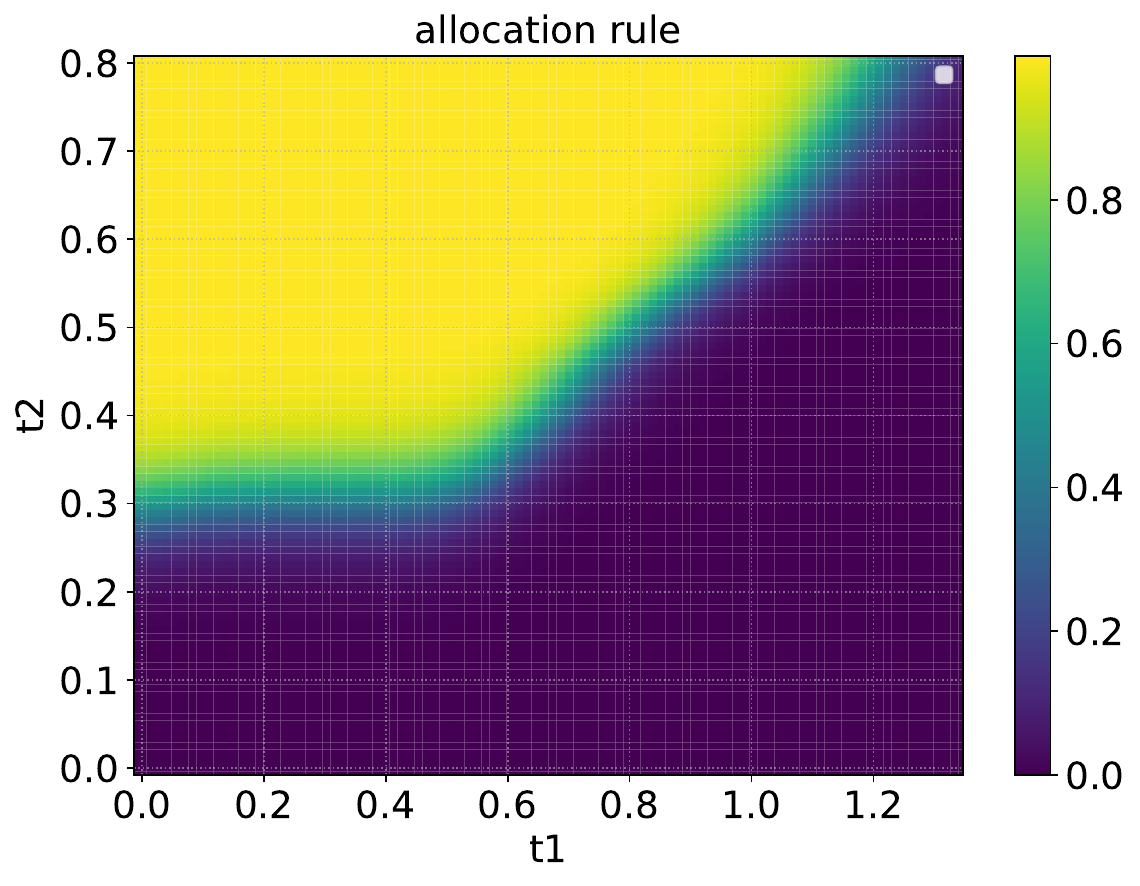}
        \caption{allocation for bidder 2}
    \end{subfigure}
    \caption{Equilibrium bidding strategy and allocation rule for asymmetric single-item auction. For allocation probability, x-axis and y-axis are the bid of bidder 1 and bidder 2 respectively.}
    \label{fig:asym_vis}
\end{figure}

\begin{figure}[h!]
    \centering
    \includegraphics[width=0.6\linewidth]{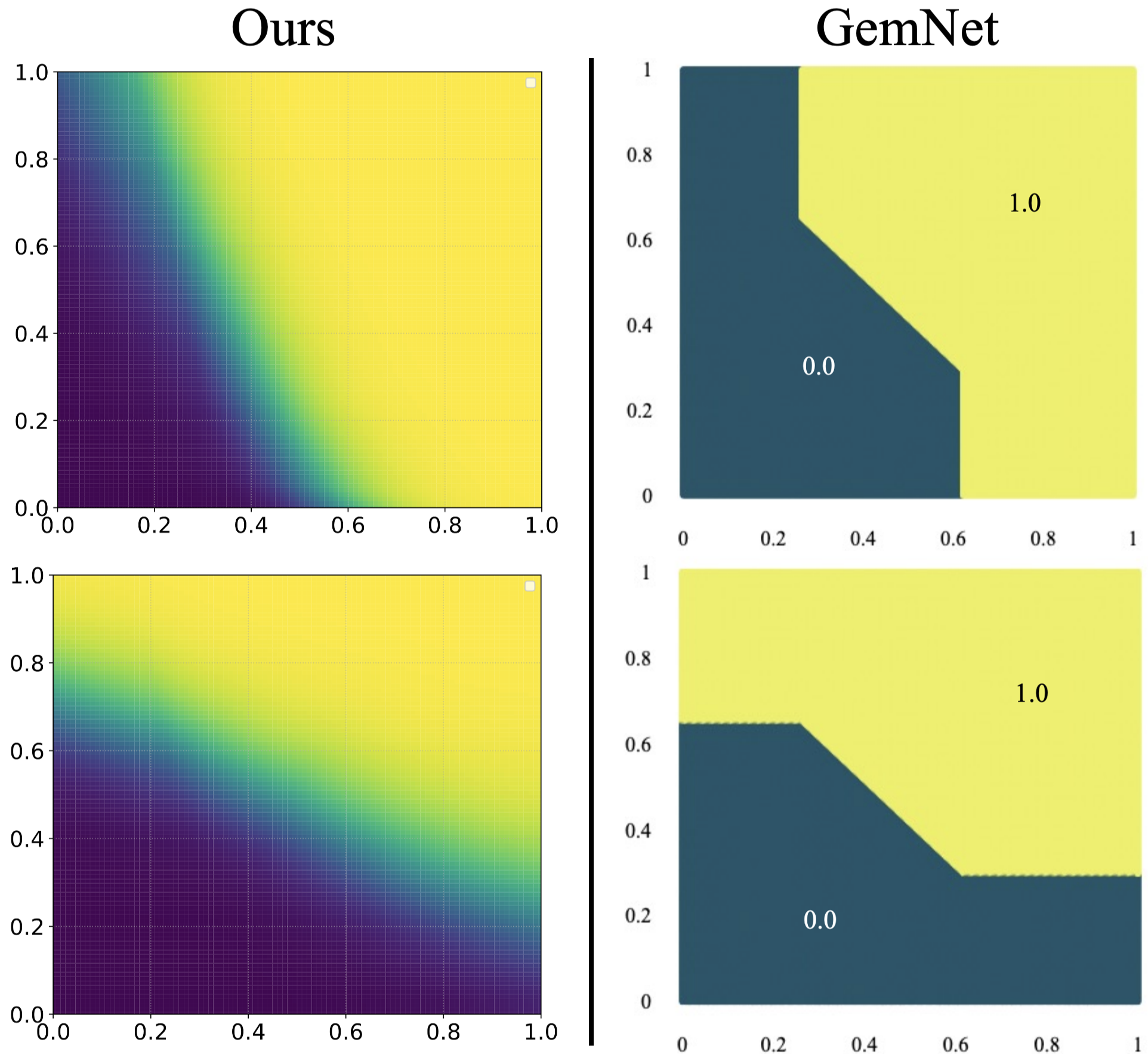}
    \caption{Comparison between non-truthful auction discovered by our method and truthful auction discovered by GemNet. For each plot, x-axis and y-axis is the bid of bidder 1 for item 1 and item 2 respectively. The top row and bottom row show the probability of item 1 and item 2 being allocated to bidder 1 respectively, given the bid of bidder 2 is (0,0.6).}
    \label{fig:multi_vis}
\end{figure}